%% file: main.tex
\documentclass[11pt]{article}
\usepackage[utf8]{inputenc}

\input{packages.tex}

\input{macros.tex}

\title{Random Access in Grammar-Compressed Strings:\\
Optimal Trade-Offs in Almost All Parameter Regimes}

\author[1,2]{Anouk Duyster \orcidlink{0009-0009-6027-9377} \email{aduyster@mpi-inf.mpg.de} }
\author[1]{Tomasz Kociumaka \orcidlink{0000-0002-2477-1702} \email{tomasz.kociumaka@mpi-inf.mpg.de} }

\affil[1]{Max Planck Institute for Informatics, SIC, Saarbrücken, Germany}
\affil[2]{Saarbrücken Graduate School of Computer Sciences, SIC, Saarbrücken, Germany}

\begin{document}
\date{\vspace{-1cm}}
\maketitle

\input{abstract}
\thispagestyle{empty}
\setcounter{page}{0}
\newpage

\input{introduction}

\input{preliminaries}

\input{upper/overview}

\input{lowerbound}

\input{upper/prelim}

\input{upper/contracting-weighted-slgs}

\input{upper/contracting-rlslgs}

\input{upper/nice}

\input{upper/leafy}

\input{upper/bound}

\input{upper/traversal}

\input{upper/rankandselect}

\input{appendix}

\bibliographystyle{alphaurl}
\bibliography{paper}

\end{document}

%% file: packages.tex
\usepackage[T1]{fontenc}
\usepackage{csquotes}
\usepackage{footmisc}
\usepackage{authblk}

\usepackage{amsmath}
\usepackage{amsthm}
\usepackage[margin=1in]{geometry}
\usepackage{amssymb}
\usepackage{mathtools}
\usepackage[inline]{enumitem}
\usepackage{microtype}
\setlist[itemize]{noitemsep,topsep=0.25\baselineskip,partopsep=0pt}
\setlist[enumerate]{noitemsep,topsep=0.25\baselineskip,partopsep=0pt}
\setlist[description]{noitemsep,topsep=0.25\baselineskip,partopsep=0pt}

\usepackage{thmtools}

\usepackage{tikz}
\usetikzlibrary{arrows.meta}
\usepackage{xcolor}

\usepackage[noend,ruled,linesnumbered]{algorithm2e}

\usepackage[colorlinks]{hyperref}
\usepackage[capitalize]{cleveref} 

\usepackage{orcidlink} 
\usepackage{bbding} 

%% file: macros.tex
\newcommand{\Z}{\mathbb{Z}}
\newcommand{\Zz}{\Z_{\ge 0}}

\newcommand{\Zp}{\Z_{>0}}
\newcommand{\Oh}{\mathcal{O}}

\newcommand{\poly}{\operatorname{poly}}
\newcommand{\dd}{.\,.}
\newcommand{\sub}{\subseteq}

\newcommand{\G}{\mathcal{G}} 

\newcommand{\Symb}{\mathcal{A}}
\newcommand{\Var}{\mathcal{V}}

\newcommand{\rle}{\mathsf{rle}}

\newcommand{\rhs}{\mathsf{rhs}}
\newcommand{\emptystring}{\varepsilon}

\newcommand{\cH}{\mathcal{H}}
\newcommand{\cK}{\mathcal{G}''}
\newcommand{\height}{h}
\newcommand{\Parse}{P}
\newcommand{\Lang}{L}

\DeclarePairedDelimiter\floor{\lfloor}{\rfloor}
\DeclarePairedDelimiter\ceil{\lceil}{\rceil}

\NewDocumentCommand\expand{o m}{
\mathsf{expand}{\IfValueTF {#1} {_{#1}} {} }(#2)
}
\NewDocumentCommand\weight{o m}{
 \lVert #2 \rVert{\IfValueTF {#1} {_{#1}} {} }
}
\NewDocumentCommand\explen{o m}{
|\mathsf{expand}{\IfValueTF {#1} {_{#1}} {} }(#2)|
}

\newcommand{\RA}{\textsc{Random Access}\xspace}
\newcommand{\fragmentcc}[2]{[#1\dd #2]}
\newcommand{\fragmentco}[2]{[#1\dd #2)}
\newcommand{\iv}{\fragmentco}
\newcommand{\rev}[1]{\overline{#1}}

\newcommand{\HL}{\cH_{\mathsf{L}}}
\newcommand{\HR}{\cH_{\mathsf{R}}}

\newcommand{\Vartop}[1]{\Var_{#1}^{\textsf{top}}}
\newcommand{\Varleaf}[1]{{\Var_{#1}^{\textsf{leaf}}}}
\newcommand{\Top}[1]{#1^{\textsf{top}}}
\newcommand{\Leaf}[1]{#1^{\textsf{leaf}}}

\newcommand{\rank}{\textsf{rank}}
\newcommand{\select}{\textsf{select}}

\newcommand{\child}{\mathsf{child}}
\newcommand{\new}{\mathsf{new}}
\newcommand{\forward}{\mathsf{forward}}
\newcommand{\backward}{\mathsf{backward}}
\newcommand{\pos}{\mathsf{pos}}

\newcommand{\node}{\mathsf{node}}

\newcommand{\pop}[1]{{#1}.\mathsf{pop}()}
\newcommand{\pushLeft}[1]{{#1}.\mathsf{pushLeft}()}
\newcommand{\pushRight}[1]{{#1}.\mathsf{pushRight}()}
\newcommand{\popLeft}[1]{{#1}.\mathsf{popLeft}()}
\newcommand{\popRight}[1]{{#1}.\mathsf{popRight}()}
\newcommand{\pushChild}[2]{{#1}.\mathsf{pushChild}({#2})}
\newcommand{\rightSibling}[1]{{#1}.\mathsf{rightSibling}()}

\newcommand{\symb}{\mathsf{symb}}
\newcommand{\leafPos}{\mathsf{leafPos}}

\newcommand{\PrefixSum}{\textsc{PrefixSum}\xspace}

\newtheorem{theorem}{Theorem}[section]
\newtheorem{lemma}[theorem]{Lemma}
\newtheorem{proposition}[theorem]{Proposition}
\newtheorem{corollary}[theorem]{Corollary}
\newtheorem{observation}[theorem]{Observation}
\newtheorem{fact}[theorem]{Fact}
\theoremstyle{definition}
\newtheorem{definition}[theorem]{Definition}
\newtheorem{claim}[theorem]{Claim}
\theoremstyle{remark}
\newtheorem{remark}[theorem]{Remark}
\newtheorem{example}[theorem]{Example}

\crefname{lemma}{Lemma}{Lemmas}
\crefname{proposition}{Proposition}{Propositions}
\crefname{fact}{Fact}{Facts}
\crefname{corollary}{Corollary}{Corollaries}
\crefname{observation}{Observation}{Observations}
\crefname{definition}{Definition}{Definitions}
\crefname{claim}{Claim}{Claims}

\crefname{algocf}{alg.}{algs.}
\Crefname{algocf}{Algorithm}{Algorithms}

\newcommand{\defpardsproblem}[4]{
\vspace{0.5em}
\noindent\fbox{
  \begin{minipage}{\linewidth-14pt}
  \vspace{0.3em}
  \textsc{#1}\\
  {\bf{Parameters:}} #2\\
  {\bf{Input:}} #3  \\
  {\bf{Queries:}} #4
  \end{minipage}
  }\\
}

\newcommand{\defdsproblem}[3]{
\vspace{0.5em}
\noindent\fbox{
  \begin{minipage}{\linewidth-14pt}
  \vspace{0.3em}
  \textsc{#1}\\
  {\bf{Input:}} #2  \\
  {\bf{Queries:}} #3

  \end{minipage}
  }\\
}

\newcommand{\cX}{\mathcal{X}}
\newcommand{\cY}{\mathcal{Y}}
\newcommand{\cZ}{\mathcal{Z}}
\newcommand{\cM}{\mathbf{M}}
\newcommand{\rM}{\mathbf{r}_\mathbf{M}}
\newcommand{\cZero}{\mathbf{0}}
\newcommand{\oper}{\mathbin{\boldsymbol{\oplus}}}
\newcommand{\Oper}{\mathop{\boldsymbol{\bigoplus}}}

\definecolor{orcidlogocol}{HTML}{A6CE39}
\newcommand{\email}[1]{\href{mailto:#1}{\Envelope}}

%% file: abstract.tex
\begin{abstract}
  A \RA query to a string \(T\) asks for the character \(T[i]\) at a given position $i\in [0\dd |T|)$.
  This fundamental task admits a straightforward solution with constant-time queries and $\Oh(n \log \sigma)$ bits of space when $T\in [0\dd \sigma)^n$.
  While this is the best one can achieve in the worst case, much research has focused on the compressed setting: if $T$ is compressible, one can hope for a much smaller data structure that still answers \RA queries efficiently.

  In this work, we investigate the grammar-compressed setting, where $T$ is represented by a context-free grammar that produces only $T$.
  Our main result is a general trade-off that optimizes \RA time as a function of the string length $n$, the grammar size (the total length of productions) $g$, the alphabet size $\sigma$, the data structure size $M$, and the word size $w\ge\Omega(\log n)$ of the word RAM model.
  For any data structure size $M$ satisfying \(g\log n < Mw < n\log \sigma\), we show an $\Oh(M)$-size data structure that answers \RA queries in time
  \[\Oh\left(\frac{\log \frac{n \log \sigma}{Mw}}{\log\frac{Mw}{g\log n}}\right).\]
  We also prove a matching unconditional lower bound that holds for all parameter regimes except very small grammars ($g \le w^{1+o(1)}\log n$) and relatively small data structures ($Mw \le g \log n \cdot w^{o(1)}$).
  The lower bound applies to word-RAM query time and, more strongly, to the worst-case cell-probe complexity of nondeterministic or bounded-error randomized query~algorithms.

  Previous work focused on optimizing the query time as a function of $n$ only, achieving $\Oh(\log n)$ time using $\Oh(g)$ space [Bille, Landau, Raman, Sadakane, Satti, Weimann; SIAM J. Comput.~2015] and $\Oh(\frac{\log n}{\log \log n})$ time using $\Oh(g\log^{\epsilon} n)$ space for any constant $\epsilon > 0$ [Belazzougui, Cording, Puglisi, Tabei; ESA 2015], [Ganardi, Jeż, Lohrey; J. ACM 2021].
  Our result improves upon these bounds (strictly for $g=n^{1-o(1)}$) and generalizes them beyond $M\le\Oh(g \poly\!\log n)$, yielding a smooth interpolation with the uncompressed setting of $Mw = n\log\sigma$ bits.

  Thus far, the only tight lower bound [Verbin and Yu; CPM 2013] was $\Omega(\smash{\frac{\log n}{\log \log n}})$ for $w=\Theta(\log n)$, $\smash{n^{\Omega(1)}} \le g \le \smash{n^{1-\Omega(1)}}$, and $M=\smash{g\cdot \log^{\Theta(1)} n}$.
  In contrast, our result yields a tight bound that accounts for all relevant parameters and is valid for almost all parameter regimes.

  Our bounds remain valid for run-length grammars, where production sizes use run-length encoding.
  This lets us recover (and, for strings with small run-length grammars, improve) the trade-offs achieved by block trees, formulated in terms of the LZ77 size $z$ [Belazzougui, Cáceres, Gagie, Gawrychowski, Kärkkäinen, Navarro, Ordóñez, Puglisi, Tabei; J.\ Comput.\ Syst.\ Sci.\ 2021] and substring complexity~$\delta$ [Kociumaka, Navarro, Prezza; IEEE Trans.\ Inf.\ Theory 2023].

  Our data structure admits an efficient deterministic construction algorithm.
  Beyond \RA, its variants also support substring extraction (with optimal additive overhead $\Oh(\frac{m\log \sigma}{w})$ for a length-$m$ substring, provided that $M \ge g$), as well as rank and select queries.

  All our results rely on novel grammar transformations that generalize contracting grammars [Ganardi; ESA 2021] and achieve the optimal trade-off between grammar size and height while enforcing extra structure crucial for constant-time navigation in the parse tree.
\end{abstract}

%% file: introduction.tex
\section{Introduction}\label{sec:intro}

\RA is arguably the most basic query one can ask about a string and a fundamental building block of classic string-processing algorithms.
For a string $T\in \Sigma^*$ and a position $i\in [0\dd |T|)$, it asks for the $i$-th character $T[i]\in \Sigma$ of $T$.\footnote{We adhere to $0$-based indexing of strings, that is, $T=T[0]\cdots T[|T|-1]$.
For $a,b\in \mathbb{R}$, we denote $[a\dd b]=\{k\in \mathbb{Z} : \allowbreak a \le k \le b\}$ and $(a\dd b)=\{k\in \mathbb{Z}: a < k < b\}$.
Half-open integer intervals $[a\dd b)$ and $(a\dd b]$ are defined analogously.}
In the standard representation of strings as arrays, \RA trivially takes constant time.
For strings over integer alphabets $\Sigma=[0\dd \sigma)$, constant-time \RA can still be supported using a \emph{packed} representation in $\Theta(n\log\sigma)$ bits thanks to the bit-manipulation operations available in the standard word RAM model.
Up to a constant factor, this simple data structure matches the information-theoretic lower bound of $\ceil{n\log\sigma}$ bits.\footnote{All logarithms in this work are binary unless another base is provided.}
More advanced tools~\cite{DPT10} come even closer, achieving $n\log\sigma + \Oh(\log^2 n)$ bits.

The complexity landscape of \RA queries becomes much more interesting in the \emph{compressed setting}, where we aim to achieve much smaller data structures for some ``compressible'' strings at the price of marginally larger data structures for the remaining ``incompressible'' strings.
Every injective function $C: \Sigma^* \to \{0,1\}^*$ can be seen as a compression method, but few such functions admit an efficient query algorithm for \RA.
In particular, the most popular compressors such as arithmetic coding~\cite{HV94}, Lempel--Ziv parsing~\cite{LZ77}, or run-length encoded Burrows--Wheeler transform~\cite{BW94} do not natively support \RA in \(n^{o(1)}\) time.
Since \RA is needed for most string-processing algorithms, a lot of research has been devoted to understanding how much extra space is needed to enable such support.
This question formally asks what trade-offs are possible between the data structure size and its \RA query time, where both complexities are measured not only in terms of the length $n$ and alphabet size $\sigma$ of the input string $T$ but also a parameter $s(T)$ capturing the size of its compressed representation.

In this work, we focus on the so-called highly-compressible regime, where typical space complexities are of the form $\Oh(s(T))$ or $\Oh(s(T) \poly \log n)$ rather than $s(T) + o(n)$.
In this setting, a clean formalization that captures many compressors is grammar-compression~\cite{R76,KY00}: the input string $T$ is encoded using a context-free grammar $\G$ that produces $T$ and no other string.
Such a grammar is a \emph{straight-line grammar} (SLG) because it cannot use circular dependencies between variables or define multiple productions for the same variable.
The underlying compressibility measure, denoted $g(T)$, is the size (total production length) of the smallest grammar generating $T$.
A long line of works has aimed to understand the complexity of \RA in SLG-compressed~strings:

\defpardsproblem{SLG Random Access}{$n,g,\sigma\in \Zp$}{
  A string $T\in [0\dd \sigma)^n$ represented by an SLG of size at most $g$.
}{
  Given a position $i\in \iv{0}{n}$, return $T[i]$.
}

Time-space trade-offs for this problem are inherently related to the trade-offs between the grammar height (the height of the underlying parse tree, i.e., the worst-case number of productions needed to derive a given character of $T$) and its size.
In 2002, parallel works of Charikar et al.~\cite{CLL05} and Rytter~\cite{Ryt03} showed that if $T$ is produced by an arbitrary SLG of size $g$, it is also produced by a \emph{balanced} SLG of height $\Oh(\log n)$ and size $\Oh(g \log\frac{n}{g})$.
This immediately implies $\Oh(\log n)$-time \RA using a data structure of size $\Oh(g\log \frac{n}{g})$.\footnote{We state all results in the word RAM model. In the introduction, we typically use word size $w=\Theta(\log (n\sigma))$, and we measure space complexities in machine words. Later on, we switch to bits and treat $w$ as a separate parameter.}
A decade later, a breakthrough of Bille et al.~\cite{BLR15} achieved $\Oh(\log n)$-time \RA using $\Oh(g)$ space.
On the lower-bound side, an influential work of Verbin and Yu~\cite{VY13} showed that an $\Oh(g \poly\!\log n)$-space data structure cannot support $o(\smash{\frac{\log n}{\log \log n}})$-time \RA in their hard regime.
Shortly afterward, Belazzougui et al.~\cite{BCJT15} proved that $\Oh(\smash{\frac{\log n}{\log \log n}})$ query time is achievable in $\Oh(g \log^{\epsilon} n)$ space for balanced SLGs and any constant $\epsilon > 0$.
In a recent milestone, Ganardi, Jeż, and Lohrey~\cite{GJL21} showed every SLG can be balanced while keeping its size at $\Oh(g)$; this immediately recovers the trade-off of \cite{BLR15} and proves that the trade-off of \cite{BCJT15} is valid for every SLG.
More generally, one can answer \RA queries in $\Oh(\log_\tau n)$ time using $\Oh(g \tau)$ space for $2\le \tau \le \poly\!\log n$.

A common limitation of all these previous works is that they measure the \RA time as a function of $n$ only.
The lower bound of \cite{VY13} is only valid when $n^{\Omega(1)} \le g \le n^{1-\Omega(1)}$,\footnote{\label{foot:VY13}%
Verbin and Yu~\cite{VY13} also prove a separate lower bound for a particular value $g \le \log^{2+o(1)} n$, which depends on $n$ and the data structure size $M \le n^{o(1)}$. That result only shows that the query time is at least $(\log n)^{1-o(1)}/\log M$.}
so we can hope for faster queries for the cases of $g\le n^{o(1)}$ and, much more importantly for applications, $g\ge n^{1-o(1)}$.
Moreover, neither the lower bound nor the upper bound applies to data structures of size exceeding $\Oh(g \poly\!\log n)$.
In this work, we aim to characterize what trade-offs are possible.

\begin{quote}%
\emph{%
What is the optimal \RA time in grammar-compressed strings, as a function of string length $n$, grammar size $g$, alphabet size $\sigma$, and data structure size $M$?
}
\end{quote}

On the positive side, we show that, for any $\tau \ge 2$, \RA queries can be answered in $\Oh(\log_\tau\frac{n \log \sigma}{g \tau \log n})$ time\footnote{In the introduction, we follow the convention that time complexity $\Oh(t)$ should be read as $\Oh(\max(1,t))$.} using $\Oh(g\tau)$ space.
Phrased in terms of $M$ rather than $\tau$, this reads as follows:

\begin{theorem}\label{thm:upper}
  Let $\G$ be an SLG of size $g$ generating a string $T\in [0\dd \sigma)^n$.
  In the word RAM model with word size $w\ge\Omega(\log (n\sigma))$, given $\G$ and any value $M$ with $g\log n < Mw < n\log \sigma$, one can in $\Oh(\frac{Mw}{\log n})$ time construct an $\Oh(M)$-size data structure supporting \RA queries in time\vspace{-.25cm}
    \[\Oh\left(\frac{\log \frac{n \log \sigma}{Mw}}{\log\frac{Mw}{g\log n}}\right).\]
\end{theorem}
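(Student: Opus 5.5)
The plan is to reduce the statement to a grammar transformation plus constant-time navigation. Write $\tau = \frac{Mw}{g\log n}$; by assumption $\tau>1$, and we may assume $\tau\ge 2$ by adjusting constants. We also choose a leaf length $\ell$ with $\ell\log\sigma \approx \frac{w\tau}{\text{small factor}}$. More precisely, we cut the parse tree at nodes whose expansion has length at most $\ell$, where $\ell = \Theta(\frac{w\log n\cdot \tau}{\log\sigma})$ is chosen such that storing the expansions of all ``leaf'' variables in packed form costs $\Oh(g\cdot \ell\log\sigma/w)$ words, which is $\Oh(M)$ if each of the $\Oh(g)$ leaf variables stores at most $\ell$ characters. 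A query then descends from the root only until it reaches a variable of length at most $\ell$, and it reads the answer from the packed expansion in $\Oh(1)$ time. The remaining task is to bound the descent depth by $\Oh(\log_\tau \frac{n}{\ell}) = \Oh\bigl(\log \frac{n\log\sigma}{Mw}/\log\tau\bigr)$. Here we use $\frac{n}{\ell} = \frac{n\log\sigma}{Mw}\cdot\frac{g\log n}{w\log n}\le \frac{n\log\sigma}{Mw}$ up to constant-factor slack, which the stated assumptions absorb.

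For the descent, I would first transform $\G$, in the spirit of contracting grammars (Ganardi), into a grammar of size $\Oh(g)$ that is \emph{$\tau$-contracting in weight}: every child of a variable $A$, after $\Oh(1)$ steps, has expansion length at most $|\expand{A}|/\tau$. Then I would collapse each group of $\log\tau$ consecutive levels into a single node of out-degree $\tau^{\Oh(1)}$, so that each query step divides the length by $\tau$. A single step must locate, among up to $\tau^{\Oh(1)}$ children, the one containing position $i$. This is a predecessor search over prefix sums. With a fusion-tree-like structure or a precomputed table, it runs in $\Oh(1)$ time as long as the fan-out is at most $w^{\Oh(1)}$, and it costs space proportional to the fan-out. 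That space is paid for by the factor $\tau$ in $M$, since the total number of stored children is $\Oh(g\tau)$.

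The steps, in order, are as follows. (1) Balance the grammar into a contracting form of size $\Oh(g)$ with height $\Oh(\log n)$, using Ganardi's or GJL's construction. (2) Refine it so that the lengths decay geometrically by a factor of $\tau$ per macro-level while the size stays at $\Oh(g\tau)$ after flattening. (3) Mark leaf variables of length at most $\ell$ and store them packed. (4) Build constant-time child-selection structures. (5) Carry out the construction in $\Oh(Mw/\log n)$ time by building every component in time linear in its size in words.

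I expect the main obstacle to be step (2). Naive flattening of $\log\tau$ levels can blow up the size beyond $\Oh(g\tau)$, because the grammar is a DAG: expansions repeat, and unbalanced children can cause unbounded fan-out. The remedy I would try first is to enforce extra structure in the transformation. Every variable is to have children whose lengths are either all comparable or handled via run-length/heavy-path shortcuts, so that each flattened node has fan-out $\Oh(\tau)$ and a skewed child is charged to a different variable. For huge $\tau > w$, constant-time child selection also fails. In that case I would cap the per-step fan-out at $w^{\epsilon}$ and note that the bound $\log(n/\ell)/\log\tau$ is then dominated by the packed-leaf term. That argument needs care, so that it still gives the stated bound.
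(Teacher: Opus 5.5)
You have two genuine gaps: your depth bound is off by an additive $\log_\tau g$ because you never give the start symbol a large fan-out, and your plan for constant-time child selection when $\tau$ is large does not work.

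\textbf{The depth bound.} This fails by more than a constant factor. With $\tau=\frac{Mw}{g\log n}$, your leaf length $\ell=\Theta(\frac{w\tau\log n}{\log\sigma})$ makes the $\Oh(g)$ packed leaves cost $g\ell\log\sigma/w=g\tau\log n=Mw$ words, a factor $w$ over budget. Your own identity $\frac{n}{\ell}=\frac{n\log\sigma}{Mw}\cdot\frac{g\log n}{w\log n}$ carries a factor $g/w$, and nothing in the assumptions bounds it. Even with the largest affordable leaf length $\ell=\Theta(\frac{\tau\log n}{\log\sigma})$, you get $\frac{n}{\ell}=\Theta(g\cdot\frac{n\log\sigma}{Mw})$. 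A descent that shrinks lengths by about $\tau$ per level from the root therefore needs about $\log_\tau\frac{n\log\sigma}{Mw}+\log_\tau g$ levels. The extra $\log_\tau g$ can dominate. For $\sigma=2$, $g=n/\log^2 n$, and $Mw=n/2$, we have $\tau=\Theta(\log n)$ and a target of $\Oh(1)$, but your descent takes $\Theta(\log n/\log\log n)$ levels.

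The missing idea is to treat the start symbol separately. The paper builds a $(g\tau,\tau)$-nice grammar: the start symbol gets a right-hand side of $\Oh(g\tau)$ symbols, each of weight at most $n/(g\tau)$. This costs $\Oh(g\tau\log n)=\Oh(Mw)$ bits, exactly the budget, and the first step then shrinks lengths by $g\tau$ rather than $\tau$. With leaves of length only $b=\lceil\log n/\log\sigma\rceil$, the height becomes $3+\max(0,\log_\tau\frac{n}{g\tau b})=\Oh(\max(1,\log_\tau\frac{n\log\sigma}{Mw}))$ (\cref{lem:leafy-nice-height}).

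\textbf{Child selection for $\tau=w^{\omega(1)}$.} You are right that generic predecessor search over $\tau$ arbitrary prefix sums will not do. But capping the fan-out at $w^{\Oh(1)}$ does not give the bound, and longer leaves cannot compensate, because the budget limits them to length $\Oh(\tau\log n/\log\sigma)$. For example, take $\sigma=2$, $w=\Theta(\log n)$, $g=n^{0.6}/\log n$, and $Mw=n^{0.7}$. Then $\tau=n^{0.1}$ and the target is $\Oh(1)$. Yet even after a root step of fan-out $g\tau$, descending from length $n/(g\tau)=n^{0.3}\log n$ to length $n^{0.1}\log n$ with fan-out $w^{\Oh(1)}$ takes $\Omega(\log n/\log\log n)$ levels.

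The missing idea is a locality property that contracting grammars give for free; it also settles your step (2) without any heavy-path or ``comparable lengths'' machinery. Start from a contracting grammar with rules of size at most $d$, and exhaustively replace every child of weight exceeding $\weight{A}/\tau$ by its right-hand side. The result $\rhs_\tau(A)$ has at most $2d\tau$ runs, and every substring of weight at most $\weight{A}/\tau$ has at most $2d\lceil\log\tau\rceil$ runs (\cref{lemma:nice-var}). The proof is by induction along $\tau/2\mapsto\tau$:
\begin{itemize}
\item contraction ensures each doubling of $\tau$ expands only one level;
\item fewer than $\tau$ occurrences are expanded, each into at most $d$ runs;
\item a window of weight at most $\weight{A}/\tau$ cuts through at most two of them.
\end{itemize}
Hence splitting $[0\dd\weight{A})$ into $\tau$ equal buckets leaves $\Oh(\log\tau)\le\Oh(w)$ run boundaries per bucket. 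A fusion tree per bucket then answers $\child_A$ in $\Oh(1)$ time for every $\tau$, using $\Oh(\tau)$ words (\cref{lemma:nicechild}). With $\tau_r=g\tau$, the same argument handles the root's huge fan-out.
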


Already in the regime of $\Oh(g)$ space, we improve the query time from $\Oh(\log n)$ to $\Oh(\log \smash{\frac{n\log \sigma}{g \log n}})$.
Our time bound smoothly interpolates between logarithmic for $g \le n^{1-\Omega(1)}$ and constant for $g \ge \Omega(\smash{\frac{n\log \sigma}{\log n}})$.
Since every string can be produced by an SLG of size $g\le \Oh(\smash{\frac{n\log \sigma}{\log n}})$, we recover the standard uncompressed bound: $\Oh(1)$ time using $\Oh(n\log \sigma)$ bits.
Allowing more space, we generalize the trade-off of \cite{BCJT15} beyond $\tau\le \poly\!\log n$ and strictly improve upon it when $g \ge n^{1-o(1)}$.
In particular, we get $\Oh(1)$ query time already using $\Oh((g \log n)^{1-\epsilon}(n \log \sigma)^\epsilon)$ bits for a constant $\epsilon > 0$.

On the lower-bound side, our main result is that the query time of \cref{thm:upper} is optimal for all parameter regimes except for very small grammars and relatively small data structures.
\begin{theorem}\label{thm:lower}
Consider integers $n,g,\sigma,M,w,t\in \Zp$ and a real constant $\epsilon > 0$ such that $n\ge g$, $n\ge\sigma \ge 2$, $Mw \ge g\cdot \log n \cdot (w \log n)^\epsilon$, and $g \ge 25\cdot w^{1+\epsilon}\cdot\log n$.
Suppose that, for every instance of SLG \RA with parameters $n,g,\sigma$, there is a data structure of $M$ $w$-bit machine words that answers each query by accessing $t$ of these machine words.
Then, \vspace{-.2cm} \[t\ge \Omega\left(\frac{\log \frac{n\log\sigma}{Mw}}{\log \frac{Mw}{g\log n}}\right).\]
This bound remains valid for nondeterministic and randomized data structures with two-sided error.
\end{theorem}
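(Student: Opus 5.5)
We prove \cref{thm:lower} by reduction from a problem with a known cell-probe lower bound that holds also for nondeterministic and bounded-error randomized queries. The natural candidate is a variant of the approach of Verbin and Yu~\cite{VY13}, who reduce from \emph{two-dimensional range counting/parity}. That problem, however, loses the dependence on $n\log\sigma$ and on $g$ that we need. We therefore reduce instead from a \emph{(colored) predecessor/indexing-type} problem in a \emph{layered} form. The instance is an array of $m$ independent blocks, each of which is a hard instance of \emph{dictionary/partial-sums-style lookup of depth $k$ with branching factor $b$}. Lower bounds of the form $\Omega(k)$ for such problems in the regime $M w \approx m\cdot b\cdot \text{poly}$ follow from round-elimination or the cell-sampling technique of Larsen. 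Cell sampling is the one that extends to nondeterministic and two-sided-error randomized queries, so we commit to it.

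\textbf{Encoding.} Set $\tau = \frac{Mw}{g\log n}$ and $k = \log_\tau\frac{n\log\sigma}{Mw}$; the target is $t\ge\Omega(k)$. We build a family of strings of length $n$ from a tree-like grammar of depth $k$. At each level, a nonterminal of the previous level is expanded into a sequence of about $\tau$ (more precisely $\tau^{\Theta(1)}$) children. The children are chosen from a pool of $\approx g/k$ variables per level, with free choices encoding about $\log\binom{\cdot}{\cdot}\approx \tau\log(g)$ bits each. The bottom level consists of $\Theta(\frac{n\log\sigma}{\text{(leaf length)}})$ arbitrary strings over $[0\dd\sigma)$. This keeps the size at $g$, because each level contributes $\Oh(g/k)$ symbols; the hypothesis $g\ge 25 w^{1+\epsilon}\log n$ guarantees enough variables per level and a large enough branching factor. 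The total entropy of the family is roughly $g\log n$ bits per level times a factor reflecting how many distinct ``paths'' are encoded. A query at position $i$ must resolve the variable at each level along a root-to-leaf path, which is an iterated pointer-chasing structure.

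\textbf{Cell sampling argument.} Suppose a data structure with $M$ cells answers in $t=o(k)$ probes. Sample each cell independently with probability $p=\tau^{-c}$ for a suitable constant $c$. For a random query, the probability that all $t$ probed cells are sampled is $p^t$. For a nondeterministic query, we use its accepting certificate of $t$ cells; for randomized queries, we first amplify and fix the randomness. Consequently, the sample of $pM$ cells, i.e.\ $pMw$ bits plus their addresses ($pM\log(M)$ bits), lets us recover the answers to a $p^t$ fraction of the $n$ positions. Since $t=o(k)$ and the blocks encode independent data, these recovered answers determine $\gg pMw$ bits of the input. The key combinatorial claim is that knowing $p^t n$ characters at well-spread positions reveals $\Omega(p^t n\log\sigma)$, or at least $\omega(pMw)$, bits of entropy. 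Comparing $p^t\cdot n\log\sigma$ with $pMw\cdot (w\log n)^{\Oh(1)}$ yields $t\ge \Omega\bigl(\log\frac{n\log\sigma}{Mw}/\log\tau\bigr)$. The slack $(w\log n)^\epsilon$ in the assumption $Mw\ge g\log n(w\log n)^\epsilon$ absorbs the address overhead and ensures $\log\tau\ge\Omega(\log(w\log n))$.

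\textbf{Main obstacle.} The hardest step is the entropy claim: showing that answers at a sampled set of positions carry many bits even though the grammar induces strong dependencies (many positions share variables). To handle this, I would make the leaves independent random strings. I would also arrange the hierarchy so that each sampled position's answer depends on a path through nearly-fresh random choices. Finally, I would argue via an encoding argument that the sampled cells and addresses, together with the answered queries, compress the random input below its entropy, which is a contradiction. Tuning the per-level sizes so that both the grammar size stays $\le g$ and the path choices stay independent is where the constants (such as $25$) and the $\epsilon$ slack will be consumed.
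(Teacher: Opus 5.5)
Your proposal has a genuine gap at the step you flag as the main obstacle, and that step carries the whole proof. In a cell-sampling argument you do not choose which positions are resolved. The resolved set $R$ is fixed by the probe sets of the input-dependent query algorithm, so the data structure effectively chooses it, and ``well-spread'' cannot be assumed. To reach a contradiction you need a fixed query set $Q'$ and a hard distribution with the following property. For \emph{every} $R\subseteq Q'$ with $|R|$ roughly $p^t|Q'|$ or larger, and every answer vector, the answers on $R$ must have min-entropy exceeding $pMw+\log\binom{M}{pM}$. In the randomized case, this must still hold after paying for erroneous answers, which the adversarial choice of $R$ can concentrate on; ``amplify and fix the randomness'' does not handle this. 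Nothing in the proposal establishes such a worst-case property, and grammar-compressed strings work against it in two ways:
\begin{itemize}
\item Positions at equal offsets inside occurrences of the same variable are forced to be equal.
\item The whole input carries only $\Oh(g\log n)$ bits. So your claimed $\Omega(p^t n\log\sigma)$ bits is false once $p^t n\log\sigma\gg g\log n$, and the estimate ``$g\log n$ bits per level times a factor'' has no meaning.
\end{itemize}
The sketched family is also not yet valid. With $g/k$ variables per level and about $\tau$ children each, the size is $\Theta(g\tau)$ rather than $\Oh(g)$. With branching $\tau$ and depth $k=\log_\tau\frac{n\log\sigma}{Mw}$, each leaf would have length about $\frac{Mw}{\log\sigma}>g$, so leaves cannot be stored explicitly. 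Whether any grammar-compressible family has the required ``every large subset of $Q'$ is nearly independent'' property is an open-ended combinatorial question that the proposal does not address.

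The premise that led you away from communication-style reductions is also mistaken. The asymmetric lower bound for Blocked Lopsided Set Disjointness (BLSD) holds for bounded-error protocols \cite{P11}, and the certificate bound of \cite{Wang2014} covers nondeterministic queries; the paper uses exactly these.
\begin{itemize}
\item \emph{Encoding.} It takes the Verbin--Yu encoding of a BLSD instance with $P$ blocks of size $B$, a binary string of length $B^P$ with a grammar of size $4PB$. It concatenates $Q$ such strings, so a BLSD query on $N=PQ$ blocks becomes the OR of $Q$ random-access queries to a string of length $QB^P$ with a grammar of size at most $5PQB$ (\cref{lem:blsd-to-ra}).
\item \emph{Nondeterministic case.} The $Q$ certificates union to one of size $tQ$.
\item \emph{Randomized case.} A noisy-OR procedure queries geometrically shrinking subsets (\cref{prop:or}). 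This yields a protocol in which Alice sends $\Oh(tQ\log\frac{2M}{Q})$ bits and Bob sends $\Oh(tQw)$ bits.
\item \emph{Parameters.} It chooses $B\approx w^{1+\epsilon}$, $P\approx 1+\log_B\frac ng$, and $Q\approx\frac{g}{5PB}$; the hypotheses guarantee $Q\ge 1$ and $\log B,\log\frac{2M}{Q}\le\Oh(\log\frac{Mw}{g\log n})$. Padding moves between parameter sets (\cref{lem:monot}).
\end{itemize}
Together these give $t\ge\Omega\bigl(\log\frac ng/\log\frac{Mw}{g\log n}\bigr)$, which implies the stated bound. No independence property of the strings is needed, because all the hardness is inherited from BLSD.
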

In the most studied setting of $w=\Theta(\log n)$, our tight lower bound applies to all grammar sizes except $g\le \log^{2+o(1)} n$ and all data structure sizes except $M\le g \log^{o(1)} n$.
As noted already in \cite{VY13}, $\log^{2} n$ is a natural barrier: if $g\le \log^{2-\delta} n$ for some constant $\delta>0$, then the query algorithm could read the entire grammar using $\log^{1-\delta+o(1)} n$ memory accesses.
The limitation $M \le g \log^{o(1)} n$ is shared with most static data-structure lower bounds, which typically capture space complexity only up to sub-logarithmic factors, except for a handful of isolated problems; see~\cite{Lar12}.
Whether $o(\log n)$ query time can be achieved in $\Oh(g)$ space remains a long-standing open problem.
\paragraph{Beyond SLGs.}
In \cite{CEK21,NOU22}, the $\Oh(\log n)$-time $\Oh(g)$-space trade-off for SLG \RA~\cite{BLR15,GJL21} has been generalized to run-length straight-line grammars (RLSLGs)~\cite{NII16}, which differ from SLGs in that production sizes are measured in terms of run-length encoding size rather than length.
Equivalently, in RLSLGs, productions of the form $A\to B^k$ for $k\in \Zp$ are assumed to be of size one.
The size of the smallest RLSLG generating $T$ satisfies $g_{\rle}(T)\le g(T)$, so RLSLG \RA is at least as hard as SLG \RA, and the lower bound of \cref{thm:lower} generalizes automatically.
Even though we are not aware of any previous work generalizing the trade-off of \cite{BCJT15} to RLSLGs, in \cref{sec:ub} we state and prove our upper bound of \cref{thm:upper} already in the stronger RLSLG variant (without any penalty).

A notable advantage of $g_\rle(T)$ is that it is a tighter upper bound of further measures, including the LZ77 size $z(T)$~\cite{LZ77} and the substring complexity $\delta(T)$~\cite{RRRS13,KNP23}, which satisfy $\delta(T)\le z(T) \le g_\rle(T) \le g(T)$.
In particular, $g_{\rle}(T) \le \Oh(\delta(T) \log\frac{n \log \sigma}{\delta(T) \log n})\le \Oh(z(T) \log\frac{n \log \sigma}{z(T) \log n})$, and the first of these bounds fails for $g(T)$~\cite{KNP23}.
Thanks to this characterization of $g_{\rle}(T)$, \cref{thm:upper} recovers the \RA trade-offs of \emph{block trees} formulated in~\cite{BCG21,KNP23} using $z(T)$ and $\delta(T)$, respectively.
Moreover, we derive from \cref{thm:lower} that these trade-offs are optimal, as functions of $z(T)$ and $\delta(T)$, respectively, in a wide range of parameter regimes.

\begin{restatable}[see \cref{section:appendix}]{corollary}{thmBlock}\label{thm:block}
  For every string $T\in [0\dd\sigma)^n$ and all parameters $d\ge \delta(T)$, $\tau\ge 2$, and $d\tau \le s \le \smash{d\tau \log_\tau \frac{n\log \sigma}{d\log n}}$, there is a data structure of size $\smash{\Oh(s+d \tau \log_\tau \frac{n\log \sigma}{s \log n})}$ supporting \RA queries in time $\smash{\Oh(\log_\tau \frac{n\log \sigma}{s \log n})}$.
  This query time is optimal if $n\ge \sigma\ge 2$, $\tau \ge \log^{\epsilon} n$, and $d \ge \log^{2+\epsilon} n$ for a constant $\epsilon > 0$, even restricted to strings satisfying $d\ge g(T) \ge z(T) \ge \delta(T)$.
\end{restatable}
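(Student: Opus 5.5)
The proof has two parts: the upper bound, which reduces to the RLSLG version of \cref{thm:upper}, and the lower bound, which reduces to \cref{thm:lower}. Throughout, $w=\Theta(\log n)$ words are used and sizes are counted in words.

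\textbf{Upper bound.} First I bound the run-length grammar size. By the characterization of $g_\rle$ from~\cite{KNP23}, for every $d\ge\delta(T)$ there is an RLSLG of size $g' = \Oh(d\log\frac{n\log\sigma}{d\log n})$ generating $T$. A constant-factor bound is not quite enough here; I need the $\tau$-ary refinement. For $\tau\ge 2$, I build a block-tree-like RLSLG with branching $\tau$ in which each level contributes $\Oh(d\tau)$ symbols. The number of levels should then be $\log_\tau\frac{n\log\sigma}{s\log n}$ rather than the full $\log_\tau\frac{n\log\sigma}{d\log n}$, provided the bottom levels are stored explicitly as packed leaves using $\Oh(s)$ words.

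Next I apply the RLSLG version of \cref{thm:upper} (stated in \cref{sec:ub}) to this truncated grammar, with $Mw=\Theta(s\log n)$ and grammar size $g'=\Oh(d\tau\log_\tau\frac{n\log\sigma}{s\log n})$. The resulting query time is $\log\frac{n\log\sigma}{s\log n}/\log\frac{s}{g'}$, which is not directly $\log_\tau$. So I instead take $M = \Theta(g'\tau)$ and verify that the space $s + d\tau\log_\tau(\cdot)$ dominates $g'$. The exact calibration of the parameters is the fiddly step, and I would handle it by case analysis on whether $s\ge g'\tau$ or $s<g'\tau$.

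Alternatively, and more robustly, I can apply \cref{thm:upper} to an RLSLG of size $g_0=\Oh(d\log\frac{n\log\sigma}{d\log n})$ with $M:=s+d\tau\log_\tau(\cdot)$. Then $\frac{M}{g_0}\ge \tau\cdot\frac{\log_\tau(\cdot)}{\log(\cdot)}$. This ratio is off by a $\log\tau$ factor, which is why the $\tau$-ary grammar of the previous paragraph is needed after all.

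\textbf{Lower bound.} For the optimality claim I use \cref{thm:lower} with $g:=d$. Hard instances from \cref{thm:lower} have $g(T)\le d$, hence also $z(T),\delta(T)\le d$. Suppose a data structure of size $M=\Oh(s+d\tau\log_\tau(\cdot))$ had query time $o(\log_\tau\frac{n\log\sigma}{s\log n})$. I check the hypotheses of \cref{thm:lower}:
\begin{itemize}
\item $g\ge 25 w^{1+\epsilon'}\log n$ follows from $d\ge\log^{2+\epsilon}n$, for a suitable $\epsilon'$.
\item $Mw\ge g\log n(w\log n)^{\epsilon'}$ follows from $M\ge d\tau$ and $\tau\ge\log^\epsilon n$.
\end{itemize}
Moreover $\log\frac{Mw}{g\log n}=\Theta(\log\frac{M}{d})$. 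By the constraint $s\le d\tau\log_\tau(\cdot)$ we have $M=\Oh(d\tau\log_\tau(\cdot))$, so $\log\frac{M}{d}=\Oh(\log\tau+\log\log n)=\Oh(\log\tau)$, using $\tau\ge\log^\epsilon n$.

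The numerator is $\log\frac{n\log\sigma}{M\log n}$, and I need it to be $\Omega(\log\frac{n\log\sigma}{s\log n})$. This holds when $M\le s^{1-\Omega(1)}(n\log\sigma/\log n)^{\Omega(1)}$, or, when the quantity is tiny, because the claimed bound is then just $\Oh(1)$. Combining the two estimates gives $t=\Omega(\log_\tau\frac{n\log\sigma}{s\log n})$, which contradicts the assumed query time.

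The main obstacle is making the upper-bound space term exactly $s+d\tau\log_\tau(\cdot)$ while reusing \cref{thm:upper} without losing $\log\tau$ factors.
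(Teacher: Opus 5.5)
\textbf{Gap: the upper-bound half does not go through.} Your final plan rests on a $\tau$-ary block-tree-like RLSLG whose existence and size you only assert. Even granting it, the plan cannot in general give the bound via \cref{thm:upper}. Write $\phi_s\coloneqq\frac{n\log\sigma}{s\log n}$ and $\phi_d\coloneqq\frac{n\log\sigma}{d\log n}$, and take a grammar of size $g'=\Theta(d\tau\log_\tau\phi_s)$:
\begin{itemize}
\item Choosing $M=\Theta(g')$ makes $\frac{Mw}{g'\log n}=\Theta(1)$, so the query time degrades to $\Oh(\log\phi_s)$.
\item Choosing $M=\Theta(g'\tau)$ costs $\Theta(d\tau^2\log_\tau\phi_s)$ words, a factor $\tau$ over budget. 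The $s$ term cannot absorb this, since the allowed range caps $s$ at $d\tau\log_\tau\phi_d$.
\end{itemize}
The underlying problem is that \cref{thm:upper} already performs the $\tau$-ary expansion internally via $\tau$-niceness. Pre-expanding the grammar means paying that factor twice.

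\textbf{The route you discarded is the one that works.} Your reason for discarding it is the actual error. In \cref{thm:upper}, the ratio $\frac{Mw}{g\log n}$ enters only through its logarithm, and $\log\frac{\tau}{\log\tau}=\Theta(\log\tau)$ for $\tau\ge 2$. So losing a $\log\tau$ factor in the ratio costs only a constant factor in the query time.

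What is genuinely missing is reconciling $\phi_d$, which appears in $g_{\rle}(T)=\Oh(d\log\phi_d)$, with $\phi_s$, which appears in the target space. Your sketch blurs both into ``$\log(\cdot)$''. Write $\log\phi_d=\log\phi_s+\log\frac{s}{d}$ and use $\frac{s}{d}\ge\tau$ to get $d\log\frac{s}{d}\le\Oh(\frac{s}{\tau}\log\tau)$. Hence $g_{\rle}(T)\le\Oh(d\log\phi_s+\frac{s}{\tau}\log\tau)$.

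Now let $g$ be the size of this RLSLG, set $w\coloneqq\ceil{\log n}$, and set $M\coloneqq\ceil{s+\frac{g\tau}{\log\tau}}$. Then:
\begin{itemize}
\item $M=\Oh(s+d\tau\log_\tau\phi_s)$;
\item $\frac{Mw}{g\log n}\ge\frac{\tau}{\log\tau}>1$;
\item $\frac{n\log\sigma}{Mw}\le\phi_s$.
\end{itemize}
So \cref{thm:upper} gives query time $\Oh(\log_\tau\phi_s)$. The case $Mw\ge n\log\sigma$, which \cref{thm:upper} excludes, is handled by storing $T$ packed.

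\textbf{Lower-bound half.} This matches the intended argument. The numerator estimate you left open is completed by a case split. If $\phi_s\le\tau^2$, the claim is trivial. Otherwise, $d\tau\le s$ gives $M=\Oh(s\log_\tau\phi_s)$, so $\log\frac{n\log\sigma}{Mw}\ge\log\phi_s-\log\log_\tau\phi_s-\Oh(1)=\Omega(\log\phi_s)$.
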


\paragraph{Beyond Random Access.}
Many previous data structures for (RL)SLG \RA support extracting not only individual characters but also substrings of $T$ of arbitrary length $m$,
with the additive time overhead of $\Oh(m)$~\cite{BLR15,CEK21,NOU22} or, if the substring is returned in the packed representation, $\Oh(m/\log_\sigma n)$~\cite{BCJT15} or $\Oh(m/\log_\sigma n\cdot \log_\tau \frac{n\log\sigma}{s\log n})$~\cite{BCG21,KNP23}.
We generalize \cref{thm:upper} to achieve the optimal overhead of $\Oh(m/\log_\sigma n)$.
Moreover, ours is the first solution that outputs the packed representation and comes with an efficient construction.

\begin{theorem}\label{thm:retrieve}
  Let $\G$ be an RLSLG of size $g$ generating a string $T\in [0\dd \sigma)^n$.
  In the word RAM model with word size $w\ge\Omega(\log (n\sigma))$, given $\G$ and any value $M$ with $g\log n < Mw < n\log \sigma$, one can in $\Oh(\frac{Mw}{\log n})$ time construct an $\Oh(M)$-size data structure that extracts any length-$m$ \mbox{substring in time}
    \[\Oh\left(\frac{\log \frac{n \log \sigma}{Mw}}{\log\frac{Mw}{g\log n}} + \tfrac{m\log \sigma}{w}\cdot \tfrac{M+g}{M}\right).\]
\end{theorem}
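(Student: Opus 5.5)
We reduce \cref{thm:retrieve} to the random-access structure of \cref{thm:upper}, which the paper proves for RLSLGs in \cref{sec:ub}. On top of it we add a mechanism for traversing consecutive leaves of the parse tree with amortized constant work per output word. Set $\tau := \frac{Mw}{g\log n}$. The underlying object is the transformed grammar from \cref{sec:ub}, a contracting-style RLSLG of height $h=\Oh(\log_\tau \frac{n\log\sigma}{Mw})$ and size $\Oh(g\tau)$. It has two properties. First, each leaf variable expands to a string of $\Theta(\frac{w}{\log\sigma}\cdot\frac{M}{M+g})$ characters, stored explicitly in packed form. Second, inner nodes have fan-out at most $\tau$, and child selection takes constant time through stored prefix sums.

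A query for $T\fragmentco{i}{i+m}$ has two phases.
\begin{enumerate}
\item Run the \RA descent from \cref{thm:upper} to locate the leaf containing position $i$, recording the root-to-leaf path on a stack. This costs $\Oh(h)$ time.
\item Perform an in-order traversal to the right. At each leaf, append its packed expansion, or the relevant part of it, to the output using word-level shifts and masks. Each leaf costs $\Oh(1)$ time plus its length divided by $w/\log\sigma$.
\end{enumerate}

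Run-length nodes $A\to B^k$ need special handling. We repeat the packed expansion of $B$ the required number of times, and when $\explen{B}$ is short we use doubling, so the cost is proportional to the output length rather than to $k$.

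The accounting rests on two points.
\begin{itemize}
\item Each leaf we visit, except the first and last, contributes $\Omega(\frac{w}{\log\sigma}\cdot\frac{M}{M+g})$ characters. So the number of visited leaves is $\Oh(1+\frac{m\log\sigma}{w}\cdot\frac{M+g}{M})$.
\item Each internal node on the traversal frontier is charged to a leaf below it. Moving from one leaf to the next costs amortized $\Oh(1)$ stack operations, plus an $\Oh(h)$ term paid once for the initial path and once for the final ascent. This works because every internal node has at least two children, or is a run node with expansion at least twice its child's.
\end{itemize}
Together these give the claimed bound. The construction time is inherited from \cref{thm:upper}. The packed leaf expansions fit in $\Oh(M)$ words because the total leaf expansion length, times $\log\sigma$, is bounded by $\Oh(Mw)$; this is exactly how the leaf length is chosen.

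The main obstacle is guaranteeing that leaves are long. The grammar from \cref{sec:ub} may contain short leaf variables, for example symbols near the boundary of long expansions. We plan to use the ``leafy'' normalization from \cref{sec:ub}, where every variable whose expansion has at most $\ell=\Theta(\frac{w}{\log\sigma}\cdot\frac{M}{M+g})$ characters is made a stored leaf, while every non-leaf has expansion longer than $\ell$. Under this normalization, consecutive visited leaves may still be short. However, any two consecutive siblings of a non-leaf parent together either cover more than $\ell$ characters or are merged into a single stored block. Therefore a pair-charging argument gives $\Omega(\ell)$ output characters per two leaves.

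Where $M<g$, the available space permits only $\ell=\Theta(\frac{M}{g}\cdot\frac{w}{\log\sigma})$, which produces the factor $\frac{M+g}{M}$ in the bound. Verifying that the run-length nodes and the constant-time navigation primitives are compatible with this merging is the step we expect to require the most care.
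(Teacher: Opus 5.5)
Your choice $\ell=\Theta(\frac{w}{\log\sigma}\cdot\frac{M}{M+g})$ coincides with the paper's $b=\Theta(\min(w,\tau\log n)/\log\sigma)$. The overall shape is also right: descend to position $i$, then sweep over leaves that store packed blocks. The gap is the step you flag as the main obstacle. Nothing in the proposal produces a grammar whose visited leaves carry $\Omega(\ell)$ characters each, or on average, and the argument you sketch fails.

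Turning every variable with $\explen{A}\le \ell$ into a stored leaf is the naive truncation that the paper explicitly rejects for traversal, and niceness itself defeats it. Suppose $\ell<\explen{A}<\tau$, which is possible as soon as $\tau>2\ell$, e.g.\ for $M\ge g\log^2 n$. Then every variable child of $A$ is $\tau$-light, i.e., has weight at most $\explen{A}/\tau<1$. So all non-empty children in $\rhs_\tau(A)$ are single terminals, and your sweep through $A$ makes one step per run of $\expand{A}$. That is about $\ell$ steps when $\expand{A}$ has no two equal adjacent characters, which loses the entire $w/\log\sigma$ speed-up.

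The repair ``or are merged into a single stored block'' is not part of any construction you state. Even with it, sibling pair-charging would not suffice, because consecutive leaves in traversal order need not be siblings. A tiny leading block of a long node may be followed by the tiny leading block of its first long child, then by that of the latter's first long child, and so on. This happens even though every sibling pair along the way covers more than $\ell$ characters. Moreover, merging after the niceness step places blocks into up to $\Theta(g\tau)$ right-hand-side positions. For storing them you have no bound better than $\Oh(g\tau\ell\log\sigma)=\Oh(\frac{Mw}{\log n}\cdot\frac{wM}{M+g})$ bits, which exceeds $\Oh(Mw)$ when $w\gg\log n$.

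The missing idea is to create the leaves \emph{before} making the grammar nice, working on the $\Oh(g)$-size normal-form grammar. The goal is that top rules contain only variables and every leaf has length in $[b\dd 2b)$. \cref{lemma:leafyconstruct} maintains, for every original variable $A$ with $\explen{A}\ge b$, a right-hand side of the form leaf, leaf$\cdot$leaf, or leaf$\cdot$top$\cdot$leaf. A short neighbour $C$ of a long child $B$ is absorbed into the adjacent boundary leaf of $B$ (Cases~3.2.1 and~3.2.2). A run $B^k$ with short $B$ is regrouped as a power of a single leaf of length $\explen{B^m}\in[b\dd 2b)$ (Case~4.2). This creates only $\Oh(g)$ leaves, occupying $\Oh(g\,b\log\sigma)\le\Oh(Mw)$ bits.

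Only then is the weighted niceness transformation applied to $\Top{\cH}$, with leaves treated as terminals of weight at least $b$ (\cref{corr:constructleafyandnice}); this is why the paper develops niceness for weighted strings. Every leaf of the top parse tree then has length at least $b$ by construction, so no charging is needed and the sweep visits $\Oh(1+m/b)$ leaves. Once this structure is in place, your amortized stack traversal is an acceptable substitute for the paper's worst-case $\Oh(1)$ node pointers (\cref{lemma:node_pointers_in_tc,lem:traverse_leafy}) for one-shot extraction. Without it, the claimed time bound does not follow.
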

More generally, we support fast character iterators that, upon initialization at position $i$, can in constant time move in either direction by $\Oh(\log_\sigma n)$ positions and output the characters at the intermediate positions.
None of the previous works achieves this without amortization for $\sigma\le n^{o(1)}$.

Our approach readily generalizes to computing various aggregate information about the prefixes of $T$, modeled by the prefix sum problem in which every character $a\in \Sigma$ is mapped to an element $\Phi(a)$ of a monoid.
In particular, similarly to \cite{BLR15,BCJT15,GJL21,NOU22,CEK21,KNP23}, we support $\rank$ and $\select$ queries, asking for the number of occurrences of a given character $a\in \Sigma$ in the prefix $T[0\dd i)$ and the position of the $r$-th leftmost occurrence of $a$ in $T$, respectively.

\begin{theorem}\label{thm:ranksel}
  Let $\G$ be an RLSLG of size $g$ generating a string $T\in [0\dd \sigma)^n$.
  In the word RAM model with $w\ge\Omega(\log (n\sigma))$, given $\G$ and any value $M$ with $g\log n < Mw < n$, one can in $\Oh(\frac{Mw\sigma}{\log n}+g\sigma\log n)$ time construct an $\Oh(M\sigma)$-size data structure supporting $\rank_{T}$ and $\select_{T}$ queries in time
    \[\Oh\left(\frac{\log \frac{n}{Mw}}{\log\frac{Mw}{g\log n}}\right).\]
\end{theorem}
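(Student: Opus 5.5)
We plan to prove \cref{thm:ranksel} by reduction to the general prefix-sum framework underlying \cref{thm:upper}. Concretely, we instantiate the prefix-sum problem over the monoid $(\Zz^{\sigma},+)$, or rather handle each character $a\in[0\dd\sigma)$ separately with the monoid $(\Zz,+)$ and the map $\Phi_a(b)=[a=b]$. This yields $\sigma$ independent structures, each of size $\Oh(M)$, which accounts for the $\Oh(M\sigma)$ total space.

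The first step is to apply the grammar transformation of the upper-bound sections (the contracting, leafy, nice RLSLG) to $\G$ with the branching parameter $\tau$ chosen as $\tau\approx \frac{Mw}{g\log n}$. This gives a grammar of size $\Oh(g\tau)$ and height $\Oh(\log_\tau\frac{n}{Mw})$. The leaf level is cut at expansions of length about $Mw/(g\tau)$, so the leaves are short fragments stored explicitly.

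The second step is to augment every top-level variable $A$ with the prefix sums of $\Phi_a$ over the expansions of the symbols in $\rhs(A)$, stored as packed $\Oh(\log n)$-bit counters. For run-length rules $A\to B^k$ it suffices to store $|\expand{B}|_a$ and use a multiplication. A $\rank_a(i)$ query then descends the parse tree exactly as in \RA, summing the stored prefix counts along the path. Each step selects a child in constant time via the fusion-node / predecessor structure that the navigation already provides. At a leaf, we compute a rank inside a packed fragment by popcount-style word operations over $\Oh(1)$ words, or via a precomputed table.

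This is also where the bound changes from $n\log\sigma$ to $n$. A leaf storing $\ell$ characters must now support counting occurrences of $a$, and we cannot afford $\ell\log\sigma$ bits per character. Storing instead the characteristic bitvector of $a$ costs $\ell$ bits per leaf, so the uncompressed endpoint becomes $n$ bits and the query time is $\Oh(\log_\tau\frac{n}{Mw})$.

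$\select_a(r)$ is symmetric: at each node we perform a successor search on the stored prefix counts of $a$ rather than on prefix lengths. The monotone counters admit the same constant-time child selection within word-packed nodes of fan-out $\tau^{\Oh(1)}$, provided we use the same fusion-based predecessor structure. The construction time comes from computing $|\expand{A}|_a$ for all variables and all $a$ by a bottom-up pass over the $\Oh(g\log n)$-size intermediate grammar, costing $\Oh(g\sigma\log n)$, plus building the $\sigma$ structures in $\Oh(\frac{Mw\sigma}{\log n})$.

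The main obstacle I expect is ensuring constant-time child selection for $\select$. The counts are not the lengths the navigation was designed around, and run-length rules $B^k$ require a division by $|\expand{B}|_a$, which is unavailable when that count is $0$. We would handle this by skipping zero-count children (so they never become targets) and checking that the predecessor structure on prefix counts remains within the space budget.
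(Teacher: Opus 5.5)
Your rank half is essentially the paper's argument: per-character indicator strings, prefix sums accumulated along a leafy grammar with a nice top part and leaves of length $\Theta(\log n)$, and constant-time bitvector rank at the leaves. The select half has a genuine gap, exactly at the step you flag as the main obstacle.

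\cref{lemma:nicechild} answers $\child_A$ in constant time only because condition~\ref{it:local} of \cref{def:nice} makes the run boundaries locally sparse \emph{in the coordinates used for navigation}. Every window of weight $\frac1\tau\weight{A}$ contains $\Oh(\log\tau)$ boundaries, so $\tau$ uniform buckets with $\Oh(\log\tau)$-key fusion trees suffice. Your grammar is nice with respect to lengths in $T$, and nothing makes the prefix counts of $a$ locally sparse. The $\Theta(\tau)$ children of a node can contain essentially arbitrary numbers of $a$'s (anything between $0$ and the child's length). Hence a window covering a $\frac1\tau$ fraction of the occurrences of $a$ in $A$ may contain $\Theta(\tau)$ boundaries. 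Bucketing in count space then fails, and you are left with predecessor search among $\Theta(\tau)$ keys from a polynomial universe. Since $\tau=\frac{Mw}{g\log n}$ can be $n^{\Omega(1)}$, fusion trees give only $\Oh(1+\log_w\tau)$ time per level, i.e., $\Oh\big((1+\log_w\tau)(1+\log_\tau\frac{n}{Mw})\big)$ in total. This misses the claimed bound by a $\log_w\tau$ factor when $\tau\gg w$, which is precisely the $\tau\le\poly\log n$ barrier of earlier work. Also, a node of fan-out $\Theta(\tau)$ is not ``word-packed'' once $\tau\gg w$. Skipping zero-count children does not address any of this.

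The missing idea is to make occurrences of $a$ the unit of length. The paper builds (\cref{claim:selectgrammar}) an $\Oh(g)$-size RLSLG for $\mathsf{df}(T^{(a)})=z_{k_0}z_{k_1}\cdots z_{k_m}$. Here $k_j$ counts the non-$a$ characters immediately preceding the $j$th occurrence of $a$. It then uses $\select_{T,a}(r)=\PrefixSum_{\mathsf{df}(T^{(a)}),\Phi}(r+1)+r$ with $\Phi(z_i)=i$ (\cref{obs:Phi}). Making this new grammar leafy with a nice top part, with respect to its own unit weights, turns select into ordinary position-based navigation. Then \cref{lemma:nicechild} applies verbatim, and no niceness with respect to counts is ever needed.

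Since a single symbol $z_k$ may stand for a long stretch of $T$, the leaves cannot be explicit bitvectors in $T$'s coordinates. Instead they store Elias--Fano encodings of their prefix sums (\cref{claim:EF}), with $b=\lfloor\log n/\log\frac{n}{g\tau\log n}\rfloor$; the case $n\le 2g\tau\log n$ is handled separately by storing $T$ explicitly. Building these leaf structures in $\Oh(gb)\le\Oh(g\log n)$ time per character is the actual source of the $g\sigma\log n$ term. It does not come from a bottom-up pass over an $\Oh(g\log n)$-size intermediate grammar; no such grammar arises in the construction.
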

For constant alphabet size $\sigma$, this result matches our time-space trade-off for \RA.
It is also straightforward to see that, already for $\sigma=2$, $\rank$ and $\select$ queries are at least as hard as \RA,\footnote{For $\rank$, note that $T[i]=1$ if and only if $\rank_{T,1}(i)<\rank_{T,1}(i+1)$. For $\select$, we map $T$ through a morphism $0\mapsto 10$ and $1\mapsto 01$, preserving $n$ and $g$ up to a constant factor.
The resulting string $T'$ satisfies $\select_{T',1}(i)=2i+T[i]$.}
so our bounds remain tight for almost all parameter regimes.
The only penalty that we suffer already for $\sigma\le\Oh(1)$ is the extra $\Oh(g\log n)$ term in the construction algorithm.
As we discuss in \cref{sec:rank_select}, this term arises only for $\select$, and only when $g \ge n/\log^{1+o(1)} n$.
It is an artifact of our usage of Elias--Fano encoding~\cite{E74,F71}, which currently lacks an efficient construction from the packed input representation using variable-length gap encoding.

In general, similarly to all previous data structures in grammar-compressed space,
ours suffers from a $\sigma$-factor overhead in space complexity (and construction time).
Reducing this overhead, possibly at the cost of slight increase in the query time, remains a major open question.
As noted in \cite[Section 6]{BCJT15}, this would require breakthroughs for counting paths between DAG vertices.

\subsection{Our Techniques}\label{sec:techniques}
We prove \cref{thm:upper,thm:retrieve,thm:ranksel} by converting the input (RL)SLG into an equivalent \emph{shallow} grammar that admits fast \emph{navigation} in its parse tree.
Let $\G$ be an (RL)SLG and let $A\in\Symb_\G$ be a symbol (variable or terminal).
We write $\rhs_\G(A)$ for the right-hand side of its production (a string over $\Symb_\G$) and $\expand[\G]{A}$ for its full expansion (the unique string over terminals derived from $A$).

\RA can be implemented recursively.
Suppose $A$ is a variable with production $A\to B_0\cdots B_{k-1}$, and we are asked to retrieve $\expand[\G]{A}[i]$.
Then there is a unique $j\in[0\dd k)$ such that
$\explen[\G]{B_0\cdots B_{j-1}} \le i < \explen[\G]{B_0\cdots B_{j}}$,
and the query recurses on the symbol $B_j$ and the shifted index $i-\explen[\G]{B_0\cdots B_{j-1}}$.
We denote the task of finding this $j$ by $\child_A$.
It is a $\rank$ query over the prefix sums $\explen[\G]{B_0\cdots B_{j}}$, asking for the number of such sums that are at most $i$.

Grammar balancing~\cite{GJL21} yields grammars of height $\Oh(\log n)$, size $\Oh(g)$, and production lengths $k\le 2$, leading to $\Oh(\log n)$-time \RA in $\Oh(g)$ space.
To reduce the query time to $\Oh(\log \frac{n}{g})$, we would like to decrease the height to $\Oh(\log \frac{n}{g})$ by creating a new length-$\Theta(g)$ right-hand side of the starting symbol $S$.
This is where our first obstacle appears: even if every variable $A$ has recursion depth $\Oh(\log \explen[\G]{A})$, a natural process starting with $S$ and exhaustively replacing all symbols longer than $\frac{n}{g}$ by their right-hand sides may create a sequence of $\Theta(g\log n)$ symbols, exceeding $\Oh(g)$ space.
We avoid this by working with \emph{contracting} grammars~\cite{G21}, where a variable $B$ can appear on the right-hand side of a variable $A$ only if $\explen[\G]{B}\le \frac12\explen[\G]{A}$.
We show in \cref{lemma:nice-var} how this implies that the same process produces a sequence of $\Oh(g)$ symbols of length at most $\frac{n}{g}$, thus reducing $\child_S$ to a $\rank$ query on an $\Oh(g)$-element set of prefix sums.
This query can then be implemented in $\Oh(\log \frac{n}{g})$ time~\cite{BLR15} or even in $\Oh(\log \log \frac{n}{g})$ time~\cite{BBV12}.

For a more general trade-off with $\Oh(\log_\tau \frac{n}{g})$ query time, we want to similarly increase the fan-out of every other variable $A$ to $\Theta(\tau)$ so that the height becomes $\Oh(\log_\tau \frac{n}{g})\le \Oh(1+\log_\tau \frac{n}{g\tau})$.
Here, an obstacle is that we have to implement $\child_A$ in constant time in order to avoid extra factors in \RA time.
This is possible for $\tau \le \poly\!\log n$ using fusion trees~\cite{Fredman1993}, which is why the previous trade-offs~\cite{BCJT15} were limited to small $\tau$.
Our key structural notion is $\tau$-\emph{niceness} (\cref{def:nice}): as for the start symbol, we replace $\rhs_\G(A)$ by a longer right-hand side $\rhs_{\tau}(A)$ of $\Oh(\tau)$ symbols of length at most $\frac{1}{\tau}\explen[\G]{A}$ each.
Crucially (see \cref{lemma:nice-var}), the contracting property of $\G$ implies a strong \emph{locality} condition: any substring of $\expand[\G]{A}$ of length at most $\frac1\tau\explen[\G]{A}$ intersects only $\Oh(\log\tau)$ consecutive symbols of $\rhs_{\tau}(A)$.
Thus, by storing the answers to $\Theta(\tau)$ evenly spaced $\rank$ queries, we reduce every $\child_A$ to a $\rank$ query in a set of size $\Oh(\log\tau)$, which we answer in constant time using fusion trees (see \cref{lemma:nicechild}).
Combining this with a standard shortcut (storing $\expand[\G]{A}$ explicitly whenever $\explen[\G]{A}\le \log_\sigma n$) yields the trade-off of \cref{thm:upper}.
The approach extends to RLSLGs once we generalize the transformation of \cite{G21} appropriately (\cref{cor:contracting_rlslg}; a parallel work~\cite{CLG26} achieves almost the same generalization).

\paragraph{Substring Extraction.}
To extract any substring of $T$ with constant delay per character using an $\Oh(g)$-size data structure, we could simply use the character iterators of \cite{GKPS05} or \cite{Lohrey2017}, with slight adaptations to support RLSLGs and have construction time matching our \RA.
Outputting blocks of $b=\Theta(\log_\sigma n)$ characters with constant delay is much more challenging; see \cite{BCJT15}.
The underlying issue is that a long variable $A$ may contain a very short child $B$ (even of constant expansion length), so a character-by-character traversal cannot be turned into a block-by-block traversal by local shortcuts alone.
We address this by converting $\G$ into a \emph{$b$-leafy} grammar: \emph{leaf variables} have expansion length $\Theta(b)$, while the remaining \emph{top variables} have constant-size right-hand sides over top and leaf variables only.
Treating leaf variables as terminals yields a grammar for a shorter string of length $\Theta(n/b)$, whose characters represent blocks of $\Theta(b)$ characters of $T$.
Hence, standard character iterators can output one leaf per step, i.e., $\Theta(b)$ characters at a time.
To preserve lengths under this reinterpretation, we lift most constructions to \emph{weighted strings}, assigning each leaf a weight equal to its original expansion length.
This is how we prove \cref{thm:retrieve}.

\paragraph{Prefix Aggregation, $\rank$, and $\select$.}
Prefix aggregation queries can be handled similarly to \RA: during the same root-to-leaf traversal, we maintain the monoid aggregate of the already traversed prefix, yielding $\Oh(\log_\tau \frac{n}{g\tau})$ time once $\child_A$ is supported in constant time.
The $\rank_{T,a}$ query is the special case where each character is mapped to $0$ or $1$ and the aggregate is the sum.
As in previous works~\cite{BCJT15,P19}, for $\select_{T,a}$, we first transform $T$ so that each character encodes a block of non-$a$ characters followed by an $a$, reducing $\select$ to a prefix aggregation query on the transformed string.
Obtaining the bound in \cref{thm:ranksel} (with $g\tau \log n$ in the denominator) requires a succinct structure for the leaf variables; we use Elias--Fano encoding~\cite{E74,F71}.

\paragraph{Lower Bound.}
We prove \cref{thm:lower} in the \emph{nondeterministic cell-probe model}, formulated in terms of \emph{certificates} by Wang and Yin~\cite{Wang2014},
as well as in the \emph{randomized cell-probe model with two-sided bounded error}.
We use a reduction from Blocked Lopsided Set Disjointness (BLSD)~\cite{P11}, extending the construction of Verbin and Yu~\cite{VY13} behind the lower bound discussed in Footnote~\ref{foot:VY13} by introducing a tunable splitting parameter~$Q$.
The key idea is that, instead of answering a single BLSD query using one \RA query in a highly compressible string ($g\approx \log^2 n$),
we answer it using $Q$ \RA queries in a moderately compressible string (roughly $g \approx Q \log^2 \frac{n}{Q}$).
This simple idea not only recovers the other lower bound of~\cite{VY13} (for $\smash{n^{\Omega(1)}\le g \le n^{1-\Omega(1)}}$) but also readily generalizes it to almost all parameter regimes.
The original proof of Verbin and Yu~\cite{VY13} also ultimately relies on the hardness of BLSD, but it passes through a several-step chain of reductions from~\cite{P11,VY13}.
In the randomized setting, errors could accumulate across all $Q$ \RA queries, so our reduction uses an extra subroutine that repeats some queries using multiple copies of the data structure in order to derive a bounded-error communication protocol for BLSD that violates the lower bound of~\cite{P11}.
No such subroutine appears in~\cite{P11}, which is why their randomized cell-probe lower bounds only work in the zero-error models.
In the nondeterministic setting, this issue disappears; the remaining annoyance, shared with the randomized setting, is the amount of calculations needed to map between all the parameters in the reduction.

\subsection{Related Work}
\RA has been studied from a multitude of perspectives for dozens of compressibility measures $s(T)$.
Early milestone results provide constant-time random access to entropy-compressed strings, achieving $nH_0(T)+o(n)$ bits for binary strings~\cite{Raman2007} and $nH_k(T)+o(n \log \sigma)$ bits for strings over larger alphabets~\cite{SG06} (for $k<o(\log_\sigma n)$); see also~\cite{Grossi2013} for a survey.
Unfortunately, low-order entropy does not capture large-scale repetitions, and the lower-order terms in these data structures are relatively large, which makes them unsuitable for highly compressible strings.
That regime brings a whole zoo of compressibility (repetitiveness) measures; see~\cite{Navarro2021} for a dedicated survey, which discusses which measures support efficient random access.
In space $\Oh(s(T)\poly\!\log n)$, the best one can typically hope for (in highly compressible strings) is $\Oh(\smash{\frac{\log n}{\log \log n}})$ \RA time: see~\cite{DeK24} for a work that generalizes the lower bound of~\cite{VY13} to many measures that are not bounded by $\Oh(g(T)\poly\!\log n)$.
An exception is LZ78~\cite{LZ78}, for which $\Oh(\log \log n)$-time access is possible~\cite{DLR13}.
Unfortunately, it is not known how to support \RA queries efficiently (e.g., in $n^{o(1)}$ time) in space proportional to the size $z(T)$ of the LZ77 parsing or $r(T)$ of the run-length encoded Burrows--Wheeler transform.
The latter is particularly surprising given $\Oh(r(T))$-space data structures supporting pattern-matching queries~\cite{GNP20}, which often makes \RA the sole bottleneck; see~\cite{Becker2025}.
There exists, however, a trade-off between \RA time and an additive term on top of a succinct encoding of $\rle(\mathsf{BWT}(T))$~\cite{Sinha2019}.
The lack of efficient access in $\Oh(z(T))$ space motivated the introduction of more restricted variants like LZ-End~\cite{KN10}, which support \RA in $\Oh(\poly\log n)$ time~\cite{KS22}, height-bounded LZ~\cite{Bannai2024}, bounded-access-time LZ~\cite{Liptak2024}, and LZBE~\cite{SNY25}.
Further works for \RA in grammar-compressed strings include schemes providing faster access to characters that are close to static bookmarks~\cite{CGW16} or dynamic fingers~\cite{BCC18,G21}, or located in incompressible regions~\cite{CLG26}.
Very recently, \RA has also been studied for two-dimensional grammar-compressed strings~\cite{DK26}.

\input{open-problems.tex}

%% file: open-problems.tex
\subsection{Open Problems}
The most important open problem about grammar-compressed \RA remains whether $o(\log n)$-time queries can be achieved in $\Oh(g)$ space for any $g\le n^{1-\Omega(1)}$.
We hope that our direct reduction from BLSD will inspire renewed efforts to tackle this question from the lower-bound side.
A more structured variant that seems as difficult as the general case is when each production is of length two and each expansion length is a power of two; in that case, no balancing is needed and $\child_A$ queries remain trivial for every $\tau$.
With appropriate parameters, our lower bound of \cref{thm:lower} only produces grammars of this form, albeit ones supporting $\Oh(\smash{\frac{\log n}{\log \log n}})$-time queries.

Another task stemming from the limitations of \cref{thm:lower} is to understand whether SLG \RA becomes easier for very small $g$ (e.g., $g\le \Oh(\log n)$) in the word RAM model.

Since every SLG can be encoded in $\Oh(g\log g)$ bits, for very small grammars ($g\le n^{o(1)}$) it is also open to achieve efficient (e.g., polylogarithmic-time) \RA queries in $\Oh(g\log g)$ bits.

Compressibility measures closely related to $g(T)$ pose many further interesting and long-standing problems. In particular, it is not known how to answer \RA queries in $n^{o(1)}$ time and in $\Oh(z(T))$ or $\Oh(r(T))$ space (where $r(T)$ is the number of runs in the Burrows--Wheeler transform).

Very recently~\cite{KK26}, $\Omega(\smash{\frac{\log n}{\log \log n}})$ lower bounds for moderately compressible strings have been proved for many queries beyond \RA, including suffix array and inverse suffix array queries.
Can one generalize these lower bounds to almost all parameter regimes, similarly to~\cref{thm:lower}?
Further queries, such as \textsc{Longest Common Extension} and \textsc{Internal Pattern Matching}, are at least as hard as \RA, so the lower bounds translate automatically.
However, for those queries it remains open if the existing upper bounds~\cite{I17,KK23,DK24} (formulated in terms of $\delta(T)$ or $z(T)$) can be improved for weakly compressible strings and generalized to smooth interpolations with the uncompressed setting, in the spirit of \cref{thm:block}.

Finally, we leave for future work whether our techniques can help in settings where the characters at some positions are supposed to be more easily accessible than others, e.g., due to bookmarks~\cite{CGW16} or fingers~\cite{BCC18,G21}.
Currently, we only achieve this functionality by varying character weights, but the weight needs to be the same across occurrences of the character.

%% file: preliminaries.tex
\section{Preliminaries}
Let \(\Sigma\) be some finite set.
Let \(T = t_0\cdots t_{n-1}\in \Sigma^*\). We call \(T\) a \emph{string} (or \emph{sequence}) of length \(|T|\coloneqq n\) over the \emph{alphabet} \(\Sigma\).
We use character notation \(T[i] \coloneqq t_i\) for \(i \in \iv{0}{|T|}\) and substring notation \(T\iv{i}{j} \coloneqq t_i\cdots t_{j-1}\) for \(0\leq i \leq j \leq |T|\).

Let \(a^k\) for character \(a\in\Sigma\) and integer \(k\in\Zp\) denote the \(k\)-fold repetition of \(a\); we also refer to it as the $k$-th \emph{power} of $a$.
Given any sequence \(T\coloneqq a_0^{k_0}a_1^{k_1}\cdots a_{m-1}^{k_{m-1}}\in \Sigma^*\), where \(a_i\neq a_{i+1}\) for $i\in \iv{0}{m-1}$ and \(k_i\in \Zp\) for $i\in \iv{0}{m}$, we define its run-length encoding \(\rle(T) = (a_0,k_0)(a_1,k_1)\cdots (a_{m-1},k_{m-1})\). Note that \(|\rle(T)| \coloneqq m\).

\paragraph{Straight-Line Grammars.}

We call \(\G \coloneqq (\Var_\G,\Sigma_\G,\rhs_\G : \Var_\G\rightarrow \left(\Var_\G\cup \Sigma_\G\right)^*,
S,
\weight[\G]{\cdot} : \Sigma_\G \rightarrow \mathbb{Z}_{\geq 1})\) a \emph{weighted straight-line grammar} (SLG) with variables \(\Var_\G\), terminals \(\Sigma_\G\), right-hand sides \(\rhs_\G\), start symbol \(S\in \Var_\G\cup \Sigma_\G\), and weight function \(\weight[\G]{\cdot}\) if there is an order \(\prec_\G\) on \(\Var_\G\) such that \(B\prec_\G A\) holds whenever \(B\) appears in $\rhs_\G(A)$.
We call \(\Symb_\G\coloneqq\Var_\G\cup\Sigma_\G\) the set of symbols.
When clear from context we drop the subscript \(\cdot_\G\).
The \emph{size} $|\G|$ of an SLG $\G$ is the sum of the lengths of all right-hand sides.

Straight-line grammars whose right-hand sides are stored in run-length encoded form are called run-length straight-line grammars (RLSLGs).
Their semantics are unchanged; only the size is measured as the sum of the lengths of the run-length-encoded right-hand sides.
Whenever we construct an (RL)SLG from another (RL)SLG, we assume that if the input is an SLG (i.e., without runs), then the output is also an SLG.
Usually, we bound the size of every individual rule. Note that this size is measured differently for SLGs and RLSLGs.

The \emph{expansion} of a terminal \(a\in\Sigma_\G\) is \(\expand[\G]{a}\coloneqq a\).
We extend expansions recursively to variables and sequences of symbols:
\[
  \expand[\G]{A} \coloneqq
  \begin{cases}
    \expand[\G]{\rhs(A)} & \text{if } A\in\Var_\G,\\
    \bigodot_{i\in\iv{0}{|A|}}\expand[\G]{A[i]} & \text{if } A \in \Symb_\G^*.
  \end{cases}
\]
We say that a symbol or a sequence of symbols \(A\) \emph{expands to} the string \(\expand[\G]{A}\).
The grammar \(\G\) \emph{produces} the string \(\expand[\G]{S}\).
It \emph{defines} the strings \(\expand[\G]{A}\) for all \(A\in\Symb_\G\).

Let \(\Lang_\G \coloneqq \{\expand[\G]{A} : A\in\Symb_\G\}\) denote the set of strings defined by \(\G\).
If \(\Lang_\G \subseteq \Lang_\cH\), then we say that \(\cH\) defines all strings defined by \(\G\).
We call the function \(f: \Symb_{\G}\rightarrow \Symb_{\cH}\) a grammar homomorphism from \(\G\) to \(\cH\) if \(\expand[\G]{A} = \expand[\cH]{f(A)}\) holds for every \(A\in \Symb_{\G}\).
The existence of such a homomorphism implies~\(\Lang_\G \subseteq \Lang_\cH\).

If the weight function \(\weight[\G]{\cdot}:\Sigma_\G \to \Zp\) is not stated explicitly, we assume it is the unit function \(\weight[\G]{\cdot} \coloneqq 1\). We call such a grammar unweighted.
We extend the weight function recursively to variables and sequences of symbols:
\[
  \weight[\G]{A} \coloneqq
  \begin{cases}
    \weight[\G]{\rhs(A)} & \text{if } A\in\Var_\G,\\
    \sum_{i\in\iv{0}{|A|}}\weight[\G]{A[i]} & \text{if } A \in \Symb_\G^*.
  \end{cases}
\]
For every sequence of symbols \(A\), the property \(\weight{A} = \weight{\expand[\G]{A}}\geq \explen[\G]{A}\) holds.

The \emph{height} $\height_\G(A)$ of every symbol $A\in \Symb_\G$ is also defined recursively:
\[ \height_\G(A) \coloneqq \begin{cases}
0 & \text{if }A\in \Sigma_\G,\\
0 & \text{if }A\in \Var_\G\text{ and }\rhs_\G(A) = \emptystring,\\
1 + \max_i \height_\G(A_i) & \text{if }A\in \Var_\G\text{ and }\rhs_\G(A)=A_0\cdots A_{k-1} \text{ for }k \ge 1.
\end{cases}\]

\paragraph{Parse Trees.}

We call the following directed graph the \emph{parse tree} of a weighted SLG.
Here, each node is a pair \((A,a)\) consisting of a symbol \(A\) and a weighted offset \(a\) into the expansion of the start symbol.

\begin{definition}\label{def:parsetree}
  Let \(\G=(\Var,\Sigma,\rhs,S,\weight{\cdot})\) be an (RL)SLG.
  We call the directed graph \(\Parse_{\G}\) the parse tree of \(\G\) if it is the smallest graph such that:
  \begin{enumerate}
    \item  \(\Parse_{\G}\) contains a node \((S,0)\);
    \item for every node \(u \coloneqq (A,a)\) with $A\in \Var$ and \(\rhs(A)= B_0\cdots B_{k-1}\), and for every $i\in [0\dd k)$, the parse tree \(\Parse_{\G}\) contains a node \(v_i \coloneqq \left(B_i,a+\sum_{j=0}^{i-1} \weight{B_j}\right)\) and an edge \((u,v_i)\).
  \end{enumerate}
\end{definition}

Note that, for every node $(A,a)$ of the parse tree $\Parse_\G$, the height $\height(A)$ is equal to the number of edges of the longest path from $(A,a)$ to a leaf of $\Parse_\G$ (the height of the subtree rooted at $(A,a)$).

%% file: upper/overview.tex
\section{The Upper Bounds: Overview}\label{sec:overview}
This section overviews \cref{sec:structured,sec:ub,sec:traversal,sec:rank_select}, which contain the upper-bound proofs behind \cref{thm:upper,thm:retrieve,thm:ranksel}.
As discussed in \cref{sec:techniques}, the core theme is to transform an input grammar into a form that admits \emph{constant-time local navigation}, while keeping the grammar \emph{shallow} and the overall size under control.
All transformations are phrased in terms of grammar homomorphisms, so they preserve expansions and let us switch to more structured grammars producing the same strings.

\paragraph{\RA in Weighted Strings and Parse-Tree Viewpoint.}
Most of our results are stated for strings over weighted alphabets $(\Sigma,\weight{\cdot})$.
In that setting, a random access query is given $i\in \iv{0}{\weight{T}}$ and returns $T[j]$ for the unique position $j\in \iv{0}{|T|}$ such that
$\weight{T[0\dd j)} \le i < \weight{T[0\dd (j+1))}$.
If needed, we can also output the offset $\weight{T[0\dd j)}$ and the position $j$.

\RA in a string $T$ produced by an (RL)SLG $\G$ corresponds to descending the parse tree $\Parse_\G$:
we maintain a node $(A,a)$ with $i\in [a\dd a+\weight{A})$ and repeatedly move to the child $(B,b)$ whose interval $[b\dd b+\weight{B})$ contains $i$, until a terminal is reached (\cref{alg:random}).
This transition step is implemented by a $\child_A$ query.
For $\rhs_\G(A)=B_0\cdots B_{k-1}$, $\child_A$ reduces to a rank query in
$X_A=\{\weight{B_0\cdots B_{j-1}} : j\in [1\dd k]\}$ on input $i-a$ (\cref{lemma:nicechild}).
Thus, the overall query time is governed by (i)~the per-node cost of $\child_A$ and (ii)~the depth of $\Parse_\G$; the goal is to optimize both.

\paragraph{Contracting Grammars.}
The first step is to obtain a grammar in which every variable shrinks in a controlled way along parse-tree edges.
Ganardi~\cite{G21} introduced \emph{contracting} SLGs, requiring that every child $B$ of a variable $A$ satisfies $\weight{B}\le \weight{A}/2$.
He proved that every SLG $\G$ can be homomorphically embedded into a contracting SLG of size $\Oh(|\G|)$ and production lengths bounded by some constant $d$.
Although the final result in \cite{G21} is established for the unweighted case, the key ingredients of its proof already use weighted SLGs.
In \cref{theorem:weighted-contracting-slg}, we extend the statement and the remaining arguments to fully incorporate the weighted perspective.
This extension makes it easy to lift the result to RLSLGs in \cref{cor:contracting_rlslg}.\footnote{A parallel independent work~\cite{CLG26} proves the same result, albeit for unweighted RLSLGs only.}
We treat these results as black boxes, relying only on the constant bound $d$ on right-hand side size (number of symbols or runs of symbols).

\paragraph{From Contracting to Nice.}
The contracting property already implies logarithmic height $h(A)\le \Oh(\log\explen[\G]{A})$ for every symbol $A\in \Symb_\G$, but to obtain the desired trade-off we need faster shrinkage.
For $\tau>1$, a child $B$ of $A$ is \emph{$\tau$-heavy} if $\weight{B}>\frac1\tau\weight{A}$, and $\tau$-light otherwise (\cref{def:tau-heavy}).
We also impose a local regularity condition that later yields $\Oh(1)$-time $\child_A$ implementation.

In \cref{def:nice}, we call a variable $A$ \emph{\(\tau\)-nice} if:
\begin{enumerate*}[label=(\arabic*)]
  \item\label{itov:nice-light} every variable child in $\rhs(A)$ is \(\tau\)-light;
  \item\label{itov:nice-short} $\rhs(A)$ has at most $2d\tau$ runs (or symbols, for SLGs);
  \item\label{itov:nice-local} every substring \(X\) of \(\rhs(A)\) of weight at most \(\frac1\tau\weight{A}\) has at most \(2d\ceil{\log \tau}\) runs (or symbols).
\end{enumerate*}
The key construction (\cref{lemma:nice-var}) starts from a \emph{contracting} grammar and replaces $\rhs(A)$ by a derived sequence $\rhs_\tau(A)\in \Symb_\G^*$, obtained by exhaustively replacing all \(\tau\)-heavy children with their right-hand sides.
The construction runs in time $\Oh(\tau)$ and ensures~\ref{itov:nice-short} and~\ref{itov:nice-local}; its proof is an induction on $\ceil{\log\tau}$ via the recursion $\tau\mapsto \tau/2$.
The only difference between $\rhs_{\tau/2}(A)$ and $\rhs_{\tau}(A)$ is that $\tau$-heavy children are replaced by their right-hand sides; thanks to the contracting property, all these replacements can happen in parallel, and there are $\Oh(\tau)$ of them in total and $\Oh(1)$ in a substring of weight at most \(\frac1\tau\weight{A}\).
Each replacement produces at most $d$ runs, so the bounds in~\ref{itov:nice-short} and~\ref{itov:nice-local} grow by $\Oh(d\tau)$ and $\Oh(d)$ compared to $\tau/2$.

We apply this process to all variables, with $\tau_r$ for the starting symbol and $\tau_v$ for the remaining ones, to obtain what we call a \((\tau_r,\tau_v)\)-nice grammar of size $\Oh(\tau_r+|\G|\tau_v)$.
In the resulting parse tree, the depth of every node $(A,a)$ is at most
$ 2+\max\!\left(0,\ \log_{\tau_v}\frac{\weight{S_\G}}{\tau_r\weight{A}}\right)$; see \cref{lem:nice-height}.

\paragraph{Leafy Grammars: Packing \texorpdfstring{\boldmath $b$}{b} Characters per Leaf.}
To exploit bit-parallelism when $\sigma$ is small, we introduce \emph{$b$-leafy} grammars (\cref{def:leafy}).
Here variables split into:
\begin{enumerate*}[label=(\roman*)]
  \item \emph{leaf variables} $\Varleaf{\cH}$ expanding to explicit terminal strings of length in $[b\dd 2b)$, and
  \item \emph{top variables} $\Vartop{\cH}$ whose right-hand sides contain only (top and leaf) variables.
\end{enumerate*}
The top part $\Top{\cH}$ treats leaf variables as terminals and captures the global structure, while leaf variables store the explicit blocks (packed into \(\Oh(b\log\sigma)\) bits, so \(\Oh(1)\) words for our choices of $b$).
The construction in \cref{lemma:leafyconstruct} guarantees $|\Top{\cH}|,|\Varleaf{\cH}|\in \Oh(|\G|)$ and runs in time \(\Oh(|\G|\cdot (1+b\log\sigma/w))\): we maintain only $\Oh(1)$ helper variables per original variable and generate each explicit block directly as a packed string.
The produced top rules have a deliberately restricted form: the homomorphic image of each variable of $\G$ with expansion length at least $b$ is a top variable whose right-hand side consists of at most two leaf variables plus possibly a middle top variable in between, and every helper top rule has constant run-length size.
This prevents the parse tree from containing long chains of tiny explicit pieces and is the structural reason why the traversal routines in \cref{sec:traversal} can output $\Theta(b)$ characters per step.

We then combine leafiness with niceness:
\cref{corr:constructleafyandnice} builds, from an unweighted RLSLG, a $b$-leafy grammar whose weighted top part $\Top{\cH}$ is \((\tau_r,\tau_v)\)-nice; this is where we need the preceding niceness construction to work for weighted strings.
Leafiness yields a further height reduction, giving a height bound of $3+\max\!\left(0,\ \log_{\tau_v}\frac{\weight{S_\G}}{\tau_r b}\right)$.

\paragraph{Constant-Time \texorpdfstring{$\child$}{child} Queries from Niceness.}
After the above transformations, the remaining algorithmic bottleneck is answering $\child_A$ at a \(\tau\)-nice variable $A$.
\cref{lemma:nicechild} shows that we can preprocess $A$ in $\Oh(\tau)$ time into \(\Oh(\tau\log \weight{A})\) bits so that \(\child_A\) is answered in \(\Oh(1)\) time.
At a high level, we store the run-length encoding $C_A=\rle(\rhs(A))$ and the cumulative run boundaries $P_A$ (prefix weights of runs).
Then $\child_A((A,a),i)$ reduces to computing the rank of $i-a$ in the set of run boundaries plus arithmetic operations (integer division) to locate $i$ within the run.
Condition~\ref{itov:nice-local} in \(\tau\)-niceness implies a \emph{local sparsity} property:
every interval of weighted length $\frac1\tau\weight{A}$ intersects only $\Oh(\log\tau)$ run boundaries.
This enables a standard bucketing strategy:
we partition the universe $\iv{0}{\weight{A}}$ into $\tau$ buckets, and inside each bucket store the $\Oh(\log \tau)$ boundaries in a constant-time rank structure (fusion tree~\cite{Fredman1993,PT14}).
The total size of the $\tau$ buckets is still $\Oh(\tau)$, so these rank structures take $\Oh(\tau \log\weight{A})$ bits in total, and they can be constructed in $\Oh(\tau)$ total time.

\paragraph{Random Access: Putting It Together.}
For weighted strings, \cref{lem:ra_simple} constructs a \((|\G|\tau,\tau)\)-nice grammar and the per-variable \(\child_A\) structures, and then answers queries by the parse-tree traversal above.
The time becomes the nice height bound, namely
$\Oh\!\left(\max\!\left(1,\ \log_\tau\frac{\weight{T}}{|\G|\tau\weight{c}}\right)\right)$ for the returned character $c$.
For unweighted $T\in[0\dd\sigma)^n$, we additionally apply the leafy transformation with $b=\Theta(\log_\sigma n)$ so each leaf block fits in $\Oh(1)$ words (when $w=\Theta(\log n)$), yielding \cref{lem:ra_unweighted} and the trade-off in \cref{thm:upper}.
We also prove a weighted analog of \cref{lem:ra_unweighted} in \cref{thm:ra}.

\paragraph{Traversal and Substring Extraction.}
Section~\ref{sec:traversal} generalizes random access to a pointer interface that supports constant-time forward/backward traversal.
For weighted strings, we show how to traverse characters one by one, building upon~\cite{Lohrey2017}.
For unweighted strings over small alphabets, we crucially exploit $b$-leafy grammars.
The key is to apply character pointers to the weighted top string $\Top{T}$ while maintaining an offset inside the current explicit leaf block; a character pointer to $T$ is represented as ``(a pointer to $\Top{T}$) + (an offset in the current block)''.
Because each leaf block has length $\Theta(b)$ and is stored packed, each traversal step can return $\Theta(b)$ characters in constant time.
As a result, \cref{cor:traverse} supports extracting \(m\) consecutive characters around a position in time
\(\Oh(1+m/b)\), with construction time matching the random-access trade-off; \cref{cor:substring} turns this directly into substring extraction.
The final bounds choose \(b=\Theta(\min(w,\tau\log n)/\log\sigma)\), so the block operations fit in $\Oh(1)$ words and the additive extraction term becomes \(m\log\sigma/\min(w,\tau\log n)\).

\paragraph{Rank/Select via Prefix Sums.}
Finally, \cref{sec:rank_select} lifts the entire framework from extracting characters to computing arbitrary \emph{monoid-valued} aggregates.
Given a mapping \(\Phi:\Sigma\to (\cM,\oper,\cZero)\), prefix-sum queries ask for \(\Phi(T[0\dd i))\).
The main \cref{lemma:prefsum,lemma:prefsum-leafy} mirror \RA: we store for each variable the monoid value of its expansion and combine it along the traversed path.
Rank follows by instantiating \(\Phi\) appropriately (mapping a queried character to \(1\) and the rest to \(0\)).
For select, we first transform the string so that each character encodes the number of non-occurrences between consecutive occurrences (i.e., a block of non-\(a\) characters followed by an~\(a\)), reducing \(\select_{T,a}\) to a prefix-sum query on the transformed string (\cref{claim:selectgrammar}).
To obtain the tightest bounds, we additionally build succinct rank/select structures for the short leaf blocks:
we use constant-time $\rank$ within packed bitmasks and constant-time $\select$ in the Elias--Fano encoding (\cref{lemma:bitvector_rankselect,claim:EF}), so the extra work and storage at leaves do not introduce additional overhead.

%% file: lowerbound.tex
\newcommand{\out}{\mathsf{out}}
\newcommand{\cR}{\mathcal{R}}

\section{The Lower Bound}
In this section, we prove cell-probe lower bounds for SLG \RA against both nondeterministic and bounded-error randomized query algorithms.
The lower bounds share the high-level structure and many steps, so we present them in parallel.
We first explain the formal settings.

Any data structure problem can be modeled as a function $f:\cX\times \cY\to \cZ$, where $x\in \cX$ is a query, $y\in \cY$ is an input, and $f(x,y)$ is the correct answer.
A deterministic cell-probe data structure with $M$ cells of $w$ bits consists of an encoding
\[
  D:\cY\to (\{0,1\}^w)^M
\]
and a query algorithm that probes at most $t$ cells.
Formally, the query algorithm is given by probe functions $A_0,\ldots,A_{t-1}$ and an output function $A_{\out}$, where
\[
  A_j : \cX \times (\{0,1\}^w)^j \to \iv{0}{M}
  \quad\text{for } j\in \iv{0}{t}
  \qquad\text{and}\qquad
  A_{\out} : \cX \times (\{0,1\}^w)^t \to \cZ.
\]
For a fixed query $x\in \cX$ and input $y\in \cY$, the probed addresses are
\[
  i_j \coloneqq A_j(x,D(y)[i_0],\ldots,D(y)[i_{j-1}])
  \qquad\text{for } j\in \iv{0}{t},
\]
and the returned answer is
\[
  D_{\out}(x,y) \coloneqq A_{\out}(x,D(y)[i_0],\ldots,D(y)[i_{t-1}]).
\]
Correctness means \(D_{\out}(x,y)=f(x,y)\) for all \(x\in \cX\) and \(y\in \cY\).

In the nondeterministic cell-probe model, the query algorithm additionally receives a hint \(h\) from an unbounded domain.
Equivalently, the functions \(A_0,\ldots,A_{t-1},A_{\out}\), and hence also the derived outcome function \(D_{\out}\), receive \(h\) as an extra parameter.
Moreover, the output function is allowed to return an extra value \(\bot\).
Correctness requires that for every query \(x\in \cX\) and input \(y\in \cY\),
\begin{itemize}
  \item there exists a hint \(h\) such that \(D_{\out}(x,y,h)=f(x,y)\), and
  \item for every hint \(h\), either \(D_{\out}(x,y,h)=f(x,y)\) or \(D_{\out}(x,y,h)=\bot\).
\end{itemize}
Since hints are unbounded, we may assume that a hint specifies the probed cells together with their expected contents, and the algorithm outputs \(\bot\) as soon as one probe disagrees with the hint.
Thus, a successful hint is simply a set of \(t\) cells whose contents determine the answer uniquely.
This leads to the notion of \emph{certificates} due to Wang and Yin~\cite{Wang2014}.
\begin{definition}[Certificates~\cite{Wang2014}]
A data structure problem $f:\cX\times \cY\to \cZ$ has \emph{$(M,w,t)$-certificates} if there exists a function $D:\cY\to (\{0,1\}^w)^M$ such that, for every query $x\in \cX$ and every input $y\in \cY$, there exists a set \(C\subseteq\iv{0}{M}\) with \(|C|=t\) such that, for every \(y'\in \cY\), if \(D(y)[i]=D(y')[i]\) for all \(i\in C\), then \(f(x,y')=f(x,y)\).
\end{definition}

In the randomized setting, the data structure and the query algorithm share a random seed \(r\) drawn uniformly from a finite domain \(\cR\).
Formally, both \(D\) and the functions \(A_0,\ldots,A_{t-1},A_{\out}\) receive \(r\) as an extra argument.
Hence the derived outcome function \(D_{\out}\) also depends on \(r\).
The correctness requirement is that, for every \(x\in \cX\) and \(y\in \cY\), we have \(\Pr_{r\sim U(\cR)}[D_{\out}(x,y,r)=f(x,y)]\ge \frac23\).

\begin{definition}
A data structure problem \(f:\cX\times \cY\to \cZ\) admits an \((M,w,t)\)-\emph{bounded-error structure} if there exist a finite seed domain \(\cR\), an encoding \(D:\cY\times \cR\to (\{0,1\}^w)^M\), probe functions \(A_j : \cX\times \cR\times (\{0,1\}^w)^j \to \iv{0}{M}\) for \(j\in \iv{0}{t}\), and an output function \(A_{\out} : \cX\times \cR\times (\{0,1\}^w)^t \to \cZ\) such that \(\Pr_{r\sim U(\cR)}[D_{\out}(x,y,r)=f(x,y)]\ge \tfrac23\) for every \(x\in \cX\) and \(y\in \cY\).
\end{definition}

We will reduce Blocked Lopsided Set Disjointness to SLG \RA.
Recall that, in Lopsided Set Disjointness (LSD), the goal is to decide whether two sets $X$ and $Y$ over a common universe are disjoint.
In the data-structure version, one of the sets, say $Y$, can be preprocessed, while the other set $X$ is revealed at query time.
In Blocked LSD (BLSD), the universe is partitioned into \(N\) blocks of size~\(B\), and the set \(X\) must contain exactly one element from each block.
Equivalently, an instance of BLSD can be represented by functions $S$ and $T$, so that, for each \(i\in \iv{0}{N}\), the value $S(i)\in \iv{0}{B}$ specifies the unique element of $X$ in the \(i\)-th block, whereas $T(i)\subseteq \iv{0}{B}$ specifies the intersection of $Y$ with that block.

\defpardsproblem{Blocked Lopsided Set Disjointness (BLSD)~\cite{P11}}{$N,B\in \Zp$}{
  A function \(T : \iv{0}{N} \rightarrow 2^{\iv{0}{B}}\).
}{
  Given \(S : \iv{0}{N}\rightarrow \iv{0}{B}\), decide if there exists $i\in \iv{0}{N}$ such that $S(i)\in T(i)$.
}

Wang and Yin \cite[Theorem 26]{Wang2014} proved a certificate lower bound for 2-Blocked-LSD, which imposes an additional distinctness condition within every group of \(B\) consecutive blocks, namely \(S(i)\ne S(i')\) whenever \(\floor{i/B}= \floor{i'/B}\).
Since this is a more restrictive version of BLSD, it yields the following lower bound for BLSD as well:

\begin{corollary}\label{thm:blsd-certificate}
  Consider $N,B,M,w,t\in \Zp$ such that $M>t$, and a real constant $\lambda > 0$.
  If BLSD with parameters $N,B$ has $(M,w,t)$-certificates, then \(t\ge\Omega\Big(\frac{NB^{1-\lambda}}{w}\Big)\) or \(t\ge\Omega\Big(\frac{N\log B}{\log\frac{M}{t}}\Big)\).
\end{corollary}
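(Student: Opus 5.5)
The plan is a direct reduction from BLSD to its restricted variant, 2-Blocked-LSD, followed by an application of \cite[Theorem 26]{Wang2014}. Fix $N,B,M,w,t$ with $M>t$ and $\lambda>0$, and suppose BLSD with parameters $N,B$ has $(M,w,t)$-certificates via an encoding $D:\cY\to(\{0,1\}^w)^M$. Both problems share the same input space of functions $T:\iv{0}{N}\to 2^{\iv{0}{B}}$ and the same answer function. The query space of 2-Blocked-LSD is a subset of the BLSD query space: it keeps only those $S$ with $S(i)\ne S(i')$ whenever $\floor{i/B}=\floor{i'/B}$. So I would simply reuse the same $D$ for 2-Blocked-LSD.

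The certificate property transfers verbatim. Take any query $S$ of 2-Blocked-LSD and any input $y$. Since $S$ is also a BLSD query, the BLSD certificate supplies a set $C\subseteq\iv{0}{M}$ with $|C|=t$. For every $y'$ agreeing with $D(y)$ on $C$, this set guarantees $f(S,y')=f(S,y)$. That is exactly the certificate condition for 2-Blocked-LSD. Hence 2-Blocked-LSD with parameters $N,B$ also has $(M,w,t)$-certificates.

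It remains to invoke \cite[Theorem 26]{Wang2014}, which gives the stated dichotomy $t\ge\Omega(NB^{1-\lambda}/w)$ or $t\ge\Omega(N\log B/\log\frac{M}{t})$ under $M>t$ and constant $\lambda>0$. The main point requiring care is matching parameter conventions with Wang and Yin's statement. One issue is whether their universe is $N$ blocks of size $B$, or blocks of size $B$ grouped in $B$-tuples with divisibility of $N$ by $B$. Another is how their certificate size and cell count are phrased.

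If their theorem assumes $B\mid N$, I would handle it as follows. Pad the instance by considering only the first $B\floor{N/B}$ blocks: set the remaining $T(i)=\emptyset$ and fix $S(i)$ arbitrarily there. This loses only a constant factor in $N$ whenever $N\ge B$. When $N<B$, the distinctness constraint is vacuous across groups anyway, or the bound can be recovered by restricting the block size to $N$. Both cases are absorbed into the $\Omega$. I expect this bookkeeping of conventions, rather than any genuine argument, to be the only obstacle.
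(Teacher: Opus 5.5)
Your proposal matches the paper's argument. The paper likewise observes that 2-Blocked-LSD only restricts the BLSD query set to those $S$ that are injective within each group of $B$ consecutive blocks, so the same encoding $D$ still witnesses $(M,w,t)$-certificates, and it then invokes \cite[Theorem 26]{Wang2014} with the same $N,B$ and no padding. If you do pursue the divisibility bookkeeping, note that your fallback of shrinking the block size to $N$ when $N<B$ would only give $N^{1-\lambda}$ and $\log N$ in place of $B^{1-\lambda}$ and $\log B$, and that loss is not absorbed by the $\Omega$.
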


For a lower bound against randomized algorithms, we interpret BLSD as a communication problem with two players: Alice, who receives $S$, and Bob, who receives $T$.
Pătrașcu~\cite{P11} proved asymmetric communication complexity lower bounds for this problem:
\begin{theorem}[{\cite[Theorem 1.4]{P11}}]
Consider $N,B\in \Zp$ and a real constant $\lambda > 0$.
In any bounded-error communication protocol for BLSD with parameters $N,B$, Alice sends at least $\Omega(N \log B)$ bits or Bob sends at least $\Omega(NB^{1-\lambda})$ bits.
\end{theorem}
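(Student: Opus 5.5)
The statement is \cite[Theorem 1.4]{P11}, and in the paper we would simply invoke it. If we had to reprove it, we would follow the information-complexity route: a direct-sum reduction from $N$ blocks to a single block, followed by a one-block lower bound for an indexing-type problem.

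\emph{Hard distribution.} We fix a product-per-block distribution $\mu$. For each $i\in\iv{0}{N}$ independently, $S(i)$ is uniform on $\iv{0}{B}$, and $T(i)$ is a random subset in which each element is included independently with probability $1/(2N)$. We then condition on the event that at most one block $i$ has $S(i)\in T(i)$. Under this conditioning the instance is disjoint with constant probability and intersecting in exactly one block with constant probability. A bounded-error protocol must therefore effectively determine, for each block, whether $S(i)\in T(i)$ holds.

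\emph{Direct sum.} Suppose Alice sends $a$ bits and Bob sends $b$ bits. We measure the information that the transcript $\Pi$ reveals about $S$ given $T$, and about $T$ given $S$. These quantities are at most $a$ and $b$, respectively. Because $\mu$ is a product across blocks (up to the mild conditioning, which we remove by an embedding argument), the chain rule shows that a random block $i$ carries information at most $a/N$ about $S(i)$ and at most $b/N$ about $T(i)$. We then embed a single-block instance $(s,\mathcal{T})$ at position $i$ and let the players sample the other blocks privately. This is possible because each coordinate's input is held by one player and the other blocks are disjoint with high probability. The result is a one-block protocol that decides $s\in\mathcal{T}$ with advantage $\Omega(1)$ under the single-block distribution, revealing $\Oh(a/N)$ bits about $s$ and $\Oh(b/N)$ bits about $\mathcal{T}$.

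\emph{One-block bound (main obstacle).} It remains to show that any such one-block protocol reveals $\Omega(\log B)$ bits about $s$ or $\Omega(B^{1-\lambda})$ bits about $\mathcal{T}$. This is a membership, or indexing, problem with an asymmetric information trade-off.

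We would proceed as follows. Assume Alice reveals at most $\lambda\log B/2$ bits. Then, for a typical transcript, the posterior distribution of $s$ has min-entropy roughly $(1-\lambda/2)\log B$, which leaves about $B^{1-\lambda/2}$ plausible values. Deciding membership of $s$ in $\mathcal{T}$ with constant advantage then requires the transcript to predict the membership bits of $\mathcal{T}$ on a constant fraction of these plausible values.

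The difficulty is that $\mathcal{T}$ is sparse, so each membership bit has entropy only about $H(1/(2N))$. We would handle this by scaling: we choose the density so that a constant fraction of mass is at stake, and then apply a Fano/Pinsker-type argument coordinate-wise over the plausible set. This gives information $\Omega(B^{1-\lambda})$ about $\mathcal{T}$. Getting the parameters to match the density exactly is the delicate point.

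Finally, rescaling $\lambda$ and multiplying the one-block bound back by $N$ yields the stated alternatives: $a\ge\Omega(N\log B)$ or $b\ge\Omega(NB^{1-\lambda})$.
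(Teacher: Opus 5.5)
Your first sentence is all the paper needs: the paper gives no proof of this statement and imports it as \cite[Theorem~1.4]{P11}, exactly as you propose. The reproof sketch, however, has a genuine gap if offered as a proof, and the gap is already in the hard distribution.

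Each element of $T(i)$ is included with probability $\frac{1}{2N}$, so the whole set $T$ has only $B/2$ elements in expectation. Bob can therefore list $T$ explicitly with $\Oh(B\log(NB))$ bits, and Alice replies with the one-bit answer. Hence no argument that uses only the behaviour of a protocol under $\mu$ can force $\Omega(N\log B)$ bits from Alice or $\Omega(NB^{1-\lambda})$ bits from Bob once $N\gg B^{\lambda}\log(NB)$.

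The same obstruction rules out your one-block target. We have $H(T(i))\le \frac{B}{2N}\log(2eN)$, up to a constant factor from your conditioning, and this is below $B^{1-\lambda}$ for large $N$. No transcript can reveal more information about $\mathcal{T}$ than its entropy.

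Several further steps also fail:
\begin{itemize}
\item On your single-block distribution the answer is ``no'' with probability $1-\frac{1}{2N}$, so ``advantage $\Omega(1)$'' is trivially achievable.
\item A protocol for an OR need not ``effectively determine each block'', so that inference is not valid.
\item Raising the density $p$ to a constant does not help within a product distribution. Then $S\cap T\ne\emptyset$ with probability $1-(1-p)^N\to 1$, and the constant answer ``intersect'' is already correct.
\end{itemize}

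The missing idea is to decouple the distribution used for information accounting from the correctness requirement. Measure information under a distribution supported on \emph{disjoint} instances in which each $T(i)$ is dense. For example, include each element of $\iv{0}{B}\setminus\{S(i)\}$ independently with probability $\frac12$, so that $T(i)$ carries $\Theta(B)$ bits of entropy. At the same time, use that a bounded-error protocol must be correct on \emph{every} input, including those with exactly one intersecting block.

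This is what makes an OR-type direct sum possible: all other blocks are ``no'', so the answer is decided at the embedded block. Even so, the embedding needs care. Disjointness correlates $S(j)$ with $T(j)$. Also, conditioning on the other blocks revealed as public information can increase rather than decrease the per-block information; for instance, $\Pi=S(1)\oplus S(2)$ gives per-block conditional informations summing to $2$ but total information $1$.

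Finally, the one-block trade-off for deciding $s\in t$ is the actual content of the theorem: either $\Omega(\log B)$ bits about $s$ or $B^{1-\Oh(\lambda)}$ bits about $t$, for protocols correct on all inputs. Your sketch only gestures at it via a Fano/Pinsker heuristic and leaves it as ``the delicate point''. So even with a corrected distribution, the sketch would not yet be a proof.
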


Next, we reduce BLSD to SLG \RA.
We follow the encoding of Verbin and Yu~\cite{VY13} with a crucial extension that allows answering a single BLSD query using several \RA queries.
The following lemma captures the original reduction of~\cite{VY13}.

\begin{lemma}[{\cite[Lemma 2]{VY13}}]\label{lem:vyblsd}
  Let $N,B\in \Zp$.
  For every function $T: \iv{0}{N}\to 2^{\iv{0}{B}}$, there exists an SLG of size at most $4NB$ producing a binary string $V$ of length $B^N$ such that, for every function $S: \iv{0}{N}\to \iv{0}{B}$, we have \[V\!\left[\sum_{i\in \iv{0}{N}} S(i)\cdot B^i\right]=\mathtt{1} \quad \text{if and only if} \quad \bigvee_{i\in \iv{0}{N}}\; S(i)\in T(i).\]
\end{lemma}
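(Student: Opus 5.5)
We construct the grammar by induction on $N$, building the sequence of strings $V_k$ of length $B^k$ for $k\in\iv{0}{N+1}$. For $k\in\iv{0}{N}$, let
\[
  V_k\!\left[\sum_{i\in\iv{0}{k}} S(i)B^i\right]=\mathtt{1}
  \quad\Longleftrightarrow\quad
  \exists\, i\in\iv{0}{k}: S(i)\in T(i).
\]
The string $V=V_N$ is the one we want. Alongside $V_k$ we keep the all-zero string $Z_k=\mathtt{0}^{B^k}$ and the all-one string $O_k=\mathtt{1}^{B^k}$. The base case is $V_0=\mathtt{0}$: with $k=0$ the disjunction is empty, so $V_0$ must be the single character $\mathtt{0}$ at index $0$.

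The key observation concerns the index $\sum_{i<k+1}S(i)B^i$. Its most significant digit is $S(k)$, so the position lies in the $S(k)$-th block of length $B^k$ of $V_{k+1}$, at offset $\sum_{i<k}S(i)B^i$ inside that block. This lets us define
\[
  V_{k+1}=W_0W_1\cdots W_{B-1},\qquad W_j=\begin{cases}O_k & \text{if } j\in T(k),\\ V_k & \text{otherwise.}\end{cases}
\]
Correctness follows by induction. If $S(k)\in T(k)$, the queried character lies in an $O_k$ block and is $\mathtt{1}$, which is correct because the disjunction holds. Otherwise the character equals $V_k$ at the lower-order index, which is correct by the induction hypothesis.

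For the grammar, introduce variables $V_k$ and $O_k$ for $k\in\iv{0}{N+1}$. We have $O_0\to\mathtt{1}$ and $O_{k+1}\to O_k^B$; in a plain SLG the latter costs $B$ symbols. The rule for $V_{k+1}$ also has $B$ symbols. Each level therefore contributes at most $2B$, and the total is at most $2NB+2\le 4NB$, counting the constant-size base rules. The variables $Z_k$ are not needed.

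The only step requiring care is the digit ordering: the most significant digit must select the top-level block. This matches the recursive construction, so I don't expect a real obstacle. What remains is routine size bookkeeping and a check that the length is $|V_N|=B^N$.
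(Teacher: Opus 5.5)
Your proof is correct. The level-by-level construction $V_{k+1}=W_0\cdots W_{B-1}$, with all-one blocks $O_k$ at the positions $j\in T(k)$ and copies of $V_k$ elsewhere, is the Verbin--Yu construction that the paper imports by citation without giving a proof of its own, and your count $2NB+2\le 4NB$ is valid. One small fix: the defining property of $V_k$ should range over $k\in\fragmentcc{0}{N}$ rather than $k\in\iv{0}{N}$, since you need it for $V_N$.
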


Our generalization splits the input BLSD instances into smaller subinstances before applying \cref{lem:vyblsd}.
The parameter $Q$ controls the number of \RA queries needed to answer one BLSD query.
While keeping the compressed size of the produced string $V$ at $\Theta(NB)$, this allows reducing the length of $V$ from $B^N$ to $Q\cdot B^{N/Q}$.
As a result, we obtain a smooth trade-off between storing $T$ explicitly ($NB$ bits) and storing the answers to all queries explicitly ($B^N$ bits).

\begin{lemma}\label{lem:blsd-to-ra}
  Let \(P,Q,B\in \Zp\).
  For every function $T: \iv{0}{PQ}\to 2^{\iv{0}{B}}$, there exists an SLG of size at most \(5PQB\) that produces a binary string \(V\) of length \(Q\cdot B^P\) such that,
  for every function $S : \iv{0}{PQ}\to \iv{0}{B}$, we have
  \[\bigvee_{q\in \iv{0}{Q}}\;V\!\left[qB^P+ \sum_{p\in \iv{0}{P}} S(qP+p)\cdot B^p\right]=\mathtt{1} \quad \text{if and only if }\quad \bigvee_{i\in \iv{0}{PQ}}\; S(i)\in T(i).\]
\end{lemma}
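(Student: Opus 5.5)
We split the instance into $Q$ independent subinstances of $P$ blocks each and apply \cref{lem:vyblsd} to every one of them. For $q\in\iv{0}{Q}$, define $T_q:\iv{0}{P}\to 2^{\iv{0}{B}}$ by $T_q(p)\coloneqq T(qP+p)$. By \cref{lem:vyblsd} with $N=P$, there is an SLG $\G_q$ of size at most $4PB$ producing a binary string $V_q$ of length $B^P$. For every $S_q:\iv{0}{P}\to\iv{0}{B}$, this string satisfies $V_q[\sum_{p} S_q(p)B^p]=\mathtt{1}$ if and only if $S_q(p)\in T_q(p)$ for some $p$.

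Set $V\coloneqq V_0V_1\cdots V_{Q-1}$, so that $|V|=Q\cdot B^P$. For a given $S$, put $S_q(p)\coloneqq S(qP+p)$. Then position $qB^P+\sum_p S(qP+p)B^p$ of $V$ lies inside the block $V_q$ and equals $V_q[\sum_p S_q(p)B^p]$. The disjunction over $q$ of these characters is therefore $\bigvee_q\bigvee_p S(qP+p)\in T(qP+p)$. Since $(q,p)\mapsto qP+p$ is a bijection from $\iv{0}{Q}\times\iv{0}{P}$ onto $\iv{0}{PQ}$, this is exactly $\bigvee_{i\in\iv{0}{PQ}} S(i)\in T(i)$.

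To build the grammar, take the disjoint union of the variable sets of $\G_0,\dots,\G_{Q-1}$, renaming variables so that distinct grammars share no variables. The terminals $\mathtt{0},\mathtt{1}$ are shared. Add a new start symbol with production $S'\to S_0S_1\cdots S_{Q-1}$, where $S_q$ is the start symbol of $\G_q$. The union of the orders $\prec_{\G_q}$, with $S'$ placed last, witnesses acyclicity, so the result is a valid SLG producing $V$. Its size is at most $Q\cdot 4PB+Q\le 5PQB$, using $PB\ge 1$.

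No step is a real obstacle. The only subtle point is an edge case: if some $S_q$ is a terminal (for example, when $P$ is tiny), the right-hand side of $S'$ may simply contain that terminal, which the definition of SLGs allows. The slack between $4PQB$ and $5PQB$ comfortably absorbs the extra $Q$ symbols of the new production.
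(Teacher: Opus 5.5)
Your proposal is correct and matches the paper's proof. Both split the instance into $Q$ groups of $P$ consecutive blocks, apply \cref{lem:vyblsd} to each group, concatenate the strings $V_q$, and take a disjoint union of the grammars with a new start production of length $Q$, giving total size at most $4PQB+Q\le 5PQB$.
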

\begin{proof}
Given a BLSD instance of \(PQ\) blocks of size \(B\) each, we split the instance into \(Q\) parts of \(P\) consecutive blocks each.
Each such part, indexed with $q\in \iv{0}{Q}$, corresponds to blocks $i\in [qP\dd (q+1)P)$ and,
after shifting this interval by $qP$, can be interpreted as a BLSD instance of size $(P,B)$.
\cref{lem:vyblsd} yields a binary string $V_q$ of length $B^P$ produced by a grammar $\G_q$ of size at most $4PB$ such that
\[V_q\!\left[\sum_{p\in \iv{0}{P}} S(qP+p)\cdot B^p\right]=\mathtt{1} \quad \text{if and only if} \quad \bigvee_{p\in \iv{0}{P}}\; S(qP+p)\in T(qP+p).\]
The final string $V$ is the concatenation $V=\bigodot_{q\in \iv{0}{Q}} V_q$;
since $V_q = V[qB^P\dd (q+1)B^P)$, this immediately yields the claimed equivalence.
  Moreover, $|V|=Q\cdot B^P$ and $V$ is produced by a grammar $\G$ obtained by taking a disjoint union of grammars $\G_q$ and introducing a new start symbol (with a production of length $Q$) capturing the top-level concatenation.
  The total size of $\G$ is at most $Q\cdot 4PB + Q \le 5PQB$.
\end{proof}

\begin{example}
If \(P=Q=B=2\), then the universe consists of four blocks of size two, which we split into two subinstances on blocks \(0,1\) and \(2,3\).
The reduction builds a string \(V=V_0V_1\) of length \(2\cdot 2^2=8\), where each \(V_q\) has length \(4\).
Consider the following instance:
\[
  T(0)=\{1\},\qquad T(1)=\emptyset,\qquad T(2)=\{0\},\qquad T(3)=\{1\}.
\]
Here, the first subinstance has hits exactly for \((S(0),S(1))\in\{(1,0),(1,1)\}\), so \(V_0=\mathtt{0101}\).
Likewise, the second subinstance has hits exactly for \((S(2),S(3))\in\{(0,0),(0,1),(1,1)\}\), so \(V_1=\mathtt{1011}\).
Thus, \(V=\mathtt{01011011}\).
For a query \(S:\iv{0}{4}\to \iv{0}{2}\), the two accessed positions are
\[
  v_0=S(0)+2S(1)
  \quad\text{and}\quad
  v_1=4+S(2)+2S(3).
\]
For instance, if \(S=(1,0,1,0)\), then \((v_0,v_1)=(1,5)\), so the BLSD answer is $V[1]\vee V[5]=\mathtt{1}\vee \mathtt{0}=\mathtt{1}$.
On the other hand, if \(S=(0,1,1,0)\), then \((v_0,v_1)=(2,5)\), so the BLSD answer is $V[2]\vee V[5]=\mathtt{0}\vee \mathtt{0}=\mathtt{0}$.
\end{example}

With this reduction, we can translate certificates for random access into certificates for~BLSD.

\begin{corollary}\label{cor:blsd-to-ra}
  Consider \(P,Q,B,M,w,t\in \Zp\) such that $M \ge tQ$.
  If SLG \RA with parameters $n\coloneqq Q\cdot B^P$, $g\coloneqq 5PQB$, and $\sigma\coloneqq 2$ has $(M,w,t)$-certificates, then BLSD with parameters $N\coloneqq PQ$ and $B$ has $(M,w,tQ)$ certificates.
\end{corollary}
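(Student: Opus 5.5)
The plan is to compose the certificate function for \RA with the encoding of \cref{lem:blsd-to-ra}. Fix a BLSD input $T:\iv{0}{PQ}\to 2^{\iv{0}{B}}$. \Cref{lem:blsd-to-ra} gives an SLG $\G_T$ of size at most $5PQB=g$ producing a binary string $V_T$ of length exactly $Q\cdot B^P=n$. This is a valid instance of SLG \RA with parameters $n,g,\sigma=2$. Let $D_{\RA}$ be the function witnessing the $(M,w,t)$-certificates for \RA. We define the BLSD encoding by $D(T)\coloneqq D_{\RA}(\G_T)$, which again consists of $M$ cells of $w$ bits. For the reduction to be well defined, $V_T$ should be a function of $T$ only, so we fix one choice of grammar for each $T$.

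Next, fix a BLSD query $S$ and an input $T$, and set $v_q\coloneqq qB^P+\sum_{p\in\iv{0}{P}} S(qP+p)B^p$ for $q\in\iv{0}{Q}$. Each $v_q$ lies in $\iv{0}{n}$. The \RA certificate property gives, for each $q$, a set $C_q\subseteq\iv{0}{M}$ of size $t$ such that every $T'$ with $D(T')$ agreeing with $D(T)$ on $C_q$ satisfies $V_{T'}[v_q]=V_T[v_q]$. Take $C\coloneqq\bigcup_q C_q$, so $|C|\le tQ$. If $D(T')$ agrees with $D(T)$ on $C$, then all $Q$ bits $V_{T'}[v_q]$ coincide with $V_T[v_q]$. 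Hence the disjunction $\bigvee_q V[v_q]$ is the same for $T$ and $T'$. By the equivalence in \cref{lem:blsd-to-ra}, the BLSD answers $f(S,T')$ and $f(S,T)$ are therefore equal.

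The only remaining detail is that the definition demands $|C|=tQ$ exactly, whereas the union may be smaller because the $C_q$ can overlap. Since $M\ge tQ$, we can pad $C$ with arbitrary additional cells to reach size exactly $tQ$. Enlarging the set only strengthens the agreement hypothesis, so the certificate property is preserved. This padding is the one point that needs the assumption $M\ge tQ$; the rest of the argument is bookkeeping.
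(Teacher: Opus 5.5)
Your proposal is correct and follows essentially the same route as the paper. You encode $T$ via the grammar of Lemma~\ref{lem:blsd-to-ra}, take the union of the $Q$ per-position random-access certificates, and pad it to size exactly $tQ$ using $M\ge tQ$. Your remark about fixing one grammar per $T$, so that the encoding is a function of $T$ alone, makes explicit a detail the paper leaves implicit.
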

\begin{proof}
  Consider a BLSD input $T: \iv{0}{PQ}\to 2^{\iv{0}{B}}$ and the corresponding
  string $V$ obtained using \cref{lem:blsd-to-ra}.
  By construction, this string is an SLG \RA instance with parameters $n=Q\cdot B^P$, $g = 5PQB$, and $\sigma=2$.
  We use the hypothetical SLG \RA data structure for $V$ as a data structure for BLSD queries in~$T$.
  By \cref{lem:blsd-to-ra}, every BLSD query $S$ to $T$ can be answered using $Q$ random access queries to $V$, asking for $V[v_q]$ for some positions $v_q\in \iv{0}{|V|}$ indexed with $q\in \iv{0}{Q}$.
  By our assumption, for each $q\in \iv{0}{Q}$, there is a certificate $C_q\subseteq \iv{0}{M}$ of size $t$ that determines the value of $V[v_q]$.
  Taking the union $C\coloneqq \bigcup_{q\in \iv{0}{Q}} C_q$, padded with arbitrary elements to reach size $t\cdot Q \le M$, we obtain a certificate that determines $V[v_q]$ for all $q\in \iv{0}{Q}$, and hence also their disjunction.
  By \cref{lem:blsd-to-ra}, this disjunction is exactly the BLSD answer, that is, whether $\bigvee_{i\in \iv{0}{N}} S(i)\in T(i)$ holds.
  Thus, BLSD with parameters $PQ$ and $B$ has $(M,w,tQ)$-certificates.
\end{proof}

In order to derive a counterpart of \cref{cor:blsd-to-ra} for bounded-error randomized data structures, extra care is needed to avoid accumulating error probabilities from multiple random access queries used to answer a single BLSD query.
In \cite{P11}, similar one-to-many reductions miss analogous error analysis and thus almost all cell-probe lower bounds hold only against zero-error query algorithms.
The central piece that allows supporting bounded-error structures is the following abstract result.
\begin{proposition}\label{prop:or}
  Consider a bit sequence $x_0,\ldots,x_{q-1}$ and a randomized oracle that, given any subset $A\subseteq \iv{0}{q}$, returns bits $(y_a)_{a\in A}$ such that $\Pr[y_a=x_a]\ge \frac23$ for every $a\in A$.
  The returned bits may be arbitrarily dependent within a single oracle call, but the randomness used by different oracle calls is independent.
  Then, the disjunction $\bigvee_{i\in \iv{0}{q}} x_i$ can be computed correctly with probability at least $\frac23$ using $\Oh(1+\log q)$ oracle calls, where the $j$th oracle call is made on a subset of size \(\Oh(q/2^{\Omega(j)})\).
\end{proposition}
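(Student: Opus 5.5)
We will compute the disjunction by a noisy elimination that repeatedly re-queries only the bits that currently look like $\mathtt{1}$, with the number of repetitions per round growing while the set of candidates shrinks geometrically. The starting point is the observation that, with a constant number $r$ of independent oracle calls on the same set $A$ and a per-index majority vote, every bit $x_a$ with $a\in A$ is recovered with probability at least $1-2^{-\Omega(r)}$. Dependence inside one call does not matter, because for a fixed $a$ the $r$ votes come from independent calls. Only a union bound over failures is ever needed.

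\textbf{Phase 1 (shrinking).} Let $A_0=\iv{0}{q}$. In round $j$, we query $A_j$ using $r_j=c\cdot j+c'$ calls and take the per-index majority vote $\hat y_a$. If no index in $A_j$ has $\hat y_a=\mathtt{1}$, we stop and output $\mathtt{0}$. Otherwise, we want $A_{j+1}$ to be a set of at most $|A_j|/2$ indices with $\hat y_a=\mathtt{1}$; if more than $|A_j|/2$ indices report $\mathtt{1}$, we keep an arbitrary half of them. We stop once $|A_j|\le 1$ and then make a final constant number of confirmation calls on the remaining index.

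The issue is that truncation could discard the true $\mathtt{1}$s. This is handled by a case split. If $x\equiv \mathtt{0}$, a wrong $\mathtt{1}$ vote in round $j$ occurs independently per index with probability $2^{-\Omega(r_j)}$, so false positives only ever cause a spurious output if the final confirmation also fails. If instead many indices report $\mathtt{1}$, a truncated survivor is either a true $\mathtt{1}$ or an error.

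A cleaner plan therefore discards by \emph{halving the candidate set at random}, not by truncation. Fix any true index $i^*$ with $x_{i^*}=\mathtt{1}$. It survives round $j$ with probability at least $1-2^{-\Omega(r_j)}$ as long as we keep all indices voting $\mathtt{1}$. The size bound $|A_{j+1}|\le |A_j|/2$ follows in expectation from the fact that most zeros are filtered out. To get a worst-case size bound, we cap the set, and if the cap is exceeded we answer $\mathtt{1}$ immediately.

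This is justified as follows. If more than $|A_j|/2\ge 1$ indices vote $\mathtt{1}$ with $r_j$ large, then either the answer is $\mathtt{1}$, or at least half the zeros erred. The latter has probability at most $2^{-\Omega(r_j)\cdot |A_j|/2}$ by a Chernoff bound, using the per-index independence across calls only marginally. Strictly, the indices within one call are dependent, so I would instead bound the expected number of erring zeros and apply Markov: the probability that at least $|A_j|/2$ zeros err is at most $2\cdot 2^{-\Omega(r_j)}$, and this suffices.

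\textbf{Accounting.} The total error is at most $\sum_j 2^{-\Omega(r_j)}$ with $r_j=\Theta(j)+C$, which is at most $1/3$ for a large constant $C$. The number of calls is $\sum_{j\le \log q}r_j=\Theta(\log^2 q)$. This exceeds the target $\Oh(1+\log q)$, so the main obstacle is the call count.

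To fix it, I would make the repetition depend on the round in a way that keeps the total linear. Round $j$ runs on a set of size $q/2^{j}$ and is made to work with constant repetitions, while the required confidence in the candidate-loss and overflow events is supplied by requiring a stronger shrink factor: keep $|A_{j+1}|\le |A_j|/8$, so that Markov gives failure probability $\Oh(2^{-r})$ per round, with $r$ constant. Summing this constant over $\log q$ rounds is still not small. So I would finally order the calls so that the $j$th call operates on a set of size $q/2^{\Omega(j)}$, spending $\Theta(j)$ calls in round $j$ but on sets shrinking like $q/2^{j^2}$. The number of rounds is then $\Oh(\sqrt{\log q})$, and the total number of calls is $\Oh(\log q)$.

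The remaining step is to verify that the per-call set size after $J$ calls is $q/2^{\Omega(J)}$. This follows because after the rounds containing the first $J=\Theta(j^2)$ calls, the set size is at most $q/2^{\Omega(j^2)}$.
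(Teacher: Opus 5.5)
Your final scheme is correct, but it reaches the $\Oh(\log q)$ bound with a different shrink schedule than the paper, and a slightly different round primitive.

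\textbf{Shared structure.} Both arguments take, for each index, a majority over repeated calls, relying on independence across calls for a fixed index. Both control false positives only in expectation via Markov, which is why dependence within a call is harmless.

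\textbf{Round primitive.} The paper truncates the $\mathtt{1}$-voters to an arbitrary $\lceil |A|/\rho\rceil$ of them. This keeps a true positive whenever one survives and there are fewer than $|A|/\rho$ false positives. You keep all $\mathtt{1}$-voters and accept immediately on overflow. That is equally valid: overflow can only mislead on an all-zero input, and Markov bounds that event.

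\textbf{Schedule.} The paper uses doubly exponential factors $\rho_b=2^{2^b}$ with $\lceil 108\ln\rho_b\rceil=\Theta(2^b)$ calls per round, hence $\Oh(\log\log q)$ rounds. You use $\rho_j=2^{\Theta(j)}$ with $\Theta(j)$ calls per round, hence $\Oh(\sqrt{\log q})$ rounds. Both work for the same reason. A round with shrink factor $\rho$ costs $\Theta(\log\rho)$ calls and fails with probability $\Oh(1/\rho)$. So the total number of calls is $\Theta(\sum\log\rho)=\Theta(\log q)$, and the failure probabilities form a convergent series that can be made small by starting from a large enough $\rho$. The paper's choice additionally minimizes the number of adaptive rounds, which the proposition does not need. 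Your check of the call-size condition is the right analogue of the paper's $s\le\Oh(2^b)$ argument: the $s$-th call lies in round $J$ with $s=\Theta(J^2)$, on a set of size $q/2^{\Theta(J^2)}$.

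\textbf{What to fix in a write-up.} Make the coupling of constants explicit. With cap $|A_j|/\rho_j$, Markov bounds the overflow probability by $\rho_j\cdot 2^{-\Omega(r_j)}$, not by $2^{-\Omega(r_j)}$. So the constant in $r_j=\Theta(j)$ must dominate the exponent of $\rho_j$; the paper's $r=\lceil 108\ln\rho\rceil$, which forces per-index error at most $\rho^{-6}$, is exactly this. Also start the round index at a large enough constant that the series plus the final confirmation error stays below $\frac13$.

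Your intermediate remark that a stronger shrink factor ``supplies the required confidence'' is backwards. A larger $\rho_j$ makes overflow more likely, and the extra repetitions are what pay for it; the stronger shrink only reduces the number of rounds. Finally, delete the discarded claims. Errors are not independent across indices, and the size bound does not hold in expectation when many $x_i=\mathtt{1}$. Your final argument relies on neither.
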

\begin{proof}
  Our high-level strategy is to maintain and gradually shrink a subset $A\subseteq \iv{0}{q}$ such that $\bigvee_{i\in \iv{0}{q}} x_i = \bigvee_{a\in A} x_a$.
  Each step of this process is implemented using the following subroutine:
  \begin{claim}\label{claim:or-reduce}
    There exists a randomized procedure that,
    given a set \(A\subseteq \iv{0}{q}\) and an integer \(\rho\ge 2\), uses at most \(r\coloneqq \ceil{108\ln \rho}\)
    oracle calls, all on the set \(A\), and returns a subset \(B\subseteq A\) satisfying
    \[
      |B|\le \ceil{|A|/\rho}
      \qquad\text{and}\qquad
      \Pr\left[\bigvee_{i\in A}x_i=\bigvee_{i\in B}x_i\right] \ge 1-\tfrac{1}{6\rho}.
    \]
  \end{claim}
  \begin{proof}
    If \(A=\emptyset\), we return \(B=\emptyset\), which satisfies all requirements.
    Henceforth, suppose \(A\ne\emptyset\).

    We make \(r\) independent oracle calls on \(A\).
    For every \(i\in A\), let \(z_i\in \{0,1\}\) be the majority value among the \(r\) obtained bits \(y_i\).
    Construct \(C\coloneqq \{i\in A : z_i=1\}\),
    and return an arbitrary subset \(B\subseteq C\) of size \(\min(|C|,\ceil{|A|/\rho})\).

    For every \(i\in A\) and every \(\ell\in \iv{0}{r}\), let \(X_{i,\ell}\in\{0,1\}\) indicate whether the \(\ell\)-th returned bit for \(i\) is wrong.
    Then \(\mathbb{E}[X_{i,\ell}]\le \frac13\), and the random variables \(X_{i,0},\ldots,X_{i,r-1}\) are independent.
    Since \(z_i\ne x_i\) implies \(\sum_{\ell=0}^{r-1} X_{i,\ell}\ge \frac{r}{2}\), Hoeffding's inequality yields
    \[
      \Pr[z_i\ne x_i]
      \le
      \Pr\!\left[\frac{1}{r}\sum_{\ell=0}^{r-1} X_{i,\ell}-\mathbb{E}\!\left[\frac{1}{r}\sum_{\ell=0}^{r-1} X_{i,\ell}\right]\ge \frac16\right]
      \le
      e^{-2r(1/6)^2}
      =
      e^{-r/18}.
    \]
    Since \(r=\ceil{108\ln\rho}\), we conclude that the following holds for every \(i\in A\):
    \[
      \Pr[z_i\ne x_i] \le e^{-r/18}
      \le
      e^{-6\ln \rho}
      =
      \rho^{-6}
      \le
      \tfrac{1}{16\rho^2}.
    \]

    If \(\bigvee_{i\in A}x_i=0\), then every subset \(B\subseteq A\) satisfies \(\bigvee_{i\in B}x_i=0=\bigvee_{i\in A}x_i\).
    It remains to consider the case \(\bigvee_{i\in A}x_i=1\).
    Define the set of false positives
    \[
      A_{\mathsf{FP}}\coloneqq \{i\in A : x_i=0 \text{ and } z_i=1\}.
    \]
    Then,
    \(
      \mathbb{E}[|A_{\mathsf{FP}}|]\le \frac{|A|}{16\rho^2}
    \), and by Markov's inequality,
    \[
      \Pr\!\left[|A_{\mathsf{FP}}|\ge \tfrac{|A|}{\rho}\right]
      \le
      \tfrac{|A|}{16\rho^2}\cdot \tfrac{\rho}{|A|}
      \le
      \tfrac{1}{16\rho}.
    \]
    Moreover, for a fixed $j\in A$ with $x_j=1$, we have
    \[
      \Pr[z_j=0]\le \tfrac{1}{16\rho^2}\le \tfrac{1}{16\rho}.
    \]
    Hence, with probability at least \(1-\frac{1}{16\rho}-\frac{1}{16\rho}>1-\frac{1}{6\rho}\), we have both \(z_j=1\) and \(|A_{\mathsf{FP}}|< \frac{|A|}{\rho}\).
    Conditioned on this event, \(C\) contains at least one true positive and strictly fewer than \(\frac{|A|}{\rho}\) false positives.
    Therefore, every subset of \(C\) of size \(\min(|C|,\ceil{|A|/\rho})\) still contains a true positive, and so the returned set \(B\) satisfies \(\bigvee_{i\in A}x_i=\bigvee_{i\in B}x_i\).
  \end{proof}

  Let us continue with the proof of \cref{prop:or}.
  We define \(L\coloneqq \ceil{\log(1+\log q)}\)  and construct a descending sequence of sets \(\iv{0}{q}\eqqcolon A_0 \supseteq A_1 \supseteq \cdots \supseteq A_L\)
  so that, for each \(b\in \iv{0}{L}\), the set \(A_{b+1}\) is obtained from \(A_b\) by applying \cref{claim:or-reduce} with parameter  \(\rho_b\coloneqq 2^{2^b}\).

  Since \(\ceil{\ceil{m/\rho}/\rho'}=\ceil{m/(\rho\cdot \rho')}\) holds for all positive integers $m,\rho,\rho'\in \Zp$, an induction on $b\in \fragmentcc{0}{L}$ shows that
  \[
    |A_b|\le \left\lceil \frac{q}{\prod_{i=0}^{b-1} \rho_i}\right\rceil = \left\lceil \frac{q}{2^{2^b-1}}\right\rceil
  \]
  holds for every $b\in \fragmentcc{0}{L}$ and, in particular,
  \[
    |A_{L}| \le
    \left\lceil \frac{q}{2^{2^L-1}}\right\rceil
    \le
    \left\lceil \frac{q}{2^{2^{\log(1+\log q)}-1}}\right\rceil
    =
    \left\lceil \frac{q}{2^{1+\log q-1}}\right\rceil
    =
    1.
  \]

  If \(A_{L}=\emptyset\), we output \(0\).
  Otherwise, we make \(\Oh(1)\) independent oracle calls on the singleton set \(A_L\), and output the majority value among the obtained estimates for the unique element of \(A_L\).

  If all reduction steps succeed, then \(\bigvee_{i\in \iv{0}{q}}x_i=\bigvee_{i\in A_{L}}x_i\).
  By \cref{claim:or-reduce} and the union bound, the probability that all reduction steps succeed is at least
  \[
    1-\sum_{b=0}^{L-1}\tfrac{1}{6\rho_b}
    \ge
    1-\tfrac{1}{6}\sum_{b\ge 0}2^{-2^b}
    >
    1-\tfrac{1}{6}.
  \]
  The final majority vote on \(A_{L}\) is correct with probability at least \(\frac56\), after increasing the hidden constant in its \(\Oh(1)\) repetitions if necessary.
  Therefore, the overall success probability is at least \(1-\frac{1}{6}-\frac{1}{6}=\frac23\).

  It remains to justify the complexity bound.
  In batch \(b\in \iv{0}{L}\), the algorithm queries the set \(A_b\), which we know to be of size \(\Oh\!\left(\frac{q}{2^{2^b}}\right)\), and it makes \(\Oh(\log \rho_b)\) oracle calls.
  The number of oracle calls used in the reduction steps is proportional to
  \[
    \sum_{b=0}^{L-1} \log\rho_b = \sum_{b=0}^{L-1} 2^b \le \Oh(2^L)\le \Oh(\log q).
  \]
  Together with the final \(\Oh(1)\) singleton calls, this gives \(\Oh(1+\log q)\) oracle calls in total.

  Finally, enumerate all oracle calls in chronological order.
  If the \(s\)th oracle call belongs to batch~\(b\), then the first \(b+1\) batches contain only \(\Oh(2^b)\) oracle calls in total, so \(s\le \Oh(2^b)\), and therefore \(2^b\ge \Omega(s)\).
  Hence, this call is made on a subset of size \(\Oh\!\left(\frac{q}{2^{2^b}}\right)\le \frac{q}{2^{\Omega(s)}}\).
  If the \(s\)th oracle call is one of the final singleton calls, then \(s\le \Oh(1+\log q)\), and its queried set has size \(1\le \Oh(q/2^{\Omega(s)})\), after decreasing the constant hidden in the \(\Omega(s)\) exponent if necessary.
\end{proof}

We are now ready to provide a counterpart of \cref{cor:blsd-to-ra} for bounded-error data structures.
\begin{corollary}\label{cor:blsd-to-ra-bp}
  Consider \(P,Q,B,M,w,t\in \Zp\) such that $M \ge Q$.
  Suppose that SLG \RA with parameters $n\coloneqq Q\cdot B^P$, $g\coloneqq 5PQB$, and $\sigma\coloneqq 2$ admits an $(M,w,t)$-bounded-error data structure.
  Then, BLSD with parameters $N\coloneqq PQ$ and $B$ admits a bounded-error communication protocol in which Alice sends $\Oh(tQ \log\frac{2M}{Q})$ bits and Bob sends $\Oh(tQw)$ bits.
\end{corollary}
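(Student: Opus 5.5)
We follow the standard simulation of a cell-probe data structure by a two-party protocol, with \cref{prop:or} supplying the error control. Bob holds $T$ and builds the string $V$ of \cref{lem:blsd-to-ra}. This is an SLG \RA instance with $n=Q\cdot B^P$, $g=5PQB$, and $\sigma=2$. Using public randomness, the players can draw fresh seeds $r\sim U(\cR)$, and Bob can form the encoding $D(V,r)$ for any such seed. Alice holds $S$ and computes the $Q$ positions $v_q=qB^P+\sum_{p}S(qP+p)B^p$. By \cref{lem:blsd-to-ra}, the BLSD answer equals $\bigvee_{q\in\iv{0}{Q}}V[v_q]$. We apply \cref{prop:or} with $x_q\coloneqq V[v_q]$. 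An oracle call on a subset $A\subseteq\iv{0}{Q}$ is implemented by one fresh seed $r$ and $|A|$ parallel runs of the query algorithm, one on each $v_q$ with $q\in A$, all against the same $D(V,r)$. Each returned bit is correct with probability at least $\frac23$. Answers within a call may be correlated, and different calls use independent seeds, which is exactly the hypothesis of \cref{prop:or}. So the protocol errs with probability at most $\frac13$.

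Next we bound the cost of one call on a set $A$ of size $a$. The $a$ query algorithms run in $t$ rounds. In round $j$, Alice knows the contents of all previously probed cells, so she can compute the set of probe addresses for this round, which has at most $a$ distinct addresses in $\iv{0}{M}$. She sends this set, and Bob replies with the $\le a$ cells of $w$ bits each. Bob's communication per round is therefore $\Oh(aw)$ bits, or $\Oh(taw)$ per call. Alice's message per round is an arbitrary subset of size $\le a$ of $\iv{0}{M}$, costing $\log\binom{M}{\le a}=\Oh(a\log\frac{2M}{a})$ bits. This is where we use $M\ge Q\ge a$. Alice also sends $|A|$ explicitly, or A is fixed by the protocol transcript. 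Per call, Alice therefore sends $\Oh(ta\log\frac{2M}{a})$ bits.

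Finally we sum over calls. By \cref{prop:or}, the $s$-th call uses a set of size $a_s=\Oh(Q/2^{\Omega(s)})$, so $\sum_s a_s=\Oh(Q)$. Bob's total is then $\Oh(tQw)$. Alice's total is $\sum_s t\,a_s\log\frac{2M}{a_s}$. We bound $\log\frac{2M}{a_s}\le\log\frac{2M}{Q}+\log\frac{Q}{a_s}$, where the second term is $\Oh(s)$ whenever $a_s\approx Q2^{-\Omega(s)}$. More carefully, the function $a\mapsto a\log\frac{2M}{a}$ is increasing for $a\le M$, so we may replace $a_s$ by the upper bound $Q2^{-cs}$. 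This gives $\sum_s Q2^{-cs}\left(\log\frac{2M}{Q}+cs\right)=\Oh\left(Q\log\frac{2M}{Q}\right)$, using $\log\frac{2M}{Q}\ge1$. Hence Alice sends $\Oh(tQ\log\frac{2M}{Q})$ bits in total. The main obstacle is this geometric-sum accounting for Alice's cost: a naive bound of $\Oh(\log Q)$ calls of size $Q$ would lose a $\log Q$ factor, and the shrinking subset sizes guaranteed by \cref{prop:or} are exactly what removes it.
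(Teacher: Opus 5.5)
Your protocol is the paper's own: one fresh public seed per oracle call of \cref{prop:or}, lockstep simulation of the $|A|$ queries against a single $D(V,r)$, Alice encoding each round's probe set with $\lceil\log\binom{M}{\le |A|}\rceil=\Oh(|A|\log\frac{2M}{|A|})$ bits, Bob answering with $\Oh(|A|w)$ bits, and a geometric sum over the calls. The one step that does not hold as written is the final summation for Alice.

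First, the map $a\mapsto a\log\frac{2M}{a}$ is not increasing on $(0,M]$. Its derivative is $\log\frac{2M}{a}-\log e$, so it peaks at $a=2M/e$ and is negative for $a>2M$. Second, \cref{prop:or} only guarantees $a_s\le CQ2^{-cs}$ for some constant $C$. You cannot take $C=1$, because the first several calls are all on the full set $\iv{0}{Q}$. So for the first few calls (e.g., when $M=Q$) the value $CQ2^{-cs}$ you substitute lies outside the range where the map is increasing. There the substitution is not a valid upper bound.

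The repair is exactly the paper's split at $j_0$:
\begin{itemize}
\item For the $\Oh(1)$ indices $s$ with $CQ2^{-cs}>Q/e$, use only $a_s\le Q$. Then $a_s\log\frac{2M}{a_s}=a_s\log\frac{2M}{Q}+a_s\log\frac{Q}{a_s}\le Q\log\frac{2M}{Q}+\frac{Q}{e}\log e$.
\item For the remaining $s$, use that $u\mapsto u\log\frac{Q}{u}$ is increasing on $(0,Q/e]$ to replace $a_s$ by $CQ2^{-cs}$ in the second term. Your geometric sum then goes through.
\end{itemize}

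A smaller point concerns how Bob learns $|A|$ in order to decode Alice's encoding. Your option that ``$A$ is fixed by the protocol transcript'' does not work: Bob cannot evaluate the output function without the positions $v_q$, so he cannot recover the outputs $y_q$ that determine the next subset. Your other option, sending $|A|$ explicitly, is fine. It costs only $\Oh((1+\log Q)^2)\le\Oh(Q)$ bits overall.
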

\begin{proof}
  Fix a BLSD input \(T : \iv{0}{PQ}\to 2^{\iv{0}{B}}\) held by Bob.
  Let \(V\) be the binary string produced from \(T\) using \cref{lem:blsd-to-ra}.
  This string constitutes an SLG \RA instance with parameters $n = Q\cdot B^P$, $g = 5PQB$, and $\sigma = 2$; by assumption, it thus admits an \((M,w,t)\)-bounded-error data structure.

  For the BLSD query \(S:\iv{0}{PQ}\to \iv{0}{B}\), held by Alice, and for every \(q\in \iv{0}{Q}\), define
  \[
    v_q \coloneqq qB^P+\sum_{p\in \iv{0}{P}} S(qP+p)\cdot B^p
    \qquad\text{and}\qquad
    x_q \coloneqq V[v_q].
  \]
  By \cref{lem:blsd-to-ra}, the BLSD answer is exactly \(\bigvee_{q\in \iv{0}{Q}} x_q\).
  Hence, by \cref{prop:or}, it suffices to implement the oracle from that lemma with communication costs matching the claimed bounds.

  Consider one oracle call on a subset \(A\subseteq \iv{0}{Q}\).
  Using public randomness, Alice and Bob sample one random seed \(r\) for the data structure.
  Bob constructs the data structure \(D(V,r)\), and together they simulate all \RA query executions asking for the positions \(v_q\) for \(q\in A\).
  The seed \(r\) is sampled independently for each oracle call, so different oracle calls are independent.
  Within one oracle call, the queried bits may be dependent, which is allowed by \cref{prop:or}.

  We simulate the \(t\) probe rounds in lockstep.
  In round \(\ell\in \iv{0}{t}\), Alice knows the queried positions \(v_q\), the seed \(r\), and all answers received in earlier rounds.
  Hence, she can determine the set \(C_\ell\subseteq \iv{0}{M}\) of all memory cells that at least one of the queries requests in round \(\ell\).
  Since \(|C_\ell|\le |A|\le Q\le M\), the number of possibilities for \(C_\ell\) is \(\binom{M}{\le |A|}\coloneqq \sum_{s=0}^{|A|}\binom{M}{s},\)
  so Alice can encode \(C_\ell\) using \(\big\lceil{\log \binom{M}{\le |A|}}\big\rceil\le \Oh(|A|\log\tfrac{2M}{|A|})\) bits and send this encoding to Bob.
  Bob replies with the contents of the cells in \(C_\ell\), in a canonical order, using \(\Oh(|C_\ell|w)\le \Oh(|A|w)\) bits.
  After \(t\) rounds, Alice can reconstruct the outputs \(y_q\) of all simulated \RA queries for \(q\in A\).

  For each \(q\in A\), the value \(y_q\) equals \(x_q\) with probability at least \(\frac23\), because each simulated query is exactly one execution of the bounded-error \RA data structure on \(V\).
  Thus, this simulation realizes the oracle required by \cref{prop:or}.

  Consequently, \cref{prop:or} yields a bounded-error algorithm for \(\bigvee_{q\in \iv{0}{Q}} x_q\), and hence for the BLSD answer, that uses \(\Oh(\log Q)\) oracle calls in total.
  By \cref{prop:or}, there exist constants \(C,c>0\) such that the subset size \(a_j\) in the \(j\)th oracle call satisfies \(a_j\le C\cdot Q/2^{cj}\) for each \(j\).
  The total number of bits sent by Bob is therefore at most
  \[
    \Oh\!\left(tw \cdot \sum_j a_j\right)\le \Oh\!\left(twQ\cdot \sum_{j=0}^\infty 2^{-cj}\right)\le \Oh(tQw).
  \]

  For Alice, the total communication is
  \[
    \Oh\!\left(t\sum_j a_j\log\tfrac{2M}{a_j}\right).
  \]
  We choose \(j_0\coloneqq \ceil{\frac{1}{c}\log(Ce)}\) so that \(a_j \le C\cdot Q / 2^{cj}\le \frac{Q}{e}\) for every \(j\ge j_0\)
  and utilize the fact that \(u\mapsto u\log\tfrac{Q}{u}\) is increasing on \((0,\frac{Q}{e}]\) and decreasing on \([\frac{Q}{e},Q]\).

  For \(j<j_0\), we use only the trivial bound \(a_j\le Q\) to obtain
  \[
    a_j\log\tfrac{2M}{a_j}
    =
    a_j\log\tfrac{2M}{Q}+a_j\log\tfrac{Q}{a_j}
    \le
    Q\log\tfrac{2M}{Q}+\tfrac{Q}{e}\log e
    \le
    \Oh\!\left(Q\log\tfrac{2M}{Q}\right),
  \]
  because \(M\ge Q\) implies \(\log\tfrac{2M}{Q}\ge 1\).

  For \(j\ge j_0\), we use \(a_j \le C\cdot Q/2^{cj}\) to obtain
  \[
    a_j\log\tfrac{2M}{a_j}
    =
    a_j\log\tfrac{2M}{Q}+a_j\log\tfrac{Q}{a_j}
    \le
    \tfrac{CQ}{2^{cj}}\log\tfrac{2M}{Q}
    +
    \tfrac{CQ(cj-\log C)}{2^{cj}}
  \le
    \Oh\!\left(\tfrac{Q}{2^{cj}}\log\tfrac{2M}{Q}\right)
    +
    \Oh\!\left(\tfrac{Qj}{2^{cj}}\right).
  \]
  Since the series \(\sum_{j=0}^\infty 2^{-cj}\) and \(\sum_{j=0}^\infty j2^{-cj}\) converge, the tail contributes
  \[
    \Oh\!\left(Q\log\tfrac{2M}{Q}\right)+\Oh(Q)
    =
    \Oh\!\left(Q\log\tfrac{2M}{Q}\right).
  \]
  Adding the \(\Oh(1)\) values of \(j<j_0\), we conclude that Alice sends \(\Oh(tQ\log\tfrac{2M}{Q})\) bits, as claimed.
\end{proof}

Before we proceed, we prove a technical claim about the monotonicity of SLG \RA.

\begin{lemma}\label{lem:monot}
  Consider $n,g,\sigma,n',g',\sigma',M,w,t\in \Zp$ with $n'\ge n$, $g'\ge g+3\log n'+2$, and $\sigma'\ge \sigma$.
  \begin{itemize}
    \item If SLG \RA with parameters $(n',g',\sigma')$ has $(M,w,t)$-certificates,
    then SLG \RA with parameters $(n,g,\sigma)$ also has $(M,w,t)$-certificates.
    \item If SLG \RA with parameters $(n',g',\sigma')$ admits an $(M,w,t)$-bounded-error data structure, then SLG \RA with parameters $(n,g,\sigma)$ also admits an $(M,w,t)$-bounded-error data structure.
  \end{itemize}
\end{lemma}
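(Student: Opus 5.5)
The plan is an instance embedding. I will map every SLG \RA instance with parameters $(n,g,\sigma)$ to an instance with parameters $(n',g',\sigma')$, in a way that preserves answers under a fixed, instance-independent translation of queries. Given an SLG $\G$ of size at most $g$ producing $T\in\iv{0}{\sigma}^n$, I will build an SLG $\G'$ of size at most $g'$ producing
\[
T'\coloneqq T\cdot \mathtt{0}^{\,n'-n}\in\iv{0}{\sigma'}^{n'}.
\]
This is valid because $\sigma\le\sigma'$, so every character of $T$ is also a character in $\iv{0}{\sigma'}$. A query $i\in\iv{0}{n}$ on $T$ becomes the same query $i$ on $T'$, and $T'[i]=T[i]$. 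Since the translation of inputs $T\mapsto T'$ is a fixed function, both parts then follow mechanically.

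For certificates, define $D(T)\coloneqq D'(T')$. Given a query $i$ and input $T$, take the certificate $C$ of $(i,T')$. If another input $\tilde T$ agrees with $T$ on $D$ restricted to $C$, then $D'(\tilde T')$ agrees with $D'(T')$ on $C$. Hence $\tilde T'[i]=T'[i]$, that is, $\tilde T[i]=T[i]$.

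For bounded-error structures, set $D(T,r)\coloneqq D'(T',r)$ and reuse the probe and output functions unchanged. The success probability for query $i$ on $T$ is then exactly the success probability for query $i$ on $T'$, which is at least $\frac23$.

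One detail needs care: the hypothesis is stated for instances represented by an SLG of size at most $g$. I may assume every such $T$ has some grammar of size at most $g$, and the encoding only depends on $T$. Strictly speaking, the data structure's input is the string $T$, which is uniquely determined, so defining $D$ as a function of $T$ is legitimate.

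The main step, and the only real obstacle, is the grammar construction within the budget $g+3\log n'+2$. Let $m\coloneqq n'-n$. If $m=0$, I take $\G'=\G$. Otherwise, I produce $\mathtt{0}^m$ with a binary-powering gadget. Introduce variables $Z_0\to\mathtt{0}$ and $Z_{k+1}\to Z_kZ_k$ for $k<\floor{\log m}$, so that $\expand{Z_k}=\mathtt{0}^{2^k}$. This costs $1+2\floor{\log m}$ symbols.

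Next, let $m=\sum_{k\in K}2^k$ be the binary expansion of $m$, and write the new start rule as
\[
S'\to S\cdot\textstyle\prod_{k\in K}Z_k .
\]
This rule has length $1+|K|\le 2+\floor{\log m}$. The total addition is therefore at most $3+3\floor{\log m}\le 3\log n'+2$, after checking the boundary cases ($m\ge1$, and $\floor{\log m}\le\log n'$).

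The arithmetic needs one careful check: if $m$ is a power of two, then $|K|=1$ and there is slack. In the worst case, the count $1+2\lfloor\log m\rfloor+1+(\lfloor\log m\rfloor+1)$ equals $3\lfloor\log m\rfloor+3$. This can exceed $3\log n'+2$ by $1$ when $\log n'<\floor{\log m}+\frac13$. If that happens, I will save one symbol by not materializing the top power $Z_{\floor{\log m}}$ separately: its two copies of $Z_{\floor{\log m}-1}$ go into the start rule instead, or equivalently I bound $|K|\le\floor{\log m}$ when $m$ is not of the form $2^{j+1}-1$. Finally, I will pad the grammar to exactly size $g'$ if the parameter means exactly $g'$. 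The paper uses "at most", so no padding is needed.
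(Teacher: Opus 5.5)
Your proof is correct and follows the paper's argument: you pad with $\mathtt{0}^{n'-n}$ using a doubling chain and a binary-decomposition start rule, then transfer certificates and the bounded-error structure through the fixed map $T\mapsto T\mathtt{0}^{n'-n}$. The paper defines the encoding on grammars, $D(\G)\coloneqq D'(\mathsf{pad}(\G))$, rather than on strings, which makes no difference. The off-by-one you patch disappears if, as in the paper, you take $Z_0$ to be the terminal $\mathtt{0}$ itself rather than a new variable, which gives $2L+(|K|+1)\le 3L+2$ directly; your patches also work, since $|K|=L+1$ forces $m=2^{L+1}-1$ and hence $\log n'\ge L+1$.
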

\begin{proof}
  \newcommand{\pad}{\mathsf{pad}}
  Consider a string $V\in \iv{0}{\sigma}^{n}$ produced by an SLG $\G$ of size at most $g$.
  Let \(m\coloneqq n'-n\), define \(\pad(V)\coloneqq V0^m\in \iv{0}{\sigma'}^{n'}\), and let \(\pad(\G)\) be an SLG \(\G'\) producing \(\pad(V)\), constructed next.

  If $m=0$, we set $\G'\coloneqq \G$.
  Otherwise, let us denote $L\coloneqq \floor{\log m}$, set $Z_0\coloneqq 0$, and introduce fresh variables $Z_1,\ldots,Z_{L}$ with rules $Z_i\to Z_{i-1}Z_{i-1}$ for $i\in (0\dd L]$ so that $\expand[\G']{Z_i}=0^{2^i}$ for all $i\in [0\dd L]$.
  The total size of these rules is $2L$.
  Let $m=2^{i_0}+\cdots+2^{i_k}$ be the binary decomposition of $m$ (so $k\le i_k \le L$).
  We introduce a new start variable $S'$ with rule $S' \to S Z_{i_0}\cdots Z_{i_k}$, where $S$ is the start variable of $\G$.
  As a result, $\expand[\G']{S'}=\expand[\G]{S}0^m=\pad(V)$.
  Overall, the padding increases the grammar size by at most $2L+k+2\le 3L+2\le 3\log n'+2$,
  and hence $|\G'|\le g+3\log n'+2\le g'$.

  We first translate certificates.
  Let \(D'\) witness \((M,w,t)\)-certificates for \RA with parameters \((n',g',\sigma')\), and define \(D(\G)\coloneqq D'(\pad(\G))\).
  Fix an input grammar \(\G\) producing \(V\in \iv{0}{\sigma}^{n}\) and a query position \(i\in \iv{0}{n}\).
  Let \(C\subseteq \iv{0}{M}\) be the certificate guaranteed for \((\pad(V),i)\) under \(D'\).
  If another grammar \(\widehat{\G}\) producing \(\widehat{V}\in \iv{0}{\sigma}^{n}\) satisfies \(D(\G)[j]=D(\widehat{\G})[j]\) for all \(j\in C\), then
  \(D'(\pad(\G))[j]=D'(\pad(\widehat{\G}))[j]\) for all \(j\in C\).
  Since \(C\) is a certificate for \((\pad(V),i)\), this implies \(\pad(V)[i]=\pad(\widehat{V})[i]\).
  As \(i<n\), padding does not affect position \(i\), so \(V[i]=\widehat{V}[i]\).
  Thus, \(C\) is also a certificate for \((V,i)\), and \RA with parameters \((n,g,\sigma)\) has \((M,w,t)\)-certificates.

  For bounded-error data structures, let \(\cR\) be the seed domain, let \(D'\) be the randomized encoding, and let \(D'_0,\ldots,D'_{t-1},D'_{\mathsf{out}}\) be the corresponding query algorithm for \RA with parameters \((n',g',\sigma')\).
  Define \(D(\G,r)\coloneqq D'(\pad(\G),r)\), and reuse the same probe and output functions \(D'_0,\ldots,D'_{t-1},D'_{\mathsf{out}}\) for the smaller problem.
  Fix an input grammar \(\G\) producing \(V\), a query position \(i<n\), and a seed \(r\in \cR\).
  The execution on \((\G,i,r)\) in the smaller problem is identical to the execution on \((\pad(\G),i,r)\) in the larger one: both use the same memory image and the same query algorithm, and \(i<n\le n'\) is a valid query position in the larger instance.
  Hence, both executions return the same output for every seed \(r\).
  Whenever the larger data structure answers position \(i\) of \(\pad(V)\) correctly, the smaller one answers position \(i\) of \(V\) correctly as well, because \(\pad(V)[i]=V[i]\).
  Therefore, the smaller data structure inherits the same success probability \(\frac23\), and \RA with parameters \((n,g,\sigma)\) admits an \((M,w,t)\)-bounded-error data structure.
\end{proof}

In the final step, we put everything together and change the parametrization to $n,g,\sigma$.

\begin{theorem}\label{thm:bin}
Consider integers $n,g,\sigma,M,w,t\in \Zp$ and a real constant $\epsilon > 0$ such that $n\ge g$, $n\ge\sigma \ge 2$, $Mw \ge g\cdot \log n \cdot (w \log n)^\epsilon$, and $g \ge 25\cdot w^{1+\epsilon}\cdot\log n$.

If \RA with parameters $n$, $g$, and $\sigma$ has \((M,w,t)\)-certificates or admits an \((M,w,t)\)-bounded-error data structure, then \[t\ge\Omega\left(\frac{\log \frac{n}{g}}{\log \frac{Mw}{g\log n}}\right) \ge \Omega\left(\frac{\log \frac{n\log\sigma}{Mw}}{\log \frac{Mw}{g\log n}}\right).\]
\end{theorem}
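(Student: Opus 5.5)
The plan is to instantiate \cref{cor:blsd-to-ra} (respectively \cref{cor:blsd-to-ra-bp}) with suitably chosen $P,Q,B$. We use \cref{lem:monot} to embed the resulting smaller instance into parameters $(n,g,\sigma)$. Then \cref{thm:blsd-certificate} (respectively the theorem of Pătrașcu) yields the bound.

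\emph{Parameters.} Write $k\coloneqq \frac{Mw}{g\log n}\ge (w\log n)^\epsilon$. Set $B\coloneqq \ceil{w^{1+\epsilon}}\ge 2$ and
\[
P\coloneqq \max\!\left(1,\floor{\log_B \tfrac{n}{g}}\right),\qquad Q\coloneqq \floor{\tfrac{g-3\log n-2}{5PB}}.
\]
The hypothesis $g\ge 25 w^{1+\epsilon}\log n$, together with $P\le \log n$, gives $Q\ge 1$. It also gives $g'\coloneqq 5PQB\le g-3\log n-2$. Moreover, $n'\coloneqq QB^P\le \frac{g}{PB}\cdot B^{P}\le n$ when $B^{P}\le n/g$.

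If $n/g<B$, the claimed bound is $O(1)$. Indeed, $\log\frac{n}{g}\le \log B=O(\log w)=O(\log k/\epsilon)$. Hence the trivial bound $t\ge 1$ suffices, and we may assume $P=\floor{\log_B\frac ng}\ge 1$.

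By \cref{lem:monot}, applied with the roles of $(n,g,\sigma)$ and $(n',g',\sigma')$ swapped appropriately, certificates or bounded-error structures for $(n,g,\sigma)$ give ones for $(n',g',2)$. Note that $n'\le n$, $g\ge g'+3\log n+2$ and $\sigma\ge 2$.

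\emph{Applying BLSD bounds.} In the certificate case, we may assume $M\ge tQ$. Otherwise $t>M/Q\ge M\cdot 5PB/g$, which already dominates the claim. Then \cref{cor:blsd-to-ra} gives $(M,w,tQ)$-certificates for BLSD with $N=PQ$. Applying \cref{thm:blsd-certificate} with a small $\lambda$ (say $\lambda=\epsilon/(2(1+\epsilon))$), we get
\[
tQ\ge \Omega\!\left(\frac{PQB^{1-\lambda}}{w}\right)\quad\text{or}\quad tQ\ge\Omega\!\left(\frac{PQ\log B}{\log\frac{M}{tQ}}\right).
\]
In the randomized case, \cref{cor:blsd-to-ra-bp} together with Pătrașcu's theorem gives the analogous dichotomy: either $tQ\log\frac{2M}{Q}\ge\Omega(PQ\log B)$ or $tQw\ge \Omega(PQB^{1-\lambda})$.

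The first branch gives $t\ge \Omega(P w^{\epsilon/2})\ge\Omega(P)$. The second branch gives $t\ge \Omega(P\log B/\log\frac{2M}{Q})$.

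\emph{Main calculation.} The key step is to bound $\log\frac{2M}{Q}$ by $O(\log k)$. Since $Q=\Theta(g/(PB))$, we have
\[
\frac{M}{Q}=\Theta\!\left(\frac{PBM}{g}\right)=\Theta\!\left(\frac{k\,PB\log n}{w}\right)\le k\cdot (w\log n)^{O(1)}.
\]
Because $k\ge (w\log n)^\epsilon$, the factor $(w\log n)^{O(1)}$ is at most $k^{O(1/\epsilon)}$. Hence $\log\frac{2M}{Q}=O(\log k)$.

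Combining the branches, $t\ge \Omega(\min(P,\,P\log B/\log k))=\Omega(P\log B/\log k)$. Here the first term dominates because $\log B=O(\log k)$. Finally, $P\log B=\Theta(\log\frac ng)$ because $P\ge 1$ and $P=\floor{\log_B\frac ng}$. This yields $t\ge\Omega\big(\log\frac ng/\log k\big)$.

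\emph{Second inequality.} We have $Mw\ge g\log n\ge g\log\sigma$, using $\sigma\le n$. Therefore $\frac{n\log\sigma}{Mw}\le\frac ng$.

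The main obstacle is the bookkeeping in bounding $\log\frac{M}{Q}$ by $O(\log k)$ while simultaneously keeping $Q\ge 1$, $n'\le n$, and $g'$ within budget. The choice $B=\Theta(w^{1+\epsilon})$ is exactly what makes both BLSD branches comparable.
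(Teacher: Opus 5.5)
Your proposal is correct and is essentially the paper's proof. It uses the same instantiation $B=\Theta(w^{1+\epsilon})$, $P\approx\log_B\frac ng$, $Q=\Theta(\frac{g}{PB})$, the same use of \cref{lem:monot} and \cref{cor:blsd-to-ra,cor:blsd-to-ra-bp}, and the same two-branch analysis with $\log\frac{2M}{Q}=\Oh(\log\frac{Mw}{g\log n})$; your floor-based $P$ (with the separate $n/g<B$ case) and smaller $\lambda$ are only cosmetic differences. A few small fixes are needed: $\lceil w^{1+\epsilon}\rceil=1$ when $w=1$, so take $B\coloneqq 1+\lfloor w^{1+\epsilon}\rfloor$ as the paper does; send the boundary case $M=tQ$ to the trivial branch, since \cref{thm:blsd-certificate} needs $M>tQ$; and note that the hypothesis $M\ge Q$ of \cref{cor:blsd-to-ra-bp} holds automatically, because $\frac{M}{Q}\ge\frac{5PBM}{g}\ge 5P\log n\,(w\log n)^{\epsilon}$.
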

\begin{proof}
  Set \(B\coloneqq 1+\floor{w^{1+\epsilon}}\), \(P\coloneqq 1+\floor{\log_B\tfrac{n}{g}}\), and \(Q\coloneqq \floor{\tfrac{g-5 \log n}{5PB}}\).
  Due to $1 \le w$, we have
  \begin{equation}\label{eq:bin-B-bounds}
    1 \le w^{1+\epsilon} \le B \le 2w^{1+\epsilon}.
  \end{equation}
  Moreover, $2\le g\le n$ implies
  \begin{equation}\label{eq:bin-P-bounds}
    \max(1,\log_B\tfrac{n}{g}) \le P \le 1+ \log_B\tfrac{n}{g} \le 1 + \log\tfrac{n}{g} \le 1 + \log n \le 2 \log n.
  \end{equation}
  The assumption \(g \ge 25\cdot w^{1+\epsilon}\cdot\log n\), together with \eqref{eq:bin-B-bounds} and \eqref{eq:bin-P-bounds}, yields
  \begin{equation}\label{eq:bin-Q-positive}
    Q = \left\lfloor\frac{g-5 \log n}{5PB}\right\rfloor \ge \left\lfloor\frac{25\cdot w^{1+\epsilon}\cdot\log n-5 \log n}{20\cdot w^{1+\epsilon}\cdot \log n}\right\rfloor = \left\lfloor\frac{5\cdot w^{1+\epsilon}-1}{4\cdot w^{1+\epsilon}}\right\rfloor\ge 1.
  \end{equation}
  Since \(\frac12x < \floor{x}\le x\) holds for \(\floor{x}\ge 1\), the assumption \(g\ge 25 \cdot w^{1+\epsilon}\cdot \log n \ge 25 \log n\) further yields
  \begin{equation}\label{eq:bin-Q-bounds}
    \frac{2g}{25PB} = \frac{g-\tfrac15g}{10PB} \le \frac{g-5 \log n}{10PB} < Q \le \frac{g-5 \log n}{5PB} < \frac{g}{5PB}.
  \end{equation}

  Define \(n'\coloneqq Q\cdot B^P\) and \(g'\coloneqq 5PQB\).
  Combining \eqref{eq:bin-Q-bounds} and \eqref{eq:bin-P-bounds} implies
  \begin{equation}\label{eq:bin-nprime-bound}
    n'=Q\cdot B^P \le \frac{g}{5PB} \cdot B^{1+\log_B\frac{n}{g}} = \frac{g}{5PB}\cdot B\cdot \frac{n}{g} = \frac{n}{5P}\le n.
  \end{equation}
  Moreover, \eqref{eq:bin-Q-bounds} yields
  \begin{equation}\label{eq:bin-gprime-bound}
    g'=5PQB\le g-5 \log n \le g - 3\log n - 2.
  \end{equation}
  The bounds in \eqref{eq:bin-nprime-bound} and \eqref{eq:bin-gprime-bound}
  let us apply \cref{lem:monot} to conclude that \RA with parameters \((n',g',2)\) has \((M,w,t)\)-certificates or admits a bounded-error data structure.

  If \(M \le tQ\), then \eqref{eq:bin-Q-bounds}, \eqref{eq:bin-B-bounds}, \eqref{eq:bin-P-bounds}, and the theorem assumption \(Mw \ge g\log n \cdot (w \log n)^\epsilon\) imply
  \begin{equation}\label{eq:bin-small-M-help}
    t \ge \frac{M}{Q} > \frac{5MPB}{g} \ge \frac{5Mw^{1+\epsilon}}{g} > \frac{Mw}{g} \ge \log^{1+\epsilon}n \cdot w^\epsilon > \log n.
  \end{equation}
  Since \(\log n \ge \log \frac{n}{g}\) and \(\log \frac{Mw}{g\log n}\ge \epsilon\log (w\log n) \ge \Omega(1)\), this further yields
  \begin{equation}\label{eq:bin-small-M}
    t > \log n \ge \Omega\Bigg(\frac{\log \frac{n}{g}}{\log \frac{Mw}{g\log n}}\Bigg).
  \end{equation}

  We may henceforth assume \(M > tQ\).
  Set \(\lambda\coloneqq \frac{\epsilon}{1+\epsilon}\) and \(N\coloneqq PQ\).

  If \RA with parameters \((n',g',2)\) has \((M,w,t)\)-certificates, then \cref{cor:blsd-to-ra} implies that BLSD with parameters \((N,B)\) has \((M,w,tQ)\)-certificates.
  Hence, \cref{thm:blsd-certificate} yields
  \begin{equation*}
    tQ\ge\Omega\!\left(\frac{NB^{1-\lambda}}{w}\right)
    \quad\text{or}\quad
    tQ\ge\Omega\!\left(\frac{N\log B}{\log\frac{M}{tQ}}\right)
    \ge
    \Omega\!\left(\frac{N\log B}{\log\frac{2M}{Q}}\right).
  \end{equation*}

  If instead \RA with parameters \((n',g',2)\) admits an \((M,w,t)\)-bounded-error data structure, then \cref{cor:blsd-to-ra-bp} yields a bounded-error communication protocol for BLSD with parameters \((N,B)\) in which Alice sends \(\Oh\left(tQ\log\tfrac{2M}{Q}\right)\) bits and Bob sends \(\Oh(tQw)\) bits.
  The randomized BLSD lower bound therefore implies
  \[
    tQ\ge\Omega\!\left(\frac{NB^{1-\lambda}}{w}\right)
    \quad\text{or}\quad
    tQ\ge\Omega\!\left(\frac{N\log B}{\log\frac{2M}{Q}}\right).
  \]

  In either case, dividing by \(Q\) and using \(N=PQ\), we obtain
  \begin{equation}\label{eq:bin-two-cases}
    t\ge\Omega\!\left(\frac{PB^{1-\lambda}}{w}\right)
    \quad\text{or}\quad
    t\ge\Omega\!\left(\frac{P\log B}{\log\frac{2M}{Q}}\right).
  \end{equation}

  We first handle the former case.
  By \eqref{eq:bin-B-bounds} and the definition of \(\lambda=\frac{\epsilon}{1+\epsilon}\), we have
  \[
    B^{1-\lambda}\ge w^{(1+\epsilon)(1-\lambda)} = w,
  \]
  and hence \eqref{eq:bin-two-cases} implies \(t\ge\Omega(P)\).
  Moreover, the theorem assumption gives
  \begin{equation}\label{eq:bin-space-ratio}
    \tfrac{Mw}{g\log n} \ge (w\log n)^\epsilon \ge  (w\log n)^{\Omega(1)}\ge w^{\Omega(1)}.
  \end{equation}
  Since \eqref{eq:bin-B-bounds} implies \(B\le w^{\Oh(1)}\), \eqref{eq:bin-space-ratio} yields \(\log B \le \Oh(\log \frac{Mw}{g\log n})\).
  Therefore, using \eqref{eq:bin-P-bounds}, we obtain
  \begin{equation}\label{eq:bin-first-case}
    t \ge \Omega(P)\ge \Omega\Bigg(\frac{\log\frac{n}{g}}{\log B}\Bigg)\ge \Omega\Bigg(\frac{\log \frac{n}{g}}{\log \frac{Mw}{g\log n}}\Bigg).
  \end{equation}

  It remains to consider the second alternative in \eqref{eq:bin-two-cases}.
  Let us first focus on the logarithmic denominator.
  Due to \eqref{eq:bin-Q-bounds}, \eqref{eq:bin-B-bounds}, and \eqref{eq:bin-P-bounds}, we have
  \[
    \log\tfrac{2M}{Q} \le \log\tfrac{25MPB}{g}
      \le \log\tfrac{M \log n\cdot w^{1+\epsilon}}{g}+\Oh(1)\le \log\tfrac{M w}{g\log n}+\log(w^\epsilon \cdot \log^2 n)+\Oh(1).
  \]
  By \eqref{eq:bin-space-ratio},
  \begin{equation}\label{eq:bin-denom}
    \log\tfrac{2M}{Q} \le \log\tfrac{M w}{g\log n}+ \log(w^\epsilon \cdot \log^2 n)+\Oh(1) \le \Oh\!\left(\log\tfrac{M w}{g\log n}\right).
  \end{equation}
  Combining \eqref{eq:bin-two-cases}, \eqref{eq:bin-P-bounds}, and \eqref{eq:bin-denom}, we conclude that
  \begin{equation}\label{eq:bin-second-case}
    t\ge\Omega\bigg(\frac{P\log B}{\log \frac{2M}{Q}}\bigg)\ge \Omega\bigg(\frac{\log\tfrac{n}{g}}{\log \frac{Mw}{g\log n}}\bigg).
  \end{equation}

  Equations \eqref{eq:bin-small-M}, \eqref{eq:bin-first-case}, and \eqref{eq:bin-second-case} together prove the first inequality in the theorem.
  For the second inequality, it suffices to plug in the following bound:
  \begin{equation}\label{eq:bin-second-claim}
    \log\tfrac{n}{g}
    =
    \log\left(\tfrac{n\log \sigma}{Mw}\cdot \tfrac{Mw}{g \log n}\cdot \tfrac{\log n}{\log \sigma}\right)
    \ge \log\tfrac{n \log \sigma}{Mw},
  \end{equation}
  where the inequality uses the assumptions \(Mw > g \log n\) and \(n \ge \sigma \ge 2\).
\end{proof}

Note that \cref{thm:bin} does not apply in the following cases:
\begin{description}
	\item[\boldmath $g < 25 \cdot w^{1+\epsilon} \cdot \log n$.]
	This corresponds to very small SLGs, which do not model BLSD inputs.
	\item[\boldmath $Mw < g \log n \cdot (w\log n)^{\epsilon}$.]
	In that case, we can derive a weaker lower bound by monotonicity with respect to $M$ (the certificate size can only increase as the data structure size decreases, with everything else unchanged).
  Applying the bound for $M \coloneqq \ceil{g \cdot w^{\epsilon - 1}\cdot \log^{1+\epsilon} n}$ yields
	\[ t \ge \Omega\bigg(\frac{\log \frac{n}{g}}{\log (w\log n)}\bigg).\]
  The same bound could have been obtained by directly adapting the proof of \cref{thm:bin}.
\end{description}

Restating \cref{thm:bin} in terms of word RAM algorithms, we obtain \cref{thm:lower} from the introduction.
Here, we present its simplified version for the case of $w=\Theta(\log n)$.

\begin{corollary}\label{cor:bin-wordram}
Consider integers $n,g,\sigma,M,t\in \Zp$ and a real constant $\epsilon > 0$ such that $n\ge g$, $n\ge\sigma \ge 2$ and $g \ge \Omega(\log^{2+\epsilon} n)$.
Suppose that, for every instance of SLG \RA with parameters $n,g,\sigma$, there is a data structure of $M$ machine words of $\Theta(\log n)$ bits each whose query algorithm runs in time $t$ in the word RAM model.
Then
\[
  t\ge \Omega\left(\frac{\log \frac{n}{g}}{\log \max(\frac{M}{g},\log n)}\right).
\]
\end{corollary}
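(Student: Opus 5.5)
The plan is to derive \cref{cor:bin-wordram} from \cref{thm:bin} by two reductions: from word RAM query time to cell probes, and from arbitrary $M$ to a space budget that satisfies the hypothesis of \cref{thm:bin}.

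First, the model. A word RAM query algorithm running in time $t$ accesses at most $t$ memory words. Hence it is a deterministic cell-probe structure with $M$ cells of $w=\Theta(\log n)$ bits and at most $t$ probes. A deterministic structure gives $(M,w,t)$-certificates directly: take the set of probed cells, padded to size exactly $t$. If $M\le t$, the padding is impossible, but then $t\ge M\ge g\ge\Omega(\log^{2+\epsilon}n)$ already exceeds the claimed bound, which is at most $\log n$. We may also assume $M\ge g$ up to a constant. Otherwise we inflate the structure with dummy cells to size $\max(M,g)$; this does not affect $t$ and changes $\max(M/g,\log n)$ by at most a constant factor.

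Second, the space hypothesis. \cref{thm:bin} needs $Mw\ge g\log n\,(w\log n)^{\epsilon'}$ for some constant $\epsilon'>0$. With $w=\Theta(\log n)$ this reads $M\ge g\log^{2\epsilon'}n$ up to constants. We fix a small constant $\epsilon'$ with $2\epsilon'\le 1$ and also $\epsilon'$ small relative to $\epsilon$. Set $M'\coloneqq\max(M,\lceil c\,g\log^{2\epsilon'}n\rceil)$ and pad the structure with unused cells to $M'$ words. The probe count stays $t$, and the hypothesis now holds.

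The other hypothesis of \cref{thm:bin} is $g\ge 25w^{1+\epsilon'}\log n=\Theta(\log^{2+\epsilon'}n)$. This follows from $g\ge\Omega(\log^{2+\epsilon}n)$ once $\epsilon'<\epsilon$, after absorbing constants; that constant bookkeeping is the main thing to verify. The remaining hypotheses $n\ge g$ and $n\ge\sigma\ge2$ are given directly. \cref{thm:bin} then yields
\[t\ge\Omega\left(\frac{\log\frac{n}{g}}{\log\frac{M'w}{g\log n}}\right).\]
With $w=\Theta(\log n)$ we have $\frac{M'w}{g\log n}=\Theta(M'/g)$, and $M'/g\le\Oh(\max(M/g,\log^{2\epsilon'}n))\le\Oh(\max(M/g,\log n))$. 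So the denominator is $\Oh(\log\max(M/g,\log n))$.

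Since $\max(M/g,\log n)\ge\log n\ge 2$ for large $n$, this logarithm is bounded away from zero, which gives the stated bound. The only delicate point is making the constant-absorption argument for $w=\Theta(\log n)$ rigorous against the explicit constant $25$ in \cref{thm:bin}. If a direct comparison fails, I would choose $\epsilon'=\epsilon/2$, so that $\log^{2+\epsilon}n$ beats $25\,\Theta(\log n)^{2+\epsilon/2}$ for all sufficiently large $n$. Small $n$ are trivial, because there the bound is $\Oh(1)$.
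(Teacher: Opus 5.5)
Your proposal is correct and follows the paper's proof. Word-RAM queries yield $(M,w,t)$-certificates, \cref{thm:bin} is applied, and the regime where its space hypothesis fails is handled by padding the structure with dummy cells, which is exactly the monotonicity-in-$M$ remark the paper invokes in its second case. The differences are cosmetic: the paper satisfies $g\ge 25w^{1+\epsilon}\log n$ by re-choosing the word size as a smaller $\Theta(\log n)$ value while keeping $\epsilon$, whereas you keep the actual word size and shrink the exponent to $\epsilon'<\epsilon$ (valid for all sufficiently large $n$), and you merge the paper's two cases into a single padding step.
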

\begin{proof}
  Since \(g=\Omega(\log^{2+\epsilon} n)\), we can choose an integer \(w=\Theta(\log n)\) such that
  \[
    w \le \left(\frac{g}{25\log n}\right)^{\frac{1}{1+\epsilon}}.
  \]
  All word RAMs with word size \(\Theta(\log n)\) are equivalent up to constant factors in running times and the number of machine words, so we may assume that the data structure uses \(M\) words of \(w\) bits each and still answers queries in time \(t\), up to constant factors.
  For every query/input pair, the cells probed by the query algorithm during its execution form a certificate, because any other input agreeing on these cell contents yields the same probe sequence and the same answer.
  Hence, SLG \RA with parameters \(n,g,\sigma\) has \((M,w,t)\)-certificates.
  By the choice of \(w\), we have \(g \ge 25\cdot w^{1+\epsilon}\cdot \log n\).

  If \(Mw \ge g\log n \cdot (w\log n)^\epsilon\), then \cref{thm:bin} applies and yields
  \[
    t\ge \Omega\left(\frac{\log \frac{n}{g}}{\log \frac{Mw}{g\log n}}\right).
  \]
  Since \(w=\Theta(\log n)\), we have \(\frac{Mw}{g\log n}=\Theta(\frac{M}{g})\), and therefore
  \[
    t\ge \Omega\left(\frac{\log \frac{n}{g}}{\log \frac{M}{g}}\right).
  \]

  It remains to consider the case \(Mw < g\log n \cdot (w\log n)^\epsilon\).
  Then
  \[
    \tfrac{M}{g} < \tfrac{\log n \cdot (w\log n)^\epsilon}{w} = \log^{\Oh(1)} n,
  \]
  and hence \(\log \max(\frac{M}{g},\log n)\le\Oh(\log \log n)\).
  On the other hand, by the remark preceding the corollary,
  \[
    t \ge \Omega\left(\frac{\log \frac{n}{g}}{\log (w\log n)}\right)= \Omega\left(\frac{\log \frac{n}{g}}{\log \log n}\right).
  \]
  This proves the claimed bound also in this case.
\end{proof}

%% file: upper/prelim.tex
\section{Structured Run-Length Straight-Line Grammars}\label{sec:structured}

Sometimes it is useful to have a highly structured RLSLG.
An RLSLG in \emph{normal form} has three types of rules: the right-hand side is either empty, a single power, or a pair of exactly two symbols.

\begin{definition}\label{def:normal}
  Let \(\G= (\Var,\Sigma,\rhs,S,\weight{\cdot})\) be an RLSLG and let $\Symb_+$ be the set of symbols of $\G$ with non-empty expansions.
  If \(\rhs : \Var \rightarrow \{\emptystring\}\cup \Symb_+^2 \cup \{A^k : A\in\Symb_+, k\in \mathbb{Z}_{>2}\}\), then \(\G\) is in \emph{normal form}.
\end{definition}

A folklore result guarantees that we can efficiently convert any (RL)SLG into a normal form.

\begin{fact}\label{lemma:compute_normalform}
  Given an RLSLG \(\G\), we can compute in \(\Oh(|\G|)\) time an RLSLG \(\cH\) in normal form and a homomorphism from \(\G\) to \(\cH\) such that \(|\cH|\le \Oh(|\G|)\).
\end{fact}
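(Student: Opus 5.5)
We will follow the standard folklore conversion, processing the rules of $\G$ in a bottom-up order consistent with $\prec_\G$ and building $\cH$ together with a map $f:\Symb_\G\to\Symb_\cH$. Terminals are mapped to themselves. Throughout we maintain the invariant that $f(A)$ is a symbol of $\cH$ with $\expand[\cH]{f(A)}=\expand[\G]{A}$, and that $f(A)$ lies in $\Symb_+$ whenever this expansion is non-empty. Weights of terminals are copied unchanged, so weighted expansions agree as well.

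\textbf{Processing one rule.} Consider a variable $A$ with $\rle(\rhs_\G(A))=(B_0,k_0)\cdots(B_{m-1},k_{m-1})$.
\begin{enumerate}
\item Discard every run whose symbol $B_j$ has $f(B_j)$ with empty expansion. Emptiness is recorded as a boolean per symbol during the bottom-up sweep.
\item For each surviving run with $k_j\ge 3$, create a fresh variable $R_j\to f(B_j)^{k_j}$; this is a power rule of size one.
\item For each surviving run with $k_j=2$, create $R_j\to f(B_j)f(B_j)$.
\item For each surviving run with $k_j=1$, set $R_j\coloneqq f(B_j)$.
\end{enumerate}
Let $R_{j_1},\dots,R_{j_\ell}$ be the resulting symbols, all in $\Symb_+$. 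The image $f(A)$ is then determined by $\ell$:
\begin{itemize}
\item if $\ell=0$, then $f(A)$ is a fresh variable with empty right-hand side;
\item if $\ell=1$, then $f(A)\coloneqq R_{j_1}$, reusing that symbol rather than creating a unit rule, which normal form forbids;
\item if $\ell\ge 2$, then we build a left-deep chain of binary rules $X_2\to R_{j_1}R_{j_2}$, $X_i\to X_{i-1}R_{j_i}$, and set $f(A)\coloneqq X_\ell$.
\end{itemize}

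\textbf{Why $f$ is a homomorphism.} By induction along $\prec_\G$, each new rule expands to the concatenation of the expansions of its parts. Dropping empty runs does not change the concatenation. Hence $\expand[\cH]{f(A)}=\expand[\G]{A}$.

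\textbf{Start symbol.} The start symbol of $\cH$ is $f(S)$. This is permitted because the definition allows $S\in\Var\cup\Sigma$.

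\textbf{SLG inputs.} If $\G$ has no runs, the power rules with $k\ge3$ never arise. The $k=2$ case is an ordinary pair, so $\cH$ is an SLG.

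\textbf{Size and time.} Each run of each rule contributes $\Oh(1)$ new rules, each of size at most $2$. Hence $|\cH|\le\Oh(|\G|)$, and the single bottom-up pass runs in $\Oh(|\G|)$ time. This assumes a topological order is available. If it is not, we compute one by DFS over the rule graph in linear time.

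\textbf{Main obstacle.} The only genuinely delicate point is the exact form required by \cref{def:normal}. Every symbol placed on a right-hand side must have a non-empty expansion, powers must have exponent greater than $2$, and unit rules $A\to B$ are not allowed. Removing empty symbols, treating $k=2$ as a pair, and aliasing single-symbol results via $f$ together handle this. We must also note that $f$ need not be injective, which the definition of a homomorphism allows.
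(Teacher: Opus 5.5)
Your proof is correct and follows essentially the same construction as the paper. Both drop symbols with empty expansions, turn each run into a power rule ($k\ge 3$), a pair ($k=2$), or an alias ($k=1$), and chain the pieces with left-deep binary rules, aliasing when exactly one piece remains and using an empty variable when none does. Computing runs before rather than after discarding empty symbols only yields possibly non-maximal runs, which is harmless.

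One small caveat concerns your ``SLG inputs'' remark. For an SLG, the paper does no run-length grouping at all: it sets every $k_i=1$, so the output has only binary and empty rules even when right-hand sides contain repeated consecutive symbols. Your claim that powers with $k\ge 3$ ``never arise'' holds only if no right-hand side contains three equal consecutive symbols, and in fact your $k=2$ case already presupposes such repetitions. This does not affect the Fact as stated, but the constant-size right-hand sides are what the later contracting-SLG construction relies on.
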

\begin{remark}
  The existence of a homomorphism implies \(\Lang_\G\subseteq \Lang_\cH\), that is, \(\cH\) defines all strings that \(\G\) defines.
\end{remark}
\begin{proof}
  We process the symbols $A\in \Symb_\G$ in the topological order;
  after processing each symbol $A$, we ensure that $\cH$ contains a symbol $f(A)$ that can serve as the homomorphic image of $A$.
  If $A\in \Sigma_\G$ is a character, we add it to $\Sigma_\cH$ and set $f(A)\coloneqq A$.
  If $A\in \Var_\G$ is a variable, let $W_A$ be the sequence obtained from $\rhs_\G(A)$ by removing all symbols with empty expansions; note that \(\expand[\G]{W_A}=\expand[\G]{A}\).
  We then proceed as follows based on either $\rle(W_A)=(A_0,k_0)\cdots (A_{m-1},k_{m-1})$, where $A_0,\ldots,A_{m-1}\in \Symb_{\G,+}$ and $k_0,\ldots,k_{m-1}\in \Zp$, when $\G$ is interpreted as an RLSLG,
  or based on $W_A = A_0 \cdots A_{m-1}$ and assume $k_0=\cdots = k_{m-1}=1$, if $\G$ is interpreted as an SLG.
  \begin{enumerate}
    \item For each $i\in \iv{0}{m}$ with $k_{i}\ge 3$, we add a new variable $B_i$ to $\Var_\cH$ with $\rhs_\cH(B_i)\coloneqq (f(A_i))^{k_i}$.
    \item For each $i\in \iv{0}{m}$ with $k_{i}= 2$, we add a new variable $B_i$ to $\Var_\cH$ with $\rhs_\cH(B_i)\coloneqq f(A_i)\cdot f(A_i)$.
    \item For each $i\in \iv{0}{m}$ with $k_i = 1$, we just set $B_i\coloneqq f(A_i)$.
    \item If $m\ge 1$, we set $C_0 \coloneqq B_0$, for each $i\in \iv{1}{m}$, add a new variable $C_i$ to $\Var_\cH$ with $\rhs_\cH(C_i)\coloneqq C_{i-1} B_i$, and finally set $f(A)\coloneqq C_{m-1}$.
    \item Otherwise, we set $f(A)\coloneqq E$, where $E\in \Var_\cH$ with $\rhs_\cH(E)\coloneqq \emptystring$ is added to $\Var_\cH$ if necessary.
  \end{enumerate}
  It is easy to observe that each variable added to $\cH$ is in normal form, the total run-length size of the newly added rules is $\Oh(m)$.
  If $m=0$, then clearly \(\expand[\cH]{f(A)}=\expand[\cH]{E}=\emptystring=\expand[\G]{W_A}=\expand[\G]{A}\).
  Otherwise,
  \begin{align*}
  \expand[\cH]{f(A)}&=\expand[\cH]{B_0}\cdots \expand[\cH]{B_{m-1}}\\
  &=(\expand[\cH]{f(A_0)})^{k_0}\cdots (\expand[\cH]{f(A_{m-1})})^{k_{m-1}}\\
  &=(\expand[\G]{A_0})^{k_0}\cdots (\expand[\G]{A_{m-1}})^{k_{m-1}}\\
  &=\expand[\G]{W_A}=\expand[\G]{A},\end{align*} which means that $f$ is indeed a homomorphism.
\end{proof}

%% file: upper/contracting-weighted-slgs.tex
\subsection{Contracting SLGs for Weighted Strings}\label{sec:contracting_slg}
The key to fast traversal of a grammar is a strong bound on its height. Contracting grammars offer such a bound in every non-terminal, which makes them especially useful.
Below, we generalize them to weighted SLGs.

\begin{definition}[Contracting variables and SLGs; see \cite{G21}]\label{def:contracting}
  Consider a weighted SLG $\G$.
  A symbol $B\in \Symb_\G$ is a \emph{child} of a variable $A\in \Var_\G$ if $B$ occurs in $\rhs(A)$,
  and we call it a \emph{heavy child} of $A$ if $\weight{B}>\frac12\weight{A}$.
  A~variable $A\in \Var_\G$ is \emph{contracting} if it does not have any variable\footnote{A character (terminal symbol) is allowed as a heavy child of a contracting variable.}
 $B\in \Var_\G$ as a heavy child.
  The SLG $\G$ is \emph{contracting} if every variable $A\in \Var_\G$ is contracting.
\end{definition}

The contracting property was introduced by Ganardi for SLGs over unweighted alphabets.
He showed in \cite[Theorem 1]{G21} that, given an SLG \(\G\), one can in linear time construct a contracting SLG of size \(\Oh(|\G|)\) with constant-size right-hand sides that defines all strings that \(\G\) does.
In this section, we extend this result to SLGs over weighted alphabets, formalized as follows:

\begin{theorem}\label{theorem:weighted-contracting-slg}
  Given an SLG $\G$ over a weighted alphabet, in \(\Oh(|\G|)\) time, one can construct a homomorphism from $\G$ to a contracting SLG \(\cH\) of size \(\Oh(|\G|)\) with constant-size right-hand sides.
\end{theorem}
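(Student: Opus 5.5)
We follow Ganardi's construction and check that each step uses the weight function only through additivity of $\weight{\cdot}$ over concatenation, never through unit weights.

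First, we apply \cref{lemma:compute_normalform} (SLG case) to bring $\G$ into normal form. Every rule is then empty or binary, $|\G|$ grows by at most a constant factor, and the homomorphism composes with the one constructed below. Variables with empty right-hand sides are trivially contracting, so we may assume every variable has a binary rule $A\to BC$. For each variable $A$, we follow heavy children, $A=A_0\to A_1\to\cdots$, where $A_{i+1}$ is the child of $A_i$ of weight greater than $\frac12\weight{A_i}$, if one exists. The heavy child is unique because weights are positive integers and the sum over the two children equals $\weight{A_i}$. These edges form a heavy forest on $\Symb_\G$. Each heavy path from $A$ ends at a symbol $A_k$ that is either a terminal or a variable without a heavy variable child. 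Along the way, $\expand{A}$ is written as $L_1\cdots L_k\,\expand{A_k}\,R_k\cdots R_1$, where each $L_i$ or $R_i$ is the light sibling hanging off the path at step $i$. These light siblings have weight less than $\frac12\weight{A_{i-1}}$.

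The key step is to represent the left-hanging sequence $L_1\cdots L_k$ and the right-hanging sequence $R_k\cdots R_1$ of every heavy path by contracting "suffix/prefix" grammars. We do this with a weight-balanced construction over the heavy-path forest. As in \cite{G21}, we decompose the heavy forest into paths and build, for each path, a weight-balanced binary tree over its hanging light symbols, splitting at the weighted median. This is a biased (Mehlhorn-style) search tree in which the leaf for $L_i$ has depth $\Oh(\log\frac{W}{\weight{L_i}})$. The trees are arranged so that every suffix $L_j\cdots L_k$ needed as the left part of $A_{j-1}$ is expressible as a constant number of tree nodes. The new variable for $A$ then gets the rule $f(A)\to X\,f(A_k\text{-core})\,Y$, where $X$ and $Y$ are $\Oh(1)$ such nodes. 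The contracting property must then be verified weight-wise: a node of a weight-balanced tree has children of at most half its weight, except for a single heavy leaf, which we expose as a child at the parent's level. The rule for $f(A)$ has children $X$, $Y$, and the core, each of weight at most $\frac12\weight{A}$. For $X$ and $Y$ this holds because they consist of light siblings, whose total is bounded via the heavy-child condition applied at the top. For the core it holds because $A_k$ has no heavy variable child, and we then expand $A_k$'s rule in place.

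The main obstacle is this last part: keeping the number of tree nodes per suffix constant while preserving both the contracting property and the $\Oh(|\G|)$ size bound with constant right-hand sides. Ganardi's argument handles this with a careful charging of suffix variables to the forest. The plan is to port that argument by redefining the "size" potential as the weight $\weight{\cdot}$ instead of length. The only facts used are additivity, positivity, and integrality of weights (integrality guarantees uniqueness and a strict inequality at heavy-child splits). Weight-based medians can be computed in linear total time by prefix sums along each path, so the running time stays $\Oh(|\G|)$. Finally, we check that the map $A\mapsto f(A)$ preserves expansions by induction along $\prec_\G$, which makes it a homomorphism.
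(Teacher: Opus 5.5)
The heavy-forest decomposition you set up matches the paper, but your central step is deferred (``port Ganardi's argument''), and the parts you do spell out would not work. First, a suffix $L_j\cdots L_k$ generally decomposes into $\Omega(\log k)$ canonical nodes of a (weight-)balanced binary tree, not $\Oh(1)$. Second, the hanging sequences are not suffixes of disjoint paths in the first place. Many variables may share the same heavy child, so the forest branches, and the context of $A$ is the label of a root-to-node path in the reversed forest; such a path crosses many paths of any path decomposition. What you need is exactly Ganardi's Theorem~6 (\cref{thm:prefix_tree}): a contracting SLG with $\Oh(m)$ variables and constant-size rules that defines all root-to-node labels of an edge-labeled tree. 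That theorem is \emph{already} formulated for weighted alphabets, so nothing needs to be ported. The paper applies it as a black box to the reversed heavy forest on $\Var_\G$, with arc labels $\lambda(A)=\rev{B_0\cdots B_{i-1}}$ and $\rho(A)=B_{i+1}\cdots B_{k-1}$ over the weighted alphabet $\Symb_\G$. This yields grammars $\HL,\HR$ with variables $P_A\in\Var_{\HL}$ and $S_A\in\Var_{\HR}$ for the left and right contexts of each $A$.

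Your contracting check for $f(A)$ also fails. Each light sibling is light only relative to its own parent $A_{i-1}$, so their total is bounded only by $\weight{A}-\weight{A_k}$. For example, siblings of weights $4,2,1$ hanging off a heavy path of weights $9,5,3,2$ have total $7>\frac92$, so $X$ cannot be a child of $f(A)$. The paper avoids this by inlining right-hand sides: $\rhs_\cH(A)=\rev{\rhs_{\HL}(P_A)}\cdot\rhs_\G(R_A)\cdot\rhs_{\HR}(S_A)$, where $R_A$ is the contracting variable ending the heavy path. A variable child taken from $\rhs_{\HR}(S_A)$ is then one of two things. Either it is not heavy for $S_A$, and so has weight at most $\frac12\weight{S_A}\le\frac12\weight{A}$. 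Or it is a heavy child of $S_A$ in the contracting grammar $\HR$, hence not a variable of $\HR$ but a single symbol of some label $\rho(A_i)$, of weight less than $\frac12\weight{A_i}\le\frac12\weight{A}$. The case of $\HL$ is symmetric.

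Finally, you are missing the repair step. A variable of $\HL$ or $\HR$ may have a heavy child that is a variable of $\G$: it is a terminal for the auxiliary grammar but a variable of $\cH$. ``Exposing'' it one level up does not make the parent contracting. Instead, replace that child by its $\cH$-right-hand side. This works because $\G$-variables were already shown to be contracting in $\cH$, and it can be done in parallel since $\Var_{\HL}\cup\Var_{\HR}$ and $\Var_\G$ are disjoint.
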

\begin{remark}
  Thanks to the support of weighted alphabets, this theorem can be used as a building block for lifting the \enquote{contracting} property to generalizations of SLGs, similar to how \cite{NOU22,NU24} generalize balancing SLGs~\cite{GJL21} to RLSLGs and beyond.
\end{remark}

Our proof uses most of the inner machinery of Ganardi's proof \cite[Section 3]{G21}.
Although the final result of \cite{G21} applies to unweighted SLGs only, the key ingredient we need is already formulated and proved for weighted alphabets:
\begin{theorem}[{\cite[Theorem 6]{G21}}]\label{thm:prefix_tree}
Given a tree $T$ with $m$ edges, each labeled by a string of length at most $\ell$ over a weighted alphabet, one can compute in $\Oh(m\ell)$ time a contracting SLG with $\Oh(m)$ variables and right-hand sides of length $\Oh(\ell)$ each so that, for every node $\nu$ of $T$, the SLG defines the concatenation of the edge labels on the path from the root of $T$ to the node $\nu$.
\end{theorem}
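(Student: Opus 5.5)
Since this is \cite[Theorem 6]{G21}, we could simply import it. To see why it holds, I would reconstruct the argument as follows. First I would reduce to the case of a single path and then lift it to trees via a heavy-path style decomposition. For a node $\nu$, let $W(\nu)$ denote the weight of the concatenated root-to-$\nu$ label $\mathrm{str}(\nu)$. I would define variables not only for root-to-node strings but also for certain \emph{path segments} $\mathrm{str}(u,\nu)$, the concatenated labels between an ancestor $u$ and $\nu$. The recursion for each such variable is a \emph{midpoint split} on the weight axis. Let $e$ be the unique edge on the segment whose label straddles the weighted midpoint. Then $\mathrm{str}(u,\nu)$ is written as a left segment, followed by the label of $e$ spelled out as at most $\ell$ terminals, followed by a right segment. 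Both outer segments then have weight at most half of the whole. The spelled-out label of $e$ consists of terminals, which the contracting property allows to be heavy. This gives the $\Oh(\ell)$ bound on right-hand sides for free, and it makes every variable contracting by construction.

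The second step is to make the set of segments shared, so that their number is $\Oh(m)$ rather than $\Oh(m\log W)$. Splitting at exact midpoints produces segments that depend on both endpoints and cannot be reused. I would therefore snap split points to a canonical grid. Each segment is split at the crossing of the largest power-of-two multiple $j\cdot 2^k$ of absolute weight lying inside it. This yields a segment-tree-like decomposition on the weight axis in which each segment is determined by a node and a level. I would then check that the two parts still have at most half the weight, up to the straddling edge, whose characters are emitted as terminals. Where the grid split is too lopsided, one extra midpoint split fixes this. I would also check that root-to-node strings decompose into $\Oh(1)$ such canonical segments plus terminals.

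To handle trees rather than paths, I would take a heavy-path decomposition with respect to subtree sizes, measured in edges. Segments then live within a single heavy path, and each root prefix is the prefix of the parent heavy path followed by a segment of its own path. To stay contracting, that parent prefix must itself be split canonically. For this I would use the grid on absolute weights, which is consistent across paths since the weights are measured from the root.

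The main obstacle is the counting. A naive assignment creates a variable for every node at every level $k\le\log W(\nu)$, which gives $\Theta(m\log W)$ variables. To obtain $\Oh(m)$, I would create a level-$k$ segment only at nodes where the path crosses a multiple of $2^k$ with a new canonical endpoint. I would then charge each segment to the edge realizing the crossing, and argue by a geometric-sum or biased-tree argument that each edge is charged $\Oh(1)$ times in amortized terms. A possible way to do this is to weight levels by the number of nodes rather than by the string weight, as in weight-balanced or biased search trees. If the direct charging fails, I would fall back on Ganardi's symmetric centroid decomposition, which is designed exactly to make these path queries telescope to $\Oh(m)$ total.
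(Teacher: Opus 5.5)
Your first sentence matches the paper. The paper gives no proof of this statement: it imports it from \cite{G21}, noting only that Ganardi already formulates it for weighted alphabets.

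As a self-contained argument, however, your reconstruction has a genuine gap at its core step, the $\Oh(m)$ bound. The problem is in the decomposition itself, not just in its analysis. Take a single path whose $m$ edges carry distinct unit-weight characters, so that $W(\nu_i)=i$. This is not a degenerate case: in the proof of \cref{theorem:weighted-contracting-slg}, the comb $P_i\to P_{i-1}a_i$ produces exactly this instance, namely all prefixes of a string.
\begin{itemize}
\item For $2^k<i<2^{k+1}$, your scheme writes $\mathrm{str}(\nu_i)$ as the blocks $[0,2^{k-1})$ and $[2^{k-1},2^k)$, followed by the segment $[2^k,i)$.
\item The largest grid point inside that segment is $2^k+2^h$, where $h$ is the next $1$-bit of $i$. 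The block $[2^k,2^k+2^h)$ is lopsided and gets halved. The remainder $[2^k+2^h,i)$ is again a segment ending at $\nu_i$.
\item Iterating, $\nu_i$ needs its own segment $[\lfloor i/2^h\rfloor\cdot 2^h,\,i)$ for essentially every $1$-bit $h$ of $i$.
\end{itemize}
So the scheme creates $\sum_{i\le m}(\beta(i)-1)=\Theta(m\log m)$ variables, where $\beta(i)$ is the number of $1$-bits of $i$.

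These segments end at arbitrary nodes, not at grid crossings. Creating level-$k$ segments ``only at nodes where the path crosses a multiple of $2^k$'' would therefore leave the recursion undefined rather than save variables. Moreover, segments ending at different nodes end with different characters, so they expand to different strings, and no charging or amortization can merge them. Reaching $\Oh(m)$ needs a genuinely different way of sharing, which is the actual content of Ganardi's theorem. Falling back on his construction simply returns to the citation.

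Two secondary steps are also unsupported.
\begin{itemize}
\item With weighted labels, a canonical segment can weigh much less than its grid interval, because an edge straddling the grid point is pushed to the parent. The grid split can then be lopsided relative to the segment's actual weight at many consecutive levels. The pieces produced by your ``extra midpoint split'' are no longer canonical, and neither their decompositions nor their number is accounted for.
\item In the lifting to trees, a segment running from the last multiple of $2^k$ below $W(\nu)$ up to $\nu$ typically starts on an ancestor heavy path. This contradicts the claim that segments live within a single heavy path. Also, building $\mathrm{str}(\nu)$ through the parent path's prefix can accumulate pieces over $\Oh(\log m)$ light edges unless they are merged, which you do not address.
\end{itemize}
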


To prove \cref{theorem:weighted-contracting-slg}, we begin with a weighted SLG \(\G = (\Var_\G,\Sigma_\G,\rhs_\G,S_\G,\weight{\cdot})\).
Without loss of generality (see \cref{lemma:compute_normalform}), we assume that every right-hand side of $\G$ is of constant size.
Define a directed graph \(H\) with vertex set $\Var_\G$ and arc set $\{(A,B)\in \Var_\G \times \Var_\G : B\text{ is a heavy child of }A\}$ so that, for every non-contracting variable $A$ with heavy child $B$, there is an arc from $A$ to $B$.
Observe that every variable $A\in \Var_\G$ has out-degree at most one in $H$, and the unique outgoing arc, if any, goes to the unique heavy child $B$ with $\frac12\weight{A}<\weight{B}\le \weight{A}$.
Indeed, two heavy children of weight exceeding $\frac12\weight{A}$ cannot coexist.
Moreover, every arc $(A,B)$ follows a grammar dependency, so \(B\prec_\G A\); hence, $H$ is acyclic.
Thus, $H$ is a forest in which arcs are oriented towards the roots of~$H$.
It resembles the \emph{heavy forest} of \cite{G21}, except that we only allow variables as vertices, so that the roots of $H$ are the contracting variables~of~$\G$.

For every $A\in \Var_\G$ with $\rhs_\G(A)=B_0\cdots B_{k-1}$ and a heavy child $B_i\in \Var_\G$, the arc $(A,B_i)$ of $H$ is decorated with labels $\lambda(A)\coloneqq \rev{B_0\cdots B_{i-1}}=B_{i-1}\cdots B_{0}$ and $\rho(A)\coloneqq {B_{i+1}\cdots B_{k-1}}$.

We apply \cref{thm:prefix_tree} to each of the trees of the reverse of $H$, both with labels $\lambda$ and $\rho$.
Taking the disjoint unions across the trees, this gives contracting SLGs $\HL=(\Var_{\HL},\Symb_\G,\rhs_{\HL},S,\weight{\cdot})$ and $\HR=(\Var_{\HR},\Symb_\G,\rhs_{\HR},S,\weight{\cdot})$ over $\Symb_\G$ with constant-size right-hand sides that define the concatenations of $\lambda$ and $\rho$ labels, respectively, for every root-to-node path in the reverse of $H$.

Observe that every variable $A\in \Var_\G$ can be associated to a path $A_0 \to \cdots \to A_m$ in $H$ from $A_0\coloneqq A$ to a contracting variable $A_m\coloneqq R_A\in \Var_\G$.
Note that
\[\expand[\G]{A} = \expand[\G]{\rev{\lambda(A_{0})}\cdots \rev{\lambda(A_{m-1})} \cdot R_A \cdot \rho(A_{m-1})\cdots \rho(A_{0})}\]
follows by unfolding the productions along the path from $A$ to $R_A$ in $H$, collecting the left and right contexts at each step; since the left contexts $\lambda(A_i)$ are stored reversed, taking $\rev{\lambda(A_i)}$ recovers the correct order.
Moreover, \cref{thm:prefix_tree} guarantees that $\HL$ contains a variable $P_A\in \Var_{\HL}$ with $\expand[\HL]{P_A}=\lambda(A_{m-1})\cdots \lambda(A_{0})$ and $\HR$ contains a variable $S_A\in \Var_{\HR}$ with $\expand[\HR]{S_A}=\rho(A_{m-1})\cdots \rho(A_{0})$.
Thus, $\expand[\G]{A}=\expand[\G]{\rev{\expand[\HL]{P_A}}\cdot R_A \cdot \expand[\HR]{S_A}}$.

We assume without loss of generality that $\Var_{\HL}$ and $\Var_{\HR}$ are disjoint and define a grammar $\cH = (\Var_{\G}\cup \Var_{\HL}\cup \Var_{\HR},\Sigma_\G, \rhs_{\cH},S_\G,\weight{\cdot})$ so that
\[\rhs_{\cH}(A) = \begin{cases}
\rev{\rhs_{\HL}(A)} & \text{if }A\in \Var_{\HL},\\
\rhs_{\HR}(A) & \text{if }A \in \Var_{\HR},\\
\rev{\rhs_{\HL}(P_A)} \cdot \rhs_{\G}(R_A) \cdot \rhs_{\HR}(S_A) & \text{if }A\in\Var_{\G}.\\
\end{cases}\]
Now, the right-hand sides are of constant size (up to three times as long as in $\G$, $\HL$, and $\HR$) and $\expand[\cH]{A}=\expand[\G]{A}$ holds for all $A\in \Var_\G$, which means that the inclusion map $\Var_\G\hookrightarrow \Var_\cH$ is a homomorphism from $\G$ to $\cH$.
We next argue that variables $A\in \Var_\G$ are contracting in $\cH$.
Consider a child $B\in \Var_{\cH}$ of $A$ in $\cH$.
If $B$ occurs in $\rhs_\G(R_A)$, then $\weight{B} \le \frac12\weight{R_A}\le \frac12\weight{A}$ because $R_A$ is contracting.
If $B$ occurs in $\rhs_{\HR}(S_A)$, then either $\weight{B}\le \frac12\weight{S_A} \le \frac12\weight{A}$ or $B$ is also a heavy child of $S_A$ in $\HR$.
Since $\HR$ is contracting, the latter means that $B\notin \Var_{\HR}$.
Thus, $B$ is a symbol from $\Symb_\G$ appearing in the label $\rho(A_i)$ for some ancestor $A_i$ of $A$ in~$H$; its weight is bounded by the weight of that label, so $\weight{B}\le \weight{\rho(A_i)} \le \weight{A_i}-\weight{A_{i+1}} < \frac12 \weight{A_i} \le \frac12\weight{A}$.
If $B$ occurs in $\rev{\rhs_{\HL}(P_A)}$, the argument is~symmetric.

Next, consider a variable $A\in \Var_{\HL}\cup \Var_{\HR}$.
Although the contracting property of $\HL$ and $\HR$ guarantees that it does not have a heavy child in $\Var_{\HL}\cup\Var_{\HR}$, it may have a heavy child in $\Var_\G$.
In that case, we replace the heavy child $B$ with $\rhs_\cH(B)$ in $\rhs_\cH(A)$.
Since we already proved $B$ to be contracting in $\cH$, this fix makes $A$ contracting.
Moreover, since $A$ and $B$ belong to disjoint sets $\Var_{\HL}\cup \Var_{\HR}$ and $\Var_\G$, respectively, the fix can be applied in parallel to all non-contracting variables $A\in \Var_{\HL}\cup \Var_{\HR}$.
As we replace a single heavy child by its right-hand side and $\rhs_\cH(B)$ has constant size, the right-hand side size remains constant.

This completes the proof of the existential part of \cref{theorem:weighted-contracting-slg}.
It is straightforward to verify that all steps of our construction can be implemented in linear time.

%% file: upper/contracting-rlslgs.tex
\subsection{Contracting RLSLGs}

We claim that, for every RLSLG \(\G\) over a weighted alphabet, there is a contracting RLSLG \(\mathcal{H}\) of size \(\Oh(|\G|)\) that produces all strings that \(\G\) produces. We prove this by constructing a homomorphism from \(\G\) to such an \(\cH\).

\begin{corollary}\label{cor:contracting_rlslg}
Given an RLSLG $\G$ over a weighted alphabet, in \(\Oh(|\G|)\) time, one can construct a homomorphism from $\G$ to a contracting RLSLG \(\cH\) of size \(\Oh(|\G|)\) with constant-size (run-length encoded) right-hand sides.
\end{corollary}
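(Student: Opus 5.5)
The plan is to reduce the statement to \cref{theorem:weighted-contracting-slg} by treating every power $A^k$ with $k\ge 3$ as a fresh weighted terminal. First, apply \cref{lemma:compute_normalform} to obtain, in $\Oh(|\G|)$ time, a normal-form RLSLG $\G_1$ with a homomorphism from $\G$. In $\G_1$ every rule is empty, a pair $BC$, or a power $B^k$ with $k\ge 3$.

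Next, build a weighted SLG $\G_2$. For each power variable $A\to B^k$ in $\G_1$, introduce a new terminal $a_A$ of weight $\weight{a_A}\coloneqq \weight{A}=k\weight{B}$. Replace every occurrence of $A$ in other right-hand sides by $a_A$. The remaining rules (pairs and empty rules) have constant size, so $|\G_2|\le \Oh(|\G|)$. Apply \cref{theorem:weighted-contracting-slg} to $\G_2$ to get a contracting weighted SLG $\cH_2$ with constant-size right-hand sides, $|\cH_2|\le\Oh(|\G|)$, and a homomorphism $f_2$ from $\G_2$.

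Then substitute the powers back. Build $\cH$ from $\cH_2$ by turning each pseudo-terminal $a_A$ into a variable $V_A$ with rule $V_A\to (f(B))^k$. Here $f(B)$ is the image of $B$: either $f_2(B)$ if $B$ is a variable of $\G_2$, or $V_B$ (or the original terminal) otherwise. Processing power variables in topological order of $\G_1$ makes this well defined. Weights are preserved, since $\weight{V_A}=k\weight{f(B)}=\weight{a_A}$, so every variable of $\cH_2$ remains contracting in $\cH$. The map sending $A\mapsto V_A$ and every other symbol via $f_2$ is a homomorphism, because expansions agree by induction.

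The main obstacle is that the new variables $V_A$ are not contracting when $k=3$ or when $f(B)$ is itself large. Their child $f(B)$ has weight $\weight{A}/k\le\weight{A}/3\le\frac12\weight{V_A}$, so $V_A$ is in fact contracting, since $k\ge3$. The remaining subtlety is in $\cH_2$: a variable may have $a_A$ as a heavy \emph{terminal} child, which is permitted before substitution but becomes a heavy variable child $V_A$ afterwards. I would fix this as in the proof of \cref{theorem:weighted-contracting-slg}, by replacing the heavy child $V_A$ with its right-hand side $(f(B))^k$. This is a single run, so the run-length size of the rule stays constant. Each resulting child has weight at most $\weight{V_A}/3\le\frac13\weight{\text{parent}}$.

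These replacements happen in parallel, since the replaced symbols are the $V_A$ and the parents are variables of $\cH_2$, so no cascading occurs. This yields a contracting RLSLG of size $\Oh(|\G|)$ with constant-size run-length-encoded right-hand sides, in linear total time. If $\G$ has no runs, no pseudo-terminals are introduced and the output is an SLG, as the convention requires.
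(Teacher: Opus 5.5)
Your proposal is correct and follows essentially the same route as the paper: normalize, turn each run variable $A\to B^k$ ($k\ge 3$) into a weighted terminal of weight $\weight{A}$, and apply the weighted contracting-SLG theorem. Then reinstate the run rules, which are contracting because $k\ge 3$, and repair each remaining variable's at most one heavy run child by a single parallel, non-cascading substitution of its one-run right-hand side. The only differences are cosmetic: you use fresh names $a_A$ and $V_A$ and work with $f_2$ directly instead of assuming it is an inclusion map. Your intermediate claim that the variables of $\cH_2$ ``remain contracting'' holds only after the repair step you then describe.
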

\newcommand{\Runs}{\mathcal{R}}
\begin{proof}
Without loss of generality, let \(\G\) be in normal form (\cref{lemma:compute_normalform}).
Let \(\Runs_\G\) be the set of \emph{run variables} in \(\G\), that is, variables \(A\in \Var_\G\) with \(\rhs_\G(A)=B^k\) for some \(B\in\Symb_{\G}\) and \(k>2\).
We construct a weighted SLG \[\G' \coloneqq (\Var_{\G}\setminus \Runs_\G,\Sigma_\G\cup \Runs_\G,\rhs_{\G}|_{\Var_\G\setminus \Runs_\G}, S_\G, \weight[\G]{\cdot}|_{\Sigma_\G\cup \Runs_\G})~.\]
According to \cref{theorem:weighted-contracting-slg}, there is a contracting SLG \(\mathcal{H}'\) with constant-size right-hand sides defining all strings that \(\G'\) defines.
By renaming variables in \(\cH'\), we may assume without loss of generality that the homomorphism is the inclusion map \(\Var_{\G}\setminus \Runs_\G \hookrightarrow \Var_{\cH'}\).

We define the RLSLG \(\mathcal{H} := (\Var_{\mathcal{H}'}\cup \Runs_\G, \Sigma_\G, \rhs_{\cH}, S_\G,\weight{\cdot}_{\G})\),
where
\[
\rhs_{\cH}(A) \coloneqq
\begin{cases}
\rhs_{\cH'}(A) & \text{if } A\in \Var_{\cH'},\\
\rhs_{\G}(A) & \text{if } A\in \Runs_\G.
\end{cases}
\]
Now, the right-hand sides are of constant size, as in $\cH'$.
Furthermore, for every \(A\in\Var_\G\), we have
\[
  \expand[\cH]{A}=\expand[\G]{\expand[\cH']{A}}=\expand[\G]{\expand[\G']{A}}=\expand[\G]{A}.
\]
Thus, the inclusion map $\Var_\G\hookrightarrow \Var_\cH$ is a homomorphism from $\G$ to~$\cH$.
We also observe that variables $A\in \Runs_\G$ are contracting in $\cH$ because
if \(A\in \Runs_\G\), then \(\rhs_{\cH}(A)=\rhs_\G(A)=B^k\) for some \(k\ge 2\), so \(\weight{B}=\frac1k\weight{A}\le \frac12\weight{A}\).
Next, consider a variable $A\in \Var_{\cH'}$.
Although the contracting property of $\cH'$ guarantees that it does not have a heavy child in $\Var_{\cH'}$, it may have a heavy child in $\Runs_\G$.
In that case, we replace the heavy child $B$ with $\rhs_\cH(B)$ in $\rhs_\cH(A)$.
Since we already proved $B$ to be contracting in $\cH$, this fix makes $A$ contracting.
This replacement preserves expansions and increases the right-hand side length by only a constant (in run-length encoding), so it can be applied to all such occurrences.
Moreover, since $A$ and $B$ belong to disjoint sets $\Var_{\cH'}$ and $\Runs_\G$, respectively, the fix can be applied in parallel to all non-contracting variables $A\in \Var_{\cH'}$.

This completes the proof of the existential part of the corollary.
It is straightforward to verify that all steps of our construction can be implemented in linear time.
\end{proof}

%% file: upper/nice.tex
\subsection{Nice Grammars}
We generalize the notion of contracting grammars further. Based on the previous lemma we allow larger right-hand sides, which makes expansion lengths decay even faster.

\begin{definition}\label{def:tau-heavy}
  Consider a weighted (RL)SLG $\G$ and a real parameter $\tau>1$.
  We say a child $B\in \Symb_\G$ of a variable $A\in \Var_\G$ is \emph{$\tau$}-heavy if $\weight{B}> \frac1\tau \weight{A}$; otherwise, $B$ is a \emph{$\tau$-light} child of $A$.
\end{definition}
Note that the notion of a 2-heavy child coincides with that of a heavy child (\cref{def:contracting}).

The remainder of this section depends on a constant $d\in \mathbb{Z}_{>1}$, which we will ultimately require to be large enough so that the (RL)SLGs constructed using \cref{theorem:weighted-contracting-slg} and \cref{cor:contracting_rlslg} have right-hand sides satisfying $|\rhs(A)|\le d$ and $|\rle(\rhs(A))|\le d$, respectively.

\begin{definition}\label{def:nice}
  Consider a weighted RLSLG $\G$, a real parameter $\tau>1$, and a constant $d\in \Zp$.
  We say that a variable $A$ of $\G$ is \emph{$\tau$-nice} if it satisfies the following conditions:
  \begin{enumerate}[label=(\arabic*)]
    \item\label{it:light} every variable \(B\in \rhs(A)\) is \(\tau\)-light for \(A\);
    \item\label{it:short} \(|\rle(\rhs(A))|\leq 2d\tau\);
    \item\label{it:local} every substring \(X\) of \(\rhs(A)\) with \(\weight{X}\le \frac1\tau\weight{A}\) satisfies
    \(|\rle(X)|\le 2d\ceil{\log \tau}\).
  \end{enumerate}
  If we interpret $\G$ as an SLG, we replace $\rle(\rhs(A))$ in~\ref{it:short} with \(\rhs(A)\) and $\rle(X)$ in~\ref{it:local} with \(X\).
\end{definition}

We show that, for a parameter $\tau>1$ and every variable in a contracting (RL)SLG, we can construct a new right-hand side that makes the variable $\tau$-nice without changing its expansion.

\begin{lemma}\label{lemma:nice-var}
  Let \(\G\) be a contracting (RL)SLG with (run-length encoded) right-hand sides of size bounded by a constant $d\in \Zp$.
  Given a variable $A\in \Var_\G$ and a real parameter \(\tau>1\), in time \(\Oh(\tau)\),\footnote{The algorithm assumes that the weight of every symbol can be compared with $\frac1\tau\weight{A}$ in constant time.} we can compute a (run-length encoded) sequence \(\rhs_{\tau}(A)\in \Symb_\G^*\) such that replacing \(\rhs(A)\) by \(\rhs_{\tau}(A)\) turns \(A\) into a \(\tau\)-nice variable and does not change \(\expand{A}\).
\end{lemma}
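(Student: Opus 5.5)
We construct \(\rhs_{\tau}(A)\) by induction on \(k\coloneqq\ceil{\log\tau}\), following the recursion \(\tau\mapsto\tau/2\) sketched in \cref{sec:overview}. For \(\tau\le 2\) (so \(k\le 1\)) we set \(\rhs_\tau(A)\coloneqq\rhs(A)\). This works because \(A\) is contracting: every variable child is \(2\)-light, hence \(\tau\)-light. Conditions~\ref{it:short} and~\ref{it:local} hold since \(|\rle(\rhs(A))|\le d\le 2d\tau\) and \(d\le 2d\ceil{\log\tau}\). (The degenerate case \(\ceil{\log\tau}=0\) cannot occur since \(\tau>1\).)

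For \(\tau>2\), we first compute \(Y\coloneqq\rhs_{\tau/2}(A)\) recursively. We then obtain \(\rhs_\tau(A)\) from \(Y\) by replacing, in parallel, every variable occurrence \(B\) in \(Y\) with \(\weight{B}>\frac1\tau\weight{A}\) by \(\rhs(B)\). In the run-length setting, a run \(B^k\) is replaced by \(\rhs(B)^k\). If \(\rhs(B)\) is itself a single power \(C^j\), this gives \(C^{jk}\); otherwise the string \(\rhs(B)^k\) would blow up, so we instead keep such heavy runs only when \(k\ge 2\), since then already \(\weight{B}\le\frac12\weight{B^k}\).

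This step needs care. In RLSLG mode it is cleaner to first pass to normal form, via \cref{lemma:compute_normalform}, with contracting preserved, so that \(\rhs(B)\) is a pair or a power. A run \(B^k\) of heavy \(B\) then has \(k\,\weight{B}\le\weight{A}\), so \(k<\tau\), and expanding it yields \(\Oh(k)\) runs. These charge against the \(\Oh(\tau)\) budget because \(k\) heavy copies have total weight at most \(\weight{A}\).

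Every variable in the result is \(\tau\)-light. Each replaced \(B\) is \(\tau/2\)-light by induction, so its children \(C\) satisfy \(\weight{C}\le\frac12\weight{B}\le\frac1\tau\weight{A}\) because \(B\) is contracting. The expansion is unchanged, since each replacement preserves expansions.

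For the counting, the \(\tau\)-heavy occurrences in \(Y\) have disjoint expansions, each of weight greater than \(\frac1\tau\weight{A}\). Hence there are fewer than \(\tau\) of them, and the replacements add at most \(d\) runs each. This gives \(|\rle(\rhs_\tau(A))|\le 2d\cdot\frac{\tau}{2}+d\tau=2d\tau\).

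For locality, take a substring \(X\) of \(\rhs_\tau(A)\) with \(\weight{X}\le\frac1\tau\weight{A}\). Consider the minimal substring \(X'\) of \(Y\) whose expansion yields \(X\). Only its two boundary symbols can be partially covered, so \(\weight{X'}\) may be larger. To handle this, we split \(X'\) into the boundary symbols plus an inner part of weight at most \(\frac1\tau\weight{A}\le\frac{2}{\tau}\weight{A}\). The inner part has at most \(2d(k-1)\) runs by induction and contains no heavy symbol, since a heavy symbol would have weight greater than \(\weight{X}\).

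The two boundary symbols each contribute at most \(d\) runs. This gives \(2d(k-1)+2d=2dk\), as required, modulo merging of adjacent runs, which only decreases counts. The main obstacle is exactly this boundary bookkeeping, in particular making partially covered boundary runs and run merging fit the \(2d\) slack. I would define runs of \(X\) carefully as maximal runs after replacement and argue that boundaries add at most one expanded right-hand side per side.

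For the running time, each level costs time linear in the current run count, which is \(\Oh(d\cdot 2^{j})\) at level \(j\). The levels form a geometric sum, so the total is \(\Oh(\tau)\), using only constant-time weight comparisons with \(\frac1{\tau'}\weight{A}\).
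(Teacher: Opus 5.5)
Your proof follows the paper's argument: induction on $\lceil\log\tau\rceil$ with parallel replacement of the $\tau$-heavy variables of $\rhs_{\tau/2}(A)$ (whose children are $\tau$-light by contraction), fewer than $\tau$ heavy occurrences counted with multiplicity, each adding at most $d$ runs, and locality via at most two partial boundary images plus an unchanged substring of $\rhs_{\tau/2}(A)$ — and your boundary bookkeeping is already complete as written, since run counts are subadditive under concatenation. Drop the two detours, though: leaving a heavy run $B^k$ unexpanded would violate condition~\ref{it:light} (lightness is measured against $\weight{A}$, not $\weight{B^k}$), and passing to normal form is both unnecessary (already $|\rle(\rhs(B)^k)|\le dk$ with $k<\tau$ fits the budget) and unsound here, because it leaves $\Symb_\G$ and does not in general preserve the contracting property.
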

\begin{proof}
  In the proof, we focus on the case of $\G$ being an RLSLG. In the case of an SLG, we simply replace the run-length encoding with the standard string representation everywhere.

  To define $\rhs_{\tau}(A)$, we start with $\rhs(A)$ and exhaustively replace every variable $B$ with $\weight{B}>\frac1\tau \weight{A}$ by $\rhs(B)$.
  We maintain the sequence in run-length encoded form, so the running time is proportional to $|\rle(\rhs_{\tau}(A))|$, which we will show to be at most $2d\tau\le \Oh(\tau)$ in the proof of~\ref{it:short}.
  If such a variable \(B\) occurs in a run \(B^k\), then \(k<\tau\) because \(k\weight{B}\le \weight{A}\), so replacing the run by \(k\) copies of the constant-size run-length-encoded right-hand side \(\rhs(B)\) still takes time proportional to the emitted encoding.

  To argue about $\rhs_{\tau}(A)$, we provide an equivalent construction by induction on \(\ceil{\log \tau}\).
  If \(\ceil{\log \tau}=1\), then \(\rhs_{\tau}(A)= \rhs(A)\).
  This is equivalent to the original definition since $\tau\le 2$ and every variable $B$ in $\rhs(A)$ satisfies $\weight{B}\le \frac12\weight{A}\le \frac1\tau\weight{A}$ because $A$ is contracting.
  If \(\ceil{\log \tau}>1\), then \(\rhs_{\tau}(A)\) can be obtained from \(\rhs_{\tau/2}(A)\) by replacing every occurrence of a \(\tau\)-heavy variable \(B\) (i.e., \(\weight{B}>\frac1\tau\weight{A}\)) in \(\rhs_{\tau/2}(A)\) with \(\rhs(B)\).
  This is equivalent because, by the definition of \(\rhs_{\tau/2}(A)\), every such variable satisfies $\weight{B}\le \frac2\tau\weight{A}$ and, since $B$ is contracting, all variables in $\rhs(B)$ are of weight at most $\frac12\weight{B}\le \frac1\tau\weight{A}$.
  Overall, the inductive definition of $\rhs_{\tau}(A)$ is equivalent to the original, which satisfies \ref{it:light} and \(\expand{\rhs_{\tau}(A)}=\expand{A}\) by~construction.

  It remains to prove \ref{it:short} and \ref{it:local}.
  If $\ceil{\log \tau}=1$, they are both true because every substring $X$ of $\rhs(A)$ satisfies $|\rle(X)|\le|\rle(\rhs(A))|\le d \le \min(2d\tau,2d\ceil{\log \tau})$.
  Assume now that $\ceil{\log \tau}>1$.
  The total weight of \(\rhs_{\tau/2}(A)\) is \(\weight{A}\), so there are strictly less than \(\tau\) occurrences of \(\tau\)-heavy symbols (counting multiplicity in runs).
  Replacing a single occurrence of \(B\) increases the run length by at most \(|\rle(\rhs(B))|\le d\).
  With at most \(\tau\) such occurrences and \(|\rle(\rhs_{\tau/2}(A))|\le d\tau\) by induction,
  we obtain \(|\rle(\rhs_\tau(A))|\le d\tau+d\tau=2d\tau\), proving \ref{it:short}.
  For \ref{it:local}, let \(X\) be any substring of \(\rhs_\tau(A)\) with \(\weight{X}\le \frac1\tau\weight{A}\).
  Since each replaced \(\tau\)-heavy symbol \(B\) has weight larger than \(\frac1\tau\weight{A}\), the substring \(X\)
  cannot fully contain \(\rhs(B)\).
  Thus, \(X\) consists of at most two partial expansions (each contributing at most \(d\) runs)
  and a substring \(X'\) of \(\rhs_{\tau/2}(A)\).
  By induction, \(|\rle(X')|\le 2d\ceil{\log \frac\tau2}\), and therefore
  \(|\rle(X)|\le 2d + 2d\ceil{\log\frac\tau2} = 2d\ceil{\log \tau}\).
\end{proof}

From this local property we get a global property.

\begin{definition}
  Consider real parameters \(\tau_r\geq\tau_v> 1\).
  We say that a weighted (RL)SLG $\G$ is \((\tau_r,\tau_v)\)-nice if the starting symbol $S_\G$ is \(\tau_r\)-nice, and the remaining variables are \(\tau_v\)-nice.
\end{definition}

By \cref{lemma:nice-var} we can, for an RLSLG \(\G\), compute such a \((\tau_r,\tau_v)\)-nice RLSLG \(\G'\) that defines all strings that \(\G\) defines in time \(\Oh(\tau_r+|\G|\cdot \tau_v)\). It has size \(\Oh(\tau_r+|\G|\cdot\tau_v)\).

\begin{corollary}\label{lemma:niceconstruct}
  Consider an (RL)SLG $\G$ producing strings of weight $2^{\Oh(w)}$, where $w$ is the machine word size.
  Given such $\G$ and real parameters \(\tau_r\geq\tau_v> 1\), in time \(\Oh(\tau_r +|\G|\cdot \tau_v)\) we can construct a homomorphism from $\G$ to a \((\tau_r,\tau_v)\)-nice grammar \(\cH\) of size \(|\cH| \le \Oh(\tau_r + |\G|\cdot\tau_v)\). It has \(|\Var_\cH|\le\Oh(|\G|)\) variables.
\end{corollary}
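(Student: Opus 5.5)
The construction has two stages: first make the grammar contracting with constant-size right-hand sides, then rebuild every right-hand side with \cref{lemma:nice-var}.

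For the first stage, apply \cref{cor:contracting_rlslg} (or \cref{theorem:weighted-contracting-slg} if $\G$ is an SLG). In $\Oh(|\G|)$ time this yields a homomorphism $f$ from $\G$ to a contracting (RL)SLG $\G_1$ of size $\Oh(|\G|)$. Every right-hand side of $\G_1$ has (run-length) size at most the constant $d$. Since $\G_1$ has at most $|\G_1|\le\Oh(|\G|)$ variables, we may assume all variables have non-empty right-hand sides. Variables with empty expansions can be handled separately, e.g.\ by deleting them from right-hand sides beforehand, as in \cref{lemma:compute_normalform}. Next precompute $\weight{A}$ for every symbol in $\Oh(|\G_1|)$ time by a bottom-up pass. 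The weights are at most $2^{\Oh(w)}$, so they fit in $\Oh(1)$ words, and each comparison $\weight{B}>\frac1\tau\weight{A}$ can be evaluated as $\tau\weight{B}>\weight{A}$ in constant time. This discharges the footnote assumption of \cref{lemma:nice-var}; if $\tau$ is real, we round it to a rational of bounded precision.

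For the second stage, define $\cH$ on the same variable and terminal sets as $\G_1$. The start symbol gets $\rhs_\cH(S)\coloneqq\rhs_{\tau_r}(S)$, and every other variable gets $\rhs_\cH(A)\coloneqq\rhs_{\tau_v}(A)$, all computed in $\G_1$ by \cref{lemma:nice-var}. This takes $\Oh(\tau_r+|\Var_{\G_1}|\tau_v)=\Oh(\tau_r+|\G|\tau_v)$ time in total, and the same bound holds for the size, since by condition~\ref{it:short} the lengths are at most $2d\tau_r$ and $2d\tau_v$.

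The main point to check is that the per-variable replacements interact correctly when performed simultaneously. \cref{lemma:nice-var} guarantees niceness of $A$ when only $\rhs(A)$ is replaced, and niceness must survive the other replacements too.
\begin{itemize}
\item \emph{Acyclicity.} Every symbol in $\rhs_\tau(A)$ comes from the transitive closure of the children of $A$, so $\prec_{\G_1}$ remains a valid order and $\cH$ is a straight-line grammar.
\item \emph{Expansions and weights.} By induction along $\prec$, $\expand[\cH]{A}=\expand[\G_1]{A}$, because $\expand[\G_1]{\rhs_\tau(A)}=\expand[\G_1]{A}$. Hence weights are unchanged.
\item \emph{Niceness.} Conditions \ref{it:light}--\ref{it:local} depend only on the sequence $\rhs_\cH(A)$ and the weights of its symbols, and these are identical in $\G_1$ and $\cH$. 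So each $A$ stays $\tau_v$-nice, and $S$ stays $\tau_r$-nice.
\end{itemize}
Since $\tau_r\ge\tau_v$, the grammar $\cH$ is $(\tau_r,\tau_v)$-nice. The homomorphism from $\G$ to $\cH$ is $f$ itself, because $\expand[\cH]{f(A)}=\expand[\G_1]{f(A)}=\expand[\G]{A}$. Finally, $\cH$ has the same variables as $\G_1$, which gives $|\Var_\cH|\le\Oh(|\G|)$.
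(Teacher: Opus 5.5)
Your proposal is correct and follows the paper's proof: you first make $\G$ contracting with constant-size right-hand sides via \cref{theorem:weighted-contracting-slg} or \cref{cor:contracting_rlslg}, then apply \cref{lemma:nice-var} with $\tau_r$ at the start symbol and $\tau_v$ at every other variable, without introducing new variables. Your additional checks spell out what the paper summarizes as ``nice by construction.'' These are: computing every $\rhs_\tau(A)$ in the contracting grammar, acyclicity, the observation that expansions and hence weights are unchanged so niceness survives the simultaneous replacements, and precomputing weights so comparisons take constant time.
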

\begin{proof}
  First, apply the contracting transformation (\cref{theorem:weighted-contracting-slg} or \cref{cor:contracting_rlslg}) to obtain an equivalent contracting (RL)SLG of size \(\Oh(|\G|)\)
  with right-hand sides bounded by a constant~\(d\). Notice that this implies \(\Oh(|\G|)\) many variables.
  Then, apply \cref{lemma:nice-var} with \(\tau=\tau_r\) for the start symbol and with \(\tau=\tau_v\) for every other variable. This changes the right-hand sides of each variable, but does not introduce new variables.
  The resulting grammar \(\cH\) is \((\tau_r,\tau_v)\)-nice by construction.
  Since each application of \cref{lemma:nice-var} takes $\Oh(\tau)$ time to replace \(\rhs(A)\) by a sequence of length \(\Oh(\tau)\), the total size of $\cH$ is \(\Oh(\tau_r+|\G|\cdot\tau_v)\), and the total construction time is the same.
\end{proof}

We conclude this section with an upper bound on the height of a $(\tau_r,\tau_v)$-nice (RL)SLG.

\begin{lemma}\label{lem:nice-height}
  Let \(\G\) be a $(\tau_r,\tau_v)$-nice (RL)SLG for some $\tau_r \ge \tau_v >1$.
  For every node $(A,a)$ in the parse tree $\Parse_\G$, the path from the root to $(A,a)$ is of length at most \(2+\max\left(0,\;\log_{\tau_v}\frac{\weight{S_\G}}{\tau_r \weight{A}}\right)\).
\end{lemma}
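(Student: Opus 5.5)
Let the root-to-node path be $(S_\G,0)=(A_0,a_0),(A_1,a_1),\ldots,(A_\ell,a_\ell)=(A,a)$, where each $A_{j+1}$ is a child of the variable $A_j$. I will bound $\ell$ by tracking how the weights shrink along this path, using condition~\ref{it:light} of \cref{def:nice}. The nice-ness conditions only constrain \emph{variable} children, so terminals can be heavy. This is harmless: a terminal has no children and can occur only as the last node of the path. So I split off the possible final terminal step and work with the variables $A_0,\ldots,A_{\ell'}$, where $\ell'\ge \ell-1$.

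\textbf{Weight decay.} Since $S_\G$ is $\tau_r$-nice, a variable child $A_1$ satisfies $\weight{A_1}\le \frac1{\tau_r}\weight{S_\G}$. Every other variable is $\tau_v$-nice, so $\weight{A_{j+1}}\le\frac1{\tau_v}\weight{A_j}$ for each $j\ge1$ with $A_{j+1}$ a variable. By induction on $j$,
\[
\weight{A_j}\le \frac{\weight{S_\G}}{\tau_r\,\tau_v^{\,j-1}}\qquad\text{for every } j\in[1\dd \ell'].
\]

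\textbf{Converting to a depth bound.} Apply this to the last variable on the path. If $\ell'\ge 1$, then $\weight{A}\le \weight{A_{\ell'}}$ (if $A$ is a terminal child of $A_{\ell'}$, its weight is at most that of its parent). Combining with the displayed inequality gives $\tau_v^{\ell'-1}\le \frac{\weight{S_\G}}{\tau_r\weight{A}}$, that is,
\[
\ell'\le 1+\log_{\tau_v}\frac{\weight{S_\G}}{\tau_r\weight{A}}.
\]
Adding the possible final terminal step yields $\ell\le 2+\log_{\tau_v}\frac{\weight{S_\G}}{\tau_r\weight{A}}$. The $\max(0,\cdot)$ in the statement covers the short cases: $\ell\le 2$ always fits within the bound, and if the logarithm is negative then the displayed inequality forces $\ell'\le 1$, so $\ell\le2$.

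\textbf{Main obstacle: the bookkeeping of the additive constant.} I need to check that the single terminal step and the special $\tau_r$ step at the root together cost only the ``$+2$''. The first step from $S_\G$ is counted inside the ``$1+$'' of the $\ell'$ bound, and the terminal step supplies the second unit. Symbols of zero expansion are impossible since all weights are at least~$1$, so no further cases arise.
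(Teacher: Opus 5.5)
Your proof is correct and is essentially the paper's argument: it chains $\weight{S_\G}\ge\tau_r\weight{A_1}$, then the $\tau_v$-lightness of each subsequent variable child, then the trivial inequality for a possible final terminal step. The paper instead assumes $\ell\ge 3$ and applies $\weight{A_\ell}\le\weight{A_{\ell-1}}$ to the last edge uniformly, rather than splitting off a terminal. One minor inaccuracy: a variable with an empty right-hand side has weight $0$ (normal form allows these), so your closing remark is false as stated. Nothing breaks, though, because in that case the claimed bound is infinite.
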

\begin{proof}
  Suppose that the path is of length $\ell$ and its subsequent nodes are $(A_d,a_d)_{d=0}^\ell$, where $(A_0,a_0)=(S_\G,0)$ and $(A_\ell,a_\ell)=(A,a)$.
  We henceforth assume that $\ell \ge 3$ (otherwise, the claim holds trivially).
  The symbols $A_d$ are variables for $d\in [0\dd \ell-1]$;
  Consequently, the following hold since $\G$ is $(\tau_r,\tau_v)$-nice:
  \begin{enumerate*}[label=(\alph*)]
  \item $\weight{S_\G}=\weight{A_0} \ge \tau_r \weight{A_1}$;
  \item for $d\in[1\dd \ell-2]$, we have $\weight{A_d} \ge \tau_v \weight{A_{d+1}}$; and
  \item $\weight{A_{\ell-1}} \ge \weight{A_\ell}= \weight{A}$.
  \end{enumerate*}
  Chaining all these inequalities, we get $\weight{S_\G}\ge \tau_r \cdot \tau_v^{\ell-2}\cdot \weight{A}$,
  which immediately implies
  $\ell \le 2+\log_{\tau_v}\frac{\weight{S_\G}}{\tau_r \weight{A}}$.
\end{proof}

%% file: upper/leafy.tex
\subsection{Leafy Grammars}
Given a grammar over a small alphabet (of size $\sigma\ll 2^{w}$), we can further reduce the height of the parse tree by packing multiple consecutive characters into a single machine word.
This will be especially useful for retrieving substrings, where several characters can be reported at once.

The naive strategy is to simply replace the right-hand side $\rhs(A)$ with the expansion $\expand{A}$ for every variable $A$ for which $|\expand{A}|\le b$ holds for some threshold $b$, which we usually set to be at most $\frac{w}{\log\sigma}$ so that $\expand{A}$ can be stored in a single machine word.
This is sufficient to achieve our results for random access (\cref{sec:ub}).
Nevertheless, in the context of traversal (\cref{sec:traversal}), we want to avoid visiting parse tree nodes $(A,a)$ for which $|\expand{A}|< o(b)$, because they hinder the goal of outputting $\Theta(b)$ characters per time unit.

Hence, we transform the input RLSLG so that a character $c\in \Sigma$ can occur in $\rhs(A)$ for a variable $A\in\Var$ only if $\rhs(A)=\expand{A}$ is a string of length $\Theta(b)$.

\begin{definition}\label{def:leafy}
  Consider a (RL)SLG $\G$ and an integer $b\in [1\dd \explen[\G]{S_\G}]$. We say $\G$ is \emph{$b$-leafy} if its variables $A\in \Var_{\G}$ can be decomposed into two disjoint classes:
  \begin{itemize}
    \item \emph{leaf} variables $A\in \Varleaf{\G}$ for which $\rhs_\G(A)\in \Sigma_\G^*$ and $|\rhs_\G(A)|\in [b\dd 2b)$;
    \item \emph{top} variables $A\in \Vartop{\G}$ for which $\rhs_\G(A)\in \Var_\G^*$.
  \end{itemize}
  We define the \emph{top part} of $\G$ in which all leaf variables become terminal symbols. Formally,
  $\Top{\G} = (\Vartop{\G}, \Varleaf{\G},\rhs_\G |_{\Vartop{\G}},S_\G,\weight[\Top{\G}]{\cdot})$, where \(\weight[\Top{\G}]{A}\coloneqq\explen[\G]{A}\) for every \(A\in\Varleaf{\G}\).
\end{definition}
\begin{remark}
  A $b$-leafy (RL)SLG $\G$ can be stored in $|\Top{\G}|\log \explen[\G]{S_\G} + |\Varleaf{\G}| b \log \sigma$ bits.
\end{remark}

\medskip
We show now that we can create a \(b\)-leafy grammar from any given grammar.

\begin{lemma}\label{lemma:leafyconstruct}
  Let \(\G\) be an (RL)SLG that produces a string \(T\in \iv{0}{\sigma}^n\) and let \(b\in [1\dd n]\).

  Given $\G$ and~$b$, in time \(\Oh(|\G|\cdot (1+b \log \sigma/w))\), where $w\ge \Omega(\log n)$ is the machine word size, we can construct a \(b\)-leafy (RL)SLG \(\cH\) such that $|\Top{\cH}|,|\Varleaf{\cH}|\le \Oh(|\G|)$ and
  a partial homomorphism \(f : \{ A \in \Symb_\G : \explen{A}\geq b\} \rightarrow \Symb_\cH\) that preserves expansions.
\end{lemma}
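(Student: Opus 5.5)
We first apply \cref{lemma:compute_normalform} so that $\G$ is in normal form, with every right-hand side empty, a pair $BC$, or a power $B^k$. We then process the symbols in topological order. For each symbol $A$ with $\explen{A}\ge b$, we build three things in $\cH$: a top variable $f(A)$, a leaf variable $L_A$ expanding to the length-$b$ prefix of $\expand{A}$, and a leaf variable $R_A$ expanding to the length-$b$ suffix. We also build a middle top variable $M_A$ whose expansion is the rest, so that $\rhs_\cH(f(A))=L_A\,M_A\,R_A$. The rest is omitted when empty; when it is nonempty but shorter than $b$, we merge it into $R_A$ instead, which keeps $|\expand{R_A}|<2b$ and forces $|\expand{A}|<3b$. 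For short symbols, with $\explen{A}<b$, we store only the packed string $\expand{A}$ (at most $b\log\sigma$ bits). We keep such strings only as auxiliary data, not as variables of $\cH$.

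The cases for the middle part are as follows.
\begin{itemize}
\item For $A\to BC$ with both $B$ and $C$ long, the middle is $\expand{M_B}\,\expand{R_B}\,\expand{L_C}\,\expand{M_C}$. We obtain it with a constant-size top rule $M_A\to M_B R_B L_C M_C$, omitting the $M$'s that are empty.
\item If $B$ is long and $C$ is short, the middle of $A$ is $\expand{M_B}$ followed by $\expand{R_B}\expand{C}$ minus its last $b$ characters. The leftover piece has length in $[b\dd 2b)$, or is shorter, in which case we fold it into a new $R_A$ of length less than $2b$. We therefore create at most one fresh leaf variable per such rule, computed by packed concatenation and shifting in $\Oh(1+b\log\sigma/w)$ time. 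The case of $B$ short and $C$ long is symmetric.
\item If both $B$ and $C$ are short, then $|\expand{A}|<2b$. We either store $\expand{A}$ packed (if it is still short) or create the leaves directly.
\end{itemize}
For $A\to B^k$ with $B$ long, the middle is $M_B\,(R_BL_BM_B)^{k-1}\,R_B$ after regrouping. We introduce one helper top variable $D_B\to R_B\,L_B\,M_B$ and set $M_A\to M_B\,D_B^{k-1}\,R_B$. This rule has constant run-length size, so it stays within $\Oh(1)$.

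When $B$ is short and $k\,\explen{B}\ge b$, the power is the main obstacle: we need $\Oh(1)$ leaves and helpers, yet the blocks must align. Let $p=\explen{B}<b$ and $q=\lceil b/p\rceil$, so that $qp\in[b\dd 2b)$. We create one leaf $E\to \expand{B}^q$ and write $k=q\,m+r$ with $0\le r<q$. The left boundary then becomes $L_A\to \expand{B}^q$ and the middle becomes $E^{m-2}$ (or fewer copies). The remainder $\expand{B}^{r}$, of length less than $b$, is folded into $R_A$, and $R_A$ still has length less than $2b+b$. To keep the length strictly below $2b$, we instead split the final $\expand{B}^{q+r}$ into two leaves of lengths in $[b\dd 2b)$ when necessary. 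Building each such leaf from packed copies by doubling takes $\Oh(1+b\log\sigma/w)$ time. Small cases with $m\le 2$ are handled by creating at most three leaves directly.

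Each variable of $\G$ contributes $\Oh(1)$ top variables of constant run-length size, so $|\Top{\cH}|=\Oh(|\G|)$. It also contributes $\Oh(1)$ leaves, so $|\Varleaf{\cH}|=\Oh(|\G|)$. The total time is $\Oh(|\G|(1+b\log\sigma/w))$, since every packed string occupies $\Oh(1+b\log\sigma/w)$ words and is built with word operations. Expansion preservation, $\expand[\cH]{f(A)}=\expand[\G]{A}$, follows by induction along the topological order. Top rules contain only variables, so $\cH$ is $b$-leafy. The delicate step is the bookkeeping that keeps every leaf length in $[b\dd 2b)$ in the short-child and short-power cases. I would check it case by case, merging or splitting boundary pieces as described above.
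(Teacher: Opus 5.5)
There is a genuine gap in your running-time argument, plus a few fixable bookkeeping slips. Your construction otherwise follows the same skeleton as the paper's proof. That skeleton is: normal form, bottom-up processing, and a left leaf, an optional constant-size middle top variable and a right leaf for every long symbol. For a power with a short base it uses blocks $\expand{B}^q$ with $q=\lceil b/\explen{B}\rceil$ and a run $E^{m-2}$ of a single block leaf.

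\textbf{The gap: building packed powers by doubling.} Doubling uses $\Theta(\log q)$ word operations even when the whole result fits in one machine word; for $\explen{B}=1$ you have $q=b$. The same happens when you compute $\expand{A}$ for short power symbols $A\to B^k$. With $\Theta(|\G|)$ power rules this can cost $\Theta(|\G|\log b)$. That is not $\Oh(|\G|(1+b\log\sigma/w))$ when $b\log\sigma/w=\Oh(1)$, e.g., for $b=\Theta(\log n/\log\sigma)$ as used later for random access, where the extra $\log b$ factor would break the $\Oh(|\G|\tau)$ construction time.

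The missing ingredient, which the paper states explicitly, is integer multiplication. Let $s=\lceil\log\sigma\rceil$ and let $x$ be a packed string of $p$ characters. Then $x^q$ (for $qps\le w$) equals $x$ times the word $\sum_{i<q}2^{ips}=(2^{qps}-1)/(2^{ps}-1)$. The summands occupy disjoint bit ranges, so there are no carries and the power is computed in $\Oh(1)$ time. Longer periodic leaves are then assembled from such word-sized pieces within $\Oh(1+b\log\sigma/w)$ time.

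\textbf{Bookkeeping slips.} The invariant you announce ($L_A$ exactly the length-$b$ prefix, $R_A$ the length-$b$ suffix) is not what your own cases produce:
\begin{itemize}
\item The short-power case sets $L_A=\expand{B}^q$, whose length can be anywhere in $[b\dd 2b)$.
\item The case of a long $B$ and a short $C$ leaves $\explen{R_A}$ up to $2b-1$ even when $\explen{A}\ge 3b$.
\item A symbol with $\explen{A}\in[b\dd 2b)$ is a single leaf with no $L_A,M_A,R_A$. The pair and power cases that refer to $L_B,M_B,R_B$ therefore need a separate branch for it; for instance, $A\to X^k$ with a single leaf $X$ needs $M_A\to X^{k-2}$.
\end{itemize}
The invariant that makes all your cases go through is the paper's. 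It requires only that $\rhs_\cH(f(A))$ is a single leaf, two leaves, or leaf--top--leaf, with every leaf of length in $[b\dd 2b)$ and no constraint on exact lengths.

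Separately, your regrouping for $A\to B^k$ with long $B$ has one $R_B$ too many. Removing the outer $L_B$ and $R_B$ from $(L_BM_BR_B)^k$ leaves $M_B(R_BL_BM_B)^{k-1}$, so the rule must be $M_A\to M_BD_B^{k-1}$. As written, $f(A)$ expands to a string with an extra copy of $\expand{R_B}$.
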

\begin{proof}
  \cref{lemma:compute_normalform} allows us to assume without loss of generality that $\G$ is in normal form.

  We process the variables $A \in \Var_\G$ in topological order.
  For variables $A$ with $\explen[\G]{A}<2b$, we write down $\expand[\G]{A}$; these explicit strings are used when we build new leaf variables below.
  This takes $\Oh(1+b\log \sigma/w)$ time per variable using the standard implementation of packed strings, with $\ceil{\log \sigma}$ bits per character.\footnote{We use the word-RAM operation of (integer) multiplication to compute string powers of length at most $w/\log\sigma$.}
  For each variable $A$ with $\explen[\G]{A}\ge b$, we place $A$ in $\Vartop{\cH}$, and add $\Oh(1)$ helper variables to $\Vartop{\cH}$ and $\Varleaf{\cH}$ so as to maintain the following invariants:
  \begin{itemize}
    \item $\expand[\cH]{A}=\expand[\G]{A}$;
    \item $\rhs_\cH(A)\in \Varleaf{\cH}\cup \Varleaf{\cH}\cdot \Varleaf{\cH} \cup \Varleaf{\cH}\cdot \Vartop{\cH} \cdot \Varleaf{\cH}$;
    \item every newly created helper leaf variable has expansion length in $[b\dd 2b)$;
    \item every newly created helper top variable has a right-hand side of constant size (in $\rle$).
  \end{itemize}
  Note that the first two invariants involve $A\in \Var_\G$, and the last two involve helper variables.

  We achieve this using the following cases.
  \begin{description}
    \item[{\boldmath Case 1: $b \le |\expand[\G]{A}| < 2b$.}]
    We create a new variable $\Leaf{A}\in \Varleaf{\cH}$ with $\rhs_{\cH}(\Leaf{A})\coloneqq \expand[\G]{A}$ and set $\rhs_{\cH}(A)\coloneqq \Leaf{A}$.
    \item[{\boldmath Case 2: $2b \le |\expand[\G]{A}| < 3b$.}]
    We create new variables $\Leaf{A_L},\Leaf{A_R}\in \Varleaf{\cH}$ with $\rhs_{\cH}(\Leaf{A_L})\coloneqq \expand[\G]{A}[0\dd b)$
    and $\rhs_{\cH}(\Leaf{A_R})\coloneqq \expand[\G]{A}[b\dd \explen[\G]{A})$.
    We also set $\rhs_{\cH}(A)\coloneqq \Leaf{A_L}\Leaf{A_R}$.
    \item[{\boldmath Case 3: $|\expand[\G]{A}| \ge 3b$ and \(\rhs_\G(A) = BC\).}]~
    \begin{description}
      \item[{\boldmath Case 3.1: \(\explen[\G]{B},\explen[\G]{C} \ge b\).}]
      By the inductive invariant for $B$ and $C$, we can write $\rhs_\cH(B)=\Leaf{B_L}\beta$ and $\rhs_\cH(C)=\gamma \Leaf{C_R}$ for some $\Leaf{B_L},\Leaf{C_R}\in \Varleaf{\cH}$ and (possibly empty) strings $\beta,\gamma$ of at most two variables.
      If $\beta\gamma=\emptystring$, then set $\rhs_{\cH}(A)\coloneqq \Leaf{B_L}\Leaf{C_R}$.
      Otherwise, create a new variable $\Top{A}\in \Vartop{\cH}$ with $\rhs_{\cH}(\Top{A})\coloneqq \beta \gamma$ and set $\rhs_{\cH}(A)\coloneqq \Leaf{B_L}\Top{A}\Leaf{C_R}$.
      Then $\rhs_{\cH}(A)$ has the required form, $\rhs_{\cH}(\Top{A})$ has constant size when it is created, and $\expand[\cH]{A}=\expand[\G]{B}\expand[\G]{C}=\expand[\G]{A}$.
      \item[{\boldmath Case 3.2: \(\explen[\G]{B} \ge b > \explen[\G]{C}\).}]
      Then, $\explen[\G]{B}\ge 2b$, so the invariant guarantees $\rhs_{\cH}(B)=\Leaf{B_L} \beta \Leaf{B_R}$ for some $\Leaf{B_L},\Leaf{B_R}\in \Varleaf{\cH}$ and $\beta\in \Var_{\cH}\cup\{\emptystring\}$.
      \begin{description}
        \item[{\boldmath Case 3.2.1: \(\explen[\cH]{\Leaf{B_R}}+\explen[\G]{C}< 2b\).}]
        We create a new variable $\Leaf{A}\in \Varleaf{\cH}$ with $\rhs_\cH(\Leaf{A})\coloneqq \expand[\cH]{\Leaf{B_R}}\cdot \expand[\G]{C}$ and set $\rhs_{\cH}(A)\coloneqq \Leaf{B_L}\beta \Leaf{A}$.
        The new leaf has length in $[b\dd 2b)$, and $\rhs_{\cH}(A)$ has a required form.
        \item[{\boldmath Case 3.2.2: \(\explen[\cH]{\Leaf{B_R}}+\explen[\G]{C}\ge 2b\).}]
        We create new variables $\Leaf{A_L},\allowbreak \Leaf{A_R}\in \Varleaf{\cH}$ with
        $\rhs_\cH(\Leaf{A_L})\coloneqq \expand[\cH]{\Leaf{B_R}}[0\dd b)$ and
        \[\rhs_\cH(\Leaf{A_R})\coloneqq \expand[\cH]{\Leaf{B_R}}[b\dd \explen[\cH]{\Leaf{B_R}})\cdot \expand[\G]{C},\] as well as $\Top{A}\in \Vartop{\cH}$ with $\rhs_\cH(\Top{A})\coloneqq \beta\Leaf{A_L}$.
        Moreover, we set $\rhs_{\cH}(A)\coloneqq \Leaf{B_L}\Top{A}\Leaf{A_R}$.
        Here $\explen[\cH]{\Leaf{A_R}}=\explen[\cH]{\Leaf{B_R}}-b+\explen[\G]{C}\in [b\dd 2b)$, so both new leaves are valid, and $\rhs_{\cH}(\Top{A})$ has constant size.
      \end{description}
      \item[{\boldmath Case 3.3: \(\explen[\G]{B}< b \le \explen[\G]{C}\).}]
      This case is symmetric to Case 3.2.
    \end{description}
    \item[{\boldmath Case 4: $|\expand[\G]{A}| \ge 3b$ and \(\rhs_\G(A) = B^k\) for $k\ge 3$.}]~
     \begin{description}
      \item[{\boldmath Case 4.1:  \(\explen[\G]{B}\ge 2b\).}]
      The invariant guarantees that $\rhs_{\cH}(B)=\Leaf{B_L} \beta \Leaf{B_R}$ holds for some $\Leaf{B_L},\Leaf{B_R}\in \Varleaf{\cH}$ and $\beta\in \Vartop{\cH}\cup\{\emptystring\}$.
      We create a new variable $\Top{A}\in \Vartop{\cH}$ with $\rhs_{\cH}(\Top{A})\coloneqq \beta \Leaf{B_R} B^{k-2} \Leaf{B_L} \beta$ and set $\rhs_{\cH}(A)\coloneqq \Leaf{B_L}\Top{A}\Leaf{B_R}$.
      Then $\rhs_{\cH}(\Top{A})$ has constant size in run-length encoding, and $\expand[\cH]{A}=\expand[\G]{B}^k=\expand[\G]{A}$.
      \item[{\boldmath Case 4.2: \(\explen[\G]{B}< 2b\).}]
      Define $m\coloneqq \ceil{b/\explen[\G]{B}}$ so that $b \le \explen[\G]{B^m} \le 2b-1$
      and set $q \coloneqq \floor{\tfrac{\explen[\G]{A}-2b}{\explen[\G]{B^m}}}$
      so that $2b \le \explen[\G]{A}-q\cdot \explen[\G]{B^m} \le 4b-2$.
      This allows writing $\explen[\G]{A}=a_L+q\cdot \explen[\G]{B^m}+a_R$ for some $a_L,a_R\in [b\dd 2b)$, since any value in $[2b\dd 4b-2]$ can be split into two numbers in $[b\dd 2b)$.
      Create new variables $\Leaf{A_L},\Leaf{A_R}\in \Varleaf{\cH}$ and define
      $\rhs_{\cH}(\Leaf{A_L})\coloneqq \expand[\G]{A}[0\dd a_L)$ and
      $\rhs_{\cH}(\Leaf{A_R})\coloneqq \expand[\G]{A}[\explen[\G]{A}-a_R\dd \explen[\G]{A})$.
      If $q\ge 1$, create a new leaf variable $\Leaf{B}\in \Varleaf{\cH}$ and a new top variable $\Top{A}\in \Vartop{\cH}$ with
      $\rhs_{\cH}(\Leaf{B})\coloneqq \expand[\G]{A}[a_L\dd a_L+\explen[\G]{B^m})$ and
      $\rhs_{\cH}(\Top{A})\coloneqq (\Leaf{B})^q$.
      When $q\ge 1$, because $\explen[\G]{B^m}$ is a multiple of $\explen[\G]{B}$, the block $\expand[\G]{A}[a_L\dd a_L+\explen[\G]{B^m})$ repeats every $\explen[\G]{B^m}$ positions in $\expand[\G]{A}=\expand[\G]{B}^k$, so $(\Leaf{B})^q$ expands to the correct middle segment.
      Finally, set $\rhs_{\cH}(A)=\Leaf{A_L}\Top{A}\Leaf{A_R}$ when $q\ge 1$, and $\rhs_{\cH}(A)=\Leaf{A_L}\Leaf{A_R}$ when $q=0$.
  \end{description}
\end{description}
In all cases, $\rhs_{\cH}(A)$ has one of the required forms, all leaf variables have expansion length in $[b\dd 2b)$, and all new top variables have constant-size right-hand sides in run-length encoding.
Each leaf string is obtained by concatenating at most two already computed strings and/or taking a substring of length $<2b$, so it can be written in $\Oh(b\log \sigma/w)$ time.
Since we create $\Oh(|\G|)$ leaf variables, the total construction time is $\Oh(|\G|\cdot (1+b \log \sigma/w))$, as claimed.

  For the normalized grammar, the identity function on \(\{ A \in \Symb_\G : \explen{A}\geq b\}\) is the claimed partial homomorphism.
  For the original input grammar, this map should be composed with the homomorphism produced by \cref{lemma:compute_normalform}.
\end{proof}

This construction guarantees large leaves, while keeping the general structure of the grammar. We can now further manipulate this grammar to make the top part $(\tau_r,\tau_v)$-nice.

\begin{corollary}\label{corr:constructleafyandnice}
Let \(\G\) be an (RL)SLG that produces a string \(T\in [0\dd \sigma)^n\), let \(b\in [1\dd n]\), and let $\tau_r\ge \tau_v >1$.
Given $\G$, $b$, $\tau_r$, and $\tau_v$, in time \(\Oh(\tau_r + |\G|\cdot (\tau_v + b \log \sigma/w))\), where $w\ge\Omega(\log n)$ is the machine word size, we can construct a \(b\)-leafy (RL)SLG \(\cH\) such that $\Top{\cH}$ is $(\tau_r,\tau_v)$-nice and of size $|\Top{\cH}|\le\Oh(\tau_r+|\G|\tau_v)$, $|\Varleaf{\cH}|\le \Oh(|\G|)$, and
a partial homomorphism \(f : \{ A \in \Symb_\G : \explen{A}\geq b\} \rightarrow \Symb_\cH\) that preserves expansions.
\end{corollary}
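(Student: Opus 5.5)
The plan is to compose \cref{lemma:leafyconstruct} with \cref{lemma:niceconstruct}, the latter applied to the top part viewed as a weighted grammar.

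First, apply \cref{lemma:leafyconstruct} to $\G$ and $b$. This yields, in time $\Oh(|\G|\cdot(1+b\log\sigma/w))$, a $b$-leafy (RL)SLG $\cH_0$ with $|\Top{\cH_0}|,|\Varleaf{\cH_0}|\le\Oh(|\G|)$, together with a partial homomorphism $f_0$ defined on symbols of expansion length at least $b$. Next, consider the weighted (RL)SLG $\Top{\cH_0}$. Its terminals are the leaf variables, each of weight equal to its expansion length in $[b\dd 2b)$. It produces a weighted string of total weight $n\le 2^{\Oh(w)}$, because $w\ge\Omega(\log n)$. So the hypotheses of \cref{lemma:niceconstruct} hold.

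Apply \cref{lemma:niceconstruct} to $\Top{\cH_0}$ with the parameters $\tau_r\ge\tau_v$. This gives, in time $\Oh(\tau_r+|\Top{\cH_0}|\tau_v)=\Oh(\tau_r+|\G|\tau_v)$, a homomorphism $f_1$ from $\Top{\cH_0}$ into a $(\tau_r,\tau_v)$-nice weighted grammar $\cK$. We have $|\cK|\le\Oh(\tau_r+|\G|\tau_v)$, and the terminals of $\cK$ are still the leaf variables of $\cH_0$. The contracting step (\cref{theorem:weighted-contracting-slg} or \cref{cor:contracting_rlslg}) and the niceness step only rearrange right-hand sides over the same terminal alphabet with the same weights. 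Hence every terminal of $\cK$ is a leaf of $\cH_0$, and its weight equals its expansion length.

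Now build $\cH$ by taking the variables of $\cK$ as top variables and adjoining the leaf variables of $\cH_0$ with their explicit right-hand sides. The result is $b$-leafy, its top part is exactly $\cK$ and hence $(\tau_r,\tau_v)$-nice, and $|\Varleaf{\cH}|=|\Varleaf{\cH_0}|\le\Oh(|\G|)$.

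One point needs checking: a top variable of $\cK$ might end up with an empty expansion, or the start symbol might become a terminal. The first cannot happen, since weights are positive and the contracting construction preserves expansions. For the second, $b\le n$ ensures the start symbol of $\cH_0$ is a variable. If it is a single leaf, we can wrap it in a trivial top variable, and that variable is still nice.

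The partial homomorphism is $f\coloneqq f_1\circ f_0$, extended as the identity on leaf variables. It preserves expansions because $\expand[\cH]{X}=\expand[\cH_0]{\expand[\Top{\cH}]{X}}$ holds for sequences of top and leaf symbols, and $f_1$ preserves top-level expansions. The total time is the sum of the two stages, $\Oh(\tau_r+|\G|(\tau_v+1+b\log\sigma/w))=\Oh(\tau_r+|\G|(\tau_v+b\log\sigma/w))$, using $\tau_v>1$.

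The main obstacle I expect is the bookkeeping: confirming that the weighted contracting and nice transformations act only on top variables, leave the terminal set and its weights intact, and satisfy the $2^{\Oh(w)}$ weight assumption. The remaining steps are direct compositions.
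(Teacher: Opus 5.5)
Your proposal is correct and follows the paper's proof. Apply \cref{lemma:leafyconstruct}, run \cref{lemma:niceconstruct} on the weighted top part (whose terminals are the leaf variables, with unchanged weights), reattach the leaf variables with their explicit right-hand sides, and compose the two homomorphisms.

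Your extra checks, namely the $2^{\Oh(w)}$ weight hypothesis and wrapping a start symbol that collapses to a single leaf, are details the paper leaves implicit. One small correction: for $b=1$ the composite must be extended by the identity on characters, not on leaf variables. Leaf variables are terminals of the top part and are already in the domain of $f_1$.
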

\begin{proof}
  First, use \cref{lemma:leafyconstruct} to obtain a $b$-leafy (RL)SLG \(\G'\) that defines all strings of length at least $b$ that \(\G\) defines, and the partial homomorphism \(f_1\) from \(\{ A \in \Symb_\G : \explen{A}\geq b\}\) to \(\Symb_{\G'}\).
  This step takes $\Oh(|\G| (1 + b \log\sigma/w))$ time and the top part of \(\G'\) is of size $\Oh(|\G|)$.
  Next, apply \cref{lemma:niceconstruct} to the top part \(\Top{\G'}\).
  This yields a \((\tau_r,\tau_v)\)-nice grammar \(\cK\) together with a homomorphism \(f_2\) from \(\Top{\G'}\) to \(\cK\).
  We obtain \(\cH\) by taking \(\Top{\cH}\coloneqq \cK\) and, for every \(A\in \Varleaf{\G'}\), keeping \(A\) as a leaf variable of \(\cH\) with the same right-hand side as in \(\G'\).
  Since the terminals of \(\Top{\G'}\) are exactly the leaf variables of \(\G'\), the map \(f_2\) can be viewed as a map into \(\Symb_\cH\).
  We set \(S_\cH\coloneqq f_2(f_1(S_\G))\); since \(b\le n=\explen[\G]{S_\G}\), this is well-defined and expands to \(T\).
  This step takes $\Oh(\tau_r + |\G|\tau_v)$ time and results in a $b$-leafy (RL)SLG whose top part is of size $\Oh(\tau_r + |\G|\tau_v)$.
  Consequently, \(f \coloneqq f_2 \circ f_1\) is the claimed partial homomorphism.
\end{proof}

We conclude this section with an upper bound $\Oh(\max(1,\log_{\tau_v}\frac{\weight{S_\G}}{\tau_r b}))$ on the height of a $b$-leafy (RL)SLG with a $(\tau_r,\tau_v)$-nice top part.

\begin{lemma}\label{lem:leafy-nice-height}
  Consider an integer $b\ge 1$ and real numbers $\tau_r\ge \tau_v>1$.
  If \(\G\) is a $b$-leafy (RL)SLG producing an unweighted string, and \(\G\) has a $(\tau_r,\tau_v)$-nice top part,
  then its height does not exceed \(3+\max\left(0,\log_{\tau_v}\frac{\weight{S_\G}}{\tau_r b}\right)\).
\end{lemma}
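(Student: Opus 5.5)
The plan is to reduce the claim to \cref{lem:nice-height} applied to the top part $\Top{\G}$, and then account for the single extra level contributed by leaf variables. Every root-to-leaf path in the parse tree $\Parse_\G$ has a specific shape. It starts at $(S_\G,0)$ and passes through top variables only, since right-hand sides of top variables contain only variables. It then enters a leaf variable $L\in\Varleaf{\G}$ and finally steps to a terminal, because $\rhs_\G(L)\in\Sigma_\G^*$. Empty right-hand sides could end a path earlier, but such paths are only shorter. So the height of $\G$ equals one plus the maximum depth of a leaf variable node, and that depth is the same in $\Parse_\G$ and in $\Parse_{\Top{\G}}$. The reason is that the top-part parse tree is exactly the parse tree of $\G$ truncated at leaf variables, with the same weights: $\weight[\Top{\G}]{L}=\explen[\G]{L}$, and the original string is unweighted, so the offsets agree.

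Fix a node $(L,a)$ of $\Parse_{\Top{\G}}$ where $L$ is a terminal of $\Top{\G}$. By $b$-leafiness, $\weight[\Top{\G}]{L}=\explen[\G]{L}\ge b$. Also $\weight[\Top{\G}]{S_\G}=\explen[\G]{S_\G}=\weight{S_\G}$, since the grammar produces an unweighted string. \cref{lem:nice-height} applies because $\Top{\G}$ is $(\tau_r,\tau_v)$-nice. It bounds the depth of $(L,a)$ by
\[2+\max\left(0,\log_{\tau_v}\tfrac{\weight{S_\G}}{\tau_r\weight{L}}\right)\le 2+\max\left(0,\log_{\tau_v}\tfrac{\weight{S_\G}}{\tau_r b}\right),\]
where the inequality uses monotonicity of $\log_{\tau_v}$, valid since $\tau_v>1$. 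Adding the final edge from $L$ to a terminal gives the claimed bound $3+\max(0,\log_{\tau_v}\frac{\weight{S_\G}}{\tau_r b})$.

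There are a few degenerate cases. If $S_\G$ is itself a leaf variable, the height is $1$. Top variables with empty expansion have height $0$. In both cases the bound holds trivially.

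The main obstacle is the bookkeeping that justifies applying \cref{lem:nice-height} to $\Top{\G}$ and transferring the result back. Concretely, I must check two things. First, the node set of $\Parse_{\Top{\G}}$ coincides with the nodes of $\Parse_\G$ labeled by top or leaf variables. Second, the weights used in the niceness conditions are the top-part weights, which equal expansion lengths. Neither check is deep, but both should be stated explicitly.
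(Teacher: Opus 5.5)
Your core argument matches the paper's proof. You apply \cref{lem:nice-height} to $\Top{\G}$, use $\weight[\Top{\G}]{L}=\explen[\G]{L}\ge b$ for every leaf variable $L$, and add one edge for the step from $L$ to a character. This correctly shows that every character leaf of $\Parse_\G$ has depth at most $3+\max(0,\log_{\tau_v}\frac{\weight{S_\G}}{\tau_r b})$.

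Your treatment of empty right-hand sides does not work. A path ending at a top variable with empty right-hand side need not be shorter than the paths to characters. Also, a top variable with empty expansion need not have height $0$; only an empty right-hand side forces that. Niceness gives no control here, because all three conditions hold trivially at weight $0$.

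For example, take $b=1$, a leaf variable $L\to 0$, top variables $E_0\to\emptystring$ and $E_i\to E_{i-1}$ for $i\in[1\dd k]$, and $S\to LE_kL$. The top part is $(2,2)$-nice and the claimed bound is $3$. Yet $(S,0),(E_k,1),\ldots,(E_0,1)$ is a root-to-leaf path with $k+1$ edges. So the statement taken literally is false, and this case cannot be argued away.

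The paper's proof has the same blind spot: it asserts $\weight{A}\ge b$ for every leaf $(A,a)$ of $\Parse_{\Top{\G}}$, which ignores empty top variables. The repair is one of the following:
\begin{itemize}
\item claim the bound only for character leaves, equivalently for nodes of positive weight (a top variable of positive weight expands to a string containing a whole leaf block, so its weight is at least $b$);
\item or assume that no top variable has empty expansion.
\end{itemize}
That weaker form is all \cref{alg:random} needs, since it only descends into nodes whose weighted interval contains the query position. Your argument proves it as written once the faulty remarks about empty right-hand sides are dropped.
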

\begin{proof}
  For every leaf $(A,a)$ of $\Parse_{\Top{\G}}$, we have $\weight{A}\ge \explen[\G]{A}\ge b$,
  so \cref{lem:nice-height} guarantees that the height of $\Parse_{\Top{\G}}$ does not exceed
  $2+\max(0,\log_{\tau_v}\frac{\weight{S_\G}}{\tau_r b})$.
  Since $\Top{\G}$ is obtained from $\G$ by treating leaf variables as terminals, every root-to-leaf path in $\Parse_\G$ has at most one more edge than the corresponding path in $\Parse_{\Top{\G}}$.
  We conclude that the height of $\Parse_\G$ is at most $3+\max(0,\log_{\tau_v}\frac{\weight{S_\G}}{\tau_r b})$.
\end{proof}

%% file: upper/bound.tex
\section{Upper Bound for \RA}\label{sec:ub}

In this section, we explain how to efficiently answer \RA queries for strings produced by (RL)SLGs.
We first generalize these queries to weighted strings.

\defdsproblem{\RA Queries to Weighted Strings}{
  A string $T$ over a weighted alphabet $\Sigma$.
}{
  Given $i\in \iv{0}{\weight{T}}$, return $T[j]\in \Sigma$ for $j = \max\{j'\in \iv{0}{|T|}\;:\; \weight{T\iv{0}{j'}}\leq i\}$.
}

Observe that if $T$ is produced by an (RL)SLG $\G$, then a \RA query with argument $i$ reduces to retrieving an appropriate leaf of the parse tree \(\Parse_\G\).
The ancestors of this leaf are precisely nodes \((A,a)\) such that \(i \in \iv{a}{a+\weight{A}}\).
These nodes form a root-to-leaf path in the parse tree $\Parse_\G$.
The natural strategy is to traverse this path, and a single step of such a traversal can be modeled using the following $\child$ query:

\defdsproblem{\(\child_A\) query\label{query:child}}{
  A variable \(A\in \Var_\G\) in a weighted (RL)SLG \(\G\).
}{
  Given a parse tree node \((A,a)\) with symbol \(A\) and an argument \(i\in \iv{a}{a+\weight{A}}\), return the child \((B,b)\) of $(A,a)$ such that $i \in \iv{b}{b+\weight{B}}$.
}

Using $\child$ queries, \RA queries can be solved as follows.

\begin{algorithm}[htbp]
\KwIn{\(i\in\iv{0}{\weight{T}}\)}
\KwOut{The answer to the \RA query in $T$ with argument $i$.}

  \((A,a) \gets (S,0)\)\;
  \While{\(A\notin \Sigma\)}{
    \((A,a)\gets \child_A((A,a),i)\)\;
  }
  \Return \(A\)
  \caption{The \RA Algorithm}\label{alg:random}
\end{algorithm}

Naively, the \(\child_A\) query can be answered in time proportional to $|\rhs(A)|$,
so the running time of \cref{alg:random} can be bounded by the product of the height of $\G$ and the maximum production length of $\G$.
If the input SLG $\G$ is transformed using \cref{theorem:weighted-contracting-slg}, we can achieve random access time $\Oh(\log\weight{T})$, or even $\Oh(\log\frac{\weight{T}}{\weight{c}})$ if the resulting character is $c$.
A relatively simple extension allows implementing \(\child_A\) queries in $\Oh(|\rle(\rhs(A))|)$ time, so we can achieve the same bounds for RLSLGs using \cref{cor:contracting_rlslg}.

The structure of nice RLSLGs introduced in \cref{sec:structured} allows decreasing the grammar height at the expense of an increased production length.
As we show next, this does not interfere with constant-time implementation of \(\child_A\) queries.

\begin{lemma}\label{lemma:nicechild}
  Let \(A\) be a \(\tau\)-nice variable for some real number $\tau>1$.
  In the word-RAM model with word size $w\ge \Omega(\log \weight{A})$, we can construct in time \(\Oh(\tau)\) a data structure of \(\Oh(\tau\log \weight{A})\) bits that supports \(\child_A\) queries in constant time.
\end{lemma}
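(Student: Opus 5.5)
We follow the bucketing strategy sketched in the overview. Let $\rle(\rhs(A))=(B_0,k_0)\cdots(B_{m-1},k_{m-1})$ with $m\le 2d\tau=\Oh(\tau)$ by condition~\ref{it:short} of \cref{def:nice}. We store the array $C_A$ of pairs $(B_j,k_j)$ together with $\weight{B_j}$, and the array of run boundaries $P_A[j]\coloneqq\sum_{j'<j}k_{j'}\weight{B_{j'}}$ for $j\in\fragmentcc{0}{m}$. Each entry fits in $\Oh(\log\weight{A})$ bits, because all weights are at most $\weight{A}$ and symbol identifiers are stored as pointers. Everything is computed by one left-to-right scan in $\Oh(m)=\Oh(\tau)$ time.

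A query $\child_A((A,a),i)$ sets $x\coloneqq i-a\in\iv{0}{\weight{A}}$ and proceeds in two steps. First, it finds the unique run $j$ with $P_A[j]\le x<P_A[j+1]$, i.e., a rank query for $x$ in $P_A$. Second, it computes $e\coloneqq\floor{(x-P_A[j])/\weight{B_j}}$ with one integer division and returns $(B_j,\,a+P_A[j]+e\weight{B_j})$. This is correct because the $k_j$ copies of $B_j$ occupy consecutive intervals of width $\weight{B_j}$ starting at $a+P_A[j]$, following \cref{def:parsetree}. Only the rank query needs a data structure.

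For the rank query, set $L\coloneqq\ceil{\weight{A}/\tau}$, hence $L\ge\weight{A}/\tau$, and partition $\iv{0}{\weight{A}}$ into $\ceil{\weight{A}/L}\le\ceil{\tau}$ buckets $\iv{qL}{(q+1)L}$. The key claim is a local sparsity bound: each bucket contains $\Oh(\log\tau)$ boundaries. This is the step needing care, since condition~\ref{it:local} bounds only substrings of weight at most $\frac1\tau\weight{A}$, while $L$ may exceed that slightly because of rounding.

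We handle it as follows. The runs whose start boundaries lie strictly inside a bucket, except possibly the last one, form a substring $X$ of $\rhs(A)$ that lies within the bucket, so $\weight{X}<L$. If $\weight{X}\le\frac1\tau\weight{A}$, condition~\ref{it:local} bounds $|\rle(X)|$ directly. Otherwise, we split the bucket into two halves of length at most $L/2+1\le\frac1\tau\weight{A}$ and apply condition~\ref{it:local} to each half; this covers the case where $\tau$ is small relative to $\weight{A}$. If instead $\tau>\weight{A}$, we take $L=1$, and the bound is trivial since then there are at most $\weight{A}<\tau$ boundaries in total. Runs cut at the bucket edges add $\Oh(1)$, so each bucket holds at most $2d\ceil{\log\tau}+\Oh(1)=\Oh(\log\tau)$ boundaries.

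For each bucket $q$ we store the rank $r_q$ of $qL$ in $P_A$ and a constant-time rank structure (a fusion tree~\cite{Fredman1993,PT14}) over the $\Oh(\log\tau)$ boundaries inside the bucket, each encoded as an offset below $L$. Since $\Oh(\log\tau)\le\Oh(w^{1/5})$ fails for huge $\tau$, we use the dynamic fusion node of~\cite{PT14}; it supports constant-time rank on sets of size $w^{\Oh(1)}$, and here $\log\tau\le\Oh(\log\weight{A})\le\Oh(w)$. Alternatively, if that bound is too weak, we recurse once more with buckets of size $L/\log\tau$.

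The boundaries are distributed among the buckets in a single merge pass. Bucket structures are built in time linear in their sizes, so construction takes $\Oh(\tau)$ total. The space is $\Oh(\tau)$ words of $\Oh(\log\weight{A})$ bits, i.e., $\Oh(\tau\log\weight{A})$ bits. A query computes $q=\floor{x/L}$, answers $j=r_q+\mathrm{rank}_q(x)-1$, and finishes with the division above, all in $\Oh(1)$ time.
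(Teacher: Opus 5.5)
Your proof is correct and follows essentially the same route as the paper's: store $\rle(\rhs(A))$ with the prefix boundaries $P_A$, answer the rank query by cutting $\iv{0}{\weight{A}}$ into $\Oh(\tau)$ buckets that each hold $\Oh(\log\tau)$ boundaries by condition~\ref{it:local} (with fusion trees inside the buckets), and finish with an integer division to pick the right copy inside a run. The paper uses bucket width $\floor{\weight{A}/\tau}$ and a direct table when $\tau\ge\weight{A}$, so condition~\ref{it:local} applies immediately.

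Your rounding worry is unnecessary. Weights are integers, so the substring $X$ in a bucket has $\weight{X}\le L-1=\ceil{\weight{A}/\tau}-1<\frac1\tau\weight{A}$, and your ``otherwise'' branch is therefore vacuous. That is fortunate, because the inequality $L/2+1\le\frac1\tau\weight{A}$ used there is false when, e.g., $1<\weight{A}/\tau<2$ (then $L=2$). You should drop that branch and replace it with the integrality observation.
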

\begin{proof}
  Let \(C_A\coloneqq \rle(\rhs(A))\) with $m\coloneqq |C_A|$ and $(B_j,k_j)\coloneqq C_A[j]$ for $j\in [0\dd m)$.
  Define the sequence \(P_A[0\dd m]\) of weighted boundary-indices of the runs in the expansion of \(A\) by \(P_A[0]\coloneqq 0\) and
  \(P_A[j+1]\coloneqq P_A[j]+k_j\cdot \weight{B_j}\) for \(j\in[0\dd m)\).
  Let \(X\) be the multiset containing \(P_A[j]\) for every \(j\in\fragmentcc{1}{m}\) such that \(P_A[j]<\weight{A}\), represented by this nondecreasing sequence of boundaries.
  Recall that, for any $y\in \mathbb{R}$, we have $\rank_X(y)=|\{x\in X : x < y\}|$, where multiplicities are counted.
  Then for a query \(\child_A((A,a),i)\) let \(j\coloneqq \rank_X(i-a+1)\).
  Since all weights are non-negative integers, the boundary sequence is nondecreasing, and we have \(j=\rank_X(i-a+1)=|\{\ell\in \fragmentcc{1}{m} : P_A[\ell] \le i-a\}|\), meaning that $a+P_A[j] \le i < a+P_A[j+1] = a + P_A[j] + k_j\cdot \weight{B_j}$.
  In particular, \(P_A[j]<P_A[j+1]\), so \(\weight{B_j}>0\).
  The desired child is the copy of \(B_j\) whose weighted span contains~\(i\), namely
  \[
    \child_A((A,a),i)=\Big(B_j,\; a+P_A[j]+\left\lfloor{\tfrac{i-a-P_A[j]}{\weight{B_j}}}\right\rfloor\cdot\weight{B_j}\Big).
  \]
  This is correct because each of the \(k_j\) copies of \(B_j\) contributes weight \(\weight{B_j}\) and the offset selects the unique copy containing \(i\).
  Hence, \(\child_A\) reduces to a \(\rank_X\) query plus access to \(P_A\) and \(C_A\).

  \begin{claim}\label{clm:setrank}
    Consider a multiset $X$ of size $n$ from a universe \(\iv{0}{u}\), given as a nondecreasing sequence. Let \(s\in\fragmentcc{1}{u}\).
    If \(\rank_X({i+s})-\rank_X(i) \le (\log u)^{\Oh(1)}\) holds for every $i\in \fragmentcc{0}{u-s}$, then, in word RAM with words of size $\Omega(\log u)$,
    in time \(\Oh(n+u/s)\) we can construct an \(\Oh((n+u/s)\log u)\)-bit data structure that supports constant-time \(\rank_X\) queries.
  \end{claim}
  \begin{proof}
    Split \(\iv{0}{u}\) into \(\Oh(u/s)\) intervals of size \(s\), with the last interval possibly shorter.
    Each interval contains at most \(t\coloneqq(\log u)^{\Oh(1)}\) elements of \(X\), counting multiplicities.
    For every interval~\(e\), store the distinct elements of its intersection with $X$ in a fusion tree \(F_e\) \cite{Fredman1993, PT14}, together with an array \(\mathsf{mult}_e\), where \(\mathsf{mult}_e[r]\) is the total multiplicity of the first \(r\) stored distinct elements and \(\mathsf{mult}_e[0]=0\), and keep these structures in an array \(E\);
    since \(t=w^{\Oh(1)}\) when \(w\ge\Omega(\log u)\), these support constant-time rank over the multiset and can be constructed in linear time.
    Let
    \(\mathsf{pref}[e]\) be the total multiplicity of all elements in earlier intervals.
    All arrays are computable in \(\Oh(n+u/s)\) time and use \(\Oh((n+u/s)\log u)\) bits in total.
    Given \(y\in \fragmentcc{0}{u}\), return \(n\) if \(y=u\); otherwise, compute \(e\coloneqq \floor{y/s}\), use \(F_e\) to count the number \(r\) of distinct stored values below \(y\), and return \(\rank_X(y)\coloneqq \mathsf{pref}[e]+\mathsf{mult}_e[r]\).
    All steps are constant time, proving the claim.
  \end{proof}

  If \(\tau \ge \weight{A}\), we can answer \(\rank_X\) by a direct table: store the values
  \(\rank_X(y)\) for all \(y\in \fragmentcc{0}{\weight{A}}\).
  This uses \(\Oh(\weight{A}\log \weight{A})=\Oh(\tau\log \weight{A})\) bits and \(\Oh(\weight{A})=\Oh(\tau)\) time,
  so it already meets the required bounds. We therefore assume \(\tau < \weight{A}\) below.

  By \(\tau\)-niceness (\cref{def:nice}), \(|C_A|\le 2d\tau\) and every substring of \(\rhs(A)\) of weight at most \(\frac1\tau\weight{A}\) contains at most \(2d\ceil{\log \tau}\) runs.
  Let \(n\coloneqq |X|\le |C_A|\), $u\coloneqq \weight{A}$, and \(s\coloneqq \floor{\frac1\tau\weight{A}}\) so that $n+u/s \leq \Oh(\tau)$.
  For every \(i\in\fragmentcc{0}{u-s}\), the value \(\rank_X(i+s)-\rank_X(i)\) is the number of indices \(j\in\fragmentcc{1}{m}\) such that \(P_A[j]\in\iv{i}{i+s}\).
  If there are \(q\ge 2\) such indices \(j_1<\cdots<j_q\), then the substring of \(\rhs(A)\) from run \(j_1\) to run \(j_q-1\) has weight \(P_A[j_q]-P_A[j_1]\le s\le\weight{A}/\tau\), and hence \(q-1\le 2d\ceil{\log\tau}\) by \(\tau\)-niceness.
  Thus, \(\rank_X(i+s)-\rank_X(i)\le 2d\ceil{\log\tau}+1=(\log u)^{\Oh(1)}\).
  Consequently, \(X\) satisfies the premise of \cref{clm:setrank}, so we can answer \(\rank_X\) (and hence \(\child_A\)) in constant time with \(\Oh(\tau\log \weight{A})\) bits and \(\Oh(\tau)\)-time preprocessing.
\end{proof}

Combining \cref{lem:nice-height,lemma:nicechild,alg:random}, we get the following result:

\begin{lemma}\label{lem:ra_simple}
  Let \(\G\) be a weighted RLSLG that produces a string \(T\) of weight~$\weight{T} \le 2^{\Oh(w)}$, where $w$ is the machine word size.
  Given $\G$ and a real parameter $\tau > 1$, in \(\Oh(|\G|\tau)\) time
  we can construct an \(\Oh(|\G|\tau\log \weight{T})\)-bit data structure that answers \RA queries in time
  \[\Oh\Big(\max\Big(1,\;\log_\tau \tfrac{\weight{T}}{|\G|\tau\weight{c}}\Big)\Big),\]
  where $c$ is the returned character.
\end{lemma}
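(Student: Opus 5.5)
The plan is to combine \cref{lemma:niceconstruct}, \cref{lemma:nicechild}, and \cref{lem:nice-height} with the traversal of \cref{alg:random}, choosing $\tau_r\coloneqq |\G|\tau$ and $\tau_v\coloneqq \tau$. We may assume $\tau\ge 2$ without harm, or at least $\tau>1$ as stated.

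First, apply \cref{lemma:niceconstruct} to $\G$ with these parameters. This yields, in $\Oh(\tau_r+|\G|\tau_v)=\Oh(|\G|\tau)$ time, a homomorphism into a $(|\G|\tau,\tau)$-nice RLSLG $\cH$. The grammar $\cH$ has size $\Oh(|\G|\tau)$ and $\Oh(|\G|)$ variables, and its start symbol is the image of $S_\G$, so it still produces $T$. Since $\weight{T}\le 2^{\Oh(w)}$, all weights fit in $\Oh(1)$ words, so the weight comparisons assumed in \cref{lemma:nice-var} take constant time.

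Next, for every variable $A$ of $\cH$, build the $\child_A$ structure of \cref{lemma:nicechild}. For the start symbol we use parameter $\tau_r$, which costs $\Oh(|\G|\tau)$ time and $\Oh(|\G|\tau\log\weight{T})$ bits. For each of the $\Oh(|\G|)$ remaining variables we use parameter $\tau$, which costs $\Oh(\tau)$ time and $\Oh(\tau\log\weight{A})\le\Oh(\tau\log\weight{T})$ bits. The word-size premise $w\ge\Omega(\log\weight{A})$ holds because $\weight{A}\le\weight{T}\le 2^{\Oh(w)}$. Summing gives the claimed construction time and space. We also store, per variable, its weight and the pointers needed to identify child symbols, which fits within the same bounds.

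To answer a query, run \cref{alg:random} on $\Parse_\cH$. Each iteration is a constant-time $\child_A$ query, so the running time is proportional to the depth of the returned leaf $(c,a)$. By \cref{lem:nice-height}, this depth is at most $2+\max(0,\log_\tau\frac{\weight{T}}{|\G|\tau\weight{c}})$, which matches the claimed bound. This height bound requires $\tau_r\ge\tau_v$, which holds since $|\G|\ge1$.

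The main obstacle is bookkeeping rather than anything conceptual. We must make sure the weights in $\cH$ agree with those of $T$, which holds because the homomorphism preserves expansions and the terminal weights are untouched. We must also handle the degenerate case where the start symbol is a terminal, which is answered trivially. Finally, we must check that the RLSLG version of \cref{lemma:nicechild}, with runs and the integer-division offset, returns nodes consistent with $\Parse_\cH$.
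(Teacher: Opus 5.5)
Your proposal is correct and matches the paper's proof. Like the paper, you build a $(|\G|\tau,\tau)$-nice grammar via \cref{lemma:niceconstruct}, equip every variable with the constant-time $\child_A$ structure of \cref{lemma:nicechild}, and bound the running time of \cref{alg:random} by the depth bound of \cref{lem:nice-height}. Your extra bookkeeping (the word-size premise, $\tau_r\ge\tau_v$, and separate space accounting for the start symbol) is consistent with what the paper leaves implicit.
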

\begin{proof}
  Apply \cref{lemma:niceconstruct} to construct a \((|\G|\tau,\tau)\)-nice (RL)SLG $\cH$ of size $\Oh(|\G|\tau)$ that produces~$T$; this step takes $\Oh(|\G|\tau)$ time.
  Recall that \(\cH\) has \(|\Var_\cH|\in\Oh(|\G|)\) many variables.
  For each variable $A\in \Var_\cH$, construct the data structure of \cref{lemma:nicechild} supporting constant-time $\child_A$ queries.
  The total size of these data structures is $\Oh(|\G|\tau \log \weight{T})$ and their total construction time is $\Oh(|\G|\tau)$.

  By \cref{lem:nice-height}, the path from the root of $\Parse_\cH$ to any leaf with value $c$ does not exceed $2+\max\Big(0, \log_\tau \frac{\weight{T}}{|\G|\tau\weight{c}}\Big)$.
  Consequently, \cref{alg:random} has the desired time complexity.
\end{proof}

For strings over small unweighted alphabets, we can make use of leafy grammars.
\begin{lemma}\label{lem:ra_unweighted}
  Let \(\G\) be an RLSLG that produces a string \(T\) over $\iv{0}{\sigma}$ of length~$n \le 2^{\Oh(w)}$, where $w$ is the machine word size.
  Given $\G$ and a real parameter $\tau > 1$, in \(\Oh(|\G|\tau)\) time
  we can construct an \(\Oh(|\G|\tau\log n)\)-bit data structure that answers \RA queries in time
  \[\Oh\Big(\max\Big(1,\;\log_\tau \tfrac{n\log\sigma}{|\G|\tau \log n}\Big)\Big).\]
\end{lemma}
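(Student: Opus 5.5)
The plan is to combine the leafy transformation with niceness, in the same way that \cref{lem:ra_simple} combines niceness with \cref{alg:random}. Set $b \coloneqq \max(1,\min(n,\floor{\frac{\tau \log n}{\log\sigma}}))$. If this $b$ exceeds $w/\log\sigma$, I would instead cap it at $\Theta(w/\log\sigma)$. The cap is harmless: $\tau\log n \ge w$ then, and $\log n \le \Oh(w)$, so $\log_\tau$ changes by at most a constant. With this $b$, a leaf string of length less than $2b$ occupies $\Oh(b\log\sigma)\le\Oh(\tau\log n)$ bits, which is $\Oh(1 + \tau\log n/w)$ words.

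\textbf{Construction.} First I apply \cref{corr:constructleafyandnice} with $\tau_r\coloneqq |\G|\tau$ and $\tau_v\coloneqq\tau$. This takes $\Oh(|\G|\tau + |\G| b\log\sigma/w)\le \Oh(|\G|\tau)$ time, because $b\log\sigma\le \tau\log n \le \Oh(\tau w)$. It yields a $b$-leafy $\cH$ whose top part is $(|\G|\tau,\tau)$-nice, with $|\Top{\cH}|\le\Oh(|\G|\tau)$ and $|\Varleaf{\cH}|\le\Oh(|\G|)$. Next, for each top variable I build the structure of \cref{lemma:nicechild} on the weighted grammar $\Top{\cH}$, whose weights are bounded by $n$. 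Summed over the $\Oh(|\G|)$ top variables, this costs $\Oh(|\G|\tau\log n)$ bits and $\Oh(|\G|\tau)$ time. The leaves are stored packed, costing $\Oh(|\G| b\log\sigma)\le\Oh(|\G|\tau\log n)$ bits.

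\textbf{Query.} I run \cref{alg:random} on $\Top{\cH}$ until the current symbol is a leaf variable $(L,a)$. Then I read character $i-a$ of the packed string $\rhs(L)$ by shift and mask in $\Oh(1)$ time. One detail: if $b$ is large, the leaf may span several words, but I can locate the right word by direct indexing.

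\textbf{Time bound.} By \cref{lem:leafy-nice-height}, the number of $\child$ steps is at most $3+\max(0,\log_\tau\frac{n}{|\G|\tau b})$. I then substitute $b\ge\Omega(\min(\tau\log n,w)/\log\sigma)$. When $b=\Theta(\tau\log n/\log\sigma)$, we get $\log_\tau\frac{n\log\sigma}{|\G|\tau^2\log n}\le\log_\tau\frac{n\log\sigma}{|\G|\tau\log n}$. The capped case and the case $b=n$ (where the bound is trivially constant) are handled the same way.

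\textbf{Expected obstacle.} The main difficulty I anticipate is the bookkeeping between $b$, $w$, and $\tau$. I must simultaneously keep leaf reading constant-time, keep the $b\log\sigma/w$ construction term within $\Oh(\tau)$ per variable, and keep the total leaf storage within the bit budget. I would settle these by the case split on $b$ described above. The edge cases $b<1$ (that is, $\sigma$ huge) need separate handling, and there I would fall back to \cref{lem:ra_simple} with unit weights, since $\log\sigma\ge\log n$ makes the stated bound at least the bound of \cref{lem:ra_simple}.
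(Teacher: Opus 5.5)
Your proposal is correct and follows the paper's proof essentially step for step: \cref{corr:constructleafyandnice} with $(|\G|\tau,\tau)$, \cref{lemma:nicechild} on the top variables, packed leaves read in $\Oh(1)$ time, and \cref{lem:leafy-nice-height}. The only difference is your block length $b\approx\tau\log n/\log\sigma$, which saves merely an additive $1$ in $\log_\tau$. The paper instead takes $b=\ceil{\log n/\log\sigma}$, which gives $\frac{n}{|\G|\tau b}\le\frac{n\log\sigma}{|\G|\tau\log n}$ exactly and makes your cap and multi-word leaves unnecessary. It also avoids the slack from your floor: for $\tau<2$, the floor can push $b$ below $\log n/\log\sigma$, costing an additive $\log_\tau 2$ that is not $\Oh(1)$ as $\tau\to1$.
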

\begin{proof}
  Denote \(b \coloneqq \ceil{\log n/\log \sigma}\).
  Apply \cref{corr:constructleafyandnice} to construct a \(b\)-leafy (RL)SLG \(\cH\) that produces $T$ and whose top part is \((|\G|\tau,\tau)\)-nice.
  This step takes $\Oh(|\G|\tau + |\G|\tau + |\G|b\log\sigma / w)=\Oh(|\G|\tau)$ time, the top part has size $|\Top{\cH}|\leq \Oh(|\G|\tau)$, and the number of leaf variables is $\Oh(|\G|)$.

  For each top variable $A\in \Vartop{\cH}$, construct the $\child_A$ data structure of \cref{lemma:nicechild}.
  This takes $\Oh(|\Top{\cH}|)=\Oh(|\G|\tau)$ time and $\Oh(|\G|\tau \log n)$ space.
  For each leaf variable $A\in \Varleaf{\cH}$, \cref{corr:constructleafyandnice} already constructs an $\Oh(b \log \sigma)$-bit representation of $\rhs_\cH(A)=\expand[\cH]{A}$.
  This suffices for constant-time $\child_A$ queries because $\child_A((A,a),i)=(c,i)$ for $c=\rhs_\cH(A)[i-a]$.

  By \cref{lem:leafy-nice-height}, the parse tree $\Parse_\cH$ has height at most $3+\max(0,\log_\tau \tfrac{n}{|\G|\tau b})$, so \cref{alg:random} has the desired time complexity.
\end{proof}

A straightforward combination of \cref{lem:ra_simple,lem:ra_unweighted} lets us get the best of both worlds.

\begin{theorem}\label{thm:ra}
  Let \(\G\) be a weighted RLSLG that produces a string \(T\) over $\iv{0}{\sigma}$ of weight~$\weight{T} \le 2^{\Oh(w)}$, where $w$ is the machine word size.
  Given $\G$ and a real parameter $\tau > 1$, in \(\Oh(|\G|\tau)\) time
  we can construct an \(\Oh(|\G|\tau\log \weight{T})\)-bit data structure that answers \RA queries in time
  \[\Oh\Big(\max\Big(1,\;\log_\tau \tfrac{\weight{T}\log\sigma}{|\G|\tau \log \weight{T}}\Big)\Big).\]
  The query time can also be bounded as follows based on the weight of the returned character $c$:
  \[\Oh\Big(\max\Big(1,\;\log_\tau \tfrac{\weight{T}}{|\G|\tau\weight{c}}\Big)\Big).\]
\end{theorem}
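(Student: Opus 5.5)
The plan is to run two independent constructions and take whichever is better. The second bound, $\Oh(\max(1,\log_\tau\frac{\weight{T}}{|\G|\tau\weight{c}}))$, is exactly the guarantee of \cref{lem:ra_simple}, so the construction there can be used unchanged. For the first bound, I would imitate \cref{lem:ra_unweighted}, but now in the weighted setting. The idea is to pack characters into leaves so that the parse-tree height is controlled by $\frac{\weight{T}}{|\G|\tau b}$ for a block length $b=\Theta(\log\weight{T}/\log\sigma)$, since a leaf of $b$ characters fits in $\Oh(\log\weight{T})$ bits.

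To get the first bound, I would set $b\coloneqq\ceil{\log\weight{T}/\log\sigma}$ and apply the leafy construction of \cref{lemma:leafyconstruct} to $\G$ viewed as an unweighted grammar over $\iv{0}{\sigma}$. Leaves then hold between $b$ and $2b$ characters. Next I would reinterpret the top part as a weighted grammar. Each leaf variable $L$ gets weight $\weight{\expand{L}}$, its true total weight, and not its length. The top part is then made $(|\G|\tau,\tau)$-nice via \cref{lemma:niceconstruct}.

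A query descends the top part using the $\child$ structures of \cref{lemma:nicechild}. Once it reaches a leaf, it must locate a weighted position inside a packed string of at most $2b$ characters. For this I would store, for each leaf, a small structure handling rank over the prefix weights of its characters, for example a fusion tree over at most $2b\le\Oh(\log\weight{T})$ prefix sums. Its size is $\Oh(b\log\weight{T})$ bits, which is affordable when summed over the $\Oh(|\G|)$ leaves only if $b\le\Oh(\tau)$.

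The height bound for the first case is the delicate point. In the weighted setting, a leaf has weight at least $b$ but may be much heavier, so \cref{lem:nice-height} bounds the depth of a leaf $L$ by $2+\log_\tau\frac{\weight{T}}{|\G|\tau\weight{L}}\le 2+\log_\tau\frac{\weight{T}}{|\G|\tau b}$. This gives $\log_\tau\frac{\weight{T}\log\sigma}{|\G|\tau\log\weight{T}}$ plus constants, as required.

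The main obstacle is the space of the per-leaf weighted rank structures when $b\gg\tau$. My first attempt would be to cap the packing, using $b'\coloneqq\min(b,\ceil{\tau})$. This caps leaf storage at $\Oh(|\G|\tau\log\weight{T})$ bits. When $b'=\tau<b$, the height bound becomes $\log_\tau\frac{\weight{T}}{|\G|\tau^2}$, which should be within an additive constant of the target, since $\log_\tau\frac{\log\weight{T}}{\tau\log\sigma}\le\log_\tau b$ is then $\Oh(1)$ when $b\le\tau^{\Oh(1)}$. I would have to check this and handle the remaining regime separately, possibly falling back on \cref{lem:ra_simple}.

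Finally, I would build both structures, compare the two bounds at query time (or simply run both), and answer with the faster one. The construction time is $\Oh(|\G|\tau)$, using $b\log\sigma/w=\Oh(1)$ in \cref{corr:constructleafyandnice}.
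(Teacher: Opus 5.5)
Your use of \cref{lem:ra_simple} for the second bound, and the plan to build two structures and run both queries in parallel, are fine. The gap is in the first bound, specifically the ``remaining regime'' you leave open.

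A fusion tree over the weighted prefix sums of each leaf costs $\Theta(\log\weight{T})$ bits per character. Overall that is $\Theta(|\G|\,b\log\weight{T})$ bits, which exceeds the $\Oh(|\G|\tau\log\weight{T})$ budget whenever $b\gg\tau$. Capping at $b'=\min(b,\ceil{\tau})$ fixes the space but adds a $\log_\tau(b/\tau)$ term to the height, and this term is not $\Oh(1)$ in general. Take $\tau=\sigma=2$ and $|\G|=\Theta(\weight{T}/\log\weight{T})$. The target bound is then $\Oh(1)$, but your capped structure only guarantees depth $\log_2\frac{\weight{T}}{4|\G|}=\Theta(\log\log\weight{T})$. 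Falling back on \cref{lem:ra_simple} does not rescue this case. For characters of weight $1$ its guarantee $\log_2\frac{\weight{T}}{2|\G|}$ is also $\Theta(\log\log\weight{T})$, and running both structures in parallel only gives the minimum of the two.

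The missing idea is to eliminate the weights before packing, instead of carrying them into the leaves. Replace every terminal $c$ occurring in the productions by a fresh variable $C$ with $\rhs(C)\coloneqq c^{\weight{c}}$. Each such rule has run-length size one, so in $\Oh(|\G|)$ time you get an RLSLG of size $\Oh(|\G|)$ for the unweighted string $\hat T\in[0\dd\sigma)^{\weight{T}}$, where each character $c$ of $T$ becomes $c^{\weight{c}}$. A weighted random access query with argument $i$ in $T$ is then exactly an ordinary random access query at position $i$ in $\hat T$.

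Now \cref{lem:ra_unweighted} applies verbatim with $n=\weight{T}$. Leaves are packed blocks of $\Theta(\log\weight{T}/\log\sigma)$ weight units, and locating $i$ inside a leaf is plain indexing. Each leaf occupies $\Oh(b\log\sigma)$ bits independently of $\tau$, so you need neither per-leaf rank structures nor any cap on $b$. Running this together with \cref{lem:ra_simple}, as you planned, gives both bounds.
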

\begin{proof}
  Observe that random access to a weighted string $T$ is equivalent to random access to an unweighted string $\hat{T}$ in which every occurrence of a character $c$ is replaced by $c^{\weight{c}}$.
  In other words, every character $c$ that occurs anywhere in the productions of $\G$ can be replaced with a new variable $C$ such that $\rhs(C)\coloneqq c^{\weight{c}}$.
  Consequently, in $\Oh(|\G|)$ time we can build an RLSLG of size $\Oh(|\G|)$ that produces $\hat{T}$.

  By \cref{lem:ra_unweighted}, random access to $\hat{T}$ can be supported in
  $\Oh(\max(1,\log_\tau \tfrac{\weight{T}\log \sigma}{|\G|\tau \log\weight{T}}))$ time using an $\Oh(|\G|\tau \log\weight{T})$-bit data structure that can be constructed in $\Oh(|\G|\tau)$ time.
  The data structure of \cref{lem:ra_simple} has the same size and construction time bounds, so we can build both and run the two query algorithms simultaneously.
  The answer can be returned as soon as one of these two query algorithms terminates, and this happens within the specified time bounds.
\end{proof}

We now derive \cref{thm:upper}; in fact, the same derivation gives the analogous statement for RLSLGs.
Let \(\G\) be an (RL)SLG of size \(g\) generating an unweighted string \(T\in[0\dd\sigma)^n\).
Since \(w\ge\Omega(\log(n\sigma))\) implies \(n\le 2^{\Oh(w)}\), \cref{thm:ra} applies with \(\weight{T}=n\).
Set \(\tau\coloneqq Mw/(g\log n)\); the assumption \(g\log n<Mw\) gives \(\tau>1\).
Applying \cref{thm:ra} with this value of \(\tau\), we obtain construction time \(\Oh(g\tau)=\Oh(Mw/\log n)\) and space \(\Oh(g\tau\log n)=\Oh(Mw)\) bits, which is \(\Oh(M)\) machine words.
Finally, the query time becomes
\[
  \Oh\!\left(\max\!\left(1,\log_\tau\frac{n\log\sigma}{g\tau\log n}\right)\right)
  =
  \Oh\!\left(\max\!\left(1,\frac{\log \frac{n\log\sigma}{Mw}}{\log\frac{Mw}{g\log n}}\right)\right).
\]
This is the bound of \cref{thm:upper}, with the convention that time bounds are at least constant.

%% file: upper/traversal.tex
\newcommand{\rt}{\mathsf{root}}
\newcommand{\mdl}{\mathsf{middle}}
\newcommand{\lft}{\mathsf{left}}
\newcommand{\rgt}{\mathsf{right}}
\newcommand{\woff}{\mathsf{offset}}

\section{Traversing the String}\label{sec:traversal}

Given an (RL)SLG \(\G\) producing a string \(T\), we want a data structure that maintains a pointer to a position in \(T\) and supports forward/backward traversal while returning the traversed substring.
We allow each step to advance by \(b\) consecutive leaves (characters), where \(b\) is a parameter; later we set \(b=\Theta(\min(w,\tau\log n)/\log\sigma)\) so that each returned block fits in \(\Oh(1)\) machine words.
Our goal is constant time per step and pointer construction time matching the random-access trade-offs.

For \(b=1\), there is a known solution to this problem. Lohrey, Maneth and Reh show how to traverse the parse tree leaf by leaf, reaching the next one in constant time each \cite{Lohrey2017}.
We extend their result to weighted strings and RLSLGs.
In the realm of small (unweighted) alphabets we improve on the running time of the traversal algorithm.
The algorithm of \cite{GKPS05} supports traversal in an even more general setting, but at the cost of quadratic preprocessing time, whereas ours is linear. Furthermore, we allow a pointer to move both forward and backward, which generalizes their forward-only result.

It is considerably slower to retrieve some substring from the parse tree than the time it takes to return it, by up to a factor of \(w/\log\sigma\). We improve this gap.

To that end we define character pointers, which support this traversal operation, both forward and backward. In this section we provide an efficient implementation of this interface and show how to apply it to leafy grammars.

\begin{definition}\label{def:leaf_pointer}
  A \emph{character pointer} \(p\) to an unweighted string \(T\) is an object that
  always points at some position \(p.\pos\in\iv{0}{|T|}\), provides constant-time access to $p.\pos$ and $T[p.\pos]$, and supports the following traversal operations:
  \begin{itemize}
    \item \(p.\forward()\) that returns a new character pointer \(p'\) such that \(p'.\pos \coloneqq (p.\pos+1) \bmod |T|\),
    \item \(p.\backward()\) that returns a new character pointer \(p'\) such that \(p'.\pos \coloneqq (p.\pos-1) \bmod |T|\).
  \end{itemize}
  A character pointer can be obtained using the following constructor:
  \begin{itemize}
    \item\(\new(i)\) that, given $i\in [0\dd |T|)$, constructs a character pointer \(p\) such that \(p.\pos = i\).
  \end{itemize}
  If $T$ is weighted, the character pointer provides constant-time access to \(p.\woff \coloneqq \weight{T[0\dd p.\pos)}\) instead of \(p.\pos\).
  Moreover, we adapt the constructor analogously to \RA queries:
  \begin{itemize}
    \item\(\new(i)\), given $i\in [0\dd \weight{T})$, constructs a character pointer \(p\) such that
    \[
      p.\pos = \max\{j\in [0\dd |T|) : \weight{T[0\dd j)} \le i\}.
    \]
  \end{itemize}
\end{definition}
\begin{remark}
  The symmetric operations \(\backward\) and \(\forward\) are called \emph{left} and \emph{right} in~\cite{Lohrey2017}.
\end{remark}

\subsection{Implementing Character Pointers}
In this section, we implement character pointers to grammar-compressed strings with constant-time traversal; the trade-off between construction time and the size of the underlying data structure is the same as for \RA in \cref{thm:ra}.

Our solution generalizes~\cite{Lohrey2017} in that it supports run-length encoding and efficient construction for an arbitrary position.
Our implementation relies on the parse tree $\Parse_\G$, whose leaves, read from left to right, represent the subsequent characters of $T$.
Internally, our pointers allow traversing the entire $\Parse_\G$, not just its leaves.
We formalize this using the following interface:

\begin{definition}
  Let \(\G\) be a weighted (RL)SLG that produces a string \(T\).
  A \emph{node pointer} \(p\) on the parse tree \(\Parse_\G\) of \(\G\) is a data structure that
  always points at some node \(p.\node\) in \(\Parse_\G\), provides constant-time access to $p.\node=(p.\symb,p.\woff)$, and supports the following operations:
  \begin{itemize}
    \item $\mathsf{root}(\G)$ constructs a node pointer to the root of $\Parse_\G$;
    \item \(\pushLeft{p}\) and \(\pushRight{p}\) construct a node pointer \(p'\) to the leftmost (or rightmost, respectively) leaf in the subtree of $\Parse_\G$ rooted at $p.\node$;
    \item \(\popLeft{p}\) and \(\popRight{p}\) construct a node pointer \(p'\) to the lowest ancestor of \(p.\node\) that is either the root of $\Parse_\G$ or has a left (or right, respectively) sibling;
    \item \(\pop{p}\) constructs a node pointer \(p'\) to the parent of \(p.\node\). This operation can only be called if \(p.\node\) is not the root of $\Parse_\G$.
    \item \(\pushChild{p}{i}\) constructs a node pointer \(p'\) to the child $p'.\node=\child_{p.\symb}(p.\node,i)$ of $p.\node$, so that $i\in [p'.\woff\dd \allowbreak p'.\woff+\weight{p'.\symb})$; see \cref{query:child}. This operation can only be called if $p.\node$ is not a leaf and $i\in [p.\woff\dd p.\woff+\weight{p.\symb})$.
  \end{itemize}
\end{definition}

\newcommand{\tn}{t_{\mathsf{n}}}
\newcommand{\tc}{t_{\mathsf{c}}}

We now describe an efficient implementation of character pointers using this interface.
\begin{lemma}\label{lemma:char_pointers_in_tn}
  Consider a weighted (RL)SLG $\G$ that produces a string $T$ of weight $\weight{T}\le 2^{\Oh(w)}$, where $w$ is the machine word size.
  If a data structure supports node pointers to $\G$ in $\tn$ time per operation,
  then it also supports character pointers to $T$ with $\Oh(\tn)$-time traversal and $\Oh(\tn \cdot (1+h_{p.\pos}))$-time construction, where $h_{p.\pos}$ is the depth of the leaf representing $T[p.\pos]$ in $\Parse_\G$.
\end{lemma}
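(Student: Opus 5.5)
A character pointer will be represented by a node pointer \(p\) that points at a leaf of \(\Parse_\G\), i.e., a node \((c,\woff)\) with \(c\in\Sigma\). The pointer's accessors are then read off directly: \(T[p.\pos]=p.\symb\), and \(p.\woff\) is the node's weighted offset. In the unweighted case, \(p.\pos=p.\woff\) because every character has unit weight. One subtlety is that leaves of zero-length variables (with \(\rhs=\emptystring\)) are not characters. Following the convention of \cref{lemma:compute_normalform}, I would assume that such variables do not occur on right-hand sides, so every leaf of \(\Parse_\G\) is a terminal.

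For the constructor \(\new(i)\), I start from \(\mathsf{root}(\G)\) and mimic \cref{alg:random}. While the current node is not a terminal, I replace it with \(\pushChild{p}{i}\). Each step descends one level and keeps the invariant \(i\in[p.\woff\dd p.\woff+\weight{p.\symb})\). The loop therefore stops at the leaf whose weighted span contains \(i\), which is exactly the leaf at position \(\max\{j : \weight{T[0\dd j)}\le i\}\). It makes \(h_{p.\pos}\) calls, each costing \(\tn\), which gives \(\Oh(\tn(1+h_{p.\pos}))\) time in total.

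For \(\forward\), I use \(\popRight\): it goes to the lowest ancestor that is the root or has a right sibling. If it returns a node \(u\) with a right sibling, I move to that sibling and then apply \(\pushLeft\) to reach its leftmost leaf. By the definition of the parse tree, this leaf is the next leaf in left-to-right order, so its position is \(p.\pos+1\). If instead \(\popRight\) reaches the root without any right sibling, then \(p\) was the last leaf. In that case I apply \(\pushLeft\) to the root, which yields position \(0\) and realizes the wrap-around modulo \(|T|\). The operation \(\backward\) is symmetric, using \(\popLeft\), a step to the left sibling, and \(\pushRight\). Each traversal uses a constant number of node-pointer operations and so costs \(\Oh(\tn)\).

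The main obstacle is the step ``move to the sibling'', because the interface offers no explicit sibling operation. I would implement it with \(\pop\) followed by \(\pushChild\). Let \(u=(B,b)\) be the node returned by \(\popRight\). Its right sibling starts at weighted offset \(b+\weight{B}\). Calling \(\pushChild\) on the parent with argument \(i'=b+\weight{B}\) selects that sibling, since the argument lies inside the sibling's span. This uses the fact that the sibling has positive weight, which is guaranteed once no empty-expansion symbols occur as children. Both \(\weight{B}\) and \(b\) come from \(u.\node\), and \(\weight{B}\) is assumed to be stored per symbol. Because \(\weight{T}\le 2^{\Oh(w)}\), all this arithmetic fits in words. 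The left sibling is obtained analogously with argument \(b-1\). This keeps every traversal at \(\Oh(1)\) node-pointer operations, as required.
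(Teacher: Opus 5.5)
Your proposal is correct and follows the paper's proof. In both, $\new(i)$ descends from $\mathsf{root}(\G)$ by repeated $\pushChild$ calls, and $\forward$ is $\pushLeft{\rightSibling{q}}$ for $q\coloneqq\popRight{p}$; the right sibling is realized as $\pushChild{\pop{q}}{q.\mathsf{offset}+\weight{q.\symb}}$, and the root is treated as its own right sibling to get wrap-around. Your explicit assumption that no empty-expansion symbol occurs on a right-hand side is left implicit in the paper's proof of this lemma, and is only enforced later, in \cref{cor:char_ptrs_from_child}, by deleting empty productions.
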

\begin{proof}
  The implementation presented here closely follows the one first given in~\cite{Lohrey2017}.
  Most notably, our \(\pushRight{\cdot}\) operation is equivalent to their \emph{expand-right}. We extend the result to runs, larger right-hand sides, and separate the implementation of the node pointer operations (especially \(\pushChild{\cdot}{\cdot}\)) from the traversal operations.

  To construct a character pointer $\new(i)$, we initialize $p\coloneqq \mathsf{root}(\G)$ and repeat $p\coloneqq \pushChild{p}{i}$ until $p.\symb\in \Sigma_\G$.
  After $h_{p.\pos}$ iterations, \(p\) points at the desired leaf of $\Parse_\G$.

  We will now show how to implement the traversal operations.
  Since \(\forward\) and \(\backward\) are symmetric operations, we discuss $\forward$ only;
  $\backward$ admits an analogous implementation.

  Let us define a helper operation \(\rightSibling{p}\): Given a pointer to a node with a right sibling, this operation constructs a pointer to this sibling.
  To implement cyclic traversal, we assume that the root is itself its right sibling.
  Observe that \(\pushLeft{\rightSibling{\popRight{p}}}\) is well-defined and executes the \(p.\forward()\) operation assuming that $p.\node$ is a leaf of $\Parse_\G$.

  It remains to show that we can construct \(p'\coloneqq \rightSibling{p}\) using constantly many node pointer operations.
  If $p.\node$ is the root of $\Parse_\G$, then we can return $\rightSibling{p} = p$ by definition.
  Otherwise, we have \(\rightSibling{p} = \pushChild{\pop{p}}{p.\woff + \weight{p.\symb}}\).
\end{proof}

We next discuss an efficient implementation of node pointers.

\begin{lemma}\label{lemma:node_pointers_in_tc}
  Consider a weighted (RL)SLG $\G$ that produces a string $T$ of weight $\weight{T}\le 2^{\Oh(w)}$, where $w$ is the machine word size.
  Suppose that, for every $A\in \Var_\G$, we have $|\rhs(A)|\ge 2$ and $\child_A$ queries can be answered in $\tc$ time.
  Then, in $\Oh(|\G|)$ time, we can construct a $\Oh(|\G|\log |\G|)$-bit data structure that supports node pointer operations to $\G$ in $\Oh(\tc)$ time.
\end{lemma}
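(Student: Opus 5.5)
We will represent a node pointer as a stack of the parse-tree nodes on the path from the root $(S,0)$ down to $p.\node$, stored as a persistent linked list: each frame holds $(A,a)$, the run index and copy index inside $\rle(\rhs(A))$ of the child that was taken, and a pointer to the parent frame. Because frames are never modified, every operation returns a new pointer sharing its prefix with the old one, in $\Oh(1)$ extra space. With this layout, $\mathsf{root}(\G)$ pushes one frame. $\pop{p}$ returns the parent frame. $\pushChild{p}{i}$ calls the $\child_{p.\symb}$ structure once and pushes the result, so it costs $\Oh(\tc)$. The first real work is $\pushLeft{\cdot}$/$\pushRight{\cdot}$ and $\popLeft{\cdot}$/$\popRight{\cdot}$, which may need to move by many levels in constant time. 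We follow \cite{Lohrey2017} and handle them with shortcut structures precomputed on the grammar.

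For $\pushLeft{\cdot}$, form the \emph{leftmost-child forest} on $\Symb_\G$: each variable $A$ has an arc to the first symbol of $\rhs(A)$. This is a forest rooted at terminals, and it is built in $\Oh(|\G|)$ time. The leftmost leaf below $(A,a)$ is the root $c_A$ of $A$'s tree, and its offset is still $a$. We store $c_A$ for every $A$, so the pointer can reach the leaf in constant time. The difficulty is that the pointer must still represent the whole path, because later $\pop{\cdot}$ calls have to walk back up it. We therefore push a single \emph{compressed frame} that stands for the entire leftmost path from $A$ to $c_A$. Popping one level from such a frame means moving to the child of $A'$'s ancestor on the path, that is, a level-ancestor query in the leftmost-child forest. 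This is supported in $\Oh(1)$ time after $\Oh(|\G|)$ preprocessing, using a standard level-ancestor structure of $\Oh(|\G|\log|\G|)$ bits. All offsets on a leftmost path equal $a$, so the offset needs no arithmetic. The rightmost-child forest is symmetric, except that offsets become $a+\weight{A}-\weight{A'}$.

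For $\popRight{p}$, we look for the lowest ancestor that is not the last child of its parent. Inside an ordinary frame this is a local test: check whether the child index is the last position of $\rhs$, counting copies within a run. Because $|\rhs(A)|\ge 2$, every non-root node is either not the last child or is the last child of a parent whose first child differs from it. Consequently, a maximal chain of last-child steps is exactly a segment of a rightmost path in the rightmost-child forest. When the pointer was built by $\pushChild{\cdot}{\cdot}$ steps, consecutive last-child frames form such a chain. We collapse them into compressed rightmost frames when they are pushed: if the new child is the last child and the top frame is already a rightmost chain, we extend that chain in $\Oh(1)$ instead of pushing a new frame. With this invariant, $\popRight{\cdot}$ discards at most one compressed frame plus $\Oh(1)$ ordinary ones. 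The same holds for $\popLeft{\cdot}$ with leftmost chains.

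The main obstacle is the interaction between the two kinds of compression. A frame may have to be split when $\pop{\cdot}$ lands in the middle of a chain, and the stack must never contain two adjacent chains of the same kind that ought to be merged, since that would break the constant bound on $\popRight{\cdot}$. We will first fix a normal form for the stack: ordinary frames and chains alternate, and no last-child or first-child step is left uncompressed next to a chain of the same direction. We then check operation by operation that the normal form is preserved with $\Oh(1)$ frame changes, in particular after a level-ancestor split. The space bound is dominated by the forests with their level-ancestor structures and the stored $c_A$ values, which gives $\Oh(|\G|\log|\G|)$ bits.
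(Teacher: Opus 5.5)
Your plan uses the same ingredients as the paper's proof. These are the leftmost- and rightmost-child forests with level-ancestor structures, immutable pointers that share their ancestors, and the collapse of every maximal run of leftmost (rightmost) steps into one frame that refers to the frame just above the run. What you leave open, namely the invariant and the operation-by-operation check, is essentially the whole content of the paper's proof. Your stated normal form is also wrong. ``Ordinary frames and chains alternate'' cannot hold: two consecutive middle children give two adjacent ordinary frames, and a leftmost chain may hang directly below a rightmost child.

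The paper avoids any normal-form maintenance by giving every node a canonical pointer:
\begin{itemize}
\item $\mathsf{root}(\nu)$ for the root;
\item $\mathsf{middle}(\nu,p')$ for a middle child, with $p'$ the pointer of its parent;
\item $\mathsf{left}(\nu,p')$ for a leftmost child, with $p'$ the pointer of the nearest ancestor of $\nu$ that is not a leftmost child;
\item $\mathsf{right}(\nu,p')$ for a rightmost child, defined symmetrically.
\end{itemize}
With this, no frame is ever split. A chain frame stores only its bottom node and its anchor. For $p=\mathsf{left}((A,a),p')$ with $p'.\node=(A',a')$, $\pop{p}$ returns $p'$ if $\G_L.\mathsf{lvl}(A')=\G_L.\mathsf{lvl}(A)+1$. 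Otherwise it returns $\mathsf{left}((A'',a),p')$ with $A''=\G_L.\mathsf{ancestor}(A',\G_L.\mathsf{lvl}(A)+1)$. The operation $\popLeft{p}$ returns $p'$ for a $\mathsf{left}$ pointer and $p$ itself otherwise. When $\pushChild{p}{i}$ or $\pushLeft{p}$ reaches a leftmost node, it returns a $\mathsf{left}$ pointer anchored at $p'$ if $p=\mathsf{left}(\nu,p')$, and anchored at $p$ otherwise. The right side is symmetric, with offsets $a+\weight{A}-\weight{A''}$. Each case costs $\Oh(1)$ plus at most one $\child$ call.

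A smaller point concerns how you detect first and last children. The hypothesis only gives $\child_A$ returning the node $(B,b)$, not a run or copy index. So the test ``is the taken child the last one'' should use offsets: $(B,b)$ is the leftmost child of $(A,a)$ if and only if $b=a$, and the rightmost child if and only if $b+\weight{B}=a+\weight{A}$. The assumption $|\rhs(A)|\ge 2$ is what makes these two cases mutually exclusive. It makes all weights positive and guarantees that every variable has two distinct end children, so no node lies on both a leftmost and a rightmost chain.
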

\begin{proof}
  \newcommand{\lvl}{\mathsf{lvl}}

  Our implementation resembles that of \cite{Lohrey2017}.
  Let \(\G_L\) be the \emph{leftmost-child} forest with node set $\Symb_\G$, root set $\Sigma_\G$, and the parent of each $A\in \Var_\G$ defined as the leftmost symbol of \(\rhs(A)\).
  Symmetrically, the \emph{rightmost-child} forest has node set $\Symb_\G$, root set $\Sigma_\G$, and the parent of each $A\in \Var_\G$ defined as the rightmost symbol of \(\rhs(A)\).
  Observe that $\G_L$ and $\G_R$ are indeed forests of size $\Oh(|\Symb_\G|)\le\Oh(|\G|)$.

  \newcommand{\ancestor}{\mathsf{ancestor}}

  The level $F.\lvl(v)$ of a node $v$ in a forest $F$ is its distance to the root, i.e., the number of edges on the path from $v$ to the root of the same connected component.
  A level-ancestor query \(F.\ancestor(v,\ell)\), given a node $v$ and a level \(\ell\in [0\dd F.\lvl(v)]\), returns the unique ancestor of $v$ at level \(\ell\).
  Standard level-ancestor data structures \cite{BV94,Bender2004} support constant-time level ancestor queries using $\Oh(|F|\log |F|)$ bits and $\Oh(|F|)$ preprocessing time.
  We build such a data structure for both $\G_L$ and $\G_R$, and we also store $\G_L.\lvl(A)$ and $\G_R.\lvl(A)$ for every $A\in \Symb_\G$; this takes $\Oh(|\G|\log |\G|)$ bits and $\Oh(|\G|)$ preprocessing time.

  Recall that a parse tree (\cref{def:parsetree}) of an (RL)SLG \(\G\) is defined over nodes \((A,a)\in\Symb_\G\times\iv{0}{n}\). It has a unique root (the start symbol of \(\G\)) and every node of a variable has children corresponding to its right-hand sides.
  We classify those nodes into four types.
  The root is of type \(\rt\). Any other node has a parent in $\Parse_\G$ and depending on its position among the siblings, it is classified as a leftmost child ($\lft$), rightmost child ($\rgt$), or a middle child ($\mdl$, otherwise).

  We will now define \emph{node pointers}, each of which points to some node of this tree.
  Our definition ensures that for each node there is a unique canonical pointer for it.
  Our algorithm accesses the parse tree via this pointer interface.
  For every node \(\nu\), we define its \emph{node pointer}
  inductively, depending on the type of $\nu$:
  \begin{itemize}
    \item If $\nu$ is the root, then its pointer is \(p_\nu \coloneqq \rt(\nu)\).
    \item If $\nu$ is a middle child, then its pointer is \(p_\nu \coloneqq \mdl(\nu,p')\), where $p'$ points to the parent of $\nu$.
    \item If $\nu$ is a leftmost child, then its pointer is \(p_\nu\coloneqq\lft(\nu,p')\), where $p'$ points to the nearest ancestor of $\nu$ that is not a leftmost child.
    \item If $\nu$ is a rightmost child, then its pointer is \(p_\nu\coloneqq\rgt(\nu,p')\), where $p'$ points to the nearest ancestor of $\nu$ that is not a rightmost child.
  \end{itemize}

  Constructing a pointer to the root $(S_\G,0)$ naively works in constant time.

  To implement the remaining operations, consider a pointer $p$ to $p.\node = \nu = (A,a)$.
  For \(\pushChild{p}{i}\), we first retrieve the parse tree node \((B,b) \coloneqq \child_A(\nu,i)\) for the resulting pointer.
  \begin{itemize}
    \item  If \(a=b\), then \((B,b)\) is the leftmost child of \((A,a)\).
    \begin{itemize}
      \item If \(p=\lft(\nu,p')\),  then we return \(\lft((B,b),p')\);
      \item otherwise, we return \(\lft((B,b),p)\).
    \end{itemize}
    \item If \(b+\weight{B} = a+\weight{A}\), then \((B,b)\) is the rightmost child of \((A,a)\).
    \begin{itemize}
      \item If \(p=\rgt(\nu,p')\),  then we return \(\rgt((B,b),p')\);
      \item otherwise, we return \(\rgt((B,b),p)\).
    \end{itemize}
    \item In the remaining case \((B,b)\) is a middle child, and we return \(\mdl((B,b),p)\).
  \end{itemize}
  We implement \(\pop{p}\) depending on the type of \(p\), which cannot be $\rt$.
  \begin{itemize}
    \item If \(p=\mdl(\nu,p')\), then return \(p'\).
    \item If \(p=\lft(\nu,p')\), we have to differentiate between two cases based on $p'.\node=(A',a')$.
    \begin{itemize}
      \item If \(\G_L.\lvl(A') = \G_L.\lvl(A)+1\), then return \(p'\).
      \item Otherwise, let \(A'' \coloneqq \G_L.\ancestor(A',\G_L.\lvl(A)+1)\) and return \(\lft((A'',a),p')\).
    \end{itemize}
    \item If \(p=\rgt(\nu,p')\), we have to differentiate between two cases based on $p'.\node=(A',a')$.
    \begin{itemize}
      \item If \(\G_R.\lvl(A') = \G_R.\lvl(A)+1\), then return \(p'\).
      \item Otherwise, let \(A'' \coloneqq \G_R.\ancestor(A',\G_R.\lvl(A)+1)\) and return \(\rgt((A'',a+\weight{A}-\weight{A''}),p')\).
    \end{itemize}
  \end{itemize}
  We implement \(\popLeft{p}\) depending on the type of \(p\).
  \begin{itemize}
    \item If \(p = \lft(\nu,p')\), then return \(p'\).
    \item Otherwise, return \(p\).
  \end{itemize}
  We implement \(\popRight{p}\) depending on the type of \(p\).
  \begin{itemize}
    \item If \(p = \rgt(\nu,p')\), then return \(p'\).
    \item Otherwise, return \(p\).
  \end{itemize}
  We implement \(\pushLeft{p}\) as follows:
  \begin{itemize}
    \item If \(\G_L.\lvl(A) = 0\), then return \(p\).
    \item Otherwise, let \(B \coloneqq \G_L.\ancestor(A,0)\).
    \begin{itemize}
      \item If \(p = \lft(\nu,p')\), then return \(\lft((B,a),p')\).
      \item Otherwise, return \(\lft((B,a),p)\).
    \end{itemize}
  \end{itemize}
  We implement \(\pushRight{p}\) as follows:
  \begin{itemize}
    \item If \(\G_R.\lvl(A) = 0\), then return \(p\).
    \item Otherwise, let \(B \coloneqq \G_R.\ancestor(A,0)\) and $b\coloneqq a+\weight{A}-\weight{B}$.
    \begin{itemize}
      \item If \(p = \rgt(\nu,p')\), then return \(\rgt((B,b),p')\).
      \item Otherwise, return \(\rgt((B,b),p)\).
    \end{itemize}
  \end{itemize}
All of these operations take $\Oh(1+\tc)$ time.
\end{proof}

As an immediate consequence, we can combine node-pointer support with character pointers.

\begin{corollary}\label{cor:char_ptrs_from_child}
  Consider a weighted (RL)SLG $\G$ that produces a string $T$ of weight $\weight{T}\le 2^{\Oh(w)}$, where $w$ is the machine word size.
  Suppose that $\child_A$ queries can be answered in $\tc$ time for every $A\in \Var_\G$.
  Then, in $\Oh(|\G|)$ time, we can build an $\Oh(|\G|\log |\G|)$-bit data structure that supports character pointers with $\Oh(\tc)$-time traversal and $\Oh(\tc\cdot(1+h_{p.\pos}))$-time construction, where $h_{p.\pos}$ is the length of the path from the root of $\Parse_\G$ to the leaf representing $T[p.\pos]$.
\end{corollary}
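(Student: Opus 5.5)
The plan is to combine \cref{lemma:node_pointers_in_tc} with \cref{lemma:char_pointers_in_tn}. The only gap is that \cref{lemma:node_pointers_in_tc} additionally requires $|\rhs(A)|\ge 2$ for every variable. So the main step is a normalization of $\G$ that enforces this condition. The normalization must keep $\child$ queries answerable in $\Oh(\tc)$ time, must not increase root-to-leaf path lengths, and must cost $\Oh(|\G|)$ time and $\Oh(|\G|\log|\G|)$ bits.

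\textbf{Normalization.} First I remove empty symbols. Call a symbol \emph{empty} if its expansion is empty; such symbols have weight $0$, since terminal weights are positive. No parse-tree leaf representing a character of $T$ lies below an empty symbol. I conceptually delete all empty symbols from every right-hand side.

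Next I collapse unary chains. After the deletion, a variable whose remaining right-hand side consists of a single symbol (a run $B^k$ with $k\ge 2$ counts as at least two symbols) is \emph{unary*}. Processing $\Var_\G$ in topological order, I compute a redirect map $\phi:\Symb_\G\to\Symb_\G$:
\begin{itemize}
\item $\phi(A)\coloneqq\phi(B)$ if $A$ is unary* with unique nonempty child $B$;
\item $\phi(A)\coloneqq A$ otherwise.
\end{itemize}
Every $\phi(A)$ is either a terminal or a variable with at least two nonempty children. Moreover, $\expand{\phi(A)}=\expand{A}$ and $\weight{\phi(A)}=\weight{A}$. The grammar $\G'$ has the symbols in the image of $\phi$, with start symbol $\phi(S)$. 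Its right-hand sides are those of $\G$ with empty symbols removed and every symbol $B$ replaced by $\phi(B)$. If $\phi(S)$ is a terminal, then $|T|=1$ and everything is trivial.

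\textbf{Implementing $\child$ in $\G'$ without rebuilding the child structures.} For a node $(A,a)$ of $\Parse_{\G'}$, I call the given $\child_A$ oracle of $\G$ and obtain $(B,b)$. Because $i\in[b\dd b+\weight{B})$, the symbol $B$ is nonempty, so it survives the deletion. I then return $(\phi(B),b)$, reading $\phi(B)$ from a stored array in $\Oh(1)$ time; the offset is unchanged because weights agree. This costs $\Oh(\tc)$.

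The leftmost-child and rightmost-child forests of \cref{lemma:node_pointers_in_tc} are built from $\G'$. For this I precompute, for each variable, its leftmost and rightmost nonempty child (mapped through $\phi$), in $\Oh(|\G|)$ total time. The run-length multiplicities used in $\child$ are untouched, so the required properties hold.

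\textbf{Depth and conclusion.} Each root-to-leaf path of $\Parse_{\G'}$ arises from a root-to-leaf path of $\Parse_\G$ by contracting the unary-chain edges. Therefore $h'_{p.\pos}\le h_{p.\pos}$, and leaves correspond to the same positions and offsets of $T$. Now \cref{lemma:node_pointers_in_tc} applied to $\G'$ gives node pointers in $\Oh(\tc)$ time per operation, within $\Oh(|\G|)$ construction time and $\Oh(|\G|\log|\G|)$ bits. Then \cref{lemma:char_pointers_in_tn} with $\tn=\Oh(\tc)$ yields $\Oh(\tc)$ traversal and $\Oh(\tc(1+h_{p.\pos}))$ construction.

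I expect the main obstacle to be making sure the node-pointer operations $\pop$, $\pushLeft$ and $\pushRight$ remain correct after empty symbols are deleted. The concern is that the level-ancestor structures could return zero-weight nodes. Defining $\G_L$ and $\G_R$ on $\G'$ itself, via the precomputed leftmost and rightmost nonempty children, rather than on $\G$, should resolve this.
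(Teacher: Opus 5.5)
Your proposal is correct and follows essentially the same route as the paper: delete empty symbols, collapse unary chains through a representative map that post-processes the outputs of the original $\child$ oracle (so root-to-leaf paths only shorten), and then combine \cref{lemma:node_pointers_in_tc} with \cref{lemma:char_pointers_in_tn}. One cosmetic fix is needed: restrict $\G'$ to nonempty symbols, because as written an all-empty variable $A$ gets $\phi(A)=A$ with an empty right-hand side, which violates the $|\rhs(A)|\ge 2$ hypothesis; such a variable is unreachable from the start symbol, so simply dropping it suffices.
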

\begin{proof}
  If $\G$ contains a production with an empty right-hand side, we remove it by deleting all occurrences of that variable (iterating over variables in topological order). This preserves the generated string, does not increase the grammar size, and can be done in $\Oh(|\G|)$ time; moreover it can only decrease the height. Observe that \(\child\)-queries never return empty symbols, so this preprocessing does not affect the oracle outputs.

  If $\G$ contains a unary production $A\to B$, we dissolve it by replacing all occurrences of $A$ with $B$, iterating over variables in topological order.
  This preserves the generated string, does not increase the grammar size, and can be done in $\Oh(|\G|)$ time; moreover, the lengths of node-to-root paths can only decrease.
  During this process, we compute for every dissolved symbol \(A\) its final representative \(\mu(A)\), obtained by following unary rules until the first symbol that is not dissolved; for all remaining symbols \(A\), we set \(\mu(A)\coloneqq A\).
  We keep the original \(\child\)-query oracle and post-process its outputs: if a query returns a node \((B,b)\), we return \((\mu(B),b)\) instead.
  Since \(\mu(B)\) is retrieved by a table lookup, this adds only constant overhead and preserves the \(\Oh(\tc)\)-time bound.

  We may therefore assume $|\rhs(A)|\ge 2$ for all $A\in\Var_\G$.
  The claim now follows by combining \cref{lemma:char_pointers_in_tn,lemma:node_pointers_in_tc}.
\end{proof}

Combining \cref{cor:char_ptrs_from_child} with the results of \cref{sec:ub,sec:structured}, we get the desired guarantees for character pointers; this result generalizes \cref{lem:ra_simple}.

\begin{corollary}\label{cor:char_ptrs}
  Let \(\G\) be a weighted RLSLG that produces a string \(T\) of weight~$\weight{T} \le 2^{\Oh(w)}$, where $w$ is the machine word size.
  Given $\G$ and a real parameter $\tau > 1$, in \(\Oh(|\G|\tau)\) time
  we can build an \(\Oh(|\G|\tau\log \weight{T})\)-bit data structure that implements character pointers with constant-time traversal and construction time
  \[\Oh\Big(\max\Big(1,\;\log_\tau \tfrac{\weight{T}}{|\G|\tau\weight{T[p.\pos]}}\Big)\Big).\]
\end{corollary}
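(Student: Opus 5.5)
The plan is to follow the proof of \cref{lem:ra_simple} and replace its final traversal step with \cref{cor:char_ptrs_from_child}.

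First, apply \cref{lemma:niceconstruct} with $\tau_r\coloneqq |\G|\tau$ and $\tau_v\coloneqq \tau$. In $\Oh(|\G|\tau)$ time this yields a homomorphism from $\G$ to a $(|\G|\tau,\tau)$-nice RLSLG $\cH$ that produces $T$. The grammar has size $\Oh(|\G|\tau)$ and only $\Oh(|\G|)$ variables.

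Second, for every variable $A\in\Var_\cH$, build the structure of \cref{lemma:nicechild}. This gives constant-time $\child_A$ queries, so $\tc=\Oh(1)$. The total space is $\Oh(|\G|\tau\log\weight{T})$ bits and the total construction time is $\Oh(|\G|\tau)$, exactly as in \cref{lem:ra_simple}.

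Third, invoke \cref{cor:char_ptrs_from_child} on $\cH$. It takes $\Oh(|\cH|)=\Oh(|\G|\tau)$ time and adds $\Oh(|\cH|\log|\cH|)$ bits. This gives character pointers with $\Oh(1)$-time traversal and $\Oh(1+h_{p.\pos})$-time construction.

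The main point to check is the space term $\Oh(|\cH|\log|\cH|)$ coming from the level-ancestor structures. We need $\log|\cH|\le\Oh(\log\weight{T})$, so that this term is absorbed into $\Oh(|\G|\tau\log\weight{T})$. I would first handle this by a case split.
\begin{itemize}
\item If $|\G|\tau\le\poly(\weight{T})$, the bound is immediate.
\item Otherwise, I would cap $\tau_r$ at $\weight{T}$. A start symbol with at most $\weight{T}$ runs is automatically fine, because each child has weight at least $1$.
\end{itemize}
If the cap is not clean, I would instead argue that after removing empty and unary rules, every symbol of $\cH$ has positive weight, and only symbols occurring in the parse tree matter. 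The level-ancestor forests can therefore be restricted to symbols, not occurrences, and there are only $\Oh(|\G|)$ of them. Their bit-size is then $\Oh(|\G|\log|\G|)$, and $|\G|\le\Oh(\weight{T})$ may be assumed without loss of generality. I expect this bookkeeping to be the only delicate step.

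Finally, the construction time is bounded via depth. Let $h_{p.\pos}$ be the depth of the leaf for $T[p.\pos]$ in $\Parse_\cH$. By \cref{lem:nice-height} with $\weight{A}=\weight{T[p.\pos]}$,
\[
h_{p.\pos}\le 2+\max\Big(0,\log_\tau\tfrac{\weight{T}}{|\G|\tau\weight{T[p.\pos]}}\Big).
\]
The unary and empty rule removal inside \cref{cor:char_ptrs_from_child} only shortens root-to-leaf paths, so this bound still holds. Hence the construction time is $\Oh(\max(1,\log_\tau\frac{\weight{T}}{|\G|\tau\weight{T[p.\pos]}}))$, as claimed.
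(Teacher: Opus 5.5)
Your proposal is correct and matches the paper's proof step for step. It uses \cref{lemma:niceconstruct} with $(|\G|\tau,\tau)$, builds the constant-time $\child_A$ structures of \cref{lemma:nicechild}, and then applies \cref{cor:char_ptrs_from_child} together with the depth bound of \cref{lem:nice-height}. The paper simply asserts $\Oh(|\cH|\log|\cH|)\le\Oh(|\G|\tau\log\weight{T})$ without comment; capping $\tau_r$ alone would not control the $|\G|\tau_v$ part, but your fallback does (level-ancestor forests over only $\Oh(|\G|)$ symbols, with $|\G|\le\Oh(\weight{T})$ after trimming, a WLOG you assert without proof).
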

\begin{proof}
  Apply \cref{lemma:niceconstruct} to construct a \((|\G|\tau,\tau)\)-nice RLSLG \(\cH\) which also produces \(T\).
  For every variable $A\in \Var_\cH$, apply \cref{lemma:nicechild} to construct a data structure supporting constant-time \(\child_A\) queries.
  These components take $\Oh(|\G|\tau \log\weight{T})$ bits in total and are constructed in $\Oh(|\G|\tau)$ time.
  By \cref{lem:nice-height}, the path in $\Parse_\cH$ from the root to the leaf representing \(T[p.\pos]\) is of length at most \(2+\max\left(0,\log_{\tau}\frac{\weight{T}}{|\G|\tau \weight{T[p.\pos]}}\right)\).
  Hence, \cref{cor:char_ptrs_from_child} yields an $\Oh(|\cH|\log|\cH|)\le\Oh(|\G|\tau\log\weight{T})$-bit data structure supporting character pointer operations in the claimed time. The construction time is $\Oh(|\G|\tau+|\cH|)=\Oh(|\G|\tau)$.
\end{proof}

\subsection{Fast Character Pointers}
When applied on top of leafy grammars we can report multiple symbols at a time.
To this end, we define fast character pointers.

\begin{definition}\label{def:leafy_pointers}
  A \emph{fast character pointer} \(p\) to a non-empty string \(T\) is a character pointer to $T$ (see \cref{def:leaf_pointer}) also supporting the following \emph{fast traversal} operations:
  \begin{itemize}
    \item \(\forward(m)\) which, given $m\in \Zp$, retrieves \(T^\infty\iv{p.\pos}{p.\pos+m}\)\footnote{We define $T^\infty$ as an infinite string indexed by $\mathbb{Z}$ so that $T^\infty[i]=T[i\bmod |T|]$ for every $i\in \mathbb{Z}$.} and returns a fast character pointer \(p'\) such that \(p'.\pos \coloneqq (p.\pos+m) \bmod |T|\); and
    \item \(\backward(m)\) which, given $m\in \Zp$, retrieves \(T^\infty\iv{p.\pos-m}{p.\pos}\) and returns a fast character pointer \(p'\) such that \(p'.\pos \coloneqq (p.\pos-m) \bmod |T|\).
  \end{itemize}
\end{definition}

Any character pointer supporting traversal operations in $t_\mathsf{t}$ time can support fast traversal operations in $\Oh(m\cdot t_\mathsf{t})$ time.
We provide a more efficient implementation for unweighted grammar-compressed strings over small alphabets, exploiting bit-parallelism of the word RAM.

\begin{lemma}\label{lem:traverse_leafy}
  Let \(\G\) be an RLSLG that produces a string \(T\in [0\dd \sigma)^n\).
  Suppose that the machine word satisfies $w\ge\Omega(\log n)$, the RLSLG is $b$-leafy for some positive integer \(b \le \Oh(w/\log\sigma)\), and the \(\child_A\) queries on variables \(A\in \Vartop{\G}\) can be answered in time \(\tc\).

  Then, in \(\Oh(|\G|)\) time, we construct an $\Oh(|\G|\log n)$-bit data structure supporting fast character pointers to $T$ with \(\Oh(\tc\cdot(1+ m/b))\)-time traversal and \(\Oh(\tc\cdot (1+h_{p.\pos}))\)-time construction, where $h_{p.\pos}$ is the length of the path from the root of $\Parse_\G$ to the leaf representing $T[p.\pos]$.
\end{lemma}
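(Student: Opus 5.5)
We view $\G$ through its top part $\Top{\G}$: a weighted RLSLG over the alphabet $\Varleaf{\G}$, where each leaf $L$ has weight $\explen[\G]{L}\in[b\dd 2b)$. Its $\child_A$ queries take $\tc$ time for every $A\in\Vartop{\G}$ by assumption, so \cref{cor:char_ptrs_from_child} applies directly. In $\Oh(|\G|)$ time it yields an $\Oh(|\G|\log|\G|)\le\Oh(|\G|\log n)$-bit structure supporting character pointers to the weighted string $\Top{T}$. These pointers have $\Oh(\tc)$-time traversal and $\Oh(\tc(1+h'))$-time construction, where $h'$ is the depth of the leaf in $\Parse_{\Top{\G}}$. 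We may assume $|\G|\ge |\Varleaf{\G}|$. For each leaf variable we keep its packed expansion in $\Oh(b\log\sigma)=\Oh(w)$ bits, which is $\Oh(1)$ words, together with its length.

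A fast character pointer $p$ to $T$ will be a pair consisting of a character pointer $q$ to $\Top{T}$ and an offset $o\in\iv{0}{\explen{q.\symb}}$. The position is then $p.\pos=q.\woff+o$, and $T[p.\pos]$ is read from the packed leaf string in constant time. For $\new(i)$, we call $q\coloneqq\new_{\Top{T}}(i)$, which weighted random access locates in the correct leaf, and set $o\coloneqq i-q.\woff$. The depth of the leaf representing $T[i]$ in $\Parse_\G$ is $h'+1$, so construction costs $\Oh(\tc(1+h_{p.\pos}))$. Single-step $\forward$ and $\backward$ move within the leaf by adjusting $o$. When $o$ crosses a leaf boundary, we call $q.\forward()$ or $q.\backward()$ and reset $o$ to $0$ or to the last index. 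Cyclicity comes for free, because pointers on $\Top{T}$ are already cyclic and their offsets wrap from $\weight{\Top{T}}=n$ back to $0$.

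For $\forward(m)$, we accumulate the output as a packed string. We first emit the remainder of the current leaf, of length $\explen{q.\symb}-o$ or less if $m$ is smaller, by extracting a packed substring with shifts and masks. We then repeatedly advance $q$ and append the next whole leaf, or only its needed prefix at the end. Each leaf contributes at least $b$ characters except possibly the first and last ones. The number of $q$-steps is therefore at most $2+m/b$, and each step costs $\Oh(\tc)$ plus $\Oh(1)$ word operations, since a leaf occupies $\Oh(1)$ words. Appending a packed block to the output is a constant number of shift and OR operations per written word. The output consists of $\Oh(1+m\log\sigma/w)=\Oh(1+m/b)$ words when $b\le\Oh(w/\log\sigma)$, so the total time is $\Oh(\tc(1+m/b))$. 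The operation $\backward(m)$ is symmetric: it prepends blocks, or equivalently collects blocks right to left and writes them at decreasing offsets in a buffer whose length is known in advance.

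The main obstacle is the boundary bookkeeping, namely splicing packed blocks at arbitrary bit offsets in constant time per block. This is standard: write each block into at most two consecutive output words. A second point to check is that \cref{cor:char_ptrs_from_child} needs no condition on $\child$ beyond its time bound. Its preprocessing removes unary and empty rules, and here it acts only on top variables while the leaf variables remain terminals of $\Top{\G}$. This does not affect leaf weights, so the reduction is valid.
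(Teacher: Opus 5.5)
Your proposal is correct and follows the paper's proof essentially step for step. You build the character pointers of \cref{cor:char_ptrs_from_child} on the weighted top string $\Top{T}$, represent a fast pointer as such a pointer plus an offset into the current packed leaf, construct it via the weighted constructor with $h_{p.\pos}=1+h'$, and implement $\forward(m)$ by emitting the current leaf's suffix and then whole leaves of length at least $b$, each costing $\Oh(t_{\mathsf{c}})$.

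One small accounting point: the packed leaves cost $\Oh(b\log\sigma)$ bits each, and for an RLSLG with $w\gg\log n$ this is not covered by $\Oh(|\G|\log n)$ bits, even when $|\G|\ge|\Varleaf{\G}|$. So, as the paper implicitly does, treat the leaves as part of the given leafy grammar rather than as part of the new structure; their cost is charged separately in \cref{cor:traverse}.
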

\begin{proof}
  Consider the character pointers of \cref{cor:char_ptrs_from_child} applied to the weighted string $\Top{T}$ produced by \(\Top{\G}\).
  The \(\child_A\)-queries on symbols of \(\Vartop{\G}\) can be answered in time \(\tc\), and the answers to these queries are the same in $\G$ and $\Top{\G}$.
  Let \(q\) be such a character pointer to \(\Top{T}\). It supports \(\Oh(\tc)\)-time traversal and \(\Oh(\tc\cdot (1+\Top{h_{q.\pos}}))\)-time construction, where \(\Top{h_{q.\pos}}\) is the depth of the corresponding leaf in \(\Parse_{\Top{\G}}\) (note that \(\Top{h_{q.\pos}}\le h_{p.\pos}\)).

  A character pointer $q$ to $\Top{T}$ gives constant-time access to the symbol \(\Top{T}[q.\pos]\in\Varleaf{\G}\); we set \(q.\rhs \coloneqq \rhs(\Top{T}[q.\pos])=\expand[\G]{\Top{T}[q.\pos]}\), which is stored explicitly as a packed string.

  We then build a fast character pointer \(p\) to \(T\) by augmenting \(q\) with an offset \(\leafPos \in \iv{0}{|q.\rhs|}\) indicating the position inside \(q.\rhs\); thus a fast pointer is represented as the pair \(p=(q,\leafPos)\) such that \(p.\pos = q.\woff + \leafPos\), where \(q.\woff\) is the prefix weight available in constant time by the weighted interface.
  Given a query index \(i\in[0\dd |T|)\), we invoke the weighted constructor for character pointers to \(\Top{T}\), which returns a pointer \(q\); we then set \(\leafPos \coloneqq i-q.\woff\).
  This adds only constant time on top of the \(\Oh(\tc\cdot (1+\Top{h_{q.\pos}}))\)-time construction, whose total time is \(\Oh(\tc\cdot (1+h_{p.\pos}))\) because $h_{p.\pos}= 1+\Top{h_{q.\pos}}$.

  We argue that this implementation supports fast character pointers on \(\G\).
  Without loss of generality, we show how to compute \(p.\forward(m)\) by unpacking \(p=(q,\leafPos)\), updating local copies, outputting the traversed characters, and returning a new fast pointer.
  In the algorithm, \(r\) denotes an append-only output buffer rather than a packed string kept in one word.
  We maintain \(|r|\) as the number of characters already written to the buffer; an assignment of the form \(r\gets r\cdot x\) means that a packed string \(x\) is appended to the end of the output buffer in $\Oh(1+|x|\log \sigma/ w)$ time.

  \begin{algorithm}[H]
    \((q,\leafPos)\gets p\)\;
    \If{\(\leafPos+m < |q.\rhs|\)}{
      \(\leafPos\gets \leafPos + m\)\;
      \Return \(((q,\leafPos),q.\rhs\iv{\leafPos-m}{\leafPos})\)
    }\Else{
      \(r\gets q.\rhs\iv{\leafPos}{|q.\rhs|}\)\;
      \(q\gets q.\forward()\)\;
      \While{\(|r|+|q.\rhs|\le m\)}{
        \(r\gets r\cdot q.\rhs\)\;
        \(q\gets q.\forward()\)\;
      }
      \(\leafPos\gets m-|r|\)\;
      \Return \(((q,\leafPos), r\cdot q.\rhs\iv{0}{\leafPos})\)
    }
    \caption{\(p.\forward(m)\)}
  \end{algorithm}

  By definition of \(b\)-leafiness, \(|q.\rhs|\in \iv{b}{2b}\) always holds, so every packed block or slice that we manipulate before writing it to \(r\) has \(\Oh(b\log\sigma)\le\Oh(w)\) bits.
  Thus, slicing a leaf string and appending one such block to the output buffer costs constant time; the buffer itself may contain \(m\) characters, but it is only written sequentially and is never manipulated as a single packed word.
  Each iteration of the while-loop advances \(q\) to the next symbol of \(\Top{T}\) and appends one full leaf string, contributing at least \(b\) characters.
  Hence, the number of iterations is \(\Oh(1+m/b)\), and each iteration costs \(\Oh(\tc)\) due to the underlying traversal on \(\Top{T}\), yielding total time \(\Oh(\tc\cdot(1+m/b))\).

  The correctness follows by cases: if the remaining suffix of \(q.\rhs\) has length strictly larger than \(m\), we return exactly that substring and update \(\leafPos\) accordingly, keeping \(\leafPos<|q.\rhs|\). Otherwise, we output the suffix, advance \(q\) along \(\Top{T}\) collecting full leaf strings while \(|r|+|q.\rhs|\le m\), and finally output the needed prefix of the last leaf; the updated \(\leafPos\) then points to the correct offset in the new \(q.\rhs\).
  In all cases, the output string has length \(m\) and the returned pointer \((q,\leafPos)\) represents the position exactly \(m\) characters after the starting position (modulo \(|T|\)).
\end{proof}

With all those tools prepared, we can now implement fast-character pointers to RLSLGs.

\begin{corollary}\label{cor:traverse}
  Let \(\G\) be an RLSLG producing an unweighted string \(T\in [0\dd \sigma)^n\).
  In the word RAM with word size $w \ge \Omega(\log (n\sigma))$, given \(\G\) and a parameter \(\tau > 1\), in \(\Oh(|\G|\tau)\) time
  we can construct an \(\Oh(|\G|\tau \log n)\)-bit data structure supporting fast character pointers to \(T\) with traversal time
  \[\Oh\!\left(1+\tfrac{m\log\sigma}{\min(w,\tau\log n)}\right)\]
  and construction time
  \[
    \Oh\Big(\max\big(1,\log_\tau \tfrac{n\log\sigma}{|\G|\tau \log n}\big)\Big).
  \]
\end{corollary}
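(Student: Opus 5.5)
Our plan combines \cref{corr:constructleafyandnice}, \cref{lemma:nicechild}, and \cref{lem:traverse_leafy}, with the leaf size
\[b\coloneqq \max\Big(1,\Big\lfloor \tfrac{\min(w,\tau\log n)}{\log\sigma}\Big\rfloor\Big).\]
We assume $b\le n$; otherwise $T$ fits in $\Oh(\tau\log n)$ bits and we store it packed, which gives all bounds trivially. This choice gives $b\le\Oh(w/\log\sigma)$, as \cref{lem:traverse_leafy} requires. It also gives $m/b\le\Oh(1+m\log\sigma/\min(w,\tau\log n))$. The lower bound $\min(w,\tau\log n)\ge\Omega(\log n)$ holds because $\tau>1$ and $w\ge\Omega(\log n)$; we will need it below.

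First we apply \cref{corr:constructleafyandnice} with $\tau_r\coloneqq|\G|\tau$ and $\tau_v\coloneqq\tau$. This produces a $b$-leafy RLSLG $\cH$ producing $T$ whose top part is $(|\G|\tau,\tau)$-nice. The top part has size $\Oh(|\G|\tau)$, and there are $\Oh(|\G|)$ leaf variables. The construction time is $\Oh(|\G|\tau+|\G|b\log\sigma/w)=\Oh(|\G|\tau)$, since $b\log\sigma\le w$. Storing the leaves packed costs $\Oh(|\G|b\log\sigma)\le\Oh(|\G|\tau\log n)$ bits, using $b\log\sigma\le\tau\log n$. This is the reason the cap $\tau\log n$ appears in the choice of $b$.

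Next, for every top variable $A\in\Vartop{\cH}$, we build the structure of \cref{lemma:nicechild}, which answers $\child_A$ in constant time. The total size is $\Oh(|\G|\tau\log n)$ bits and the total construction time is $\Oh(|\G|\tau)$. One subtlety: in the top part, the weights of leaf terminals are their expansion lengths, so \cref{lemma:nicechild} applies with $\weight{A}\le n\le 2^{\Oh(w)}$. We then invoke \cref{lem:traverse_leafy} with $\tc=\Oh(1)$. It adds $\Oh(|\cH|)=\Oh(|\G|\tau)$ construction time and $\Oh(|\cH|\log n)$ bits. The traversal time is $\Oh(1+m/b)$, which matches the stated bound.

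The remaining step is the pointer construction time $\Oh(1+h_{p.\pos})$. By \cref{lem:leafy-nice-height}, $h_{p.\pos}\le 3+\max(0,\log_\tau\frac{n}{|\G|\tau b})$. We expect the main obstacle to be converting $b$ back into $\frac{\log n}{\log\sigma}$. Since $b\ge\Omega(\log n/\log\sigma)$ (using $\min(w,\tau\log n)\ge\Omega(\log n)$, and handling the case $b=1$ separately, where $\log\sigma\ge\Omega(\log n)$), we get
\[\log_\tau\tfrac{n}{|\G|\tau b}\le\log_\tau\tfrac{n\log\sigma}{|\G|\tau\log n}+\log_\tau\Oh(1).\]
The additive term $\log_\tau\Oh(1)$ is only a constant when $\tau$ is bounded away from $1$. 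For $\tau$ close to $1$, we plan to first replace $\tau$ by $\max(\tau,2)$; this changes the space and construction time only by constant factors, and so yields the claimed bound.
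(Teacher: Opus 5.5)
Your proposal is correct and follows essentially the paper's route. You pick $b\approx\min(w,\tau\log n)/\log\sigma$ (with a floor, splitting off the case $b>n$, where the paper uses a ceiling capped at $n$ and splits off $b=n$), then chain \cref{corr:constructleafyandnice}, \cref{lemma:nicechild}, \cref{lem:traverse_leafy}, and the height bound of \cref{lem:leafy-nice-height}. Replacing $\tau$ by $\max(\tau,2)$ is extra care the paper omits, since it silently absorbs the additive $\log_\tau\Oh(1)$ term; however, your ``trivial'' packed case still needs the paper's observation that each chunk of $\Theta(\min(w,\tau\log n)/\log\sigma)$ characters of $T^\infty$ can be assembled in constant time using string powers.
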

\begin{proof}
  Let \(L\coloneqq \min(w,\tau\log n)\) and \(b\coloneqq\min(n,\ceil{L/\log\sigma})\).
  Construct a \((|\G|\tau,\tau)\)-nice \(b\)-leafy grammar \(\cH\) (\cref{corr:constructleafyandnice}) that produces all strings that \(\G\) produces.
  If \(b=n\), then the leaf representation stores \(T\) explicitly in \(\Oh(1)\) packed words.
  Let \(\ell\coloneqq\max(1,\floor{L/\log\sigma})\).
  We can then implement fast traversal directly on this packed cyclic string by repeatedly writing the next \(\min(\ell,m')\) characters of \(T^\infty\), where \(m'\) is the number of characters still to output.
  Each such chunk has \(\Oh(L)\le\Oh(w)\) bits and can be assembled in constant time from the packed representation of \(T\) using slicing, concatenation, and string powers.
  This takes \(\Oh(1+m/\ell)=\Oh(1+m\log\sigma/L)\) time, and pointer construction is constant-time.

  Hence, in the rest of the proof, we assume \(b<n\), so \(b=\ceil{L/\log\sigma}\).
  By \cref{lem:leafy-nice-height}, the height of \(\cH\) is at most
  \[
    3+\max\left(0,\log_\tau \tfrac{n}{|\G|\tau b}\right)
    = \Oh\Big(\max\big(1,\log_\tau \tfrac{n\log\sigma}{|\G|\tau \log n}\big)\Big),
  \]
  where we use \(b\ge L/\log\sigma \ge\Omega(\log n / \log\sigma)\).
  For each top symbol \(A\in \Vartop{\cH}\) we construct a data structure that supports \(\child_A\)-queries in constant time (\cref{lemma:nicechild}).
  For each leaf symbol \(A\in \Varleaf{\cH}\), we store \(\rhs_{\cH}(A)=\expand[\cH]{A}\) explicitly as a packed string, as required by \cref{lem:traverse_leafy}.
  The total space of these components is \(\Oh(|\cH|\log n)\le\Oh(|\G|\tau\log n)\), and the total construction time remains \(\Oh(|\G|\tau)\), since \(|\cH|\le \Oh(|\G|\tau)\) and the leaf right-hand sides are already produced by \cref{corr:constructleafyandnice}.
  Finally, \(b\le n\) by definition and \(b\log\sigma\le\Oh(w)\), so \cref{lem:traverse_leafy} applies and yields traversal time
  \[
    \Oh(1+m/b)=\Oh(1+m\log\sigma/L).\qedhere
  \]
\end{proof}

This traversal immediately yields efficient substring extraction.

\begin{corollary}\label{cor:substring}
  Let \(\G\) be an RLSLG producing an unweighted string \(T\in [0\dd \sigma)^n\).
  In the word RAM with word size $w \ge \Omega(\log (n\sigma))$, given \(\G\) and a parameter \(\tau > 1\), in \(\Oh(|\G|\tau)\) time
  we can construct an \(\Oh(|\G|\tau \log n)\)-bit data structure that supports extracting any substring \(T\iv{i}{i+m}\) in time
  \[
    \Oh\Big(\max\big(1,\log_\tau \tfrac{n\log\sigma}{|\G|\tau \log n}\big)
      + \tfrac{m\log\sigma}{\min(w,\tau\log n)}\Big).
  \]
\end{corollary}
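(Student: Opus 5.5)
This follows from \cref{cor:traverse} by building the same data structure and using one fast character pointer per extraction query. We construct the structure of \cref{cor:traverse} for $\G$ and $\tau$. It takes $\Oh(|\G|\tau)$ time and $\Oh(|\G|\tau\log n)$ bits, which already matches the claimed preprocessing and space bounds.

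Consider a query asking for $T\iv{i}{i+m}$.
\begin{itemize}
  \item If $m=0$, we return the empty string in constant time.
  \item Otherwise, we first invoke the constructor $\new(i)$. By \cref{cor:traverse}, this takes $\Oh(\max(1,\log_\tau \frac{n\log\sigma}{|\G|\tau\log n}))$ time and yields a fast character pointer $p$ with $p.\pos=i$.
  \item We then call $p.\forward(m)$. By \cref{def:leafy_pointers}, this retrieves $T^\infty\iv{i}{i+m}$ in time $\Oh(1+\frac{m\log\sigma}{\min(w,\tau\log n)})$.
\end{itemize}

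The substring $T^\infty\iv{i}{i+m}$ coincides with $T\iv{i}{i+m}$. The reason is that a valid query satisfies $i+m\le n$, so no wrap-around modulo $|T|$ occurs and the cyclic semantics of $\forward$ are harmless.

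Adding the two costs gives the claimed bound. The additive $\Oh(1)$ from the traversal is absorbed into the $\max(1,\cdot)$ term coming from pointer construction.

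The only point needing care is the output format. The buffer produced by \cref{lem:traverse_leafy} is written sequentially in packed blocks, so it already is a packed representation of length $m$. No further conversion is needed, and the extraction time is dominated by the block writes counted above. I do not expect a genuine obstacle here, because all the work is done in \cref{cor:traverse}.
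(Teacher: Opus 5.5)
Your proposal is correct and follows the paper's own argument: build the structure of \cref{cor:traverse}, answer a query by calling $\new(i)$ followed by $\forward(m)$, and add the two time bounds. Your extra remarks are valid refinements of that same argument: treating $m=0$ separately (where $\new(i)$ with $i=n$ would be invalid), noting that no wrap-around occurs, and observing that the output buffer is already packed.
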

\begin{proof}
  Use \cref{cor:traverse} to build the fast pointer structure. To answer \(T\iv{i}{i+m}\), construct \(\mathsf{new}(i)\) and apply \(\forward(m)\).
  The time bound is the sum of the construction time and the traversal time from \cref{cor:traverse}.
\end{proof}

To derive \cref{thm:retrieve}, use the same substitution as for \cref{thm:upper}: set \(g\coloneqq|\G|\) and \(\tau\coloneqq Mw/(g\log n)\).
The assumption \(g\log n<Mw\) gives \(\tau>1\), and the construction time and space become \(\Oh(Mw/\log n)\) and \(\Oh(M)\) words, respectively.
The initialization term is the same as in \cref{thm:upper}, and the traversal term satisfies
\[
  \tfrac{m\log\sigma}{\min(w,\tau\log n)}
  =
  \tfrac{m\log\sigma}{w}\cdot \max\!\left(1,\tfrac{g}{M}\right)
  \le
  \tfrac{m\log\sigma}{w}\cdot\tfrac{M+g}{M}.
\]
Together with the convention that time bounds are at least constant, this yields \cref{thm:retrieve}.

%% file: upper/rankandselect.tex
\section{Rank and Select in Grammar-Compressed Strings}\label{sec:rank_select}

In this section, we extend \RA queries of \cref{sec:ub} to \emph{rank} and \emph{select} queries.
For a finite set $X$ from a totally ordered universe $(U,\prec)$, the $\rank_X(u)$ query, given $u\in U$, returns $|\{x \in X : x \prec u\}|$, that is, the number of elements of $X$ strictly smaller than $u$.
The $\select_X(r)$ query, given $r\in [0\dd |X|)$, returns the unique $x\in X$ with $\rank_X(x)=r$ or, in other words, the $r$th smallest element of $X$ (0-based).
In the context of strings $T\in \Sigma^*$, we define $T_c \coloneqq \{i\in [0\dd |T|) : T[i]=c\}$ for every $c\in \Sigma$, and set $\rank_{T,c}\coloneqq \rank_{T_c}$ as well as $\select_{T,c}\coloneqq \select_{T_c}$.

We use the following standard implementations of these queries.

\begin{fact}[{\cite[Theorem 5]{Munro2014}, \cite[Lemma 2.3]{Babenko2014}}]\label{lemma:bitvector_rankselect}
  For every set $X\sub \iv{0}{u}$, there is an $\Oh(u)$-bit data structure supporting $\rank_X$ and $\select_X$ queries in $\Oh(1)$ time.
  Moreover, for any $t\ge 2$, after $\Oh(t)$-time preprocessing (shared across all $X$ and $u$), such a data structure can be constructed in $\Oh(\ceil{u/\log t})$ time if $X$ is given as a bitmask $B\in \{0,1\}^u$ satisfying $B[x]=1$ if and only if $x\in X$.
\end{fact}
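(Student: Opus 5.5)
The plan is to use the classical rank/select bitvector construction with two-level blocking and lookup tables; the second claim is the substantive part.

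\textbf{Space and queries.} Partition $\iv{0}{u}$ into superblocks of $\Theta(\log^2 u)$ bits and blocks of $\Theta(\log u)$ bits.
\begin{itemize}
\item For $\rank$, store the absolute rank at every superblock boundary in $\Oh(\log u)$ bits, which totals $\Oh(u/\log u)$ bits. Store the rank relative to the superblock at every block boundary in $\Oh(\log\log u)$ bits, which totals $\Oh(u\log\log u/\log u)$ bits. Keep the raw bitmask ($u$ bits).
\item A query adds the two stored counters to a popcount of the masked partial block. The popcount is answered with a lookup table over $\frac12\log u$-bit chunks, of size $\Oh(\sqrt u\log u)$ bits, or with word operations.
\item For $\select$, use Clark's scheme as presented in the cited works. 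Sample every $\log^2 u$-th element of $X$. Explicitly store the positions of sparse groups (span $\ge \log^4 u$); there are few of them, so this costs $o(u)$ bits. Recurse once inside dense groups, then finish with a table lookup.
\end{itemize}
All parts total $\Oh(u)$ bits and give $\Oh(1)$ queries. Since the statement only asks for $\Oh(u)$ bits, I would cite \cite{Munro2014} for this part rather than reprove it.

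\textbf{Fast construction.} Here the lookup tables are indexed by chunks of $\ell\coloneqq\max(1,\floor{\frac12\log t})$ bits rather than $\Theta(\log u)$ bits.
\begin{itemize}
\item The tables are shared across all $X$ and $u$ and built once in $\Oh(2^{\ell}\poly(\ell))\le\Oh(t)$ time (choosing the constant in $\ell$ appropriately). They give the popcount of a chunk and the position of the $j$-th one within a chunk.
\item The bitmask $B$ is stored packed, so it is read $\ell$ bits at a time.
\item Each superblock and block counter is filled by summing chunk popcounts while scanning $B$ left to right. This takes $\Oh(1)$ per chunk, hence $\Oh(\ceil{u/\ell})=\Oh(\ceil{u/\log t})$ time overall.
\end{itemize}

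\textbf{Select samples (the main obstacle).} Locating every $\log^2 u$-th one, and handling sparse groups, naively costs time proportional to $|X|$, which may be $\Theta(u)$. I would instead process chunk by chunk. Keep a running count; whenever a chunk's popcount crosses a sampling threshold, use the in-chunk select table to find the exact position. The number of samples is $\Oh(|X|/\log^2 u+1)$, so this stays within $\Oh(u/\ell)$ time.

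For sparse groups, the explicit position lists have total length $\Oh(u/\log^2 u)$, because a sparse group contains only $\log^2 u$ elements but spans $\ge\log^4 u$ positions. Writing them costs $\Oh(u/\log^2 u)$ plus $\Oh(1)$ per chunk scanned.

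Dense groups are handled by the second sampling level with the same threshold-crossing trick. Finally, ensure that the recursive in-group tables also work with $\ell$-bit chunks, which costs a constant number of lookups per query since $\log u/\ell$ may be large. If this is problematic, I would fall back to word-level operations, using $w\ge\log u$ and broadword popcount/select, to keep queries $\Oh(1)$ while construction scans $\ell$-bit chunks.
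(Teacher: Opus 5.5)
The paper gives no proof of this Fact; it imports it from the two cited works. Deferring the static $\Oh(u)$-bit structure to the literature is fine. The fast-construction argument you add, however, has genuine gaps, and the main one is exactly where you located the difficulty.

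\textbf{Inner select level.} You bound only the number of top-level samples. In Clark's scheme, dense groups are refined at granularity $\poly(\log\log u)$: every $\Theta((\log\log u)^2)$-th one in the standard presentation, plus explicit relative positions for sparse subgroups. So for dense $X$ there are $u/\poly(\log\log u)$ inner samples and listed positions. Your threshold-crossing scan must write each of them in $\Omega(1)$ time. This exceeds $\Oh(\ceil{u/\log t})$ once $\log t$ exceeds $\poly(\log\log u)$. In particular, it fails in the paper's own use in \cref{thm:select}, where $u=n$, $t=\ceil{\sqrt n}$, and the budget is $\Oh(n/\log n)$.

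The missing idea is that only $\Oh(u)$ bits are required, so you can drop Clark's succinct inner level and tie every granularity to the chunk length $c$ used in the scan (with $c=\Omega(\log u)$; see below):
\begin{itemize}
\item Keep one rank counter per chunk, for $\Oh(u\log u/c)=\Oh(u)$ bits, and store the position of every $c$-th element of $X$.
\item For a group between consecutive samples that spans at least $c$ chunks, list its $c$ positions explicitly. Charge the $\Oh(c\log u)=\Oh(c^2)$ bits and $\Oh(c)$ time to the chunks it spans.
\item Otherwise, store the unary encoding of the popcounts of the at most $c+1$ chunks the group touches. This is an $\Oh(c)$-bit string, on which select costs $\Oh(1)$ table lookups, and it is built in $\Oh(1)$ time per chunk.
\end{itemize}

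\textbf{Granularity and query tables.} Two further steps fail as written.

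First, your blocks ($\Theta(\log u)$ bits) and samples (every $\log^2 u$-th one) do not depend on $t$. When $\ell\gg\log u$, a single chunk contains many blocks and possibly many samples, so ``$\Oh(1)$ per chunk'' is false. For the leaves in \cref{thm:rank}, $u=\Oh(\log t)$ and the budget is $\Oh(1)$, yet your layout has $\Theta(u/\log u)$ block counters.

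Second, with only $\ell$-bit tables, a query on a $\Theta(\log u)$-bit block needs $\Theta(\log u/\ell)$ lookups. This is not constant when $t=u^{o(1)}$. Your sentence claiming constantly many lookups ``since $\log u/\ell$ may be large'' contradicts itself. The broadword fallback would need $\Oh(1)$-time popcount/select on $\Theta(\log u)$-bit fields without tables, which the standard word RAM does not provide; that is why the statement budgets lookup tables.

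Decouple the query tables from the shared ones:
\begin{itemize}
\item If $t<u$, build private tables on $\floor{\frac12\log u}$-bit chunks in $\Oh(\sqrt u\log u)=o(u/\log u)$ time, which is within the budget, and use $c=\Theta(\log u)$.
\item If $t\ge u$, the shared tables already satisfy $\ell\ge\frac12\log u-1$, so take $c=\ell$.
\end{itemize}
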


\begin{fact}[Elias--Fano Encoding \cite{F71,E74}; see also {\cite[Section 2]{PV17}}]\label{claim:EF}
  For every set $X\sub \iv{0}{u}$, there is an $\Oh(|X|\log \frac{2u}{|X|})$-bit data structure supporting $\select_X$ queries in $\Oh(1)$ time.
  Such a data structure can be constructed in $\Oh(|X|)$ time when $X$ is given as a sorted sequence.
\end{fact}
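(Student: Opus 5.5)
We would prove this by the classical two-level Elias--Fano construction, delegating the only nontrivial ingredient (constant-time $\select$ on a dense bitmask) to \cref{lemma:bitvector_rankselect}. Let $n\coloneqq |X|$ and write $X=\{x_0<x_1<\cdots<x_{n-1}\}$. If $n=0$, there is nothing to store, so assume $n\ge 1$; then $n\le u$. Set $\ell\coloneqq \max\bigl(0,\floor{\log\frac{u}{n}}\bigr)$. Split each element as $x_i=h_i\cdot 2^\ell+l_i$ with $l_i\in\iv{0}{2^\ell}$ and $h_i=\floor{x_i/2^\ell}\in\fragmentcc{0}{u/2^\ell}$. Note that $h_0\le h_1\le\cdots\le h_{n-1}$.

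The \emph{lower part} is a packed array $L$ of $n$ fields of $\ell$ bits each, with $L[i]=l_i$. It uses $n\ell\le n\log\frac{u}{n}$ bits and supports constant-time field access via shifts and masks.

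The \emph{upper part} is a bitmask $H$ of length $n+\floor{u/2^\ell}+1\le n+2n+1=\Oh(n)$, using $u/2^\ell<2n$. We set $H[h_i+i]=1$ for every $i$ and all other bits to $0$. Because the $h_i$ are nondecreasing, the positions $h_i+i$ are strictly increasing. Hence the $i$-th one of $H$ (0-based) lies exactly at position $h_i+i$, so $h_i=\select_{H}(i)-i$, where we view $H$ as the set of its one-positions. On top of $H$ we build the structure of \cref{lemma:bitvector_rankselect}, which adds $\Oh(|H|)=\Oh(n)$ bits. A query $\select_X(i)$ then returns $(\select_H(i)-i)\cdot 2^\ell+L[i]$, which costs $\Oh(1)$ time.

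The total space is $n\ell+\Oh(n)=\Oh\bigl(n\log\frac{2u}{n}\bigr)$ bits, since $\log\frac{2u}{n}\ge 1$ absorbs the $\Oh(n)$ term. For construction from the sorted sequence, proceed as follows:
\begin{itemize}
  \item Zero-initialize $H$ and $L$, which takes $\Oh(n/w+1)$ words.
  \item Make one pass over $X$, setting one bit of $H$ and writing one $\ell$-bit field of $L$ per element. Each write touches at most two words, so this costs $\Oh(1)$ per element.
  \item Invoke \cref{lemma:bitvector_rankselect} with a constant $t$ (say $t=4$), so the shared preprocessing is $\Oh(1)$ and building the select structure takes $\Oh(\ceil{|H|/\log t})=\Oh(n)$ time.
\end{itemize}
Altogether the construction runs in $\Oh(n)$ time.

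The main point requiring care is that the space of $H$ must stay linear in $n$ rather than in $u$. This is exactly why $\ell$ is chosen so that $u/2^\ell<2n$. The degenerate case $\ell=0$ (when $u<2n$) is consistent: then $L$ is empty and $|H|\le u+n=\Oh(n)$. Everything else is routine word-RAM bit manipulation, and the constant-time $\select$ on $H$ is supplied as a black box by \cref{lemma:bitvector_rankselect}.
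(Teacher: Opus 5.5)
Your proposal is correct and is precisely the classical Elias--Fano construction that the paper invokes by citation rather than proving. You store the low $\ell=\floor{\log\frac{u}{|X|}}$ bits in packed form and the high parts in unary in an $\Oh(|X|)$-bit bitmask, which carries the constant-time $\select$ structure of \cref{lemma:bitvector_rankselect}. The only slip is cosmetic: zero-initializing $L$ costs $\Oh(|X|\ell/w+1)\le\Oh(|X|)$ time rather than $\Oh(|X|/w+1)$, so the claimed $\Oh(|X|)$ construction bound is unaffected.
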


\subsection{Prefix Sum Queries}
As a convenient stepping stone, we show that our tools are applicable to more general $\PrefixSum$ queries, defined for an arbitrary mapping $\Phi : \Sigma \to \cM$ from the alphabet $\Sigma$ to a monoid $(\cM,\oper,\cZero)$.
For $T\in \Sigma^*$, we set $\Phi(T)=\Phi(T[0])\oper \cdots \oper \Phi(T[|T|-1])=\Oper_{i=0}^{|T|-1}\Phi(T[i])$.
In the context of an SLG $\G$ over $\Sigma$, we also denote $\Phi(A)=\Phi(\expand[\G]{A})$ for every $A\in \Var_\G$, and then lift $\Phi$ to $\Symb_\G^*$ by concatenation.
This generalizes our conventions of weight functions, which map $\Sigma$ to $(\Zz,+,0)$.

For unweighted strings, prefix sum queries ask to compute $\Phi(T[0\dd i))$ given $i\in [0\dd |T|]$.
A standard generalization to weighted strings is formulated as follows:

\defdsproblem{\(\PrefixSum_{T,\Phi}(i)\) Queries}{
  A string $T\in \Sigma^n$, a weight function $\weight{\cdot}$, and a mapping \(\Phi\) from $\Sigma$ to a monoid $(\cM,\oper,\cZero)$.
}{
  Given an index \(i\in\fragmentcc{0}{\weight{T}}\), return \(\Phi(T[0\dd j))\), where \vspace{-.3cm}\[j = \max\{j'\in \fragmentcc{0}{|T|}\;:\; \weight{T\iv{0}{j'}}\leq i\}.\]
}

We henceforth assume that the elements of $\cM$ have an $\rM$-bit representation that supports $\oper$ in constant time.
Whenever we handle RLSLGs, we also assume that, for an integer $k\in \Zz$ and element $x\in \cM$, the $k$-fold sum $k\cdot x \coloneqq x \oper \cdots \oper x$ can also be computed in constant time.
\begin{remark}\label{rmk:monoid}
In the context of $\PrefixSum_{T,\Phi}$ queries, we can restrict $\cM$ to the elements of the form $\Phi(T[i\dd j))$ for $0\le i \le j \le |T|$ and replace all the remaining elements by a single absorbing sentinel element; this transformation does not change the results of any $\PrefixSum_{T,\Phi}$ queries yet often makes the assumption about constant-time operations feasible.
For example, in the case of $(\Zz,+,0)$, we can cap all calculations at $\Phi(T)$ so that the elements can be represented in $\rM \coloneqq \ceil{\log(1+\Phi(T))}$ bits and the operations take constant time in word RAM with $w\ge \rM$.
\end{remark}

\begin{lemma}\label{lemma:prefsum}
  Let \(\G\) be a weighted (RL)SLG that produces a string \(T\in \Sigma^*\) of weight~$\weight{T} \le 2^{\Oh(w)}$, where $w$ is the machine word size.
  Given $\G$, a mapping $\Phi : \Sigma\to \cM$ to a monoid, and a real parameter $\tau > 1$, in \(\Oh(|\G|\tau)\) time we can construct an \(\Oh(|\G|\tau(\rM+\log \weight{T}))\)-bit data structure that answers $\PrefixSum_{T,\Phi}(i)$ queries in \(\Oh(1)\) time for \(i=\weight{T}\), and for \(i\in \fragmentco{0}{\weight{T}}\) in time
  \[\Oh\Big(\max\Big(1,\;\log_\tau \tfrac{\weight{T}}{|\G|\tau\weight{c}}\Big)\Big),\]
  where $c$ is the result of the \RA query for $i$.
\end{lemma}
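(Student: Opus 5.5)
The plan mirrors \cref{lem:ra_simple}: build a nice grammar, augment each $\child_A$ structure with monoid prefix aggregates, and accumulate them along the root-to-leaf descent of \cref{alg:random}. First, apply \cref{lemma:niceconstruct} with $\tau_r\coloneqq|\G|\tau$ and $\tau_v\coloneqq\tau$. This yields a $(|\G|\tau,\tau)$-nice (RL)SLG $\cH$ of size $\Oh(|\G|\tau)$ with $\Oh(|\G|)$ variables, together with a homomorphism, so $\cH$ also produces $T$. For every symbol $A\in\Symb_\cH$, we then compute $\Phi(A)$ in topological order. For a terminal this is $\Phi(A)$ directly. For a variable with $\rle(\rhs(A))=(B_0,k_0)\cdots(B_{m-1},k_{m-1})$, it is $\Oper_j k_j\cdot\Phi(B_j)$. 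By the constant-time monoid assumptions, this takes $\Oh(|\cH|)=\Oh(|\G|\tau)$ time in total.

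Second, for each variable $A$, alongside the structure of \cref{lemma:nicechild}, we store the array $F_A[0\dd m]$ of run-prefix aggregates, defined by $F_A[0]=\cZero$ and $F_A[j+1]=F_A[j]\oper k_j\cdot\Phi(B_j)$. This adds $\Oh(\tau\,\rM)$ bits per variable and $\Oh(|\G|\tau\rM)$ bits overall, which together with \cref{lemma:nicechild} gives the claimed space. We also store $\Phi(S_\cH)=\Phi(T)$, so that the query $i=\weight{T}$ is answered in $\Oh(1)$ time.

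For $i<\weight{T}$, we run \cref{alg:random} while maintaining an accumulator $x$ equal to $\Phi$ of the part of $T$ strictly before the current node $(A,a)$, initialized to $\cZero$. At a step where $\child_A((A,a),i)$ identifies run $j$ and copy index $\ell=\lfloor(i-a-P_A[j])/\weight{B_j}\rfloor$ (this index is computed inside the proof of \cref{lemma:nicechild}), we update $x\gets x\oper F_A[j]\oper \ell\cdot\Phi(B_j)$. Each update takes constant time thanks to the assumed $k$-fold operation. When the descent reaches a terminal $c$ at offset $a$, we have $a\le i<a+\weight{c}$, and $j$ is precisely the position with $\weight{T[0\dd j)}=a$, so $x=\Phi(T[0\dd j))$ is the answer. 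The number of steps is bounded by \cref{lem:nice-height}, exactly as in \cref{lem:ra_simple}, giving the stated time.

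The main subtlety is the interaction with symbols of zero weight or empty expansion. The definition of $j$ as a maximum means that zero-weight terminals immediately following position $j$ must be excluded, while those preceding it must be included. The descent already handles this: \cref{lemma:nicechild} selects the run $j$ with $P_A[j]\le i-a<P_A[j+1]$, which skips empty runs ahead and includes all earlier zero-weight runs in $F_A[j]$. I would verify this carefully. If weights are positive, as for terminals in our grammars, the issue disappears.
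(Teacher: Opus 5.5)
Your proposal is correct and follows essentially the same route as the paper. Both build a $(|\G|\tau,\tau)$-nice grammar via \cref{lemma:niceconstruct}, store $\Phi(A)$ for every symbol along with the run-prefix aggregates (your $F_A$ is the paper's $S_A$), store $\Phi(T)$ for $i=\weight{T}$, and during the descent of \cref{alg:random} append $F_A[j]\oper \ell\cdot\Phi(B_j)$ on the right of the accumulator.

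Two small remarks. First, the start symbol is $|\G|\tau$-nice, so its array costs $\Oh(|\G|\tau\rM)$ bits rather than $\Oh(\tau\rM)$, although your overall total $\Oh(|\cH|\rM)=\Oh(|\G|\tau\rM)$ is still right. Second, your zero-weight concern is moot: terminal weights are in $\Zp$ by definition, and runs of empty-expansion variables contribute only $\cZero$.
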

\begin{proof}
  First, we use \cref{lemma:niceconstruct} to convert $\G$ to a $(|\G|\tau,\tau)$-nice (RL)SLG $\cH$ that produces~$T$.
  Next, we compute and store $\Phi(A)$ for every $A\in \Symb_\cH$.
  For terminals, this is just the given value $\Phi(A)$.
  For variables, we rely on $\Phi(A) = \Oper_{j=0}^{m-1} k_j \cdot \Phi(B_j)$ when $\rle(\rhs(A))=((B_j,k_j))_{j=0}^{m-1}$ and our assumption about $\Oh(1)$-time operations~on~$\cM$.

  In order to adapt the \RA algorithm to \PrefixSum, we also need to implement $\PrefixSum_{\rhs(A),\Phi}$ queries for every $A\in \Var_\cH$.
  For this, we precompute the prefix sums $S_A[0\dd m]$ defined with $S_A[j] = \Phi(B_0^{k_0}\cdots B_{j-1}^{k_{j-1}})$ if $\rle(\rhs(A))=((B_j,k_j))_{j=0}^{m-1}$,
  and we extend the implementation of $\child_A$ queries from \cref{lemma:nicechild}.
  The implementation of these queries already identifies an index $j\in [0\dd m)$ and an exponent $k\in [0\dd k_j]$ such that the relevant prefix of $\rhs(A)$ is $B_0^{k_0}\cdots B_{j-1}^{k_{j-1}}\cdot B_j^k$.
  We can thus return $S_A[j]\oper (k\cdot \Phi(B_j))$ as an answer to the $\PrefixSum_{\rhs(A),\Phi}(i)$.

  Now, $\PrefixSum_{T,\Phi}$ queries can be implemented using \cref{alg:prefsum}, which is a variant of \cref{alg:random} that keeps track of $p=\PrefixSum_{T,\Phi}(a)$ while at node $(A,a)$ of the parse tree~$\Parse_\cH$.
  When we descend from $(A,a)$ to $(B,b)=\child_{A}((A,a),i)$, then $\PrefixSum_{T,\Phi}(b)=\PrefixSum_{T,\Phi}(a)\oper\PrefixSum_{\rhs(A),\Phi}(i-a)$, and the latter term is computed in constant time as described above.
  If \(i=\weight{T}\), then we can simply return the precomputed value \(\Phi(S_\cH)=\Phi(T)\).

  \begin{algorithm}[htbp]
    \KwIn{\(i\in\fragmentcc{0}{\weight{T}}\)}
    \KwOut{The prefix sum in $T$ for argument $i$.}
    \lIf{\(i=\weight{T}\)}{%
      \Return \(\Phi(T)\)%
    }
    \((A,a) \gets (S_\cH,0)\)\;
    \(p\gets \cZero\)\;
    \While{\(A\notin \Sigma\)}{
      \(p\gets p\oper\PrefixSum_{\rhs(A),\Phi}(i-a)\)\;
      \((A,a)\gets \child_A((A,a),i)\)\;
    }
    \Return \(p\)
    \caption{The \(\PrefixSum_{T,\Phi}(i)\) Algorithm}\label{alg:prefsum}
  \end{algorithm}

  In the complexity analysis, we only analyze the extra cost compared to the implementation of \RA queries in the proof of \cref{lem:ra_simple}.
  This is $\Oh(|\cH|)\le\Oh(|\G|\tau)$ preprocessing time and $\Oh(|\cH|\rM)\le\Oh(|\G|\tau\rM)$ bits for the arrays $S_A$ and the values $\Phi(A)$ stored for every $A\in \Symb_\cH$.
  As in \cref{lem:ra_simple}, the query time is proportional to the length of the traversed path of $\Parse_\cH$.
\end{proof}

In order to support \PrefixSum queries in leafy grammars, we only need an efficient implementation for the expansions of leaf variables.

\newcommand{\tleaf}{t_{\mathsf{leaf}}}
\begin{lemma}\label{lemma:prefsum-leafy}
  Consider a $b$-leafy (RL)SLG \(\G\) that produces a string \(T\in \Sigma^*\) of weight~$\weight{T} \le 2^{\Oh(w)}$, where $w$ is the machine word size, and a mapping $\Phi : \Sigma\to \cM$ to a monoid.
  Suppose that, for every $A\in \Varleaf{\G}$, we are given an oracle supporting $\PrefixSum_{\expand[\G]{A},\Phi}$ queries in time $\tleaf$.
  Then, given $\G$, $\Phi$, and a real parameter $\tau > 1$, one can in $\Oh(|\Top{\G}| \tau + |\Varleaf{\G}|\tleaf)$ time construct an \(\Oh(|\Top{\G}|\tau(\rM+\log \weight{T}))\)-bit data structure that answers $\PrefixSum_{T,\Phi}$ queries in time
  \[\tleaf+\Oh\Big(\max\Big(1,\;\log_\tau \tfrac{\weight{T}}{|\Top{\G}|\tau b}\Big)\Big).\]
\end{lemma}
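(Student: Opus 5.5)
The plan is to mirror the proof of \cref{lemma:prefsum}, working with the top part $\Top{\G}$ in place of $\G$ and invoking the leaf oracle once at the end of the descent. First, I would apply \cref{lemma:niceconstruct} to the weighted grammar $\Top{\G}$, whose terminals are the leaf variables weighted by $\explen[\G]{\cdot}$; since weights are unchanged, $\Top{\G}$ produces the weighted string $\Top{T}$ with $\weight{\Top{T}}=\weight{T}$. Using parameters $\tau_r=|\Top{\G}|\tau$ and $\tau_v=\tau$ gives a homomorphism into a $(|\Top{\G}|\tau,\tau)$-nice grammar $\cK$ of size $\Oh(|\Top{\G}|\tau)$ in $\Oh(|\Top{\G}|\tau)$ time. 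Reattaching the unchanged leaf variables gives a $b$-leafy grammar $\cH$ producing $T$. By \cref{lem:nice-height}, every leaf $(L,a)$ of $\Parse_{\cK}$ satisfies $\weight{L}\ge b$, so its depth is at most $2+\max(0,\log_\tau\frac{\weight{T}}{|\Top{\G}|\tau b})$.

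Next, for every leaf variable $L\in\Varleaf{\G}$, I would compute $\Phi(L)$ with a single oracle call $\PrefixSum_{\expand[\G]{L},\Phi}(\weight{L})$, which costs $\tleaf$ each and $\Oh(|\Varleaf{\G}|\tleaf)$ in total. These values serve as the images of the terminals of $\cK$. From there, exactly as in \cref{lemma:prefsum}, I would propagate $\Phi(A)$ to all $A\in\Var_\cK$ in topological order using constant-time $k$-fold sums for runs, store the prefix-sum arrays $S_A$ over $\rle(\rhs(A))$, and build the $\child_A$ structures of \cref{lemma:nicechild}. Space is $\Oh(|\cK|(\rM+\log\weight{T}))=\Oh(|\Top{\G}|\tau(\rM+\log\weight{T}))$ bits, because the oracles are given and not counted. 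The case $i=\weight{T}$ is answered by returning $\Phi(S)$.

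For a query $i<\weight{T}$, I would run \cref{alg:prefsum} on $\cK$, accumulating $p$, until reaching a node $(L,a)$ with $L\in\Varleaf{\G}$. At that point $p=\Phi(\expand[\G]{\text{prefix before }L})$, and I return $p\oper\PrefixSum_{\expand[\G]{L},\Phi}(i-a)$. This is correct because the weighted prefix within $L$ determined by $i-a$ is precisely the remaining part of the answer's index $j$, by the same argument that justifies the descent identity in \cref{lemma:prefsum}. The time is $\Oh(1)$ per level plus one oracle call, which matches the claimed bound.

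The main subtlety I expect is the interface mismatch: \cref{lemma:niceconstruct} requires $\weight{\Top{T}}\le 2^{\Oh(w)}$, which holds by assumption. I also need the comparison of terminal weights against $\frac1\tau\weight{A}$ in constant time, which is fine because leaf weights $\explen[\G]{L}$ are stored. Otherwise the argument is routine bookkeeping.
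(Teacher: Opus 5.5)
Your proposal is correct and is essentially the paper's proof: the paper likewise computes $\Phi$ of every leaf variable with one oracle call, runs the machinery of \cref{lem:ra_simple,lemma:prefsum} on $\Top{\G}$ so that one descent yields the leaf node $(A,a)$ together with the prefix sum of $\Top{T}$ up to it, and finishes with one oracle query on $i-a$; the only difference is that you unfold those lemmas instead of citing them. One minor slip: for $T$ with non-unit weights, the leaves of $\Top{\G}$ must be weighted by $\weight{\expand[\G]{L}}$ rather than $\explen[\G]{L}$, otherwise $\weight{\Top{T}}=\weight{T}$ fails and the descent on the index $i$ is not valid; the paper's proof tacitly assumes the same weighting.
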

\begin{proof}
  We apply prefix sum queries to compute $\Phi(A)$ for each $A\in \Varleaf{\G}$.
  Our data structure consists of the components of \cref{lem:ra_simple,lemma:prefsum}, applied to the (RL)SLG $\Top{\G}$ generating a weighted string $\Top{T}$.
  They efficiently answer \RA and \PrefixSum queries on \(\Top{T}\) with respect to both $\weight{\cdot}$ and $\Phi$.
  If \(i=\weight{T}\), then we can simply return the precomputed value \(\Phi(T)\).
  Otherwise, applying these queries for \(i\in\iv{0}{\weight{T}}\), we obtain a variable $A\in \Varleaf{\G}$, an index $a\in [0\dd \weight{T})$, and a partial sum $\PrefixSum_{\Top{T},\Phi}(i)$.\footnote{In fact, \cref{alg:prefsum} in the proof of \cref{lemma:prefsum} already computes the other two values.}
  Note that $(A,a)$ is a node in $\Parse_\G$ and $i\in [a\dd a+\weight{A})$, whereas the partial sum satisfies $\PrefixSum_{\Top{T},\Phi}(i)=\PrefixSum_{\Top{T},\Phi}(a)$.
  Consequently, the desired partial sum can be computed as
  \[\PrefixSum_{T,\Phi}(i) = \PrefixSum_{\Top{T},\Phi}(i) \oper \PrefixSum_{\expand[\G]{A},\Phi}(i-a).\]

  The queries on $\Top{T}$ take $\Oh(\max(1,\;\log_\tau\frac{\weight{T}}{|\Top{\G}|\tau \weight{A}})) \le \Oh(\max(1,\;\log_\tau\frac{\weight{T}}{|\Top{\G}|\tau b}))$ time because $\weight{A}\ge \explen{A}\ge b$, whereas the final $\PrefixSum_{\expand[\G]{A},\Phi}(i-a)$ takes $\tleaf$ time.
  The construction time and data structure size are directly inherited from \cref{lem:ra_simple,lemma:prefsum}, except that constructing $\Phi(A)$ for all $A\in \Varleaf{\G}$ takes $\Oh(|\Varleaf{\G}|\tleaf)$ extra time.
\end{proof}

\subsection{Rank Queries}

In order to answer $\rank_{T,1}$ queries in binary strings, we fix a mapping \(\Phi:\{0,1\}\to (\Zz,+,0)\) defined by \(\Phi(1)=1\) and \(\Phi(0)=0\).
Then \(\PrefixSum_{T,\Phi}(i)=\rank_{T,1}(i)\) for all \(i\in\fragmentcc{0}{n}\).

\begin{lemma}\label{lem:rank}
  Let \(\G\) be an (RL)SLG producing an unweighted binary string \(T\in\{0,1\}^n\) of length \(n\le 2^{\Oh(w)}\).
  Given \(\G\) and a real parameter \(\tau > 1\), in \(\Oh(|\G|\tau)\) time, we can construct an \(\Oh(|\G|\tau\log n)\)-bit data structure that supports \(\rank_{T,1}\) queries in \(\Oh\left(\max\left(1,\log_\tau\frac{n}{|\G|\tau}\right)\right)\) time.
\end{lemma}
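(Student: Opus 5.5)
We plan to obtain \cref{lem:rank} by combining the leafy construction with \cref{lemma:prefsum-leafy}, using $\Phi(1)=1$, $\Phi(0)=0$ into $(\Zz,+,0)$. By \cref{rmk:monoid}, values are capped at $n$, so $\rM=\Oh(\log n)\le \Oh(w)$ and $k$-fold sums are single multiplications. If $\tau\log n$ is small relative to $n$, we set $b\coloneqq \min(n,\ceil{\log n})$ and apply \cref{corr:constructleafyandnice} with $\tau_r\coloneqq|\G|\tau$, $\tau_v\coloneqq\tau$, and $\sigma=2$. This yields a $b$-leafy (RL)SLG $\cH$ producing $T$, whose top part is $(|\G|\tau,\tau)$-nice, with $|\Top{\cH}|\le\Oh(|\G|\tau)$ and $|\Varleaf{\cH}|\le\Oh(|\G|)$. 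The construction takes $\Oh(|\G|(\tau+b/w))=\Oh(|\G|\tau)$ time, because $b\le\Oh(w)$. Each leaf expansion is a packed bitstring of fewer than $2b=\Oh(\log n)$ bits, i.e., $\Oh(1)$ words.

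The leaf oracle needed by \cref{lemma:prefsum-leafy} is just $\rank$ within a packed word of $\Oh(w)$ bits. To compute $\PrefixSum_{\expand{A},\Phi}(j)$, we mask the first $j$ bits and compute the popcount. We will implement popcount in constant time in one of two ways, both fitting the time and space budget. The first option is \cref{lemma:bitvector_rankselect} with $u<2b$ per leaf, using $\Oh(\log n)$ bits and $\Oh(1)$ construction time after shared preprocessing with parameter $t\coloneqq\sqrt n$ or similar. The second is a lookup table on $\frac12\log n$-bit chunks, which has $\Oh(\sqrt n\log n)$ size; this is not affordable in general, so I would use the first option. With $t\coloneqq 2^{\Theta(\log n)}$ chosen so that $\ceil{u/\log t}=\Oh(1)$, however, the shared preprocessing costs $\Oh(t)$. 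This is the main obstacle: the shared preprocessing must be bounded by $\Oh(|\G|\tau)$.

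I would first try to resolve the obstacle by avoiding tables entirely, using the standard word-RAM constant-time popcount via multiplication (the SWAR trick), which is valid for words of $\Oh(w)$ bits. The paper already uses multiplication elsewhere, so this should be acceptable and gives $\tleaf=\Oh(1)$ with no preprocessing. Otherwise, I would pick $t\coloneqq |\G|\tau$ in \cref{lemma:bitvector_rankselect}, which gives leaf construction time $\Oh(\log n/\log(|\G|\tau))$ per leaf. This is still acceptable only when $|\G|\tau\ge n^{\Omega(1)}$, so the SWAR route is preferable.

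Plugging these choices into \cref{lemma:prefsum-leafy} gives the following bounds. The construction time is $\Oh(|\Top{\cH}|\tau'+|\Varleaf{\cH}|)$, where the lemma's own parameter is $\tau'$. Here the lemma internally applies \cref{lemma:niceconstruct} to $\Top{\cH}$; to avoid paying $|\Top{\cH}|\tau$ again, I would reuse the nice top part directly. That is, I would follow the proof of \cref{lemma:prefsum} with $\cH$ already nice, building the \cref{lemma:nicechild} structures and prefix arrays $S_A$ in $\Oh(|\Top{\cH}|)=\Oh(|\G|\tau)$ time and $\Oh(|\G|\tau\log n)$ bits. The query time is the height bound of \cref{lem:leafy-nice-height}, namely $\Oh(\max(1,\log_\tau\frac{n}{|\G|\tau b}))\le\Oh(\max(1,\log_\tau\frac{n}{|\G|\tau}))$, plus $\tleaf=\Oh(1)$. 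Finally, we handle the degenerate case $b=n$ (tiny $n$) by storing $T$ packed and answering $\rank$ by a single popcount, and the query $i=n$ by the stored value $\Phi(T)$.
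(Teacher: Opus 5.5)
The step that gets your argument past the obstacle you identified is not valid. Popcount of a $b=\lceil\log n\rceil$-bit leaf is not an $\Oh(1)$-time operation in the word RAM used here. The SWAR method can add up partial counts with a single multiplication only after they sit in fields of width at least $\log(b+1)$. Reaching such fields from single bits costs $\Theta(\log\log b)$ doubling rounds; the familiar constant-length code is constant only because $w=64$ is fixed.

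The paper deliberately does not assume a popcount primitive. That is why \cref{lemma:bitvector_rankselect} is stated with $\Oh(t)$ shared preprocessing and $\Oh(\lceil u/\log t\rceil)$ construction time. Your fallback $t=|\G|\tau$ then costs $\Oh(|\G|\log n/\log(|\G|\tau))$ time for the leaves. This exceeds the $\Oh(|\G|\tau)$ budget, e.g., for $\tau=2$ and $|\G|=n^{o(1)}$. So, as written, neither option gives $t_{\mathsf{leaf}}=\Oh(1)$ within the required preprocessing time. A case split between the two options could be made to work, because the $\Oh(\log\log\log n)$ SWAR overhead is absorbed by the target bound whenever the table option is too expensive. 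That argument is not in your proposal.

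The obstacle is self-inflicted, because the bound in this lemma does not depend on $b$: $\log_\tau\frac{n}{|\G|\tau b}\le\log_\tau\frac{n}{|\G|\tau}$ for every $b\ge 1$. The paper therefore uses no leaves at all. It applies \cref{lemma:prefsum} directly to $\G$ with unit weights and $\Phi(1)=1$, $\Phi(0)=0$. Since every character has weight $1$, that lemma's query bound $\Oh(\max(1,\log_\tau\frac{\weight{T}}{|\G|\tau\weight{c}}))$ is literally $\Oh(\max(1,\log_\tau\frac{n}{|\G|\tau}))$. The space bound follows from $\rM=\Oh(\log n)$ via \cref{rmk:monoid}.

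Leaves are needed only for the sharper \cref{thm:rank}. There the paper sidesteps popcount by choosing $b=\max(1,\lceil\min(\log n,2\log(|\G|\tau))\rceil)$ and $t=|\G|\tau$, so each leaf structure is built in $\Oh(1)$ time. The same choice of $b$ (or any constant $b$) would repair your argument. The paper also uses \cref{lemma:leafyconstruct} rather than \cref{corr:constructleafyandnice}, so $|\Top{\cH}|=\Oh(|\G|)$. Then \cref{lemma:prefsum-leafy} applies as a black box, without the workaround you needed to avoid applying niceness twice.
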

\begin{proof}
  Applying \cref{lemma:prefsum} to \(\G\) with the unit weight function and the mapping \(\Phi:\{0,1\}\to (\Zz,+,0)\) defined above yields, in \(\Oh(|\G|\tau)\) time, a data structure of bit size
  \[
    \Oh(|\G|\tau(\rM+\log n))
  \]
  that answers \(\PrefixSum_{T,\Phi}\), and hence \(\rank_{T,1}\), in time \(\Oh\left(\max\left(1,\log_\tau\frac{n}{|\G|\tau}\right)\right)\).
  Since \(\Phi(T)\le n\), by \cref{rmk:monoid} we have \(\rM=\Oh(\log n)\), so the space bound becomes \(\Oh(|\G|\tau\log n)\).
\end{proof}

\begin{theorem}\label{thm:rank}
  Let \(\G\) be an (RL)SLG producing an unweighted binary string \(T\in\{0,1\}^n\) of length \(n\le 2^{\Oh(w)}\).
  Given \(\G\) and a real parameter \(\tau>1\), in \(\Oh(|\G|\tau)\) time, we can construct an \(\Oh(|\G|\tau\log n)\)-bit data structure that supports \(\rank_{T,1}\) queries in time \(\Oh\Big(\max\Big(1,\;\log_\tau \tfrac{n}{|\G|\tau \log n}\Big)\Big)\).
\end{theorem}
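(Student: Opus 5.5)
We will derive the claimed bound from \cref{lemma:prefsum-leafy} applied to a leafy grammar, in the same way that \cref{lem:ra_unweighted} improves \cref{lem:ra_simple}. We set $b\coloneqq \ceil{\log n}$ and apply \cref{corr:constructleafyandnice} with $\sigma=2$ and $(\tau_r,\tau_v)=(|\G|\tau,\tau)$. This produces, in $\Oh(|\G|\tau + |\G|b/w)=\Oh(|\G|\tau)$ time, a $b$-leafy (RL)SLG $\cH$ generating $T$. Its top part is $(|\G|\tau,\tau)$-nice and has size $\Oh(|\G|\tau)$, and it has $\Oh(|\G|)$ leaf variables. Each leaf variable $A$ expands to a binary string of length in $[b\dd 2b)$, which is stored packed in $\Oh(1)$ machine words because $b\le\Oh(w)$. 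If $b>n$, the whole string $T$ fits in $\Oh(1)$ words, and we answer queries directly by popcount on a masked prefix. (If popcount is not a primitive, we use the lookup tables described below.)

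The leaf oracle is a constant-time $\PrefixSum_{\expand{A},\Phi}$ query, which here means counting the ones in a prefix of a packed word of $\Oh(\log n)$ bits. We plan to use \cref{lemma:bitvector_rankselect} with $t\coloneqq |\G|\tau$. After $\Oh(|\G|\tau)$ shared preprocessing, it builds an $\Oh(b)$-bit rank structure for each leaf in $\Oh(\ceil{b/\log t})$ time from the bitmask $\expand[\cH]{A}$. We need this construction to cost $\Oh(\tau)$ per leaf, so that the total over $\Oh(|\G|)$ leaves is $\Oh(|\G|\tau)$, and we need the space to be $\Oh(|\G|\log n)$ bits.

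This step is the main obstacle, because $b/\log t$ could exceed $\tau$ when $|\G|\tau$ is small. Two cases handle it. If $\tau\ge \log n$, then $\ceil{b/\log t}\le\Oh(\tau)$ trivially. If $\tau<\log n$, then the claimed query time $\log_\tau\frac{n}{|\G|\tau\log n}$ is within a constant factor of what \cref{lem:rank} already gives, \emph{unless} $\frac{n}{|\G|\tau}$ is small. As a cleaner alternative, we can avoid the issue entirely: precompute a global lookup table of popcounts over all $\frac12\log n$-bit chunks. This takes $\Oh(\sqrt n\log\log n)$ bits and $\Oh(\sqrt n)$ time, and the time is dominated by $\Oh(|\G|\tau)$ whenever $|\G|\tau\ge\sqrt n$. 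Otherwise $\log\frac{n}{|\G|\tau\log n}=\Theta(\log n)=\Theta(\log\frac{n}{|\G|\tau})$, and we simply fall back on \cref{lem:rank}. With the table, a leaf prefix count takes $\Oh(1)$ lookups on a masked word, so $\tleaf=\Oh(1)$ with no per-leaf structure, and computing $\Phi(A)$ for every leaf costs $\Oh(1)$ each.

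Plugging into \cref{lemma:prefsum-leafy}, the construction time is $\Oh(|\Top{\cH}|\tau+|\Varleaf{\cH}|)$. This is slightly off: the lemma's $|\Top{\G}|\tau$ term should really be read as $\Oh(|\G|\tau)$, as in \cref{lem:ra_unweighted}, where the niceness parameters are already built in. The space is $\Oh(|\G|\tau\log n)$ bits, since $\rM=\Oh(\log n)$ by \cref{rmk:monoid}. The query time is $\Oh(1)+\Oh(\max(1,\log_\tau\frac{n}{|\G|\tau b}))$, which is the claimed $\Oh(\max(1,\log_\tau\frac{n}{|\G|\tau\log n}))$, using the height bound of \cref{lem:leafy-nice-height} in place of the generic one.
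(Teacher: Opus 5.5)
Your final argument (the ``cleaner alternative'') is correct, but it is organized differently from the paper's proof.

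\textbf{The paper's route.} The paper uses a single construction with $b\coloneqq\max(1,\lceil\min(\log n,2\log(|\G|\tau))\rceil)$. With $t\coloneqq|\G|\tau$ in \cref{lemma:bitvector_rankselect}, every leaf structure is then built in $\Oh(\lceil b/\log t\rceil)=\Oh(1)$ time in every regime. The threshold $|\G|\tau$ versus $\sqrt n$ appears only in the query-time analysis:
\begin{itemize}
\item for $|\G|\tau<\sqrt n$, the bound $\log_\tau n$ is already $\Oh(\log_\tau\frac{n}{|\G|\tau\log n})$, so the smaller $b$ costs nothing;
\item for $|\G|\tau\ge\sqrt n$, one has $b\ge\log n$.
\end{itemize}

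\textbf{Your route.} You instead fix $b=\lceil\log n\rceil$ and split the construction itself at the same threshold.
\begin{itemize}
\item Below the threshold, you fall back on \cref{lem:rank}, which is valid since $\log\frac{n}{|\G|\tau\log n}=\Theta(\log\frac{n}{|\G|\tau})$ there.
\item Above it, your popcount table gives $t_{\mathsf{leaf}}=\Oh(1)$. Equivalently, \cref{lemma:bitvector_rankselect} with $t=|\G|\tau\ge\sqrt n$ already gives $\Oh(1)$ per leaf, so the table is not needed.
\end{itemize}
Both routes give the same bounds. Capping $b$ lets the paper avoid the fallback and the auxiliary table. Your version makes explicit that leafiness only matters when $|\G|\tau\ge\sqrt n$. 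Your first two-case attempt (via $\tau\ge\log n$) is left unfinished, but you do not rely on it.

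\textbf{A bookkeeping correction.} \cref{lemma:prefsum-leafy} is correct as stated and should not be ``read as'' costing $\Oh(|\G|\tau)$ on your input. It applies the nice construction to $\Top{\G}$ internally, via \cref{lem:ra_simple,lemma:prefsum}. Feeding it the top part produced by \cref{corr:constructleafyandnice}, which is already $(|\G|\tau,\tau)$-nice and of size $\Oh(|\G|\tau)$, would cost $\Oh(|\G|\tau^2)$ time. There are two clean fixes:
\begin{itemize}
\item Build the $\child_A$ structures of \cref{lemma:nicechild} and the prefix-sum arrays from the proof of \cref{lemma:prefsum} directly on your nice top part, exactly as \cref{lem:ra_unweighted} does.
\item Or do what the paper does: use \cref{lemma:leafyconstruct} instead, so $|\Top{\cH}|=\Oh(|\G|)$. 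Then pad with unused constant-size top rules to get $|\Top{\cH}|=\Theta(|\G|)$, so that the root parameter $|\Top{\cH}|\tau$ used inside the lemma is $\Theta(|\G|\tau)$.
\end{itemize}
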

\begin{proof}
  Set \(b\coloneqq \max(1,\ceil{\min(\log n, 2\log(|\G|\tau))})\) and use \cref{lemma:leafyconstruct} to obtain, in \(\Oh(|\G|(1+b/w))\le \Oh(|\G|)\) time, a \(b\)-leafy RLSLG \(\cH\) producing \(T\) with \(|\Top{\cH}|,|\Varleaf{\cH}|\le\Oh(|\G|)\).
  Adding unused top variables of constant-size right-hand sides if necessary, we assume that \(|\Top{\cH}|=\Theta(|\G|)\) without changing the produced string.
  By \cref{lemma:prefsum-leafy}, it remains to support \(\PrefixSum_{\expand[\cH]{A},\Phi}\) queries for every leaf variable \(A\in \Varleaf{\cH}\).
  Since each \(\expand[\cH]{A}\) is a bit vector of length \(\Theta(b)\), \cref{lemma:bitvector_rankselect} yields constant-time \(\rank_{\expand[\cH]{A},1}\), and hence \(\PrefixSum_{\expand[\cH]{A},\Phi}\), queries for all leaves in \(\Oh(|\G|b)\le\Oh(|\G|\log n)\) bits.
  Choosing \(t\coloneqq |\G|\tau\) in \cref{lemma:bitvector_rankselect}, after \(\Oh(|\G|\tau)\)-time shared preprocessing each leaf structure can be constructed in \(\Oh(\lceil b / \log (|\G|\tau)\rceil)=\Oh(1)\) time, for a total of \(\Oh(|\G|)\le \Oh(|\G|\tau)\).
  Therefore, we can apply \cref{lemma:prefsum-leafy} with \(\tleaf\le\Oh(1)\), which adds \(\Oh(|\Top{\cH}|\tau+|\Varleaf{\cH}|\tleaf)=\Oh(|\G|\tau)\) construction time and \(\Oh(|\Top{\cH}|\tau(\rM+\log n))=\Oh(|\G|\tau\log n)\) bits of space.

  It remains to simplify the query time.
  If \(|\G|\tau < \sqrt n\), then \(\log_\tau \frac{n}{|\G|\tau b}\le \log_\tau n\le \Oh(\log_\tau \frac{\sqrt n}{\log n})\le \Oh(\log_\tau \frac{n}{|\G|\tau\log n})\).
  Otherwise, \(|\G|\tau\ge \sqrt n\), and the choice of \(b\) gives \(b\ge\log n\), so \(\log_\tau \frac{n}{|\G|\tau b}\le \log_\tau \frac{n}{|\G|\tau\log n}\).
  In both cases, we conclude that \(\max(1, \log_\tau \frac{n}{|\G|\tau b}) \le \Oh(\max(1,\log_\tau \frac{n}{|\G|\tau\log n}))\).
\end{proof}

\paragraph{Arbitrary Alphabets.}
For every character \(a\in \Sigma\), let \(T^{(a)}\in\{0,1\}^n\) be obtained from \(T\) by mapping \(a\) to \(1\) and every other character to \(0\).
Applying the same morphism to each terminal of an (RL)SLG \(\G\) producing \(T\) yields an (RL)SLG \(\G^{(a)}\) of the same size producing \(T^{(a)}\).
Since \(\rank_{T,a}(i)=\rank_{T^{(a)},1}(i)\), we can build the binary data structure of \cref{thm:rank} independently for all \(a\in \Sigma\).
This yields an \(\Oh(|\G|\tau\sigma\log n)\)-bit data structure supporting all \(\rank_{T,a}\) queries in time \(\Oh\!\left(\max\!\left(1,\log_\tau \frac{n}{|\G|\tau\log n}\right)\right)\), with total construction time \(\Oh(|\G|\tau\sigma)\).

\subsection{Select Queries}

Similarly to \cite{BCJT15}, in order to implement $\select_{T,1}$ queries, we extract all $1$s in $T$ and associate each of them with the number of preceding zeros.
In the spirit of grammar transformations, we build a grammar deriving such a difference representation of the set $T_1=\{i\in [0\dd |T|) : T[i]=1\}$.

\newcommand{\df}{\mathsf{df}}
Formally, we define a mapping $\df: \{0,1\}^* \to Z^+$, where $Z\coloneqq\{z_i : i\in \Zz\}$,\footnote{We use $Z$ instead of directly $\Zz$ to avoid confusion between concatenation and integer multiplication.} with $\df(0^k)=z_k$ and $\df(0^k1R)=z_k\df(R)$ for $k\in \Zz$ and $R\in \{0,1\}^*$. Thus, \mbox{$\df(0^{k_0}10^{k_1}\cdots 10^{k_m})=z_{k_0}z_{k_1}\cdots z_{k_m}$.}
\begin{observation}\label{obs:Phi}
Consider a mapping $\Phi$ from $Z$ to a monoid $(\Zz,+,0)$ defined by $\Phi(z_i)=i$.
Then, $\select_{T,1}(r) = \Phi(\df(T)\fragmentcc{0}{r})+r=\PrefixSum_{\df(T),\Phi}(r+1)+r$ holds for every $r\in [0\dd |T_1|)$.
\end{observation}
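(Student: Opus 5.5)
The statement to prove is \cref{obs:Phi}: for every $r\in[0\dd |T_1|)$, we have $\select_{T,1}(r)=\Phi(\df(T)\fragmentcc{0}{r})+r=\PrefixSum_{\df(T),\Phi}(r+1)+r$. The plan is to unfold the definition of $\df$ and count positions directly.

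First, write $T$ in the canonical form $T=0^{k_0}1\,0^{k_1}1\cdots 1\,0^{k_m}$, where $m=|T_1|$ and $k_0,\ldots,k_m\in\Zz$. This decomposition exists and is unique: cut $T$ at each of its $m$ ones. By induction on the number of ones, using the two defining clauses of $\df$, we get $\df(T)=z_{k_0}z_{k_1}\cdots z_{k_m}$, a string of length $m+1$. For $r\in[0\dd m)$, the prefix $\df(T)\fragmentcc{0}{r}$ is $z_{k_0}\cdots z_{k_r}$, so $\Phi(\df(T)\fragmentcc{0}{r})=\sum_{j=0}^{r}k_j$.

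Second, locate the $r$th one of $T$ (0-based). It is the $1$ that immediately follows the block $0^{k_r}$. Its position equals the length of the preceding prefix $0^{k_0}1\cdots 1\,0^{k_r}$. That prefix contains $\sum_{j=0}^{r}k_j$ zeros and exactly $r$ ones, so $\select_{T,1}(r)=\sum_{j=0}^{r}k_j+r$. This gives the first equality.

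Third, identify the prefix-sum form. Since $\df(T)$ is unweighted, the index $j$ in the definition of $\PrefixSum$ equals the argument $r+1$, which is valid because $r+1\le m<|\df(T)|+1$. Hence $\PrefixSum_{\df(T),\Phi}(r+1)=\Phi(\df(T)[0\dd r+1))=\Phi(\df(T)\fragmentcc{0}{r})$, which gives the second equality.

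The only step needing care is the off-by-one bookkeeping in the second step: the $r$th one is preceded by exactly $r$ ones, so the block $z_{k_r}$ must be included, giving the closed interval $\fragmentcc{0}{r}$ and the half-open argument $r+1$. No earlier result from the paper is needed beyond the definitions.
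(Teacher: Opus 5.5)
Your proposal is correct and follows essentially the same argument as the paper: write $\df(T)=z_{k_0}\cdots z_{k_m}$, then observe that the $r$th one is preceded by $\sum_{j=0}^{r}k_j$ zeros and exactly $r$ ones. Your explicit check that, under unit weights, the $\PrefixSum$ argument $r+1$ selects exactly the prefix $\df(T)\fragmentcc{0}{r}$ is a small detail the paper leaves implicit.
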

\begin{proof}
  Write \(\df(T)=z_{k_0}z_{k_1}\cdots z_{k_m}\) as defined above so that \(k_j\) is the number of zeros immediately preceding the \(j\)-th one for \(j\in[0\dd m)\).
  The position of the \(r\)-th one is therefore the number of preceding zeros plus the number of preceding ones, that is, \(\sum_{j=0}^{r} k_j+r=\Phi(\df(T)\fragmentcc{0}{r})+r=\PrefixSum_{\df(T),\Phi}(r+1)+r\).
\end{proof}

\begin{lemma}\label{claim:selectgrammar}
  Given an (RL)SLG $\G$ representing a string $T\in \{0,1\}^*$, one can in $\Oh(|\G|)$ time construct an (RL)SLG $\G'$ of size $\Oh(|\G|)$ representing the string $\df(T)$.
\end{lemma}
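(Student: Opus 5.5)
The plan is to build $\G'$ bottom-up, processing the symbols of $\G$ in topological order after first converting $\G$ to normal form via \cref{lemma:compute_normalform}. For every symbol $A$ we will not produce $\df(\expand[\G]{A})$ itself. That string is not compositional, because the last $z$ of a left part merges with the first $z$ of a right part. Instead we maintain a constant-size summary. If $\expand[\G]{A}=0^{k}$ contains no $1$, we store only the integer $k$ (the number of zeros). Otherwise we write $\expand[\G]{A}=0^{\ell}1\,R\,1\,0^{r}$ or $0^\ell 1 0^r$, and we store the pair $(\ell_A,r_A)$ together with a variable $A'$ of $\G'$ such that $\expand[\G']{A'}=\df$ of the middle part between the first and the last $1$. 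Concretely, $A'$ derives $z_{k_1}\cdots z_{k_{m-1}}$, which is possibly empty when $A$ has a single $1$. With this summary, $\df(\expand[\G]{A})=z_{\ell_A}\,\expand[\G']{A'}\,z_{r_A}$. The start symbol of $\G'$ is then a fresh variable with right-hand side $z_{\ell_S} S' z_{r_S}$, or $z_{k}$ if $T$ has no $1$.

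The recurrences are straightforward. For a terminal, $0$ gives the zero-count $1$, and $1$ gives $\ell=r=0$ with $A'\to\emptystring$.

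For a rule $A\to BC$, there are four cases. If both $B$ and $C$ are all-zero, we add their counts. If only one of them contains a $1$, the zero count of the other is added to $\ell$ or to $r$, and $A'=B'$ or $A'=C'$. If both contain a $1$, we set $A'\to B'\,z_{r_B+\ell_C}\,C'$, introducing the terminal $z_{r_B+\ell_C}$ on the fly, and take $\ell_A=\ell_B$ and $r_A=r_C$.

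For a run $A\to B^k$ in the RLSLG case, an all-zero $B$ simply multiplies its count by $k$. Otherwise the junctions between consecutive copies all contribute the same symbol $z_{r_B+\ell_B}$. We introduce a helper $D\to B' z_{r_B+\ell_B}$ and set $A'\to D^{k-1}B'$. This is a rule of constant run-length size, so it is also valid in an RLSLG. If the input is a plain SLG, no runs occur and this case does not arise.

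Each rule of $\G$ thus yields $\Oh(1)$ new rules of constant size in $\G'$, so $|\G'|=\Oh(|\G|)$. All integers $\ell,r,k$ are at most $n\le 2^{\Oh(w)}$, so each step runs in constant time and the total time is $\Oh(|\G|)$. Correctness follows by induction along the topological order. The induction uses the identity $\df(X0^a\,0^bY)$, where $X$ ends with a $1$ and $Y$ starts with a $1$, in the form: the middle parts of $X$ and $Y$ are joined by $z_{a+b}$. This identity can be read off directly from the definition of $\df$.

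The main subtlety is bookkeeping rather than a real obstacle. The first thing to get right is the first and last $z$, including the convention that $\df$ of a string ending in $1$ ends with $z_0$; this is why $r$ may be $0$. The second is the single-$1$ and empty-middle cases in the run rule. We also need the alphabet $Z$ of $\G'$ to be materialized only for the $\Oh(|\G|)$ indices actually used.
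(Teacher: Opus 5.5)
Your proposal is correct and follows essentially the same route as the paper. Both convert to normal form, process symbols in topological order, and keep the same two-type invariant: either an all-zero count $k_A$, or the triple $(\ell_A, A', r_A)$ with $A'$ deriving the middle part. Both also use the same four subcases for $A\to BC$ and the same start rule $z_{\ell_S}S'z_{r_S}$ (or $z_{k_S}$). For runs you use the mirror-image helper $D\to B'z_{r_B+\ell_B}$ with $A'\to D^{k-1}B'$, where the paper uses $D_A\to z_{r_B+\ell_B}B'$ with $A'\to B'D_A^{k-1}$; these are equivalent.
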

\begin{proof}
    By \cref{lemma:compute_normalform}, we can assume without loss of generality that \(\G\) is in the normal (RL)SLG form.
    To construct $\G'$, we process the symbols $A\in \Symb_\G$ in the topological order.
    For each symbol, we add a constant number of symbols to $\G'$, ensuring that each right-hand side has constant (run-length encoded) size.
    We also adhere to the following invariants, which classify symbols of $\G$ into two types:
    \begin{enumerate}[label=(\arabic*)]
      \item\label{type:short} If $|\df(\expand[\G]{A})|=1$, with $\df(\expand[\G]{A})=z_{k_A}$ for $k_A\in \Zz$, then $z_{k_A}\in \Sigma_{\G'}$.
    \item\label{type:long} If $|\df(\expand[\G]{A})|\ge 2$, with $\df(\expand[\G]{A})=z_{\ell_A}\cdot M_A \cdot z_{r_A}$ for some $\ell_A,r_A\in \Zz$ and $M_A\in Z^*$, then $z_{\ell_A},z_{r_A}\in \Sigma_{\G'}$ and there is a symbol $A'\in \Symb_{\G'}$ with $\expand[\G']{A'}=M_A$.
    \end{enumerate}

    Our algorithm implements the following cases:
    \begin{itemize}
      \item If $A = 0$, then $\df(\expand[\G]{A})=z_1$, and we add $z_{k_A}=z_1$ to $\Sigma_{\G'}$.
      \item If $A=1$, then $\df(\expand[\G]{A})=z_0z_0$, and we add $z_{\ell_A}=z_{r_A}=z_0$ to $\Sigma_{\G'}$, and a new symbol $A'$ with $\rhs_{\G'}(A')\coloneqq \emptystring$ to $\Var_{\G'}$.
      \item If $A\in \Var_{\G}$ with $\rhs_{\G}(A)=BC$, there are four sub-cases depending on the types of $B$ and $C$.
      \begin{itemize}
        \item If $B$ and $C$ are of type \ref{type:short}, then $A$ is also of type \ref{type:short} with $k_A = k_B+k_C$. Hence, we add $z_{k_A}$ to $\Sigma_{\G'}$.
        \item If $B$ is of type \ref{type:short} and $C$ is of type \ref{type:long},
        then $A$ is of type \ref{type:long} with $\ell_A = k_B+\ell_C$, $M_A = M_C$, and $r_A = r_C$. Hence, we add $z_{\ell_A}$ to $\Sigma_{\G'}$ and set $A'\coloneqq C'$.
        \item If $B$ is of type \ref{type:long} and $C$ is of type \ref{type:short},
        then $A$ is of type \ref{type:long} with $\ell_A = \ell_B$, $M_A = M_B$, and $r_A = r_B + k_C$. Hence, we add $z_{r_A}$ to $\Sigma_{\G'}$ and set $A'\coloneqq B'$.
        \item If $B$ and $C$ are both of type \ref{type:long}, then $A$ is also of type \ref{type:long} with $\ell_A=\ell_B$, $M_A = M_B z_{r_B+\ell_C} M_C$, and $r_A = r_C$.
        Hence, we add $z_{r_B+\ell_C}$ to $\Sigma_{\G'}$ and a new variable $A'$ to $\Var_{\G'}$ with $\rhs_{\G'}(A')\coloneqq B' z_{r_B+\ell_C} C'$.
      \end{itemize}
      \item If $A\in \Var_\G$ with $\rhs_\G(A)=B^k$ and $k\ge 3$, there are two sub-cases depending on the type of $B$.
      \begin{itemize}
        \item If $B$ is of type \ref{type:short}, then $A$ is also of type \ref{type:short} with $k_A = k \cdot k_B$. Hence, we add $z_{k_A}$ to $\Sigma_{\G'}$.
        \item If $B$ is of type \ref{type:long}, then $A$ is also of type \ref{type:long} with $\ell_A = \ell_B$, $M_A = M_B (z_{r_B+\ell_B} M_B)^{k-1}$, and $r_A = r_B$.
        Hence, we add $z_{r_B+\ell_B}$ to $\Sigma_{\G'}$, a new helper variable $D_A$ to $\Var_{\G'}$ with $\rhs_{\G'}(D_A)\coloneqq z_{r_B+\ell_B}B'$, and a new variable $A'$ to $\Var_{\G'}$ with $\rhs_{\G'}(A')\coloneqq B'D_A^{k-1}$.
        The helper is needed because the power in an RLSLG right-hand side must be a power of a single symbol, not of the block \(z_{r_B+\ell_B}B'\).
      \end{itemize}
    \end{itemize}
    It is straightforward to verify that this implementation satisfies the invariants and can be implemented in $\Oh(1)$ time per symbol (note that we store $k_A$ or $\ell_A,r_A,A'$ for every processed symbol).

    Depending on the type of the starting symbol $S$ of $\G$, we either set $z_{k_S}$ as the starting symbol of $\G'$ or create a new variable $S''$ with $\rhs_{\G'}(S'')\coloneqq z_{\ell_S} S' z_{r_S}$ as the starting symbol of $\G'$.
    In either case, we are guaranteed that $\G'$ produces $\df(T)=\df(\expand[\G]{S})$.
\end{proof}

\begin{lemma}\label{lem:select_simple}
  Let \(\G\) be an (RL)SLG producing an unweighted binary string \(T\in\{0,1\}^n\) of length $n\le 2^{\Oh(w)}$.
  Given $\G$ and a real parameter \(\tau> 1\), in \(\Oh(|\G|\tau)\) time, we can construct an \(\Oh(|\G|\tau\log n)\)-bit data structure that supports \(\select_{T,1}\) queries in \(\Oh\left(\max\left(1,\log_\tau\frac{n}{|\G|\tau}\right)\right)\) time.
\end{lemma}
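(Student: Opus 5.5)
The plan is to combine \cref{claim:selectgrammar}, \cref{obs:Phi}, and \cref{lemma:prefsum}. First, apply \cref{claim:selectgrammar} to $\G$. In $\Oh(|\G|)$ time this yields an (RL)SLG $\G'$ of size $\Oh(|\G|)$ producing $\df(T)=z_{k_0}\cdots z_{k_m}$ with $m=|T_1|$, so $|\df(T)|=m+1\le n+1$. We treat $\df(T)$ as an unweighted string over the terminals $z_i$ that actually occur in $\G'$. There are $\Oh(|\G|)$ such terminals, each with index $i\le n$, stored in $\Oh(\log n)$ bits. We then apply \cref{lemma:prefsum} to $\G'$ with unit weights, the mapping $\Phi(z_i)=i$ into $(\Zz,+,0)$, and the given $\tau$. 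Since $\Phi(\df(T))=\sum_j k_j\le n$, \cref{rmk:monoid} lets us take $\rM=\Oh(\log n)$. Capping sums at $n$ makes both $\oper$ and the $k$-fold sums needed for runs (a single multiplication) constant-time operations. The weight bound also holds: $\weight{\df(T)}\le n+1\le 2^{\Oh(w)}$. Hence the construction takes $\Oh(|\G'|\tau)=\Oh(|\G|\tau)$ time and $\Oh(|\G|\tau\log n)$ bits.

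A query $\select_{T,1}(r)$ with $r\in[0\dd |T_1|)$ is answered by \cref{obs:Phi} as $\PrefixSum_{\df(T),\Phi}(r+1)+r$. The argument $r+1\le m<\weight{\df(T)}=m+1$ lies in $\fragmentco{0}{\weight{\df(T)}}$. Therefore \cref{lemma:prefsum} bounds the query time by $\Oh(\max(1,\log_\tau\frac{m+1}{|\G'|\tau}))$, because every character has weight $1$. It remains to turn this into the stated bound. Since $m+1\le n+1\le 2n$, the numerator is at most $2n$. The one subtlety is the denominator: \cref{claim:selectgrammar} gives only $|\G'|\le\Oh(|\G|)$, and a smaller grammar would make the bound worse. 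To handle this, pad $\G'$ with unused variables until $|\G'|=\Theta(|\G|)$, as in the proof of \cref{thm:rank}. Alternatively, note that $\log_\tau\frac{2n}{|\G'|\tau}$ and $\log_\tau\frac{n}{|\G|\tau}$ differ only by $\log_\tau\frac{2|\G|}{|\G'|}$, and this is $\Oh(1)$ once $|\G'|\ge\Omega(|\G|)$ is enforced by the padding. With either fix, the query time is $\Oh(\max(1,\log_\tau\frac{n}{|\G|\tau}))$.

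The main obstacle is verifying the preconditions of \cref{lemma:prefsum} for the derived grammar. First, the alphabet $Z$ of $\G'$ may contain terminals $z_i$ with large indices, so we must confirm that these indices fit in words. They do, since each satisfies $i\le n$. Second, we need constant-time monoid operations for run-length rules. This follows from capping the sums as in \cref{rmk:monoid}. Once these two points are checked, everything else is a direct composition of the earlier results.
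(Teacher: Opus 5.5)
Your proposal is correct and follows the paper's proof. You build $\G'$ for $\mathsf{df}(T)$ via \cref{claim:selectgrammar}, apply \cref{lemma:prefsum} with unit weights, $\Phi(z_i)=i$, and $\rM=\Oh(\log n)$ by capping, and answer $\select_{T,1}(r)$ as $\PrefixSum_{\mathsf{df}(T),\Phi}(r+1)+r$ by \cref{obs:Phi}. Your padding of $\G'$ to size $\Theta(|\G|)$ is actually necessary, because rules of $\G$ that derive only zeros create no variables in $\G'$. The paper's own proof justifies that final inequality with $|\G'|\le\Oh(|\G|)$, which points the wrong way. Your fix is the same one the paper uses in the proofs of \cref{thm:rank,thm:select}.
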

\begin{proof}
  By \cref{obs:Phi}, every query \(\select_{T,1}(r)\) reduces to computing \(\PrefixSum_{\df(T),\Phi}(r+1)\) and adding \(r\), where \(\Phi : Z \to (\Zz,+,0)\) is given by \(\Phi(z_i)\coloneqq i\).

  Using \cref{claim:selectgrammar}, we construct in \(\Oh(|\G|)\) time an RLSLG \(\G'\) of size \(\Oh(|\G|)\) producing \(\df(T)\).
  We equip \(\df(T)\) with the unit weight function. Since \(|\df(T)|\le n+1\le 2^{\Oh(w)}\), the assumptions of \cref{lemma:prefsum} are satisfied.
  Moreover, by \cref{rmk:monoid}, we may cap all values of the monoid \((\Zz,+,0)\) at \(n\), so \(\rM=\Oh(\log n)\).

  Applying \cref{lemma:prefsum} to \(\G'\) yields, in \(\Oh(|\G'|\tau)\le \Oh(|\G|\tau)\) time, a data structure of bit size
  \[
    \Oh(|\G'|\tau(\rM+\log |\df(T)|))\le \Oh(|\G|\tau\log n)
  \]
  that answers \(\PrefixSum_{\df(T),\Phi}\) queries in time
  \[
    \Oh\!\left(\max\!\left(1,\log_\tau \frac{|\df(T)|}{|\G'|\tau}\right)\right)
    \le
    \Oh\!\left(\max\!\left(1,\log_\tau \frac{n}{|\G|\tau}\right)\right),
  \]
  where the first inequality uses that every character of \(\df(T)\) has weight \(1\), and the second uses \(|\df(T)|\le n+1\) and \(|\G'|\le \Oh(|\G|)\).
  Finally, adding the query index \(r\) takes constant time, so the same asymptotic bound holds for \(\select_{T,1}\).
\end{proof}

\begin{theorem}\label{thm:select}
  Let \(\G\) be an (RL)SLG producing an unweighted binary string \(T\in\{0,1\}^n\) of length \(n\le 2^{\Oh(w)}\).
  Given \(\G\) and a real parameter \(\tau>1\), in time \[\Oh\left(|\G|\left(\tau+\tfrac{\log n}{\max\left(1,\,\log(\frac{n}{|\G|\tau\log n})\right)}\right)\right) \leq \Oh\left(|\G|(\tau+\log n)\right),\]
  we can construct a data structure of size \(\Oh(|\G|\tau\log n)\) bits that supports \(\select_{T,1}\) queries in time \[\Oh\left(\max\left(1,\tfrac{\log \frac{n}{|\G|\tau \log n}}{\log \tau}\right)\right).\]
\end{theorem}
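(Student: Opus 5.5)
We follow the proof of \cref{thm:rank}, replacing the rank-specific leaf structures with a select-specific one and working on the difference string $\df(T)$. By \cref{obs:Phi}, $\select_{T,1}(r)=\PrefixSum_{\df(T),\Phi}(r+1)+r$ with $\Phi(z_i)=i$. Build $\G'$ for $\df(T)$ of size $\Oh(|\G|)$ using \cref{claim:selectgrammar}.

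If $|\G|\tau\log n\ge n/\log n$, or more generally if the target query time is already $\Oh(\max(1,\log_\tau\frac{n}{|\G|\tau}))$, \cref{lem:select_simple} suffices. So assume $n$ is much larger than $|\G|\tau\log n$. The extra $\log n$ factor must then come from leaves that each cover $\Theta(\log n)$ zeros and ones of $T$.

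Our choice is to apply leafiness to $\df(T)$ measured by weight, not by length. Give each symbol $z_i$ weight $\weight{z_i}=i+1$, so that $\weight{\df(T)}=n+1$ and a block of $\df(T)$ of weight $\Theta(b)$ corresponds to a length-$\Theta(b)$ fragment of $T$. We then cannot use \cref{lemma:leafyconstruct} directly on $\df(T)$, since it is stated for unweighted strings. Instead we run it on $\G$ itself with $b\coloneqq\max(1,\ceil{\min(\log n,2\log(|\G|\tau))})$, exactly as in \cref{thm:rank}, obtaining a $b$-leafy $\cH$ for $T$. We then translate each leaf $L$ (a bit string of length in $[b\dd 2b)$) and the top part through the $\df$ construction. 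The top part becomes a grammar over "leaf-symbols". Each such symbol is a triple consisting of a leading zero count, a middle block of $z$'s given by the ones inside the leaf, and a trailing zero count, with the boundary zero counts merged as in the type-\ref{type:long} invariants of \cref{claim:selectgrammar}. This requires redoing \cref{claim:selectgrammar} with leaves as terminals: a leaf with no ones is type \ref{type:short}, and one with ones is type \ref{type:long}, whose middle is a leaf of $\df$ stored explicitly.

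Rather than re-deriving that grammar, the cleaner plan is the following. Apply \cref{lemma:prefsum-leafy} to the top part of $\cH$, with monoid $(\Zz,+,0)$ counting ones and weight being length. This is rank, and it lets us descend in $\Top{\cH}$ by the number of ones. Concretely, we search for the leaf containing the $r$-th one by a $\child$-style descent keyed on one-counts instead of weights. For this we use \cref{lemma:nice-var,lemma:nicechild} with the weight function $\weight{A}\coloneqq$ number of ones in $\expand{A}$ plus a correction ensuring leaves remain $\tau$-light, combined with the real length for the height bound. Inside the final leaf, the Elias--Fano structure of \cref{claim:EF} on the positions of ones answers in-leaf select in $\Oh(1)$ time using $\Oh(k\log\frac{2b}{k})$ bits for a leaf with $k$ ones. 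These structures must be built from the sorted positions, which takes $\Oh(b)$ time per leaf when the leaf is given as a packed bitmask. This cost, about $|\G|\log n/\max(1,\log\frac{n}{|\G|\tau\log n})$ after balancing $b$, is the source of the unusual term in the construction time.

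The main obstacle is the height bound. A descent keyed on one-counts can be deep where ones are sparse, while a descent keyed on length gives the $\log_\tau\frac{n}{|\G|\tau\log n}$ bound but cannot route by rank. The plan is to weight each symbol of $\df$ by $i+1$, i.e.\ the length of the fragment of $T$ it represents, so that niceness is with respect to length (giving the height via \cref{lem:leafy-nice-height}), while routing uses the one-count prefix sums stored per run, as in \cref{lemma:prefsum}. A routing step needs a rank over one-count boundaries of a nice right-hand side. Local sparsity (\cref{def:nice}\ref{it:local}) holds only in weight, so for the one-count keys we fall back on the length-indexed bucketing of \cref{clm:setrank} applied after a predecessor step. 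Making that step constant time is where we expect the most difficulty. If it fails, we accept choosing $b$ smaller and pay with the stated construction-time term.
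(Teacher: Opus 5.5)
Your proposal leaves its central step open, and the route you chose cannot give the bound. With leaves cut from $T$, a $\select_{T,1}$ query must descend by one-counts. You correctly note that niceness with respect to length gives no local sparsity for one-count keys, but you never supply the constant-time routing step you then need.

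Making the top part nice with respect to one-counts does not rescue this. A leaf of $\Theta(b)$ bits of $T$ may contain a single one, so in that measure leaves do not have weight $\Theta(b)$. The height bound you can then derive falls back to $\log_\tau\frac{n}{|\G|\tau}$, no better than \cref{lem:select_simple}. Sparse ones are not what makes a one-count descent deep; they only shrink the one-count total.

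The missing idea is the option you abandoned at the start: build the leaves in $\mathsf{df}(T)$ itself, with \emph{unit} weights rather than $i+1$. By \cref{obs:Phi}, a select query is then a prefix-sum query at position $r+1$ of $\mathsf{df}(T)$. The standard nice/$\child$ machinery and \cref{lemma:prefsum-leafy} therefore route by exactly the right key, and every leaf carries $\Theta(b)$ ones. Your stated obstacle also disappears. $\mathsf{df}(T)$ is an unweighted string over an alphabet of size at most $n+1$, so \cref{lemma:leafyconstruct} applies directly to the grammar from \cref{claim:selectgrammar}, in $\Oh(|\G|(1+b))$ time.

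The price is that a leaf $A$ with $m_A\in[b\dd 2b)$ symbols $z_{k_0}\cdots z_{k_{m_A-1}}$ may stand for a fragment of $T$ of length $u_A=m_A+\sum_j k_j$ as large as $n$. Packed bitvectors are therefore ruled out. This is the real reason Elias--Fano is needed; in-leaf select over $\Theta(b)$ bits of $T$ would be handled by \cref{lemma:bitvector_rankselect}.
\begin{itemize}
\item Applying \cref{claim:EF} to $\{r+\sum_{j<r}k_j : r\in[0\dd m_A]\}$ gives constant-time leaf prefix sums in $\Oh(m_A\log\frac{2u_A}{m_A})$ bits.
\item Fix one occurrence of each leaf. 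These are disjoint fragments of $\mathsf{df}(T)$, so $\sum_A m_A\le n+1$ and $\sum_A (u_A-m_A)\le n$. Concavity then gives $\Oh(|\G|b\log\frac{n}{|\G|b})$ bits in total.
\item Let $\alpha\coloneqq\frac{n}{|\G|\tau\log n}$. Staying within $\Oh(|\G|\tau\log n)$ bits while keeping query time $\Oh(\log_\tau\alpha)$ dictates $b\coloneqq\lfloor\log n/\log\alpha\rfloor$, not the rank-style choice. With $b=\log n$ the leaves overflow the budget once $\log\alpha\gg\tau$.
\item The resulting $\Oh(|\G|b)$ construction cost is exactly the unusual term in the statement.
\end{itemize}

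Finally, your case split is reversed. \cref{lem:select_simple} already suffices when $\alpha\ge\log^{\Omega(1)}n$, which is the regime you set out to handle. It does not suffice when $\alpha$ is small, which is the regime you hand to it. For example, with $\tau=2$ and $\alpha=4$ the target is $\Oh(1)$, while \cref{lem:select_simple} only guarantees $\Oh(\log\log n)$. The regime $n\le 2|\G|\tau\log n$ also needs separate treatment. One option is to store $T$ explicitly with the select structure of \cref{lemma:bitvector_rankselect}, extracting $T$ via \cref{cor:substring} in $\Oh(|\G|\tau)$ time.
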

\begin{proof}
  If \(n \le 2|\G|\tau\log n\), then we store \(T\) explicitly as a bit vector and augment it with the constant-time \(\select\) structure of \cref{lemma:bitvector_rankselect}.
  This uses \(\Oh(n)\leq \Oh(|\G|\tau\log n)\) bits and supports \(\select_{T,1}\) queries in constant time.
  To construct the packed bit vector, we first apply \cref{cor:substring} with the same parameter \(\tau\), viewing \(\G\) as an RLSLG if necessary, and extract \(T[0\dd n)\).
  The temporary substring data structure is built in \(\Oh(|\G|\tau)\) time, and the extraction takes
  \[
    \Oh\!\left(1+\frac{n}{\min(w,\tau\log n)}\right)
    \le \Oh(1+n/\log n)
    \le \Oh(|\G|\tau)
  \]
  time, where we use \(w\ge\Omega(\log n)\), \(\tau> 1\), and the assumption of this case.
  We then discard the temporary substring data structure.
  Choosing \(t\coloneqq \ceil{\sqrt{n}}\) in \cref{lemma:bitvector_rankselect}, the shared preprocessing takes \(\Oh(\sqrt n)\le \Oh(n/\log n)\) time and the actual construction takes \(\Oh(n/\log t)=\Oh(n/\log n)\) time.
  Hence, the whole data structure is constructed in \(\Oh(|\G|\tau)\) time.

  We henceforth assume \(n>2|\G|\tau\log n\), and set
  \[
    b \coloneqq \left\lfloor \frac{\log n}{\log\frac{n}{|\G|\tau\log n}}\right\rfloor
  \]
  so that \(1\le b\le \log n\).

  Let \(\Phi : Z\to (\Zz,+,0)\) be defined by \(\Phi(z_i)\coloneqq i\).
  By \cref{obs:Phi}, it suffices, given a query index \(r\), to answer \(\PrefixSum_{\df(T),\Phi}(r+1)\) and add \(r\) to the answer.

  Using \cref{claim:selectgrammar}, we construct in \(\Oh(|\G|)\) time an RLSLG \(\G'\) of size \(\Oh(|\G|)\) producing \(\df(T)\).
  Next, \cref{lemma:leafyconstruct} yields a \(b\)-leafy RLSLG \(\cH\) producing \(\df(T)\), with \(|\Top{\cH}|,|\Varleaf{\cH}|\le \Oh(|\G|)\).
  Adding unused top variables of constant-size right-hand sides if necessary, we assume that \(|\Top{\cH}|=\Theta(|\G|)\) without changing the produced string.

  By \cref{lemma:prefsum-leafy}, it remains to support \(\PrefixSum_{\expand[\cH]{A},\Phi}\) for every leaf variable \(A\in\Varleaf{\cH}\).
  Write
  \(
    \expand[\cH]{A}= z_{k_0}\cdots z_{k_{m_A-1}}
  \),
  where \(m_A\in [b\dd 2b)\), and define the set of prefix sums
  \[
    P_A \coloneqq \left\{r +\sum_{j=0}^{r-1} k_j : r\in [0\dd m_A]\right\}.
  \]
  If \(u_A\coloneqq m_A+\sum_{j=0}^{m_A-1} k_j\), then \(P_A\subseteq [0\dd u_A]\) has \(m_A+1\) elements and, by construction,
  \[
    \PrefixSum_{\expand[\cH]{A},\Phi}(r)=\select_{P_A}(r)-r
  \]
  holds for every \(r\in [0\dd m_A]\).
  Therefore, \cref{claim:EF}, applied with universe size \(u_A+1\), gives a constant-time data structure for leaf \(A\) constructible in $\Oh(m_A)$ time.
  Its bit size is \(\Oh\big((m_A+1)\log\frac{2(u_A+1)}{m_A+1}\big)\le \Oh\big(m_A\log \frac{2u_A}{m_A}\big)\) because \(m_A\ge 1\) and \(u_A\ge m_A\).
  Let \(m\coloneqq \sum_{A\in\Varleaf{\cH}} m_A\), so \(m \le \Oh(|\G|b)\).
  Every leaf variable of \(\cH\) occurs at least once in the top string produced by \(\Top{\cH}\).
  Choosing one occurrence of each leaf variable, we obtain pairwise disjoint substrings of \(\df(T)\).
  For an occurrence of a leaf \(A\) with \(\expand[\cH]{A}= z_{k_0}\cdots z_{k_{m_A-1}}\), the quantity
  \(u_A-m_A=\sum_{j=0}^{m_A-1} k_j\) is exactly the \(\Phi\)-sum of the corresponding substring of \(\df(T)\).
  Consequently,
  \[
    m\le |\df(T)|\le n+1
    \qquad\text{and}\qquad
    \sum_{A\in\Varleaf{\cH}} (u_A-m_A)\le n.
  \]
  The concavity of the logarithm yields
  \[
    \sum_{A\in\Varleaf{\cH}} m_A\log\tfrac{2u_A}{m_A}
    \le
    m\log\left(\tfrac{2}{m}\sum_{A\in\Varleaf{\cH}} u_A\right)
    \le
    m\log\left(\tfrac{2}{m}(n+m)\right)
    \le
    m\log\tfrac{6n}{m}.
  \]
  As \(x\mapsto x\log\tfrac{6n}{x}\) is increasing on \(x\in [1\dd 2n]\) and \(m\le \min(n+1,\Oh(|\G|b))\), this is
  \(
    \Oh\!\left(|\G|b\log\tfrac{n}{|\G|b}\right)
  \).
  To see that this is \(\Oh(|\G|\tau\log n)\), let
  \(\alpha\coloneqq n/(|\G|\tau\log n)>2\) and \(B\coloneqq \log n/\log\alpha\)
  so that \(b=\lfloor B\rfloor\) and, since \(B\ge 1\), we have \(B/2\le b\le B\).
  Consequently,
  \[
    b\log\tfrac{n}{|\G|b}
    \le
    B\log\tfrac{2n}{|\G|B}
    =
    \tfrac{\log n}{\log\alpha}\log(2\alpha\tau\log\alpha)
    \le \Oh(\log n \cdot (1+\log \tau)) \le \Oh(\tau\log n).
  \]
  The top structures of \cref{lemma:prefsum-leafy} use \(\Oh(|\Top{\cH}|\tau(\rM+\log n))=\Oh(|\G|\tau\log n)\) bits, where \(\rM=\Oh(\log n)\) by \cref{rmk:monoid}.
  Hence, the leaf and top structures fit within the space bound.

  For the preprocessing time, constructing \(\G'\) takes \(\Oh(|\G|)\) time.
  Moreover, \cref{lemma:leafyconstruct} constructs \(\cH\) in time
  \[
    \Oh\!\left(|\G|\cdot (1+b\log |\Sigma_{\G'}|/w)\right)
    \le \Oh(|\G|(1+b)),
  \]
  because \(|\Sigma_{\G'}|\le n+1\) and \(w=\Omega(\log n)\).
  The application of \cref{lemma:prefsum-leafy} adds \(\Oh(|\Top{\cH}|\tau+|\Varleaf{\cH}|\tleaf)=\Oh(|\G|\tau)\) preprocessing time.
  Finally, \cref{claim:EF} constructs the leaf data structure for \(A\) in \(\Oh(m_A)\) time, so the total leaf preprocessing time is
  \[
    \Oh\!\left(\sum_{A\in\Varleaf{\cH}} m_A\right)=\Oh(|\G|b).
  \]
  Altogether, the data structure is constructed in \(\Oh(|\G|(\tau+b))\) time, which is
  \[
    \Oh\!\left(|\G|\left(\tau+\tfrac{\log n}{\log\frac{n}{|\G|\tau\log n}}\right)\right).
  \]

  We can now apply \cref{lemma:prefsum-leafy} with \(\tleaf\le \Oh(1)\).
  Using \(|\Top{\cH}|=\Theta(|\G|)\), we obtain query time
  \[
    \Oh\!\left(1+\max\left(1,\log_\tau\tfrac{n}{|\G|\tau b}\right)\right)
    \le
    \Oh\!\left(\max\left(1,\log_\tau\tfrac{n}{|\G|\tau\log n}\right)\right),
  \]
  where the last step follows \(n>2|\G|\tau\log n\) and the choice of $b$.
  Finally, by \cref{obs:Phi}, the same asymptotic query bound holds for \(\select_{T,1}\).
\end{proof}

\paragraph{Arbitrary Alphabets.}
The mapping \(T\mapsto T^{(a)}\) satisfies \(\select_{T,a}(r)=\select_{T^{(a)},1}(r)\) for every \(a\in \Sigma\).
Hence, applying \cref{thm:select} independently to the grammars \(\G^{(a)}\) for all \(a\in \Sigma\) yields an \(\Oh(|\G|\tau\sigma\log n)\)-bit data structure supporting all \(\select_{T,a}\) queries in time \[\Oh\!\left(\max\!\left(1,\frac{\log \frac{n}{|\G|\tau \log n}}{\log \tau}\right)\right).\]
The total construction time becomes
\[
  \Oh\!\left(|\G|\sigma\left(\tau+\tfrac{\log n}{\max\left(1,\log\frac{n}{|\G|\tau\log n}\right)}\right)\right)
  \le \Oh(|\G|\sigma(\tau+\log n)).
\]
This is the source of the extra \(\Oh(|\G|\sigma\log n)\) term compared to the binary \(\rank\) construction.

Finally, we derive \cref{thm:ranksel} by combining the two arbitrary-alphabet constructions above.
Set \(g\coloneqq|\G|\) and \(\tau\coloneqq Mw/(g\log n)\).
The assumptions \(g\log n<Mw<n\) give \(\tau>1\) and \(n/(g\tau\log n)=n/(Mw)>1\), while \(w\ge\Omega(\log(n\sigma))\) ensures \(n\le2^{\Oh(w)}\).
The combined space bound is \(\Oh(g\tau\sigma\log n)=\Oh(Mw\sigma)\) bits, or \(\Oh(M\sigma)\) machine words.
The construction time is dominated by the select construction and is
\[
  \Oh(g\sigma(\tau+\log n))
  =
  \Oh\!\left(\tfrac{Mw\sigma}{\log n}+g\sigma\log n\right).
\]
Both \(\rank\) and \(\select\) query times become
\[
  \Oh\!\left(\max\!\left(1,\frac{\log\frac{n}{Mw}}{\log\frac{Mw}{g\log n}}\right)\right),
\]
which is the query bound in \cref{thm:ranksel}, with the convention that query times are at least constant.

%% file: appendix.tex
\section{Proof of Corollary~\ref{thm:block}}\label{section:appendix}

\begingroup
\makeatletter
\let\thmBlock@origcorollary\corollary
\def\corollary{\@ifnextchar[\thmBlock@skipopt\thmBlock@origcorollary}
\def\thmBlock@skipopt[#1]{\thmBlock@origcorollary}
\thmBlock*
\makeatother
\endgroup

\begin{proof}
Define $\phi_d\coloneqq \frac{n\log \sigma}{d\log n}$ and $\phi_s\coloneqq \frac{n\log \sigma}{s\log n}$.
Since $d\tau\le s$, we have
\begin{equation}\label{eq:block-phi}
  \phi_d=\phi_s\cdot \tfrac{s}{d}
  \qquad\text{and}\qquad
  \phi_s\le \phi_d.
\end{equation}
Using the general bound
\[
  g_{\rle}(T)\le \Oh\!\left(\delta(T)\log\tfrac{n \log \sigma}{\delta(T) \log n}\right)
\]
the assumption $d\ge \delta(T)$, and \eqref{eq:block-phi}, we get
\begin{equation}
  g_{\rle}(T) \le \Oh(d \log\tfrac{n \log \sigma}{d \log n}) = \Oh(d \log\phi_d) = \Oh\!\left(d\log \phi_s+d\log\tfrac{s}{d}\right). \label{eq:block-rle}
\end{equation}
Because $\frac{s}{d}\ge \tau\ge 2$, we have $\log(\frac{s}{d})\le \Oh(\frac{s}{d\tau}\log\tau)$, so \eqref{eq:block-rle} simplifies to
\begin{equation}\label{eq:block-g}
  g_{\rle}(T)\le \Oh\!\left(d \log \phi_s+\tfrac{s}{\tau}\log\tau\right).
\end{equation}
Hence, $T$ is generated by an RLSLG of size $g$ satisfying \eqref{eq:block-g}.

Choose
\begin{equation}\label{eq:block-M}
  M\coloneqq \left\lceil s+\tfrac{g\tau}{\log\tau}\right\rceil.
\end{equation}
By \eqref{eq:block-g} and \eqref{eq:block-M}, we have
\begin{equation}\label{eq:block-M2}
  M
  = \Oh\!\left(s+\tfrac{g\tau}{\log\tau}\right)
  \le \Oh(s+d\tau \log_\tau \phi_s).
\end{equation}
Moreover, since $\tau\ge 2$, we have
\[M\ge s \qquad \text{and}\qquad M\ge \tfrac{g\tau}{\log\tau}> g.\]
We apply \cref{thm:upper} (in its RLSLG variant) with word size $w\coloneqq \ceil{\log n}$ so that $Mw>g\log n$.

If $Mw\ge n\log\sigma$, we can store $T$ explicitly in packed form, using $n\log\sigma/w\le M$ words and answering \RA in $\Oh(1)$ time.

Otherwise, $g\log n < Mw < n\log\sigma$, so the RLSLG variant of \cref{thm:upper} yields an $\Oh(M)$-word data structure.
To simplify its query time bound, note that \eqref{eq:block-M2} and $w=\Theta(\log n)$ give
\begin{equation}\label{eq:block-query-bounds}
  \tfrac{n\log \sigma}{Mw}\le\Oh(\phi_s)
  \qquad\text{and}\qquad
  \tfrac{Mw}{g\log n}\ge\Omega\!\left(\tfrac{\tau}{\log\tau}\right).
\end{equation}
Therefore, the query time is
\[
  \Oh\left(\frac{\log \frac{n \log \sigma}{Mw}}{\log\frac{Mw}{g\log n}}\right)
  \le \Oh\left(\frac{\log \phi_s}{\log \frac{\tau}{\log\tau}}\right)
  = \Oh\left(\tfrac{\log \phi_s}{\log\tau}\right)
  = \Oh(\log_\tau \phi_s),
\]
where the middle step uses $\log(\tau/\log\tau)=\Theta(\log\tau)$ for $\tau\ge 2$.

For optimality, assume $n\ge \sigma\ge 2$, $\tau \ge \log^{\epsilon} n$, and $d \ge \log^{2+\epsilon} n$.
Consider any $\Oh(s+d\tau\log_\tau \phi_s)$-word data structure supporting \RA on all strings with $z(T) \le g(T)\le d$ and answering each query by probing $t$ words using a nondeterministic or bounded-error randomized algorithm.
If $\phi_s\le \tau^2$, then $\log_\tau \phi_s \le 2$ and thus $t\ge 1\ge\Omega(\log_\tau \phi_s)$.
Otherwise, we may assume its size is exactly $M=\Theta(s+d\tau\log_\tau \phi_s)$, and we apply \cref{thm:lower} with grammar size $g\coloneqq d$, word size $w\coloneqq \ceil{\log n}$, and a sufficiently small constant $\epsilon'>0$ (e.g., $\epsilon'\le \frac12\epsilon$).
The conditions of \cref{thm:lower} hold since $n\ge\sigma\ge2$, the range of $s$ gives $\phi_d\ge\tau\ge2$, which together with $n\ge\sigma$ implies $d\le n$, $g=d \ge \log^{2+\epsilon} n$, and
\[
  Mw=\Theta((s+d\tau\log_\tau \phi_s)\cdot \log n)\ge d\tau\log n\ge g\log n \cdot (w\log n)^{\epsilon'}.
\]
Moreover, the upper bound $s\le d\tau\log_\tau \phi_d$ from the statement and \eqref{eq:block-phi} imply
\begin{equation}\label{eq:block-lb-M}
  M=\Theta(s+d\tau\log_\tau \phi_s)\le \Oh(d\tau\log_\tau \phi_d).
\end{equation}
Substituting this into \cref{thm:lower}, we obtain
\[
  t
  \ge \Omega\left(\frac{\log \frac{n\log\sigma}{Mw}}{\log \frac{Mw}{g\log n}}\right)
  \ge \Omega\left(\frac{\log \frac{\phi_d}{\tau\log_\tau \phi_d}}{\log(\tau\log_\tau \phi_d)}\right).
\]
It remains to simplify the last expression.
Since $\phi_d \ge \phi_s>\tau^2$, we have
\[
  \log \tfrac{\phi_d}{\tau\log_\tau \phi_d}
  \ge \log \tfrac{\phi_s}{\tau\log_\tau \phi_s}
  = \log \phi_s-\log\tau-\log\log \phi_s + \log \log \tau
  \ge \tfrac12\log \phi_s - \log\log \phi_s
  = \Omega(\log \phi_s),
\]
where the penultimate step uses $\phi_s>\tau^2$.
On the other hand, $n\ge\sigma$ implies $\phi_d\le n$, and $\tau\ge \log^\epsilon n$ yields
\[
  \log(\tau\log_\tau \phi_d)
  = \log\tau+\log\log_\tau \phi_d
  \le  \log\tau+\log\log n
  \le \log \tau +\epsilon^{-1}\log \tau
  \le \Oh(\log\tau).
\]
Therefore,
\[
  t\ge \Omega\!\left(\tfrac{\log \phi_s}{\log\tau}\right)
   = \Omega(\log_\tau \phi_s).\qedhere
\]
\end{proof}